\DeclareFontFamily{U}{mathx}{\hyphenchar\font45}
\DeclareFontShape{U}{mathx}{m}{n}{
      <5> <6> <7> <8> <9> <10>
      <10.95> <12> <14.4> <17.28> <20.74> <24.88>
      mathx10
      }{}
\DeclareSymbolFont{mathx}{U}{mathx}{m}{n}
\DeclareMathSymbol{\bigtimes}{1}{mathx}{"91}
\definecolor{DarkRed}{rgb}{0.5,0.1,0.1}
\definecolor{DarkBlue}{rgb}{0.1,0.1,0.5}
\definecolor{ForestGreen}{rgb}{0.1333,0.5451,0.1333}
\definecolor{Red}{rgb}{0.9,0,0}
\crefname{property}{property}{Property}
\crefname{equation}{eq}{Eq}
\def\BState{\State\hskip-\ALG@thistlm}
\newtheorem{theorem}{Theorem}
\newtheorem{lemma}{Lemma}[section]
\newtheorem{proposition}[lemma]{Proposition}
\newtheorem{claim}[lemma]{Claim}
\newtheorem{problem}{Problem}
\newtheorem*{claim*}{Claim}
\newtheorem*{proposition*}{Proposition}
\newtheorem*{lemma*}{Lemma}
\newtheorem*{problem*}{Problem}
\crefname{lemma}{Lemma}{Lemmas}
\crefname{claim}{Claim}{Claims}
\newtheorem{mdresult}{Result}
\newtheorem*{mdresult*}{Main Result}
\newenvironment{Result}{\begin{mdframed}[backgroundcolor=lightgray!40,topline=false,rightline=false,leftline=false,bottomline=false,innertopmargin=2pt]\begin{mdresult*}}{\end{mdresult*}\end{mdframed}}
\newtheorem{assumption}{Assumption}
\newtheorem{definition}[lemma]{Definition}
\theoremstyle{definition}
\newtheorem{remark}[lemma]{Remark}
\newtheorem{mdinvariant}[lemma]{Lemma}
\newenvironment{Lemma}{\begin{mdframed}[hidealllines=false,backgroundcolor=gray!10,innertopmargin=5pt]\begin{mdinvariant}}{\end{mdinvariant}\end{mdframed}}
\theoremstyle{definition}
\newtheorem{mdalg}{Algorithm}
\newenvironment{Algorithm}{\begin{tbox}\begin{mdalg}}{\end{mdalg}\end{tbox}}
\renewcommand{\qed}{\nobreak \ifvmode \relax \else
      \ifdim\lastskip<1.5em \hskip-\lastskip
      \hskip1.5em plus0em minus0.5em \fi \nobreak
      \vrule height0.75em width0.5em depth0.25em\fi}
\newcommand{\Qed}[1]{\ensuremath{\qed_{\textnormal{~#1}}}}
\newcommand{\eps}{\ensuremath{\varepsilon}}
\newcommand{\Paren}[1]{\Big(#1\Big)}
\newcommand{\bracket}[1]{\left[#1\right]}
\newcommand{\paren}[1]{\ensuremath{\left(#1\right)}\xspace}
\newcommand{\card}[1]{\left\vert{#1}\right\vert}
\newcommand{\IR}{\ensuremath{\mathbb{R}}}
\newcommand{\norm}[1]{\ensuremath{\|#1\|}}
\newcommand{\ceil}[1]{{\left\lceil{#1}\right\rceil}}
\newcommand{\prob}[1]{\Pr\paren{#1}}
\newcommand{\expect}[1]{\Exp\bracket{#1}}
\newcommand{\var}[1]{\textnormal{Var}\bracket{#1}}
\newcommand{\cov}[1]{\textnormal{Cov}\bracket{#1}}
\newcommand{\set}[1]{\ensuremath{\left\{ #1 \right\}}}
\newcommand{\poly}{\mbox{\rm poly}}
\newcommand{\polylog}{\mbox{\rm  polylog}}
\newcommand{\opt}{\textnormal{\ensuremath{\mbox{opt}}}\xspace}
\newcommand{\alg}{\ensuremath{\mathcal{A}}\xspace}
\DeclareMathOperator*{\Exp}{\ensuremath{{\mathbb{E}}}}
\DeclareMathOperator*{\Prob}{\ensuremath{\textnormal{Pr}}}
\renewcommand{\Pr}{\Prob}
\newenvironment{tbox}{\begin{tcolorbox}[
		enlarge top by=5pt,
		enlarge bottom by=5pt,
		 breakable,
		 boxsep=0pt,
                  left=4pt,
                  right=4pt,
                  top=10pt,
                  arc=0pt,
                  boxrule=1pt,toprule=1pt,
                  colback=white
                  ]
	}
{\end{tcolorbox}}
\newcommand{\supp}[1]{\ensuremath{\textnormal{\text{supp}}(#1)}}
\newcommand{\II}{\ensuremath{\mathbb{I}}}
\newcommand{\mireal}[1][]{
  \ifx\relax#1\relax%
    \II(\mione \,; \mitwo)%
  \else%
    \II(\mione \,; \mitwo\mid #1)%
  \fi
}
\newcommand{\Prot}{\Pi}
\newcommand{\MM}{\ensuremath{{M}}}
\newcommand{\RR}{\ensuremath{\mathbf{R}}}
\title{An Asymptotically Optimal Algorithm for Maximum Matching \\ in Dynamic Streams}
\author{
	Sepehr Assadi\footnote{(\href{mailto:sepehr.assadi@rutgers.edu}{sepehr.assadi@rutgers.edu)} Department of Computer Science, Rutgers University.  Research supported in part by a NSF CAREER Grant CCF-2047061, and a gift from Google Research.} \and 
Vihan Shah\footnote{(\href{mailto:vihan.shah98@rutgers.edu}{\text{vihan.shah98@rutgers.edu}}) Department of Computer Science, Rutgers University.  Research supported in part by a NSF CAREER Grant CCF-2047061. }  
}
\date{}
\begin{document}
\maketitle

\pagenumbering{roman}


\begin{abstract}
	We present an algorithm for the maximum matching problem in dynamic (insertion-deletions) streams with \textbf{asymptotically optimal} space complexity: for any $n$-vertex graph, 
	our algorithm with high probability outputs an $\alpha$-approximate matching in a single pass  using $O(n^2/\alpha^3)$ \emph{bits} of space. 
	
	\medskip
	
	A long line of work on the dynamic streaming matching problem has reduced the gap between space upper and lower bounds  first to $n^{o(1)}$ factors [Assadi-Khanna-Li-Yaroslavtsev; SODA 2016] and subsequently to $\polylog{(n)}$ factors [Dark-Konrad; CCC 2020]. Our upper bound now matches the Dark-Konrad lower bound up to $O(1)$ factors, thus completing this research direction. 
	
	\medskip
	
	Our approach consists of two main steps: we first (provably) identify a family of graphs, similar to the  instances used in prior work to establish the lower bounds for this problem, as 
	the only ``hard'' instances to focus on. These graphs include an induced subgraph which is both sparse and contains a large matching. We then design a dynamic streaming algorithm for this family of graphs 
	which is more efficient than prior work. The key to this efficiency is a novel sketching method, which bypasses the typical loss of $\poly\!\log{(n)}$-factors in space compared to standard $L_0$-sampling primitives, and 
	can be of independent interest in designing  optimal algorithms for other streaming problems.

\end{abstract}

\clearpage
\bigskip

\setcounter{tocdepth}{3}
\tableofcontents

\clearpage

\pagenumbering{arabic}
\setcounter{page}{1}


\newcommand{\diag}{\textnormal{\textbf{diag}}\xspace}

\newcommand{\vect}{\textnormal{\textbf{vec}}\xspace}

\newcommand{\Lsampler}{\textnormal{\texttt{L0-Sampler}}\xspace}
\newcommand{\sprec}{\textnormal{\texttt{Sparse-Recovery}}\xspace}

\newcommand{\slz}{s_{\textnormal{\texttt{L0}}}}
\newcommand{\ssr}{s_{\textnormal{\texttt{SR}}}}

\newcommand{\XX}{\ensuremath{\mathcal{X}}}
\section{Introduction}\label{sec:intro} 

We study the maximum matching problem in the \emph{dynamic} streaming model. In this problem, the edges of an input graph $G=(V,E)$ are presented to the algorithm as a sequence of both edge insertions and deletions. The goal 
is to recover an approximate maximum matching of $G$ at the end of the stream using a limited space smaller than the input size, namely, $o(n^2)$ space where $n$ is the number of vertices. The dynamic graph streaming model is highly motivated by applications to processing massive graphs 
and has been studied extensively in recent years; see, e.g.~\cite{AhnGM12,AhnGM12b,KapralovLMMS14,Konrad15,BhattacharyaHNT15,McGregorTVV15,AssadiKLY16,AssadiKL17,KallaugherKP18,NelsonY19,AssadiCK19a,BeraCG19,KapralovMMMNST20,DarkK20,Konrad21} and references therein.  

A brief note on the history of dynamic streaming matching  is in order. Initiated by a breakthrough result of~\cite{AhnGM12},  
for most graph problems studied in insertion-only streams, researchers were able to subsequently obtain algorithms with similar guarantees in dynamic streams as
well; this includes connectivity~\cite{AhnGM12}, cut sparsifiers~\cite{AhnGM12b}, spectral sparsifiers~\cite{KapralovLMMS14}, densest subgraph~\cite{McGregorTVV15}, subgraph counting~\cite{AhnGM12b}, $(\Delta+1)$-vertex coloring~\cite{AssadiCK19a}, among many others. This placed the maximum matching problem in a rather unique position in the literature: while there is a straightforward $2$-approximation algorithm for this problem in insertion-only streams 
using only $O(n\log{n})$ space~\cite{FeigenbaumKMSZ05}, no non-trivial approximation algorithms were developed for this problem even in $o(n^2)$ space, despite significant attention; see, e.g.~\cite{TurnstileMatchingOP,ChitnisCHM15,ChitnisCEHM15}. 

This problem was addressed in a series of (independent and concurrent) work~\cite{Konrad15,AssadiKLY16,ChitnisCEHMMV16}. In particular,~\cite{AssadiKLY16} proved that any $\alpha$-approximation algorithm for matching
in dynamic streams requires $(n^{2-o(1)}/\alpha^3)$ space and designed an $\alpha$-approximation algorithm with $O(n^2/\alpha^3 \cdot \poly\!\log{(n)})$ space for this problem (\!\cite{Konrad15} gave a slightly weaker lower and upper bounds for this problem and~\cite{ChitnisCEHMMV16} obtained an algorithm with similar performance as~\cite{AssadiKLY16}). The work of~\cite{AssadiKLY16} thus brought the gap between space upper and lower bounds on this problem down to an $n^{o(1)}$ factor. 
The lower bound of~\cite{AssadiKLY16} relied on a remarkable characterization of dynamic streaming algorithms due to~\cite{LiNW14,AiHLW16} that allows for transforming linear sketching lower bounds to dynamic streams. 
However, this characterization  requires making strong requirements from the streaming algorithms (such as processing doubly exponentially long streams); see~\cite{KallaugherP20} for a detailed discussion on this topic.
More recently,~\cite{DarkK20} bypassed this characterization step entirely and along the way, improved the lower bound for this problem to $\Omega(n^2/\alpha^3)$ space directly in dynamic streams. This constitutes the state-of-the-art 
for the dynamic streaming matching problem. 

In parallel to this line of work on the matching problem that focused on determining the ``high order terms'' in the space complexity of this problem (namely, up to $n^{o(1)}$ or $\poly\!\log{(n)}$ factors), there has also been substantial 
work on determining the ``lower order terms'' on space complexity of other dynamic graph streaming problems~\cite{SunW15,KapralovNPWWY17,NelsonY19,Yu21}. For instance,~\cite{NelsonY19}, building on~\cite{KapralovNPWWY17}, 
proved that any dynamic streaming algorithm for connectivity requires $\Omega(n\cdot\log^3\!{(n)})$ space which matches the algorithm of~\cite{AhnGM12} up to constant factors. This quest for obtaining 
asymptotically optimal bounds is common in the streaming literature beyond graph streams such as in frequency moment estimation~\cite{KaneNW10a,KaneNW10b,LiW13,AndoniNPW13,BravermanKSV14}, 
empirical entropy~\cite{HarveyNO08,ChakrabartiCM10,JayramW13}, numerical linear algebra~\cite{ClarksonW09}, compressed sensing~\cite{PriceW11,PriceW13}, and sampling~\cite{KapralovNPWWY17}. 

This state-of-affairs is the  motivation behind our work: \emph{Can we determine the space complexity of the maximum matching problem down to its lower order terms?} We resolve this question in the affirmative by presenting 
an improved algorithm for this problem. 

\begin{Result}[Formalized in~\Cref{thm:main}]
	There is a dynamic streaming algorithm that with high probability outputs an $\alpha$-approximation to maximum matching using $O(n^2/\alpha^3)$ space for any $\alpha \ll n^{1/2}$. 
\end{Result}
Let us right away note that the condition of $\alpha \ll n^{1/2}$ in our main result is not arbitrary\footnote{The same condition is used in all prior lower bounds in~\cite{Konrad15,AssadiKLY16,DarkK20} as well as algorithms~\cite{Konrad15,ChitnisCEHMMV16}
with the exception of algorithm of~\cite{AssadiKLY16}.}: for $\alpha > n^{1/2}$, we have $n^{2}/\alpha^3 < n/\alpha$, while one needs $\Omega((n/\alpha) \cdot \log{n})$ space simply to store an $\alpha$-approximate matching! As a result, 
our algorithm now matches the lower bound of~\cite{DarkK20} up to constant factors in almost the entirety of its meaningful regime for parameter $\alpha$, thus completely resolving the space complexity of the maximum matching problem 
in dynamic streams. We now discuss further aspects of our work. 

\paragraph{Beyond $L_0$-samplers.} The key technique in dynamic graph streams is the use of $L_0$-samplers\footnote{We are only aware of a single work~\cite{KapralovLMMS14} in dynamic graph streams that does not use $L_0$-samplers.} that allow for sampling an edge from an stream that contains both insertions and deletions of the edges (see~\Cref{sec:prelim-toolkit}). 

Previously-best algorithms of~\cite{AssadiKLY16,ChitnisCEHMMV16} for dynamic streaming matching sample $O(n^2/\alpha^3)$ edges from the input graph (from a carefully-designed non-uniform distribution) and show that this sample contains an $\alpha$-approximate matching. For the sampling, they need to use $L_0$-samplers that will bring in an additional $\poly\!\log{(n)}$ factor overhead in the space. At the same time, a careful examination of the lower bound of~\cite{DarkK20} suggests that one needs to recover $\Omega(n^2/\alpha^3)$ \emph{edges} from the graph (not only \emph{bits}, assuming one only communicates edges). 
On top of this, the lower bound of~\cite{NelsonY19} for the connectivity problem is based on showing that recovering $(n-1)$ edges of a spanning forest in the input, essentially require paying the cost of $(n-1)$ $L_0$-samplers as well, 
leading to their $\Omega(n\cdot\log^3\!{(n)})$ lower bound. Putting all this together, it is natural to conjecture that one also needs $\Omega(n^2/\alpha^3 \cdot \poly\!\log\!{(n)})$ space for the matching problem\footnote{This was in fact the authors' conjecture 
at the beginning of this project.}. 

Our algorithm in this paper is still based on finding $\Theta(n^2/\alpha^3)$ edges from the input graph. It turns out however that one can do this more efficiently than using the same number of $L_0$-samplers. 
In particular, we show a way of recovering these edges with only $O(1)$ bit overhead per edge \emph{on average}. 
This is achieved using a novel sketching primitive in this paper (\Cref{sec:snr}). On a high level, this sketch allows us to recover \emph{sparse induced subgraphs} of the input graph, specified to the algorithm only at the end stream, in a more efficient manner than recovering them one edge at a time via $L_0$-samplers\footnote{Let us note that our sketch cannot do magic: The problem of finding sparse induced subgraphs is at the core of the lower bound approaches for dynamic streaming matching in~\cite{AssadiKLY16,DarkK20}, thus there is no hope of solving it ``efficiently''. Our sketch shows that one can recover these graphs without paying any extra cost \emph{over} the lower bounds of these work.}. We believe this idea can be useful 
for obtaining asymptotically optimal algorithms for other dynamic streaming problems as well. 

\paragraph{Classifying input graphs.} Another key idea  in our paper is a way of roughly classifying input graphs into ``easy'' and ``hard'' instances. Informally speaking, the easy instances are the ones that one can recover a large matching from them 
by sampling $\ll n^2/\alpha^3$ edges (again, in a non-uniform way). Such a graph can then be handled in $O(n^2/\alpha^3)$ space even if we use $L_0$-samplers for our sampling given we now need much fewer number of samples than before. 
One of our two main lemmas (\Cref{lem:ms}) gives one characterization of these graphs: essentially, any ``hard'' graph, i.e., a one not solvable by the above approach, includes a subgraph on $n-o(n/\alpha)$ vertices with only $\approx n$ edges
and a matching of size $\approx n-o(n/\alpha)$ (they essentially have an induced matching of size $\approx n-o(n/\alpha)$). A reader familiar with~\cite{Konrad15,AssadiKLY16,DarkK20}  may notice that this family precisely captures the graphs  in  prior lower bounds for dynamic streaming matching problem. 

Our next main lemma (\Cref{lem:sc}) then gives an algorithm for solving these hard graphs. The idea behind the algorithm is as follows. Let $S$  denote the vertices in the induced sparse subgraph of the input and let $T$ be the 
remaining vertices (we will be able to recover an approximate version of this partitioning \emph{at the end} of the stream). \emph{If} we are able to recover edges inside $S$, we will be done as there is a large matching in $S$ and it does not have too many extra edges. The problem is that we will not know this set until the end of the stream and by that point we should have collected all the required information. This is where our main sketching tool  mentioned earlier comes into place. 
Informally, the sketch allows us to, for any vertex $v \in S$, recover the neighbors $N(v)$ of $v$ using roughly $\card{N(v) \cap T} + \paren{\card{N(v)-T} \cdot \poly\log{(n)}}$ bits (as opposed to $\card{N(v)} \cdot \poly\!\log{(n)}$ bits via $L_0$-samplers). 
As the total number of edges outside $T$ is quite small, i.e., $\approx n$ in total, this is a  huge saving for us that allows for obtaining our desired $O(n^2/\alpha^3)$ bit upper bounds. 

We shall remark that in this discussion, we have been imprecise to give a rough intuition of our approach; the actual details turn out to be considerably more challenging as described in~\Cref{sec:sc}  and~\Cref{sec:ms}. 

\paragraph{``Shaving'' log-factors?} Finally, our improvement over prior work in~\cite{AssadiKLY16,ChitnisCEHMMV16} at no place is obtained via ``shaving log-factors''. 
Indeed there is a considerable gap of $n^{\Omega(1)}$ factor between the parameters that our easy-graph algorithms and hard-graph algorithms can still handle within $O(n^2/\alpha^3)$ space. 
This in turn allowed us to be quite cavalier with the parameters (e.g., using $\log{n}$-factors or $n^{\Omega(1)}$-factors where constant or $\poly\!\log\!{(n)}$ sufficed) and still recover an optimal space bound.


\section{Preliminaries}\label{sec:prelim}

 \paragraph{Notation.}
For a graph $G=(V,E)$, we write $\vect(E)$ to denote the ${{n} \choose {2}}$-dimensional vector where $\vect(E)_{i}$ denotes the multiplicity of the edge $e_i$ in $G$. We use $\deg(v)$ and $N(v)$ for each vertex $v \in V$ 
to denote the degree and neighborhood of $v$, respectively. For a subset $F$ of edges in $E$, we use $V(F)$ to denote the vertices incident on $F$; similarly, for a set $U$ of vertices in $V$, $E(U)$ denotes the edges incident on $U$. 

Throughout, we will use the term ``with high probability" to mean with probability at least $1-1/n^c$ for some large constant $c > 0$. The constant $c$ can be made arbitrarily large by only increasing the space of our algorithms with a constant factor and thus within the same asymptotic bounds. Moreover, for our purpose, this probability is large enough that one can always do a union bound over at most $\poly{(n)}$ different events that we consider in this paper; so we do not necessarily mention this each time.

 \subsection{Dynamic (Graph) Streams and Linear Sketches}\label{sec:prelim-model}
 
The dynamic streaming model is defined formally as follows. 
 
 \begin{definition}[Dynamic (graph) streams]\label{def:dynamic-stream}
 	A dynamic stream $\sigma = (\sigma_1,\ldots,\sigma_N)$ defines a vector $x \in \IR^m$. Each entry of the stream is a tuple $\sigma_i = (j_i,\Delta_i)$ for $j_i \in [m]$ and $\Delta_i \in \set{-1,+1}$. The vector $x$ is  defined as:
	\[
		\textnormal{for all $j \in [m]$}: \quad x_j = \sum_{\sigma_i: j_i = j} \Delta_i. 
	\]
	A dynamic graph stream is a dynamic stream wherein $m = {{n}\choose{2}}$ and $x = \vect(E)$ for the graph $G=(V,E)$ with $V = [n]$. Each update to $\vect(E)$  corresponds to inserting or deleting the specified
	edge from the graph. 
	
	A dynamic streaming algorithm makes a single pass over updates to $x$ and uses a limited memory, measured in \underline{number of bits}, and outputs an answer to the given problem at the end of the stream. 
 \end{definition}
 
Similar to virtually all other dynamic streaming algorithms, our algorithms will also be based on linear sketches, defined as follows. 
 
 \begin{definition}[{Linear sketch}]\label{def:linear-sketch}
 	Let $\Prot$ be a problem defined over vectors $x \in \IR^m$ (e.g., return the $\ell_2$-norm of $x$). A \emph{{linear sketch}} for $\Prot$ is an algorithm defined by the following pair: 
	\begin{itemize}
		\item \emph{\textbf{sketching matrix:}} A matrix $\Phi \in \poly{(m)}^{s \times m}$ that can be chosen randomly and implicitly; 
		\item \emph{\textbf{recovery algorithm:}} An algorithm that given the sketching matrix $\Phi$ and the vector $\Phi \cdot x$, returns a solution to $\Prot(x)$. 
	\end{itemize}
	We refer to the vector $\Phi \cdot x$ 	as a \underline{\emph{sketch}} of $x$, and to the number of bits needed to store $\Phi$ (implicitly) and $\Phi \cdot x$ as the \underline{\emph{size}} of the linear sketch. 

	The linear sketch for an input $x$ then consists of sampling a sketching matrix $\Phi$ (independent of $x$), computing the sketch $\Phi \cdot x$, and running the recovery algorithm on the sketch to solve the problem. 
	
	(We note that the computations can be on the set of integers (or reals) as well as on finite fields.)
 \end{definition}
 
For our purpose in this paper, we typically focus on graph problems for the choice of $\Prot$ in~\Cref{def:linear-sketch} and then set $x = \vect(E)$ where $E$ is the edge-set of the input graph. 
  The following proposition is well-known. 
 
 \begin{proposition}\label{prop:linear-sketch}
 	Let $\Prot$ be a problem defined over vectors $x \in \IR^m$. Suppose there exists a linear sketch of size $s(n)$ for $\Prot$ with probability of success $p(n)$. 
	Then, there is also a streaming algorithm for solving $\Prot$ on dynamic streams defining $x$ with probability of success $p(n)$ using $O(s(n)+\log{m})$ bits of space. 
 \end{proposition}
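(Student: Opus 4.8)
The plan is to exploit the linearity of the sketching map: maintain the sketch $\Phi \cdot x$ incrementally as the stream arrives, and invoke the recovery algorithm only at the very end. Recall from \Cref{def:dynamic-stream} that a dynamic stream $\sigma = (\sigma_1,\ldots,\sigma_N)$ with $\sigma_i = (j_i,\Delta_i)$ defines the vector $x = \sum_{i=1}^{N} \Delta_i \cdot e_{j_i}$, where $e_j \in \IR^m$ is the $j$-th standard basis vector. Since $\Phi$ is a linear map, $\Phi \cdot x = \sum_{i=1}^{N} \Delta_i \cdot (\Phi \cdot e_{j_i}) = \sum_{i=1}^{N} \Delta_i \cdot \Phi_{\cdot, j_i}$; that is, the sketch is simply a signed sum of the columns of $\Phi$ indexed by the updates.

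Given this, the streaming algorithm is: at the start, sample the sketching matrix $\Phi$ through its implicit (random-seed) representation exactly as \Cref{def:linear-sketch} prescribes, and store this representation; also initialize a vector $y \leftarrow \mathbf{0} \in \IR^s$. On each update $\sigma_i = (j_i,\Delta_i)$, reconstruct the single column $\Phi_{\cdot, j_i}$ from the stored representation and set $y \leftarrow y + \Delta_i \cdot \Phi_{\cdot, j_i}$. By the identity above, after the whole stream we have $y = \Phi \cdot x$; run the recovery algorithm of the linear sketch on $(\Phi, y)$ and return its output. Because $\Phi$ is drawn from the same distribution as in the linear sketch and $y = \Phi \cdot x$, the output of this streaming algorithm has the same distribution as the output of the linear sketch on input $x$, so it solves $\Prot(x)$ with probability $p(n)$. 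For the space bound, storing the implicit representation of $\Phi$ together with $y = \Phi \cdot x$ costs exactly the size $s(n)$ of the linear sketch by definition, and processing an update needs only $O(\log m)$ additional bits to hold $j_i \in [m]$ and to iterate over the $s \le s(n)$ coordinates of a column (reconstructing a column from the implicit representation being a deterministic computation inside $O(s(n)+\log m)$ working space). This gives the claimed $O(s(n)+\log m)$ bits.

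The one point that needs care — and essentially the only nontrivial step — is the magnitude of the \emph{intermediate} values of $y$: a priori, cancellations among updates could make a partial sum $\sum_{i \le t} \Delta_i \Phi_{\cdot, j_i}$ much larger than the final vector $\Phi \cdot x$, so that storing it would exceed $s(n)$ bits. For dynamic graph streams this does not arise, since every prefix of a legal stream already defines a graph, keeping every coordinate of $y$ bounded by $m \cdot \poly(m) = \poly(m)$ throughout. In general, one performs all arithmetic modulo a prime $q = \Theta(\poly(m))$ — legitimate since, as noted after \Cref{def:linear-sketch}, the sketch computations may be carried out over a finite field, and a recovery that is stated over $\IZ$ can be run via the Chinese Remainder Theorem using $O(1)$ such primes — which caps each coordinate of $y$ at $O(\log q) = O(\log m)$ bits at all times without changing the reconstructed answer. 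Everything else is the routine bookkeeping sketched above.
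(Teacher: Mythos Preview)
Your proof is correct and follows essentially the same approach as the paper: maintain $\Phi\cdot x$ incrementally by linearity, updating with the relevant column on each stream token, and run the recovery algorithm at the end. You additionally discuss the boundedness of intermediate partial sketches, which the paper's (shorter) proof glosses over; one minor quibble is that the paper's stream model does not literally promise that every prefix corresponds to a valid graph, so your first fix there is not quite justified by the definitions, but your modular-arithmetic fallback handles the general case anyway.
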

\begin{proof}
 	Let $\Phi$ be the sketching matrix for the linear sketch. It is enough to show that we can compute $\Phi \cdot x$ for vector $x$ in a dynamic stream as in~\Cref{def:dynamic-stream}; the rest then follows by running the recovery algorithm of the sketch
	at the end on $\Phi \cdot x$. This can be done easily by linearity of the sketch by maintaining  $\Phi \cdot x$ and for each update $\sigma_i = (j_i,\Delta_i)$ to $x$, updating it to 
	\[
	\Phi \cdot x + \Phi \cdot \Delta_i \cdot \mathbf{1}_{j_i} = \Phi \cdot (x+ \Delta_i \cdot \mathbf{1}_{j_i}) \tag{$\mathbf{1}_{j_i}$ is the vector that is 
	$1$ on $j_i$-th entry and zero elsewhere}
	\]
	to get the sketch of the updated vector $x$. As each update can be computed from the input and (implicit access to) the sketching matrix $\Phi$ with additional $O(\log{m})$ space for book-keeping, we are done.  
\end{proof}
 
 Given~\Cref{prop:linear-sketch}, in the rest of the paper, we simply focus on designing linear sketches for our dynamic streaming problems. 
 
 \subsection{Standard Sketching Toolkit}\label{sec:prelim-toolkit}
 
 We will also use $L_0$-samplers, a powerful tool used by most dynamic graph streaming algorithms, in our paper. The goal of $L_0$-samplers is to solve the following basic problem. 
 
 \begin{problem}[{$L_0$-Sampling}]\label{prob:l0}
 	Given a vector $x \in \IR^m$ specified in a dynamic stream, sample $x_i$ uniformly at random from the support of $x$ at the end of the stream. 
 \end{problem}
 
 We will typically use $L_0$-samplers by applying them to different pre-specified \emph{subsets} of edges (pairs of vertices) of the underlying graph to sample a uniform edge from those subsets.

 \begin{proposition}[\!\!\cite{JowhariST11,KapralovNPWWY17}]\label{prop:l0-sampler}
 	There is a linear sketch, called $\Lsampler$, for~\Cref{prob:l0}  with size 
	\[
	\slz = \slz(m,\delta_F,\delta_E) = O(\log m \cdot (\log m \cdot \log (1/\delta_F) + \log (1/\delta_E))
	\]
	 bits, that
	outputs \textnormal{FAIL} with probability at most $\delta_F$ and outputs a wrong answer with probability at most $\delta_E$. 
 \end{proposition}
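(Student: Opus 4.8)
I would follow the classical two‑ingredient recipe for $L_0$‑sampling from \cite{JowhariST11}: combine a \emph{hierarchy of geometric subsamplings} of the coordinate set $[m]$ with an \emph{exact sparse‑recovery} sketch, and tune the parameters so that the \textnormal{FAIL} probability is governed by a sparsity parameter (contributing the $\log m\cdot\log(1/\delta_F)$ factor, once per level) while the wrong‑answer probability is governed by a random fingerprint together with the amount of independence spent on the subsampling (contributing the $\log(1/\delta_E)$ term).

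The first building block is, for a sparsity parameter $s$, a linear sketch \sprec that on input $y \in \IR^m$ reconstructs $y$ exactly whenever $\norm{y}_0 \le s$ and otherwise reports \textnormal{``dense''}. Over a prime field $\mathbb{F}_q$ with $q = \poly(m)$ larger than any coordinate value that can occur (in particular $q > m$), maintain the moments $\sum_{i \in [m]} y_i \cdot i^{t}$ for $t = 0,\ldots,2s-1$; these are linear in $y$ (hence stream‑maintainable), take $O(s\log m)$ bits to store, and a Prony‑style (syndrome) decoder recovers any $s$‑sparse $y$ from them deterministically. To tell the dense case (where decoding returns garbage) apart from a genuine sparse vector, also maintain a random fingerprint $\sum_i y_i\, r^{i}$ over $\mathbb{F}_{q'}$ for a prime $q'$ of order $\poly(m)/\delta_E$ and a uniform $r$, and accept a decoded candidate only if it agrees with the fingerprint; by Schwartz--Zippel an incorrect candidate is accepted with probability at most $\delta_E$. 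Thus \sprec has size $O(s\log m + \log(1/\delta_E))$.

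Next, draw a hash $h:[m]\to\set{0,1,\ldots,\ceil{\log m}}$ with $\Pr[h(i)\ge j] = 2^{-j}$ — so that conditionally each coordinate survives one level with probability $\half$ — put $S_j = \set{i : h(i)\ge j}$, so that $[m]=S_0\supseteq S_1\supseteq\cdots$, and keep one independent copy of \sprec on the restriction $x|_{S_j}$ at every level $j$; over the $O(\log m)$ levels this is $O(s\log^2 m + \log m\cdot\log(1/\delta_E))$ bits. At the end of the stream, let $j^\star$ be the largest level whose \sprec copy returns a \emph{nonzero} vector (rather than \textnormal{``dense''} or the all‑zero vector) and output a uniformly random coordinate of that vector; output \textnormal{FAIL} if no such level exists. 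Write $c_j = \card{S_j\cap\supp{x}}$, a non‑increasing sequence with $c_0 = \norm{x}_0$; if $\norm{x}_0\le s$ we already succeed at level $0$, and otherwise the procedure fails exactly when the count drops from a value $>s$ straight to $0$ in a single halving, i.e.\ when the last level $M$ with $c_M\ge1$ satisfies $c_M\ge s+1$. That event forces $\ge s+1$ coordinates to leave simultaneously when passing from $S_M$ to $S_{M+1}$, which has probability at most $2^{-(s+1)}$; hence $s = \Theta(\log(1/\delta_F))$ yields $\Pr[\textnormal{FAIL}]\le\delta_F$ and total size $O(\log m\cdot(\log m\cdot\log(1/\delta_F)+\log(1/\delta_E)))$, and \Cref{prop:linear-sketch} turns the sketch into the claimed dynamic streaming algorithm. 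For the output distribution: conditioned on $j^\star = M$ and $c_M = k\in[1,s]$, if $h$ were fully independent then $S_M\cap\supp{x}$ would be a uniformly random $k$‑subset of $\supp{x}$ — conditioning on the event $c_{M+1}=0$ does not bias this, since every $k$‑subset of $\supp{x}$ drops out entirely with the same probability $2^{-k}$ — so a uniform coordinate of the correctly recovered vector $x|_{S_M}$ is a uniform element of $\supp{x}$.

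The only genuinely delicate step is the last one: full independence of $h$ would cost $\Omega(m)$ bits and is unaffordable, so $h$ must instead be drawn from an $\ell$‑wise independent family (or from a pseudorandom generator) with $\ell = \Theta(\log(1/\delta_F)+\log(1/\delta_E))$, and one must then show that the conditional law of $S_{j^\star}\cap\supp{x}$ is within total variation distance $\delta_E$ of uniform over $k$‑subsets; bounding this bias is the technical core of \cite{JowhariST11}, and it is the step I expect to be the main obstacle. A more routine but still necessary point is to keep the several bad events (the count jumping over $[1,s]$; a fingerprint collision; a decoding error; insufficient hash uniformity) on separate error budgets, so that the $\delta_F$‑ and $\delta_E$‑dependent parts of the space bound come out with exactly the split stated in the proposition.
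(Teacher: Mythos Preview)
The paper does not prove this proposition; it is stated as a known result cited from \cite{JowhariST11,KapralovNPWWY17}, so there is no ``paper's own proof'' to compare against. Your plan is exactly the construction of \cite{JowhariST11} --- geometric subsampling over $O(\log m)$ levels, deterministic $s$-sparse recovery with $s=\Theta(\log(1/\delta_F))$ at each level, a random fingerprint of size $O(\log m+\log(1/\delta_E))$ for error detection, and limited-independence hashing --- and the overall architecture and space accounting are correct.

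One step deserves tightening: your claim $\Pr[\textnormal{FAIL}]\le 2^{-(s+1)}$ is not justified as written. The level $M$ (the last level with $c_M\ge 1$) is itself random, so the failure event ``$c_M>s$'' unpacks as $\exists j:\,c_j>s\wedge c_{j+1}=0$, and bounding each term by $2^{-(s+1)}$ still leaves a union bound over the $O(\log m)$ levels, giving only $O(\log m)\cdot 2^{-(s+1)}$. Taken literally this would force $s=\Theta(\log(1/\delta_F)+\log\log m)$ and slightly overshoot the stated space when $\delta_F=\Theta(1)$. The clean fix is to instead analyze a \emph{fixed} level $j_0$ chosen so that $\expect{c_{j_0}}=\Theta(s)$: with $\Theta(s)$-wise independent hashing, a limited-independence Chernoff bound (as in \Cref{prop:chernoff-limited}) gives $\Pr[c_{j_0}\notin[1,s]]\le 2^{-\Omega(s)}$ directly, with no union bound over levels. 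This is a local repair and does not change the structure of your argument; your identification of the limited-independence uniformity analysis as the genuinely delicate step is accurate.
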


Another standard tool we use is sparse recovery  to solve the following problem. 

\begin{problem}[Sparse Recovery]\label{prob:sparse-rec}
	Given an integer $k \geq 1$ and a vector $x \in \IR^m$ specified in a dynamic stream with the promise that $\norm{x}_0 \leq k$, recover all of $x$ at the end of the stream. 
\end{problem}

We use the following  result on sparse recovery over finite fields. 

\begin{proposition} [c.f.~\cite{DasV13}] \label{prop:sparse-rec}
	Let $q$ be any prime number and $k \geq 1$ be an arbitrary integer. There is a deterministic (poly-time computable) linear sketch, called $\sprec$, for~\Cref{prob:sparse-rec} for $k$-sparse vectors $x \in \mathbb{F}^m_q$, with size 
	\[
		\ssr = \ssr(m,k,q) = O(k \cdot \log{m} \cdot \log{q})
	\]
	bits that always outputs the correct answer on $k$-sparse vectors. Moreover, all computations of this linear sketch are also performed over the field $\mathbb{F}_q$. 
\end{proposition}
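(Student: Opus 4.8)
\textbf{Proof proposal for \Cref{prop:sparse-rec}.}

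The plan is to give an explicit deterministic linear sketch based on a Reed--Solomon / Vandermonde style construction over $\mathbb{F}_q$, essentially the syndrome decoding of a BCH-type code, and to argue both its size and its correctness on all $k$-sparse vectors. Concretely, pick $2k$ distinct evaluation points $\alpha_1, \ldots, \alpha_{2k} \in \mathbb{F}_{q'}$ where $q'$ is the smallest power of $q$ exceeding $m$ (so that $[m]$ embeds into the field and a logarithmic blowup in the bit-size of field elements, namely $\log q' = O(\log m \cdot \log q)$ or just $O(\log m + \log q)$, is absorbed into the stated bound). Let $\Phi$ be the $2k \times m$ matrix whose $(\ell, j)$ entry is $\alpha_\ell^{\,j}$; then $\Phi \cdot x$ is the vector of the first $2k$ ``power sums'' $\sum_j x_j \alpha_\ell^j$ of the support of $x$ weighted by the values $x_j$. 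Storing $\Phi$ implicitly costs $O(\log q')$ bits (just the description of the field and the rule $\alpha_\ell = \gamma^\ell$ for a primitive element $\gamma$), and storing $\Phi \cdot x$ costs $2k \cdot O(\log q') = O(k \log m \log q)$ bits, which matches the claimed size $\ssr$.

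The key step is the recovery algorithm. Given the syndromes $s_\ell = \sum_{j \in \supp(x)} x_j \alpha_\ell^j$ for $\ell = 1, \ldots, 2k$ and the promise $\norm{x}_0 \le k$, I would recover $x$ by the standard ``Prony / Berlekamp--Massey'' route: write $t = \norm{x}_0 \le k$ and let $X_1, \ldots, X_t$ be the distinct values $\alpha^j$ for $j \in \supp(x)$ and $v_1, \ldots, v_t$ the corresponding nonzero coefficients $x_j$. The syndromes satisfy $s_\ell = \sum_{i=1}^t v_i X_i^\ell$. The unknowns $X_i$ are the roots of the ``locator polynomial'' $\Lambda(z) = \prod_{i=1}^t (1 - X_i z)$, whose coefficients can be found by solving the Hankel linear system formed from $s_1, \ldots, s_{2k}$ (running Berlekamp--Massey on the syndrome sequence, which over $\mathbb{F}_{q'}$ returns the minimal linear recurrence and hence $\Lambda$). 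Factor $\Lambda$ over $\mathbb{F}_{q'}$ to obtain the $X_i$; since each $X_i = \gamma^{j_i}$ with $j_i \in [m]$, a discrete-log/table lookup in the (cyclic) group recovers the support indices $j_i$. Finally, with the support known, the remaining system $s_\ell = \sum_i v_i X_i^\ell$ for $\ell = 1, \ldots, t$ is a transposed Vandermonde system in the $v_i$, which is invertible because the $X_i$ are distinct, so Gaussian elimination recovers the values. All arithmetic here is in $\mathbb{F}_{q'}$, consistent with the ``computations performed over $\mathbb{F}_q$'' clause once one notes $\mathbb{F}_{q'} \supseteq \mathbb{F}_q$; alternatively, if one insists on $q$ itself being large enough ($q > m$), take $q' = q$ and the sketch lives literally over $\mathbb{F}_q$.

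The part I expect to require the most care is the \emph{unconditional correctness on every $k$-sparse input} (as opposed to ``with high probability''): I must verify that $2k$ syndromes always suffice to pin down a vector of sparsity at most $k$. This is the classical fact that any two distinct vectors of weight $\le k$ differ by a vector of weight $\le 2k$, and a nonzero weight-$(\le 2k)$ vector cannot lie in the kernel of $\Phi$ because that would force a nonzero polynomial of degree $< m$ with $2k$ prescribed values to vanish in a way contradicting the Vandermonde/MDS property — i.e., $\Phi$ restricted to any $2k$ columns is invertible. I would also double-check the edge cases: $t < k$ (Berlekamp--Massey still returns the correct shorter recurrence), $x = 0$ (all syndromes zero, output zero), and the bookkeeping needed to keep field elements at $O(\log m \cdot \log q)$ bits. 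Everything else — polynomial factorization over a finite field, solving Vandermonde systems — is standard and poly-time, so I would cite \cite{DasV13} for the precise packaging rather than re-deriving the running-time bounds.
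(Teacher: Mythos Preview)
The paper does not prove this proposition at all: it is stated as a black-box citation to~\cite{DasV13} and used without argument. So there is no ``paper's own proof'' to compare your proposal against; you have supplied a proof where the paper simply imports one.

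That said, your reconstruction is the standard Reed--Solomon/BCH syndrome-decoding argument and is essentially what the cited reference does, so it is an appropriate proof. Two small points worth tightening. First, your indexing is a bit tangled: you introduce $2k$ points $\alpha_1,\ldots,\alpha_{2k}$ and set $\Phi_{\ell,j}=\alpha_\ell^{\,j}$, but then later speak of $X_i=\alpha^j$ for a single $\alpha$; the clean way is to fix a primitive $\gamma\in\mathbb{F}_{q'}$, set $\Phi_{\ell,j}=\gamma^{\ell j}=(\gamma^j)^\ell$, and then any $2k$ columns form an honest Vandermonde in the distinct locators $\beta_j=\gamma^j$. Second, you invoke ``polynomial factorization over a finite field'' and ``discrete log'' for recovery. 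Deterministic factorization over $\mathbb{F}_{q'}$ in general is delicate, but here you do not need it: since every root of $\Lambda$ is of the form $\gamma^j$ with $j\in[m]$, you can simply evaluate $\Lambda$ at $\gamma^1,\ldots,\gamma^m$ and read off the support, which is deterministic and $\mathrm{poly}(m,k,\log q)$ time; the same brute-force over $[m]$ handles the ``discrete log.'' With those tweaks the determinism claim and the ``all computations over $\mathbb{F}_q$'' clause (viewing $\mathbb{F}_{q'}$ as an $\mathbb{F}_q$-vector space) both go through.
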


We shall note that for our application, we actually need the `moreover' part of~\Cref{prop:sparse-rec} (which limits the use of more standard sparse recovery approaches). 
 
 \subsection{Probabilistic Tools} 
 
 We use the following standard concentration inequalities. The first is a standard form of Chernoff bounds. 
 
 \begin{proposition}[Chernoff bound; c.f.~\cite{dubhashi2009concentration}]\label{prop:chernoff}
 	Suppose $X_1,\ldots,X_m$ are $m$ independent random variables with range $[0,1]$ each. Let $X := \sum_{i=1}^m X_i$ and $\mu_L \leq \expect{X} \leq \mu_H$. Then, for any $\eps > 0$, 
 	\[
 	\Pr\paren{X >  (1+\eps) \cdot \mu_H} \leq \exp\paren{-\frac{\eps^2 \cdot \mu_H}{3+\eps}} \quad \textnormal{and} \quad \Pr\paren{X <  (1-\eps) \cdot \mu_L} \leq \exp\paren{-\frac{\eps^2 \cdot \mu_L}{2+\eps}}.
 	\]
 \end{proposition}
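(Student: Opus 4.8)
The plan is to use the standard exponential-moment (Bernstein--Chernoff) method: bound the upper (resp.\ lower) tail of $X$ by that of $e^{tX}$ (resp.\ $e^{-tX}$) via Markov's inequality, use independence to factor the resulting moment generating function over the $X_i$'s, and then optimize the free parameter $t>0$. The two parameters $\mu_L,\mu_H$ show up precisely because one step of the argument needs an \emph{upper} bound on $\expect{X}$ while the mirror step needs a \emph{lower} bound on it.

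For the upper tail, fix $t>0$. By Markov's inequality, $\prob{X>(1+\eps)\mu_H}=\prob{e^{tX}>e^{t(1+\eps)\mu_H}}\le e^{-t(1+\eps)\mu_H}\cdot\expect{e^{tX}}$. By independence $\expect{e^{tX}}=\prod_{i=1}^m\expect{e^{tX_i}}$, and since $X_i\in[0,1]$ and $y\mapsto e^{ty}$ is convex on $[0,1]$, we have $e^{tX_i}\le 1+(e^t-1)X_i$ pointwise, hence $\expect{e^{tX_i}}\le 1+(e^t-1)\expect{X_i}\le\exp\paren{(e^t-1)\expect{X_i}}$. Multiplying over $i$ and using $e^t-1>0$ together with $\expect{X}\le\mu_H$ gives $\expect{e^{tX}}\le\exp\paren{(e^t-1)\mu_H}$. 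Substituting the optimal choice $t=\ln(1+\eps)$ yields $\prob{X>(1+\eps)\mu_H}\le\big(e^{\eps}/(1+\eps)^{1+\eps}\big)^{\mu_H}$, and it remains only to invoke the scalar inequality $(1+\eps)\ln(1+\eps)-\eps\ge\eps^2/(3+\eps)$, valid for all $\eps>0$ (a routine calculus fact, e.g.\ a weakening of Bennett's inequality).

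For the lower tail the argument is mirror-symmetric, with $e^{-tX}$ in place of $e^{tX}$. If $\eps\ge 1$ the event $\{X<(1-\eps)\mu_L\}$ is empty since $X\ge 0$, so the inequality holds trivially; assume $0<\eps<1$. For $t>0$, $\prob{X<(1-\eps)\mu_L}\le e^{t(1-\eps)\mu_L}\cdot\expect{e^{-tX}}$, and as above $\expect{e^{-tX_i}}\le 1+(e^{-t}-1)\expect{X_i}\le\exp\paren{(e^{-t}-1)\expect{X_i}}$; this time $e^{-t}-1<0$, so multiplying over $i$ and using $\expect{X}\ge\mu_L$ gives $\expect{e^{-tX}}\le\exp\paren{(e^{-t}-1)\mu_L}$. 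With the optimal choice $t=\ln\tfrac{1}{1-\eps}>0$ this becomes $\prob{X<(1-\eps)\mu_L}\le\big(e^{-\eps}/(1-\eps)^{1-\eps}\big)^{\mu_L}$, and the scalar inequality $(1-\eps)\ln(1-\eps)+\eps\ge\eps^2/2\ge\eps^2/(2+\eps)$ for $0\le\eps<1$ finishes the proof.

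There is no genuine obstacle here---this is a textbook statement---and the only mildly delicate points are verifying the two one-variable inequalities for $(1\pm\eps)\ln(1\pm\eps)$ and not forgetting the $\eps\ge 1$ case split in the lower tail; both are routine. One could alternatively cite~\cite{dubhashi2009concentration} directly, but the short self-contained derivation above is worth recording.
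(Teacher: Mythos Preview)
Your proof is correct and follows the standard exponential-moment derivation of the Chernoff bound. Note that the paper does not actually prove this proposition at all; it is stated as a standard fact with a citation to~\cite{dubhashi2009concentration} and no proof is given. Your self-contained argument is therefore more than what the paper provides, and is entirely appropriate for a textbook result of this kind.
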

 
 We also need McDiarmid's inequality when there is non-trivial correlation between random variables. 
 \begin{proposition}[McDiarmid's inequality~\cite{McDiarmid89}]\label{prop:mcdiarmid}
 	Let $X_1,\ldots,X_m$ be $m$ independent random variables where each $X_i$ has some range $\XX_i$. Let $f: \XX_1 \times \cdots \times \XX_m \rightarrow \IR$ be any $c$-Lipschitz function meaning that for  all $i \in [m]$ 
	and all choices of $(x_1,\ldots,x_m),(x'_1,\ldots,x'_m) \in  \XX_1 \times \cdots \times \XX_m$, 
	\[
		\card{f(x_1,\cdots , x_{i-1},x_i,x_{i+1},\cdots,x_m) - f(x_1,\cdots , x_{i-1},x'_i,x_{i+1},\cdots,x_m)} \leq c. 
	\]
	Then, for all $b > 0$, 
	\[
		\Pr\paren{\card{f(X_1,\ldots,X_m) - \expect{f(X_1,\ldots,X_m)}} \geq b} \leq 2 \cdot \exp\paren{-\frac{2 \, b^2}{m \cdot c^2}}. 
	\]
 \end{proposition}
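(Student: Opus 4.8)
The plan is to prove this in the standard way, via the Doob martingale of $f(X_1,\ldots,X_m)$ together with an Azuma--Hoeffding-type exponential tail bound; essentially all of the work is in controlling the martingale increments, and the independence of the $X_i$'s is what makes the $c$-Lipschitz hypothesis usable.

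First I would set up the Doob martingale: for $i = 0,1,\ldots,m$ define $Z_i := \expect{f(X_1,\ldots,X_m) \mid X_1,\ldots,X_i}$, so that $Z_0 = \expect{f(X_1,\ldots,X_m)}$ and $Z_m = f(X_1,\ldots,X_m)$, and $(Z_i)_{i=0}^m$ is a martingale with respect to the filtration $\mathcal{F}_i := \sigma(X_1,\ldots,X_i)$. Writing $D_i := Z_i - Z_{i-1}$, the goal becomes a tail bound for $Z_m - Z_0 = \sum_{i=1}^m D_i$. The crucial --- and I expect most delicate --- step is to show that, conditioned on $\mathcal{F}_{i-1}$, the increment $D_i$ lies in an interval of (deterministic) length at most $c$. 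Here the independence of $X_1,\ldots,X_m$ is used decisively: since $X_{i+1},\ldots,X_m$ are independent of $(X_1,\ldots,X_i)$, the conditional expectation $\expect{f \mid X_1 = x_1,\ldots,X_i = x_i}$ equals the integral of $f(x_1,\ldots,x_i, X_{i+1},\ldots,X_m)$ against the (unchanged) product law of the tail coordinates. Consequently, changing only the coordinate $x_i$ to some $x_i'$ changes this quantity by at most $c$, by applying the $c$-Lipschitz bound pointwise inside the integral and using monotonicity. Taking a supremum and an infimum over the value of $X_i$ while holding $\mathcal{F}_{i-1}$ fixed then shows that the conditional range of $D_i$ is at most $c$, while $\expect{D_i \mid \mathcal{F}_{i-1}} = 0$ by the tower property.

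Given this, I would run the exponential-moment (Chernoff) method on the martingale. For $\lambda > 0$, Markov's inequality gives $\Pr\paren{Z_m - Z_0 \geq b} \leq e^{-\lambda b}\,\expect{\exp\paren{\lambda \textstyle\sum_{i=1}^m D_i}}$, and I would peel off the factors one at a time, conditioning successively on $\mathcal{F}_{m-1},\mathcal{F}_{m-2},\ldots$: by Hoeffding's lemma in its conditional form --- a mean-zero random variable supported on an interval of length $c$ has exponential moment at most $e^{\lambda^2 c^2/8}$ --- each step contributes a factor $e^{\lambda^2 c^2/8}$, yielding $\expect{\exp\paren{\lambda(Z_m - Z_0)}} \leq e^{\lambda^2 m c^2 / 8}$. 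Thus $\Pr\paren{Z_m - Z_0 \geq b} \leq \exp\paren{\lambda^2 m c^2/8 - \lambda b}$, and optimizing with $\lambda = 4b/(mc^2)$ gives $\exp\paren{-2b^2/(mc^2)}$. Applying the identical argument to $-f$ (which is also $c$-Lipschitz) bounds the lower tail, and a union bound over the two one-sided events produces the factor of $2$ in the statement. Two technical points I would be careful about: the conditional-range argument should be phrased in terms of regular conditional distributions, which cause no trouble here since the $X_i$ are independent so every conditional expectation is literally an integral against a product measure; and Hoeffding's lemma is invoked conditionally, so the interval endpoints for $D_i$ may themselves be $\mathcal{F}_{i-1}$-measurable, but only their difference --- the constant $c$ --- enters the bound. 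Everything else is a routine calculation.
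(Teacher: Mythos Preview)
Your proof is correct and is exactly the standard argument for McDiarmid's inequality via the Doob martingale and the Azuma--Hoeffding bound. The paper, however, does not provide its own proof of this proposition: it is stated as a known probabilistic tool with a citation to~\cite{McDiarmid89}, so there is nothing to compare against.
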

 
 Finally, in certain places, we also use limited independence hash functions in our algorithms to reduce their space complexity. 
 
  \begin{definition}[Limited-independence hash functions]\label{def:hash}
 For integers $n,m,k \geq 1$, a family $\mathcal{H}$ of hash functions from $[n]$ to $[m]$ is called a 
 $k$-wise independent hash function iff for any two $k$-subsets $a_1,\ldots,a_k \subseteq [n]$ and $b_1,\ldots,b_k \subseteq [m]$,  
 \[
 \Pr_{h \sim \mathcal{H}}\paren{h(a_1)=b_1 \wedge \cdots \wedge h(a_k) = b_k} = \frac1{m^k}. 
 \]
 \end{definition}

 Roughly speaking, a $k$-wise independent hash function behaves like a totally random function when considering at most $k$ elements. We use the following standard result for $k$-wise independent hash functions.
 \begin{proposition}[\!\!\cite{MotwaniR95}]\label{prop:k-wise}
 	For every integers $n,m,k \geq 2$, there is a $k$-wise independent hash function $\mathcal{H} = \set{h: [n] \rightarrow [m]}$ so that sampling and storing a function $h \in \mathcal{H}$ takes $O(k \cdot (\log n + \log m))$ bits of space.
 \end{proposition}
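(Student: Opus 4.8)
The plan is to use the classical polynomial-evaluation construction. I would fix a prime $p$ with $\max\{n,m\} \le p < 2\max\{n,m\}$ (such a $p$ exists by Bertrand's postulate) and work inside the field $\mathbb{F}_p$, viewing $[n]$ as a subset of $\mathbb{F}_p$. A function in the family $\mathcal{H}$ is specified by a tuple $\vec a = (a_0,\ldots,a_{k-1}) \in \mathbb{F}_p^{\,k}$ drawn uniformly at random; set $g_{\vec a}(x) = \sum_{i=0}^{k-1} a_i x^i$ with arithmetic in $\mathbb{F}_p$, and $h_{\vec a}(x) = \rho\big(g_{\vec a}(x)\big)$ for a fixed ``balanced'' surjection $\rho : \mathbb{F}_p \to [m]$ (one under which every element of $[m]$ has the same number of preimages). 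Sampling $h$ is just drawing $a_0,\ldots,a_{k-1}$, and storing $h$ amounts to storing these $k$ elements together with a description of $p$, i.e.\ $O(k\log p) + O(\log p) = O\big(k(\log n + \log m)\big)$ bits, as claimed.

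The only substantive point is $k$-wise independence. Fix $k$ distinct points $u_1,\ldots,u_k \in [n]$ and identify them with distinct field elements $x_1,\ldots,x_k \in \mathbb{F}_p$. Consider the $\mathbb{F}_p$-linear map $\Psi : \mathbb{F}_p^{\,k} \to \mathbb{F}_p^{\,k}$ given by $\Psi(\vec a) = \big(g_{\vec a}(x_1),\ldots,g_{\vec a}(x_k)\big)$. Its matrix is the Vandermonde matrix whose $j$-th row is $(1, x_j, x_j^2, \ldots, x_j^{k-1})$, and this matrix is invertible precisely because the $x_j$ are pairwise distinct. Hence $\Psi$ is a bijection, so when $\vec a$ is uniform on $\mathbb{F}_p^{\,k}$ the vector $\big(g_{\vec a}(x_1),\ldots,g_{\vec a}(x_k)\big)$ is uniform on $\mathbb{F}_p^{\,k}$; equivalently the $k$ values $g_{\vec a}(x_j)$ are fully independent and each uniform on $\mathbb{F}_p$. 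Applying $\rho$ coordinatewise preserves independence and makes each coordinate uniform over $[m]$, which is exactly the condition in \Cref{def:hash}.

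The step I would treat most carefully — a technicality rather than a real obstacle — is guaranteeing an exactly balanced surjection $\rho : \mathbb{F}_p \to [m]$, since $m$ need not divide $p$. I would handle this by working over a prime power instead: replace $m$ by the smallest prime power $m' \ge m$ (so $m' = O(m)$) and take the ambient field to be $\mathbb{F}_q$ with $q = (m')^{\,t}$ chosen so that $q \ge \max\{n,m\}$; then $\mathbb{F}_{m'}$ is a subfield of $\mathbb{F}_q$, and an $\mathbb{F}_{m'}$-linear projection $\mathbb{F}_q \to \mathbb{F}_{m'}$ is exactly $(q/m')$-to-one, which serves as $\rho$ with range $[m'] \supseteq [m]$ and $\log q = O(\log n + \log m)$. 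Working with the slightly larger range $[m']$ is harmless for every use of \Cref{def:hash} in this paper; if range exactly $[m]$ is insisted upon, one can discard the leftover $m' - m$ symbols at the cost of an $o(1)$ additive failure probability, which is absorbed everywhere. All remaining details — the field arithmetic, the invertibility of the Vandermonde matrix, and the bit count — are routine.
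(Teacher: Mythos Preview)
The paper does not prove this proposition at all; it is stated as a citation to Motwani--Raghavan and used as a black box. Your polynomial-evaluation construction over a finite field is precisely the standard argument from that reference, and your Vandermonde/bijection reasoning for $k$-wise independence is correct. The caveat you flag about exact uniformity when $m$ does not divide the field size is real and is the usual nuisance in stating this result cleanly; your prime-power workaround (or, alternatively, just noting that every use in the paper either has $m$ a power of two or only needs the marginal $\Pr[h(x)=j]=1/m$ up to a constant factor) is the standard way to dispatch it.
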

 
 We shall also use the following concentration result on an extension of Chernoff-Hoeffding bounds for limited independence hash function. 
 \begin{proposition}[\!\!\cite{SchmidtSS95}]\label{prop:chernoff-limited}
 	Suppose $h$ is a $k$-wise independent hash function and $X_1,\ldots,X_m$ are $m$ random variables in $\set{0,1}$ where $X_i = 1$ iff $h(i) = 1$. 
	Let $X := \sum_{i=1}^{m} X_i$. Then, for any $\eps > 0$, 
	\[
		\Pr\paren{\card{X - \expect{X}} \geq \eps \cdot \expect{X}} \leq \exp\paren{-\min\set{\frac{k}{2},\frac{\eps^2}{4+2\eps} \cdot \expect{X}}}. 
	\] 
 \end{proposition}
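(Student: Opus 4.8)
The plan is to prove this by the bounded (centered) moment method, the standard device for turning $k$-wise independence into exponential tail bounds and exactly the approach of~\cite{SchmidtSS95}. First, note that a $k$-wise independent family is also $k'$-wise independent for every $k' \le k$, so we are free to fix any \emph{even} integer $k' \le k$; the final $\min\set{\cdot,\cdot}$ in the statement will emerge from choosing $k'$ well at the end. Write $p_i := \Pr(X_i = 1)$, $Y_i := X_i - p_i$, and $\mu := \expect{X} = \sum_{i} p_i$, so that $X - \mu = \sum_i Y_i$ with each $Y_i$ bounded by $1$ in absolute value and of mean $0$. Since $k'$ is even, the event $\card{X - \mu} \ge \eps\mu$ implies $(X-\mu)^{k'} \ge (\eps\mu)^{k'}$, so by Markov's inequality it suffices to upper bound the central moment $\expect{(X-\mu)^{k'}}$ and divide by $(\eps\mu)^{k'}$; this simultaneously handles both tails.

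The one place where $k$-wise independence enters is the observation that $\expect{(X-\mu)^{k'}}$ takes \emph{exactly the same value} as it would if the $X_i$ were fully independent. Expanding $\big(\sum_i Y_i\big)^{k'}$ into monomials $Y_{i_1}\cdots Y_{i_{k'}}$, every monomial mentions at most $k'$ distinct indices, so by $k'$-wise independence its expectation factors into a product $\prod_i \expect{Y_i^{a_i}}$ over the distinct indices $i$ occurring with multiplicity $a_i$; moreover, any monomial in which some index occurs with multiplicity exactly $1$ has expectation $0$, since $\expect{Y_i}=0$ and that factor is independent of the rest. Hence only monomials all of whose index-multiplicities are $\ge 2$ survive, each contributing a value determined solely by the marginals $p_i$ — and these surviving terms, and their values, are identical to those in the fully independent expansion. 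The two expectations therefore agree term by term.

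It remains to bound the $k'$-th central moment of a sum of independent Bernoulli variables, a routine combinatorial estimate. Grouping the surviving monomials by the set partition of the $k'$ positions they induce, using that each block has size at least $2$ (hence at most $k'/2$ blocks) together with $\expect{Y_i^{a_i}} \le \expect{Y_i^2} \le p_i$ for $a_i \ge 2$, one obtains a bound of the form $\expect{(X-\mu)^{k'}} \le \big(c\,k' \cdot \max\set{\mu, k'}\big)^{k'/2}$ for an absolute constant $c$ (equivalently, one compares to the moments of a Poisson law). Feeding this into Markov gives
\[
	\Pr\!\big(\card{X-\mu} \ge \eps\mu\big) \ \le\ \Big(\frac{c\,k'\,\max\set{\mu,k'}}{\eps^2\mu^2}\Big)^{k'/2}.
\]

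Finally, we optimize over the even integer $k' \le k$. If $k$ is small relative to $\eps^2\mu$, take $k'$ to be the largest even integer $\le k$; the bracketed quantity is then a constant strictly below $1$, and the bound becomes $\exp(-\Omega(k))$, which with the constants tracked carefully is $e^{-k/2}$. If instead $k$ is large, take $k'$ to be about a suitable constant fraction of $\eps^2\mu$; again the bracket is a constant $< 1$, and the bound becomes $\exp(-\Omega(\eps^2\mu))$, which with the constants tracked carefully is $\exp\!\big(-\tfrac{\eps^2}{4+2\eps}\,\mu\big)$. Taking whichever choice gives the smaller exponent yields the claimed $\exp\!\big(-\min\set{k/2,\ \tfrac{\eps^2}{4+2\eps}\mu}\big)$. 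The main technical obstacle is precisely this last step: pinning down the constants in the combinatorial moment estimate tightly enough to land exactly on the exponents $k/2$ and $\tfrac{\eps^2}{4+2\eps}\mu$ (and handling boundary cases such as very large $\eps$), rather than merely on $\Omega(k)$ and $\Omega(\eps^2\mu)$. For the applications in this paper only the order of magnitude matters, so one could equally well invoke the corresponding estimate of~\cite{SchmidtSS95} directly.
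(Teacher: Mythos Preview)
The paper does not prove this proposition; it is quoted verbatim as a black-box result of Schmidt--Siegel--Srinivasan~\cite{SchmidtSS95}, so there is no ``paper's own proof'' to compare against. Your sketch follows exactly the moment-method route that~\cite{SchmidtSS95} uses: replace the tail by an even central moment via Markov, observe that $k'$-wise independence makes the $k'$-th central moment identical to the fully independent case, bound that moment combinatorially, and optimize over $k' \le k$. As a plan this is entirely correct.

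The only honest gap is the one you yourself flag: you do not actually carry out the constant-tracking that turns $\exp(-\Omega(k))$ and $\exp(-\Omega(\eps^2\mu))$ into the precise exponents $k/2$ and $\tfrac{\eps^2}{4+2\eps}\,\mu$. That step is genuine work in~\cite{SchmidtSS95} (via a careful Poisson-moment comparison rather than the looser set-partition count you describe), and your bound $\big(c\,k'\max\{\mu,k'\}\big)^{k'/2}$ with an unspecified absolute constant $c$ cannot by itself deliver those exact constants. Since the paper only ever uses this proposition with $k = \Theta(\log^2 n)$ and needs merely a $1/\poly(n)$ failure probability, your order-of-magnitude version would in fact suffice for every application here; but as a proof of the proposition \emph{as stated}, it remains a sketch rather than a complete argument.
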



\newcommand{\NES}{\textnormal{\texttt{NE-Sampler}}\xspace}
\newcommand{\NEC}{\textnormal{\texttt{NE-Counter}}\xspace}

\newcommand{\NET}{\ensuremath{\textnormal{\texttt{NE-Tester}}}\xspace}

\newcommand{\SNR}{\textnormal{\texttt{SN-Recovery}}\xspace}

\newcommand{\ssnr}{s_{\textnormal{\texttt{SNR}}}}
\newcommand{\snet}{s_{\textnormal{\texttt{NET}}}}

\newcommand{\eq}{\ensuremath{\textnormal{\textsc{Equality}}_N}\xspace}  
\newcommand{\skeq}{sk_{eq}}

\newcommand{\sam}{A}

\newcommand{\snes}{s_{\textnormal{\texttt{NES}}}}

\newcommand{\sett}{v_{sett}}
\newcommand{\Sett}{V_{sett}}

\newcommand{\nsett}{v_{sett}^{new}}

\newcommand{\vbad}{V_{bad}}
\newcommand{\badsize}{v_{bad}}

\newcommand{\inside}{\tilde{b}}

\section{New Sketching Toolkit}\label{sec:tools}

We present two novel linear sketches in this section that are needed for our main algorithm. The first one is a simple way of sampling random edges from a group of vertices to obtain an edge to a random neighbor of this set (as opposed to a random edge). The second (and main\footnote{The reason we consider this the most important of our sketches is that essentially \emph{all} our saving of $\poly\log{(n)}$ factors comes from the  efficiency of this sketch. 
For the first sketch, even a somewhat loose (in terms of extra $\poly\log{(n)}$ factors) bound in the space suffices for our purpose.}) linear sketch is a sparse-recovery-type sketch that allows for finding neighborhood of a vertex (or group of vertices) assuming we already know a set that intersects largely 
with the neighborhood.

\subsection{Neighborhood-Edge Sampler }\label{sec:nei-edge-sampler}

Suppose we have a group $S$ of vertices, and we want to sample a vertex $v$ from the neighborhood of $S$. If we want the probability of sampling $v$ to be proportional to $\deg(v)$, we can  sample an edge incident on $S$ (using an $\Lsampler$) and return the other endpoint; but what if we would like to sample $v$ uniformly at random from $N(S)$? There is a simple (and standard) solution for this problem using an $\Lsampler$ \emph{if we do not need to recover the edge incident on $v$}\footnote{Create an $n$-dimensional vector where entry $i$ denotes the number of edges incident on $v_i$ from $S$; then use an $\Lsampler$ to return an element from the support of this vector uniformly at random.}. However, for our purpose, we crucially need the edge as well therefore just an $\Lsampler$ will not work. We formulate the following problem to address this formally. 

\begin{problem}\label{prob:nei-edge-sampler}
	Given a graph $G=(V,E)$ specified in a dynamic stream, and a set $S \subseteq V$ of vertices at the start of the stream, output an edge $(u,v)$ such that $u \in S$ and $v$ is sampled uniformly at random from $N(S)$. 
\end{problem}
 
We design a linear sketch for solving this problem. 
\begin{lemma}\label{lem:nei-edge-sampler}
	There is a linear sketch, called $\NES(G,S)$, for~\Cref{prob:nei-edge-sampler} with size 
	\[
	\snes= \snes(n) = O(\log^3{n})
	\]
	 bits, that outputs \textnormal{FAIL} with probability at most $\nicefrac{1}{100}$ and gives a wrong answer with probability at most $n^{-8}$.
\end{lemma}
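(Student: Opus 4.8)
The plan is to build the sketch in two stages: a "which neighbor" stage that picks a uniformly random vertex $v \in N(S)$, and a "recover an incident edge" stage that produces one edge $(u,v)$ with $u \in S$. The difficulty the authors flag is that a plain $\Lsampler$ on the edges incident to $S$ oversamples high-degree neighbors, while the standard fix (collapsing $S$'s edges into an $n$-dimensional indicator vector of $N(S)$ and $L_0$-sampling from that) forgets which edge witnessed $v$. So the trick is to do both at once: first isolate a single neighbor $v$, and then, conditioned on that, run a second sketch that only "sees" the edges between $S$ and $v$, from which any $L_0$-sample is automatically an edge of the form $(u,v)$.

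Concretely, first I would take a $2$-wise (or $O(\log n)$-wise, to be safe) independent hash function $h$ that, for each target level $j \in \{0,1,\dots,\lceil \log_2 n\rceil\}$, subsamples the vertex set $V$ down to a rate $2^{-j}$; this is the usual "guess the support size" layering used inside $L_0$-samplers (\Cref{prop:l0-sampler}, \Cref{prop:k-wise}). For each level $j$ we maintain, over the field $\mathbb{F}_q$ with $q = \poly(n)$, a constant-sparsity \sprec\ sketch (\Cref{prop:sparse-rec}) of the following derived vector: index it by pairs $(u,v)$ with $u \in S$ and $v$ in the level-$j$ subsample, with entry equal to $\vect(E)_{(u,v)}$. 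Because the stream is linear in $\vect(E)$ and this derived vector is a fixed linear image of $\vect(E)$, the whole thing is a linear sketch and composes with \Cref{prop:linear-sketch}. At the level $j^\star$ where the number of *distinct subsampled neighbors of $S$* is $\Theta(1)$ — which happens for some level with constant probability, exactly as in the analysis of $\Lsampler$ — the set of nonzero $(u,v)$ pairs in that derived vector is supported on $O(1)$ distinct values of $v$ but could still have many $u$'s per $v$; so I would add one more hash layer subsampling $S$ itself, or equivalently subsample the pairs, to thin the total support down to $O(1)$ pairs, at which point the $O(1)$-sparse \sprec\ sketch recovers them exactly. From the recovered pairs, pick the lexicographically-first $v$ that appears and output any incident $(u,v)$.

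To see this yields the right distribution: conditioned on reaching a "good" level and on the thinning succeeding, the surviving $v$'s are a uniform random subset of $N(S)$ (the hash treats all neighbors symmetrically), so the chosen $v$ is uniform on $N(S)$; and the edge we return is an honest edge of $G$ incident to $S$ at $v$ because it was read off $\vect(E)$ directly. For the failure probability: the event that *no* level has support in the right constant window, plus the event that the chosen level's support exceeds the sparsity bound, is a constant — by taking the sparsity parameter a large enough constant and repeating the level-guessing over all $O(\log n)$ scales, this constant can be pushed below $1/100$, which is the allowed FAIL probability. The "wrong answer" event is the event that \sprec\ is fed a vector exceeding its sparsity promise yet still returns something — but \Cref{prop:sparse-rec} is *deterministic and always correct on $k$-sparse inputs*, so the only way to output a non-edge is if the sparsity bound is violated and we fail to detect it; detection is done by re-checking sparsity of the recovered vector, and the residual error there is a field-collision event of probability $\poly(n)/q \le n^{-8}$ by choosing $q = n^{\Theta(1)}$ large enough. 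The size bound is immediate: $O(\log n)$ levels, each an $O(1)$-sparse \sprec\ sketch over $\mathbb{F}_q$ with $m = \binom n2$, costs $O(1 \cdot \log m \cdot \log q) = O(\log^2 n)$ bits per level, hence $O(\log^3 n)$ total, plus $O(\log^2 n)$ bits for the $O(\log n)$-wise hash seeds.

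The main obstacle I expect is getting the conditional distribution of $v$ *exactly* uniform rather than merely near-uniform: the coupling between "which level is good" and "which neighbors survive" has to be handled carefully so that symmetry among the elements of $N(S)$ is preserved (this is the same subtlety that makes perfect $L_0$-sampling, as opposed to approximate, somewhat delicate), and I would lean on the perfect-$L_0$-sampler construction of \cite{JowhariST11} cited in \Cref{prop:l0-sampler} for the template, adapting it so that the recovered object is a pair $(u,v)$ rather than a single coordinate. A secondary nuisance is ensuring the "moreover" all-computations-over-$\mathbb{F}_q$ property is not broken by the hashing step — but the hashing only selects which coordinates enter the sketch and never touches the arithmetic, so this is fine.
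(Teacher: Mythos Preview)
Your proposal shares the right opening move with the paper---geometrically subsample $V$ so that at some level only $\Theta(1)$ vertices of $N(S)$ survive---but it is missing the ingredient that secures uniformity. In the paper, each level $i$ stores not only an $\Lsampler$ on the edges between $S$ and the subsampled set $T_i$, but also a separate \NEC sketch (\Cref{lem:size-test}) whose sole job is to test whether $\card{N(S)\cap T_i}=1$. Recovery scans for a level where the \NEC certifies exactly one surviving neighbor and only then outputs the $\Lsampler$'s edge. Conditioned on that event, the single surviving neighbor is uniform on $N(S)$ by symmetry of the hash, and every edge the $\Lsampler$ could return is incident to that one neighbor, so no degree bias is possible.

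Your second-stage thinning is exactly where bias creeps in. Suppose at the good $V$-level two neighbors $v_1,v_2$ survive, with $v_1$ having $100$ edges into $S$ and $v_2$ only one. To bring the edge support down to $O(1)$ for sparse recovery you must subsample pairs (or $S$) at rate roughly $1/100$; then $v_2$'s lone edge survives with probability about $1/100$ while $v_1$ almost surely retains an edge. Among the recovered pairs $v_1$ dominates, and ``lexicographically first $v$'' (or even a uniformly random one) returns $v_1$ far more often than $v_2$. This is not the level-selection coupling you flag at the end---it is a structural bias introduced by thinning on edges while more than one neighbor is still in play. The fix is precisely the paper's: add a per-level \NEC of size $O(\log^2 n)$ that detects $\card{N(S)\cap T_i}=1$ and accept only such levels; the accompanying $\Lsampler$ then delivers a witnessing edge with no bias. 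With that addition, your sketch becomes the paper's.
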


To solve~\Cref{prob:nei-edge-sampler}, we first need the following standard lemma. The proof of this lemma is known and is presented only for completeness. 
\begin{lemma}\label{lem:size-test}
	There is a linear sketch called \emph{\NEC} of size $O((\log{n})\cdot\log{(1/\delta_E}))$ bits that given a graph $G=(V,E)$ presented in a dynamic stream, and  any two sets $S$ and $T$ of vertices at the beginning of the stream, outputs
	whether or not $\card{N(S) \cap T} = 1$ with probability of error at most $\delta_E$. 
\end{lemma}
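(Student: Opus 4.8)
The plan is to test whether $\card{N(S)\cap T}=1$ by reducing it to an $L_0$-sampling style fingerprinting check on a single $\card{T}$-dimensional vector. First I would define, for the purpose of the sketch, the vector $y\in\IR^{T}$ whose coordinate $y_v$ (for $v\in T$) counts the number of edges of $G$ between $v$ and $S$, i.e. $y_v=\card{N(v)\cap S}$. This $y$ is a linear image of $\vect(E)$: each edge $(u,v)$ with $u\in S$, $v\in T$ contributes $+1$ to $y_v$, each edge with both endpoints in $S\cap T$ contributes to the appropriate coordinate, and all other edges contribute $0$; so $y=\Psi\cdot\vect(E)$ for a fixed $0/1$ matrix $\Psi$ depending only on $S,T$. (I'd be slightly careful about the degenerate case $v\in S\cap T$, but this only changes which edges are "counted" and $\Psi$ is still a fixed linear map.) The key observation is that $\card{N(S)\cap T}=\norm{y}_0$, so the task is exactly to test whether a vector presented in a dynamic stream (via linear updates) has support size exactly $1$.

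Next I would implement the support-size-$1$ test with a standard two-part linear sketch over a prime field $\mathbb{F}_q$ with $q=\poly(n)$. Sample a uniformly random $r\in\mathbb{F}_q$ and $\card{T}$ independent (or, to save space, pairwise/$k$-wise independent via \Cref{prop:k-wise}) scalars; maintain the two linear combinations $Z_0=\sum_{v\in T} y_v$ and $Z_1=\sum_{v\in T} r^{\,\mathrm{ind}(v)}\, y_v$ where $\mathrm{ind}(v)$ is a fixed enumeration of $T$. Both are linear functionals of $\vect(E)$, hence maintainable in a dynamic stream, and storing them plus the randomness takes $O(\log n)$ bits (the randomness is generated from a seed of $O(\log n)$ bits, accessed implicitly, consistent with \Cref{def:linear-sketch}). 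If $\norm{y}_0=1$ with the unique support coordinate being $v^\star$ and value $c^\star$, then $Z_0=c^\star$ and $Z_1=c^\star\cdot r^{\,\mathrm{ind}(v^\star)}$, so $Z_0\neq0$ and $Z_1/Z_0$ is a power of $r$ — i.e. $Z_0,Z_1$ satisfy a clean algebraic relation that pins down $v^\star$. The recovery algorithm outputs "$\card{N(S)\cap T}=1$" iff $Z_0\neq 0$ and $Z_1\cdot Z_0^{-1}=r^{j}$ for some $j\in\{0,\ldots,\card{T}-1\}$ (equivalently, iff the implied single-coordinate reconstruction, when fed back through $\Psi$, is consistent).

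For correctness I'd argue two directions. If $\card{N(S)\cap T}=1$ the test always accepts (deterministically, given the above relations hold exactly). If $\card{N(S)\cap T}=0$ then $Z_0=0$ and we reject (unless $T$ is empty, a trivial boundary case). If $\card{N(S)\cap T}\ge 2$, then $y$ has at least two nonzero coordinates; writing $f(r)=\sum_v y_v r^{\mathrm{ind}(v)}$ and using $Z_0=f(1)$... actually cleaner: the "single-coordinate hypothesis" consistent with $(Z_0,Z_1)$ is a fixed degree-$\le\card{T}$ polynomial identity in $r$ that, when $y$ genuinely has $\ge 2$ nonzero coordinates, fails for all but at most $\card{T}=O(n)$ values of $r$; since $q$ is a large enough $\poly(n)$, the probability of a false accept is at most $O(n)/q$, which we can drive below $\delta_E$ by taking $q = \poly(n)/\delta_E$, giving sketch size $O((\log n)\cdot\log(1/\delta_E))$ as claimed (one can also repeat $O(\log(1/\delta_E))$ independent copies with $q=\poly(n)$ instead, achieving the same bound).

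The main obstacle — and the reason the lemma is stated as "known, included for completeness" — is being careful that this is genuinely a \emph{linear} sketch of $\vect(E)$ (the reduction $y=\Psi\vect(E)$ must be fixed in advance once $S,T$ are known, which it is) and handling the boundary/degenerate cases ($S\cap T\neq\emptyset$, parallel edges from the dynamic stream making $y_v>1$, empty $T$). None of these is deep: the $y_v>1$ case is automatically fine because $\norm{y}_0$ still equals $\card{N(S)\cap T}$ regardless of multiplicities, and the support-$1$ fingerprinting test above does not assume $0/1$ values. So the proof is really just: (i) exhibit the linear map to $y$, (ii) quote the standard $L_0$-style exact-support-$1$ test, (iii) a Schwartz–Zippel-type union bound for the error probability.
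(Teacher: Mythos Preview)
Your proposal is correct but takes a genuinely different route from the paper. The paper's argument is purely combinatorial: in each of $t=\log(1/\delta_E)$ independent rounds it hashes $T$ into two buckets via a pairwise-independent $h_i:[n]\to\{1,2\}$, maintains two edge-counters $c_{i1},c_{i2}$ (number of edges from $S$ to each bucket), and declares ``$\neq 1$'' whenever both counters are zero or both nonzero. This test has one-sided error, and when $\card{N(S)\cap T}\ge 2$ any fixed pair of neighbors collides in a single round with probability $1/2$, giving error $2^{-t}=\delta_E$.

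Your approach instead reduces to the standard algebraic $1$-sparse detection sketch: form the neighbor-count vector $y\in\IR^T$ (a fixed linear image of $\vect(E)$, so this is legitimate), then maintain $Z_0=\sum_v y_v$ and $Z_1=\sum_v r^{\mathrm{ind}(v)}y_v$ over $\mathbb{F}_q$ and accept iff $Z_0\neq 0$ and $Z_1/Z_0$ is a power of $r$. The Schwartz--Zippel analysis works; one small slip is that after union-bounding over all candidate exponents $j\in[\card{T}]$ the false-accept probability is $O(n^2)/q$ rather than $O(n)/q$, but this is immaterial for the space bound (take $q\ge n^2/\delta_E$, or use your repetition alternative). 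What your approach buys is a single-shot sketch of size $O(\log n+\log(1/\delta_E))$, slightly tighter than the paper's $O(\log n\cdot\log(1/\delta_E))$; what the paper's approach buys is simplicity---just binary hashing and integer counters, no field arithmetic---which is all that is needed here since \NEC is only a helper primitive and its space is a lower-order term.
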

\begin{proof}
	The algorithm is simply as follows. For $i=1$ to $t=\log{(1/\delta_E)}$ iterations: 
	\begin{enumerate}[label=$(\roman*)$]
		\item Pick a pair-wise independent hash function $h_i: [n] \rightarrow \set{1,2}$. For $j \in \{1,2\}$, let $T_{ij} := \set{v \in T \mid h_i(v) = j}$. 
		\item Count the number of edges from $S$ to $T_{i1}$ and from $S$ to $T_{i2}$ using counters $c_{i1}$ and $c_{i2}$. If both counters are zero or both non-zero, then return $\card{N(S) \cap T} \neq 1$.
	\end{enumerate}
	If the algorithm never terminated up until here, output $\card{N(S) \cap T} = 1$. 
	
	The algorithm uses $O(\log{n} \cdot \log{(1/\delta_E)})$ space as by~\Cref{prop:k-wise}, it only needs $O(\log{n})$ bits per iteration to store each hash function (and another $O(\log{n})$ bits for the counters). 
	Moreover, whenever the algorithm returns $\card{N(S) \cap T} \neq 1$, the answer is correct: Either $S$ has zero neighbors in $T_{i1} \cup T_{i2} = T$, or it has non-zero number of edges from $S$ to both $T_{i1}$ and $T_{i2}$ implying $S$ has more than one neighbor in $T$. The only case in which the algorithm can make an error is when $\card{N(S) \cap T} > 1$, but it does not detect it. 
	
	Consider any pair of vertices $u \neq v$ in $N(S) \cap T$. In each iteration $i \in [t]$, the probability that $u$ and $v$ hash to the same value is half since $h_i$ is a pairwise-independent hash function. Thus, the probability that this event happens in all 
	$t = \log{(1/\delta_E)}$ iterations is $\delta_E$, which means the algorithm can only err with probability at most $ \delta_E$. 
\end{proof}

We are now ready to prove~\Cref{lem:nei-edge-sampler} using the following linear sketch. 

\begin{Algorithm}
	$\NES(G,S)$: A  linear sketch for~\Cref{prob:nei-edge-sampler}. 
	
	\medskip
	
	\textbf{Input:} A graph $G=(V,E)$; a set $S \subseteq V$ of vertices. 
	
	\medskip
	
	\textbf{Output:} An edge from $S$ to a uniformly random vertex of $N(S)$. 
	
	\medskip
	
	\textbf{Sketching matrix:} 
	
	\begin{enumerate}
		\item Repeat the following for $k=50$ iterations:		
		\item Do in parallel  $i \in \set{ 0, 1, \ldots,  \ceil{2 \log n}}$:
		\begin{enumerate}[leftmargin=10pt]
		\item Let $h_i: V \rightarrow [2^i]$ be a pairwise independent hash function. Define $T_i := \set{v \in V \mid h_i(v)=1}$ so that each vertex belongs to $T_i$ with probability $1/2^i$ (with pairwise independence across vertices). 
		\item Store an $\Lsampler$ $\mathcal{L}_i$ with parameters $\delta_E=n^{-10}$ and $\delta_F=1/100$ for edges between $S$ and $T_i$. Also, store a \NEC (\Cref{lem:size-test}) for $(S,T_i)$ with $\delta_E=n^{-10}$.
		\end{enumerate}
	\end{enumerate}
	
	\medskip
	
	\textbf{Recovery:} 
	
	\begin{enumerate}
		\item Go over all copies of \NEC in an arbitrary order and find one that returns ``$1$" as the size of the intersection. Extract an edge from the corresponding $\Lsampler$ and output it. 
		\item If no \NEC returns ``$1$" then output ``FAIL".
	\end{enumerate}
	
\end{Algorithm}

We now show the correctness of $\NES$. Let $d:=\card{N(S)}$ and let $i^*$ be such that $2^{i^*} \leq 4d < 2^{i^*+1}$. Consider the iteration $i^*$ wherein we sample each vertex in $T_{i^*}$ with probability $1/2^{i^*}$. 
Let $X$ be a random variable denoting the number of elements of $N(S)$ that are sampled.
Let $X_j$ be the random variable which is $1$ if the $j$-th vertex of $N(S)$ is sampled and $0$ otherwise. We have $X= \sum_{j \in N(S)} X_j$. We want to find the probability that $X=1$:
\begin{align*} 
	\prob{X=1} &= \sum_{j \in N(S)} \prob{X_j=1 \wedge X_{-j}=0} = \sum_{j \in N(S)} \prob{X_j=1} \cdot \prob{X_{-j}=0 \mid X_j=1} \notag \\
	&= \sum_{j \in N(S)} \frac{1}{2^{i^*}} \cdot \paren{1-\prob{X_{-j} \geq 1 \mid X_j=1}} \tag{by the choice of $T_{i^*}$} \\
	&\geq \sum_{j \in N(S)} \frac{1}{2^{i^*}} \cdot \paren{1-\expect{X_{-j} \mid X_j=1}} \tag{by Markov inequality} \\
	&= \sum_{j \in N(S)} \frac{1}{2^{i^*}} \cdot \paren{1-\sum_{j' \in N(S) \setminus {j}} \expect{X_{j'}}} \tag{by linearity of expectation and pairwise independence of $X_j,X_{j'}$ for $j \neq j' \in N(S)$} \\
	&= \frac{d}{2^{i^*}} \cdot \paren{1-\frac{d-1}{2^{i^*}}} \tag{as $\card{N(S)} = d$} \\
	&\geq \frac{1}{4} \cdot \paren{1-\frac{1}{2}} = \frac{1}{8}. \tag{as $2^{i^*} \leq 4d < 2^{i^*+1}$}
\end{align*}

For the purpose of analysis we say that the algorithm fails if the parallel iteration $i^*$ fails. This could happen if $X \neq 1$ or if the corresponding $\Lsampler$ fails. By the above calculation and the bound on $\delta_F = 1/100$, 
we get that the failure probability is at most $7/8 + 1/100 < 9/10$. Thus, the probability that all $k=50$ iterations fail is at most $(0.9)^{50} < 0.01$.
Therefore, exactly one element $v$ from $N(S)$ is picked in some iteration, and we can find an edge from $S$ to $v$ with probability at least $0.99$. 
Also, we can union bound over the error probabilities of $O(k \cdot \log{n})$ copies of $\Lsampler$ and \NEC giving a total error probability of at most $n^{-8}$. This proves the correctness of \NES as required in~\Cref{lem:nei-edge-sampler}. 

The space taken by an $\Lsampler$ is $O(\log^2 n)$ bits since $\delta_E=n^{-10}$ and $\delta_F=1/100$ (using \Cref{prop:l0-sampler}). The space taken by one \NEC (\Cref{lem:size-test}) is $O(\log^2 n)$ bits, and by a hash function is $O(\log n)$ bits. As we run $O(\log n)$ copies of $\Lsampler$, \NEC and hash functions in parallel, the total space taken by $\NES$ is $O(\log^3 n)$ bits implying \Cref{lem:nei-edge-sampler}.

\subsection{Sparse-Neighborhood Recovery}\label{sec:snr}
The second problem we would like to tackle is a sparse recovery type problem: suppose we have a group $S$ of vertices, and at the end of the stream, we (somehow) managed to find a \emph{superset} $T$ of all but a ``tiny'' fraction of vertices in $N(S)$. 
Can we recover the remainder of $N(S)-T$ efficiently  using our sketch? Formally, 

\begin{problem}[Sparse-Neighborhood Recovery]\label{prob:sparse-recovery}
	Let $a,b \geq 1$ be known integers such that $a \geq 100b$. Consider a graph $G=(V,E)$ specified in a dynamic stream and let $S \subseteq V$ be a known subset of vertices. The goal is to, given a set $T \subseteq V$ at the \underline{end} of the stream, 
	return the set $N(S) - T$, assuming the following promises: 
	\begin{enumerate}[label=$(\roman*)$]
		\item\label{promise1} size of $T$ is at most $a$;
		\item\label{promise2} size of $N(S) - T$ is at most $b$; 
		\item\label{promise3} for every vertex $v \in N(S) - T$, we have $\card{S \cap N(v)} < c$.
	\end{enumerate}
\end{problem}
In words, in~\Cref{prob:sparse-recovery}, we have a set $S$ of vertices, known at the \emph{start} of the stream, and we are interested in their neighbors \emph{outside} a given set $T$, specified at the \emph{end} of the stream. 
Our guarantees are roughly that $T$ is not ``too large'' (parameter $a$), neighborhood of $S$ outside $T$ is  ``small'' (parameter $b$), and each vertex outside $T$ only has ``few'' neighbors inside $S$ (parameter $c$). 
See~\Cref{fig:sketch-nbhd-rec} for an illustration. 

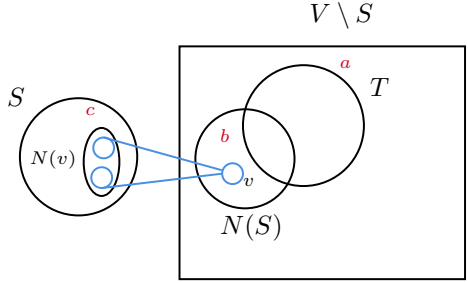
\begin{figure}[h!]
	\centering
	{\tikzset{every picture/.style={line width=0.75pt}} 

\begin{tikzpicture}[x=0.75pt,y=0.75pt,yscale=-1,xscale=1]
	
	\draw   (367,94) -- (511,94) -- (511,211.5) -- (367,211.5) -- cycle ;
	\draw   (286.5,149.75) .. controls (286.5,133.41) and (299.74,120.17) .. (316.08,120.17) .. controls (332.42,120.17) and (345.67,133.41) .. (345.67,149.75) .. controls (345.67,166.09) and (332.42,179.33) .. (316.08,179.33) .. controls (299.74,179.33) and (286.5,166.09) .. (286.5,149.75) -- cycle ;
	\draw   (399,134) .. controls (399,117.16) and (412.66,103.5) .. (429.5,103.5) .. controls (446.34,103.5) and (460,117.16) .. (460,134) .. controls (460,150.84) and (446.34,164.5) .. (429.5,164.5) .. controls (412.66,164.5) and (399,150.84) .. (399,134) -- cycle ;
	\draw   (375,150.75) .. controls (375,136.94) and (386.19,125.75) .. (400,125.75) .. controls (413.81,125.75) and (425,136.94) .. (425,150.75) .. controls (425,164.56) and (413.81,175.75) .. (400,175.75) .. controls (386.19,175.75) and (375,164.56) .. (375,150.75) -- cycle ;
	\draw  [color={rgb, 255:red, 74; green, 144; blue, 226 }  ,draw opacity=1 ] (323.5,145.25) .. controls (323.5,142.35) and (325.85,140) .. (328.75,140) .. controls (331.65,140) and (334,142.35) .. (334,145.25) .. controls (334,148.15) and (331.65,150.5) .. (328.75,150.5) .. controls (325.85,150.5) and (323.5,148.15) .. (323.5,145.25) -- cycle ;
	\draw  [color={rgb, 255:red, 74; green, 144; blue, 226 }  ,draw opacity=1 ] (388.5,158.25) .. controls (388.5,155.35) and (390.85,153) .. (393.75,153) .. controls (396.65,153) and (399,155.35) .. (399,158.25) .. controls (399,161.15) and (396.65,163.5) .. (393.75,163.5) .. controls (390.85,163.5) and (388.5,161.15) .. (388.5,158.25) -- cycle ;
	\draw  [color={rgb, 255:red, 74; green, 144; blue, 226 }  ,draw opacity=1 ] (322.5,160.25) .. controls (322.5,157.35) and (324.85,155) .. (327.75,155) .. controls (330.65,155) and (333,157.35) .. (333,160.25) .. controls (333,163.15) and (330.65,165.5) .. (327.75,165.5) .. controls (324.85,165.5) and (322.5,163.15) .. (322.5,160.25) -- cycle ;
	\draw   (318.67,152.33) .. controls (318.67,142.85) and (322.7,135.17) .. (327.67,135.17) .. controls (332.64,135.17) and (336.67,142.85) .. (336.67,152.33) .. controls (336.67,161.81) and (332.64,169.5) .. (327.67,169.5) .. controls (322.7,169.5) and (318.67,161.81) .. (318.67,152.33) -- cycle ;
	\draw [color={rgb, 255:red, 74; green, 144; blue, 226 }  ,draw opacity=1 ]   (328.75,140) -- (388.33,156.5) ;
	\draw [color={rgb, 255:red, 74; green, 144; blue, 226 }  ,draw opacity=1 ]   (327.75,165.5) -- (388.5,158.25) ;
	
	\draw (278.5,113.5) node [anchor=north west][inner sep=0.75pt]   [align=left] {$S$};
	\draw (431,70) node [anchor=north west][inner sep=0.75pt]   [align=left] {$V \setminus S$};
	\draw (397.67,159.17) node [anchor=north west][inner sep=0.75pt]  [font=\scriptsize] [align=left] {$v$};
	\draw (290,144.67) node [anchor=north west][inner sep=0.75pt]  [font=\scriptsize] [align=left] {$N(v)$};
	\draw (461.67,108.33) node [anchor=north west][inner sep=0.75pt]   [align=left] {$T$};
	\draw (386,177) node [anchor=north west][inner sep=0.75pt]   [align=left] {$N(S)$};
	\draw (318.08,123.17) node [anchor=north west][inner sep=0.75pt]  [font=\scriptsize,color={rgb, 255:red, 208; green, 2; blue, 27 }  ,opacity=1 ] [align=left] {$c$};
	\draw (446.33,99.33) node [anchor=north west][inner sep=0.75pt]  [font=\scriptsize,color={rgb, 255:red, 208; green, 2; blue, 27 }  ,opacity=1 ] [align=left] {$a$};
	\draw (386,134.33) node [anchor=north west][inner sep=0.75pt]  [font=\scriptsize,color={rgb, 255:red, 208; green, 2; blue, 27 }  ,opacity=1 ] [align=left] {$b$};

\end{tikzpicture}} 
	\caption{This figure shows $S$ and its neighborhood $N(S)$ which intersects with $T$. $T$ has size at most $\textcolor{red}{a}$ and $N(S)-T$ has size at most $\textcolor{red}{b}$. Also, every vertex $v$ in $N(S)-T$ has at most $\textcolor{red}{c}$ neighbors in $S$.  \label{fig:sketch-nbhd-rec}}
\end{figure}

\begin{lemma}\label{lem:sparse-recovery}
	 There is a linear sketch, called $\SNR(G,S)$, for~\Cref{prob:sparse-recovery} that uses sketch and randomness of size, respectively, 
	\[
	\ssnr = \ssnr(n,a,b,c) = O(a \cdot \log{c} + b \cdot \log{n} \cdot \log{c})  \quad \textnormal{and} \quad O(a \cdot \log{n})
	\]
	bits and outputs a wrong answer with probability at most $1-4\exp(-\dfrac{b}{24\log{n}})$ for $b \geq 24\log{n}$. 
\end{lemma}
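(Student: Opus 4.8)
The plan is to first reduce \Cref{prob:sparse-recovery} to recovering the support of a \emph{sparse, small‑entry} vector, and then build a linear sketch that is oblivious to $T$. Since $S$ is fixed at the start of the stream, the vector $y\in\IIZ^n$ with $y_v:=\card{N(v)\cap S}$ is a fixed linear image of $\vect(E)$, so by \Cref{prop:linear-sketch} any linear sketch of $y$ (over a field of our choosing) can be maintained in the stream. Pick, by Bertrand's postulate, a prime $q$ with $c\le q<2c$, work over $\mathbb{F}_q$, and set $\bar y:=y\bmod q$. The role of the promise $\card{S\cap N(v)}<c$ for $v\in N(S)-T$ is exactly that then $0<y_v<c\le q$, so $\bar y_v=y_v\neq 0$; together with $y_v=0$ for $v\notin N(S)$ this gives
\[
\supp{\bar y}\cap(V\setminus T)=N(S)-T ,
\]
and this part of $\bar y$ is $b$-sparse with entries in $\{1,\dots,q-1\}$ (each fitting in $O(\log c)$ bits). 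Hence it suffices to design a $T$-oblivious linear sketch of $\bar y$ over $\mathbb{F}_q$ from which, once $T$ is revealed, one can read off $\supp{\bar y|_{V\setminus T}}$.

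\emph{The sketch.} I would use two linear gadgets, both maintained as linear functions of $\bar y$ over $\mathbb{F}_q$. The first is a coarse ``separation'' layer: hash $V$ into $R=\Theta(a)$ buckets using a $\Theta(a)$-wise independent hash $h$ (storable with $O(a\log n)$ bits of randomness by \Cref{prop:k-wise}), and store in each bucket a small, $O(\log c)$-bit mod-$q$ gadget; in total $O(a\log c)$ bits. Its job, using $\card{N(S)}\le a+b\le 1.01a\ll R$ (here the gap $a\ge 100b$ enters) and the fact that $T$ is known at the end, is to flag the $O(b)$ buckets that are disjoint from $T$ yet nonzero in $\bar y$, to cancel out what these ``clean'' buckets reveal about $N(S)\cap T$, and to route each vertex of $N(S)-T$ into a bucket in which it is the unique element of $N(S)$. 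The second is a fine layer: $\Theta(\log n)$ independent CountSketch-style structures over $\mathbb{F}_q$, each with $\Theta(b)$ buckets carrying a $1$-sparse-recovery-type gadget, used to actually pin down the $\le b$ vertex identities; recovering one identity ``costs'' $O(\log n)$ bits amortized over the $\Theta(b)$ slots and $O(\log c)$ comes from working over $\mathbb{F}_q$, for a total of $O(b\log n\log c)$ bits --- the second term in $\ssnr$.

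\emph{Recovery and analysis.} At the end, evaluate $h$ on the given set $T$, identify the clean buckets, subtract off their $N(S)\cap T$ content, and run the gadgets' recovery routines to output a set $\widehat N$; the claim is $\widehat N=N(S)-T$. Essentially all the probability loss is in showing that, with the stated probability, simultaneously: (i) no clean bucket is overloaded (so the mod-$q$ value of a clean bucket equals the sum of $\bar y$ over the vertices of $N(S)-T$ it contains, and a singleton clean bucket is read off correctly); and (ii) every vertex of $N(S)-T$ is alone, among $N(S)$, in a clean bucket in at least one of the $\Theta(\log n)$ fine repetitions. Item (ii) is the crux: for a fixed $v\in N(S)-T$, the bad event that $v$ collides with $T$ or with another vertex of $N(S)$ has probability $O(1/100)$ per repetition because $R=\Theta(a)$ and, by $\Theta(a)$-wise independence, the $\le a+b$ relevant hash values may be treated as fully independent; running this argument independently across the $\Theta(\log n)$ repetitions and then bounding the number of vertices of $N(S)-T$ that fail in \emph{all} repetitions via a limited-independence Chernoff bound (\Cref{prop:chernoff-limited}) or McDiarmid's inequality (\Cref{prop:mcdiarmid}) yields a failure probability of the form $\exp(-\Theta(b/\log n))$, matching the claimed bound up to the constant in the exponent for $b\ge 24\log n$.

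\emph{Main obstacle.} The hard part is getting the identity-recovery cost down to $O(\log c)$ amortized per recovered vertex --- rather than the $\polylog(n)$ factor one pays with off-the-shelf $L_0$-samplers or generic sparse recovery --- while still neutralizing the up-to-$a$ vertices of $N(S)\cap T$, whose $y$-values are \emph{not} bounded by $c$, using only information collected before $T$ is revealed. The modular reduction takes care of those large values, but correctly ``cleaning out'' $T$ and routing each target vertex to an isolated bucket without advance knowledge of $T$ is what forces both the $\Theta(a)$-wise independent hashing into a $\Theta(a)$-sized bucket array and the two-layer design, and is also where the separation $a\ge 100b$ between the two sparsity parameters is used.
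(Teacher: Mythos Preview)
Your proposal identifies the right vector representation and the right field $\mathbb{F}_q$, but the two-layer sketch you describe does not actually achieve the space bound, and the analysis conflates the two layers. Concretely: your fine layer has $\Theta(b)$ buckets, yet the support of $\bar y$ has size up to $a+b\ge 100b$, almost all of it coming from $N(S)\cap T$; every fine bucket therefore contains $\sim a/b\ge 100$ support elements from $T$, so no vertex of $N(S)-T$ is ever ``alone, among $N(S)$,'' in a fine bucket. Your fix---``subtract off the $N(S)\cap T$ content''---requires knowing $\bar y_v$ for each $v\in T$, but the coarse layer does not give you these values: with $|N(S)|\approx a$ elements hashed into $\Theta(a)$ coarse buckets, an expected $\Theta(a)$ (not $O(b)$) of them collide, and those collided values cannot be read off. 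If instead you meant the $\Theta(\log n)$ repetitions to each have $\Theta(a)$ buckets (so that the isolation probability $O(1/100)$ you quote is correct), the space becomes $\Theta(a\log n\log c)$, which is exactly the $\polylog n$ blowup the lemma is designed to avoid. In short, you have correctly named the obstacle---``cleaning out $T$'' with only $O(a\log c)$ bits---but the two-layer design does not overcome it.

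The paper's approach is quite different and worth contrasting. Rather than trying to isolate $N(S)-T$ directly, it \emph{recovers $x$ restricted to $T$}, one random coordinate at a time, tolerating a small error rate caused by interference from $N(S)-T$. A single such recovery (\Cref{lem:indexR}) costs $O(\log(1/\gamma)\log q)$ bits and errs with probability $\approx \gamma b/a$; running $O(a)$ of them in parallel (\Cref{lem:partialR}) recovers a constant fraction of $x|_T$ with only $O(\gamma b)$ wrong entries. This shrinks the effective $T$ by a factor of $4$ while growing the ``outside'' sparsity $b$ only additively by $O(\gamma b)$. Iterating this for $t\le\log\log n$ rounds with geometrically decreasing $\gamma_j$ keeps $b_t=O(b)$ and drives $a_t$ down to $O(a/\log n + b)$; the space across rounds telescopes to $O(a\log q)$, and a final deterministic $\sprec$ on the $(a_t+b_t)$-sparse residual costs $O(b\log n\log q)$. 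The error bound $\exp(-\Theta(b/\log n))$ comes from the last round, where $\gamma_t\approx 1/\log n$.
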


The key part of~\Cref{lem:sparse-recovery} is that the dependence on $\log{n}$ is only on the (much) smaller $b$-term, as opposed to the $a$-term (otherwise, this result would be immediate by~\Cref{prop:sparse-rec}\footnote{What makes~\Cref{prob:sparse-recovery} particularly different from sparse-recovery is  that since we only know $T$ is a superset of $N(S)$ and not equal to it, our underlying vector is only $(a+b)$-sparse as opposed to $b$-sparse.}). This saving is a key factor in the success of our algorithms in achieving asymptotically optimal bounds for the matching problem. 

We present two algorithms for solving~\Cref{lem:sparse-recovery}. The first one is very simple and already achieves the asymptotic optimal bounds on the sketch size (in terms of parameters $a,b$); the problem with this approach however is that the recovery algorithm for the sketch requires an exhaustive search of all options and thus requires exponential time in the worst case; the sketching matrix of the algorithm also requires $O(a \cdot n)$ bits of space to store which is prohibitively large for our purpose. 
Thus, we present this sketch as a warm-up in~\Cref{app:sparse-recovery}. Our second sketch is more involved but uses a near-linear time recovery algorithm and not too much randomness\footnote{The randomness used by this sketch is still larger than
the sketch size which is problematic on the surface for us. However, we will be able to reuse this randomness across multiple sketches and thus achieve our desired bounds on the space overall.}. We present this algorithm in the remainder of 
this section.

To continue, we give a different representation of~\Cref{prob:sparse-recovery} that makes the exposition simpler. 

\paragraph{Vector-representation of~\Cref{prob:sparse-recovery}.} For any graph $G=(V,E)$ and set $S \subseteq S$ of vertices, define the $n$-dimensional vector $x = x(G,S)$, indexed by vertices in $V$, such that for all $i \in [n]$, 
\begin{align}
	x_i := 
	\begin{cases} 
		0 \quad & \text{if $v_i \in S$} \\ 
		\sum_{u \in S} \mathbb{I}[(u,v_i) \in E] \quad & \text{otherwise}
	\end{cases}. \label{eq:x-GS}
\end{align}
A basic observation is that $\supp{x(G,S)}$ (set of non-zero entries of $x(G,S)$) corresponds to $N(S)$ in $G$. The second observation is that each update to an edge $(u,v)$ in a dynamic stream $\vect(E)$ 
can be directly used to update $x$  as well. Finally, throughout the proof, we will take $q$ to be the smallest prime larger than $c$ and work with the field $\mathbb{F}_q$, i.e., the field of integers modulo $q$. 
Given Promise~\ref{promise3} of~\Cref{prob:sparse-recovery}, $x$ will have the same non-zero entries among the coordinates in $S$ still even in $\mathbb{F}_q$.   
As such, our goal is to design a linear sketch $\Phi$ such that one can recover $\supp{x}$ from $\Phi \cdot x$ (interpreted in $\mathbb{F}_q$)  with high probability.

\newcommand{\IndexR}{\ensuremath{\textnormal{\texttt{Index-Recovery}}}\xspace}
\newcommand{\PartialR}{\ensuremath{\textnormal{\texttt{Partial-Recovery}}}\xspace}

\paragraph{Proof of~\Cref{lem:sparse-recovery}.} The high-level overview of the proof is as follows. We will first design a sketch of size $O(\log{q})$ bits only that can recover the value of $x_i$ for some random $i \in T$. This sketch however 
may fail with constant probability and err with probability roughly $b/a$ introduced by elements in the support of $x$ outside $T$. We will then use $O(a)$ of these sketches in parallel with each other to recover a constant fraction of $x$ projected on $T$
using a sketch of size $O(a\log{q})$. As a result, this effectively shrinks the size of the set $T$ for us that we need to focus next. However, due to the potential error introduced by the sketches, this means that we may now need to recover $O(b)$ \emph{additional} elements from outside of this new $T$. A bit more formally, this approach allows us to find a vector $y$ and shrink the set $T$ to another set $T_y$ such that $\card{T_y} \leq \card{T}/4$ and $x-y$ only has $b+O(b)$ elements 
outside $T_y$. This means that we need to solve the original problem, on the vector $x-y$ now, with a smaller parameter $a$ but a larger parameter $b$. 

Our approach is thus to run this recursive algorithm \emph{non-adaptively} by storing appropriate sketches of $x$ 
only and since we know $y$, use linearity of sketches to compute the sketch of $x-y$ also. The key part of the proof is  to ensure that we shrink size of $T$, i.e., parameter $a$, \emph{rapidly} while grow $b$ \emph{slowly} in this process, all while keeping the 
total sketch size still only $O(a\log{q})$. The slow growth of $b$ thus allows us reach a situation where the resulting vector we have to work with becomes $\Theta(b)$-sparse overall (inside and outside of the current set $T$). We can then use standard sparse recovery on this vector to recover it entirely using another sketch
of size $O(b\log{q} \cdot \log{n})$ which will give us the desired bound on the sketch size.

We now start the formal proof. The first step is a
subroutine for recovering a single index in $T$ with a small error probability. In the rest of the proof, we assume that size of $T$ is exactly $a$ without loss of generality (say, by increasing $n$ slightly and adding dummy elements to $T$).

\begin{lemma}\label{lem:indexR}
	There is a linear sketch, $\IndexR(x,T,a,b,\gamma)$, that given a vector $x \in [n]$ and set $T \subseteq [n]$ and parameters $a \geq 100b$ (as specified in~\Cref{prob:sparse-recovery}), plus a confidence parameter $\gamma \in (0,1/4)$, 
	returns an index $i \in T$ chosen uniformly at random together with the value of $x_i$. The probability of failure of the algorithm, $\delta_F$, and the probability it outputs a wrong value for $x_i$, $\delta_E$, 
	are 
	\[
		\delta_F \leq \frac{4}{5} \qquad \text{and} \qquad \delta_E \leq \gamma \cdot \frac{b}{2a} + \frac{b^2}{2a^2}. 
	\]
	The sketch has size $O(\log{(1/\gamma)} \cdot \log{q})$ bits and requires $O(\log{(1/\gamma)} \cdot \log{n})$ random bits.  
\end{lemma}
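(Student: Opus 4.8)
} The plan is to reduce single-index recovery to an \emph{isolation} argument: fix, once and for all inside the sketching matrix, a uniformly random ``target bucket'' of $[2a]$ that with constant probability contains exactly one coordinate of $T$, store its aggregate sum, and attach $O(\log(1/\gamma))$ auxiliary partial sums that let the recovery algorithm \emph{certify} that this aggregate is not polluted by a coordinate of $\supp{x}\setminus T$. Concretely: using an $O(1)$-wise independent (permutation-symmetric) hash $h\colon[n]\to[2a]$ and a fixed uniformly random $b^\star\in[2a]$, let $B:=h^{-1}(b^\star)$, so each $j\in[n]$ lies in $B$ with probability $1/(2a)$; draw also $t:=\ceil{\log(2/\gamma)}$ pairwise-independent $g_1,\dots,g_t\colon[n]\to\set{0,1}$. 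Over $\mathbb{F}_q$ the sketch stores the $1+t$ linear functionals $\sigma_0:=\sum_{j\in B}x_j$ and $\sigma_\ell:=\sum_{j\in B,\,g_\ell(j)=1}x_j$ for $\ell\in[t]$; this is $O(\log(1/\gamma)\cdot\log q)$ bits, and by~\Cref{prop:k-wise} the seeds $h,b^\star,g_1,\dots,g_t$ fit in $O(t\log n)=O(\log(1/\gamma)\cdot\log n)$ bits. For recovery, given $T$ we compute $Y:=\card{T\cap B}$ exactly (we know $T,h,b^\star$); if $Y\neq1$ we output \textnormal{FAIL}, otherwise we let $i$ be the unique element of $T\cap B$, and for each $\ell$ we form $\tau_\ell$, the sum of $x$ over the side of $B$ \emph{not} containing $i$ under $g_\ell$ (so $\tau_\ell=\sigma_\ell$ if $g_\ell(i)=0$ and $\tau_\ell=\sigma_0-\sigma_\ell$ if $g_\ell(i)=1$); if some $\tau_\ell\neq 0$ we output \textnormal{FAIL}, and otherwise we output $\paren{i,\sigma_0}$.

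\paragraph{Analysis.} Write $J:=B\cap(\supp{x}\setminus T)$ for the set of ``contaminants''. If $Y=1$ and $J=\emptyset$, then $\sigma_0=x_i$ and every $\tau_\ell=0$, so we return the correct pair; moreover $i$ depends only on $(h,b^\star,T)$, and by symmetry of the hash families the probability that all $\tau_\ell$ vanish depends only on $J$ and not on the identity of $i$, so conditioning on not failing leaves $i$ uniform on $T$. For the failure probability: since $|T|=a$ (assumed WLOG), $\expect{Y}=\card{T}/(2a)=1/2$, and a standard moment computation for $O(1)$-wise independent hashing gives $\prob{Y=1}\geq 1/4$ (as for a fully random hash, where $\prob{Y=1}\geq e^{-1/2}/2$); the only additional way to fail is $Y=1$ with some contaminant in $B$, of probability at most $\card{\supp{x}\setminus T}/(2a)\leq b/(2a)\leq 1/200$ using $a\geq 100b$, so $\delta_F\leq 3/4+1/200<4/5$. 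For the error probability: we emit a \emph{wrong} value only when $Y=1$, $J\neq\emptyset$, and all $\tau_\ell=0$. If $J=\set{j}$, this forces $g_\ell(j)=g_\ell(i)$ for every $\ell$, which has probability $2^{-t}\leq\gamma/2$ (pairwise independence of each $g_\ell$, independence across $\ell$, and $g$ independent of $h$); summing over the $\leq b$ candidates $j$, each in $B$ with probability $1/(2a)$, bounds this case by $\gamma b/(2a)$. If $\card{J}\geq 2$ we bound crudely by $\prob{\card{J}\geq 2}\leq\binom{b}{2}\big/(2a)^2\leq b^2/(2a^2)$. Adding the two cases gives $\delta_E\leq\gamma\cdot\tfrac{b}{2a}+\tfrac{b^2}{2a^2}$, as claimed; all arithmetic is over $\mathbb{F}_q$, and Promise~\ref{promise3} of~\Cref{prob:sparse-recovery} (each value $<c<q$) guarantees the recovered $\sigma_0$ equals $x_i$ over the integers whenever it is correct.

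\paragraph{Main obstacle.} The one genuinely delicate point is the lower bound $\prob{Y=1}\geq 1/4$ under limited independence: pairwise independence of $h$ is too weak to separate $\prob{Y\geq 1}$ from $\prob{Y\geq 2}$ (Paley–Zygmund together with a second-moment tail bound only gives overlapping estimates), so a few more moments are needed — e.g.\ a $4$- or $6$-wise independent family plus a Bonferroni-type inclusion–exclusion estimate, or an appeal to~\Cref{prop:chernoff-limited} — to argue that the target bucket isolates a single coordinate of $T$ with constant probability; relatedly, getting \emph{exact} uniformity of the recovered index over $T$ wants the hash families to be permutation-symmetric. Everything else reduces to union bounds over the at most $b$ contaminants and the $t$ check functions, together with linearity of the stored functionals (which makes the whole object a genuine linear sketch of $x$).
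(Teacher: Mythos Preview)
Your construction and analysis are essentially identical to the paper's proof: hash $[n]$ into $2a$ buckets with a pairwise-independent function, keep the bucket sum, and use $t=O(\log(1/\gamma))$ random bipartitions of the bucket to certify the absence of contaminants; the paper checks ``both halves nonzero'' while you check ``the half not containing $i$ is nonzero'', which is the same test once $i$ is known.

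The one place you diverge is your ``main obstacle'', and there you are making the problem harder than it is. Pairwise independence of $h$ already gives $\Pr(Y=1)\geq 1/4$ directly, without appealing to higher moments or~\Cref{prop:chernoff-limited}. The events $\{\,i\text{ is the unique element of }T\cap B\,\}$ for $i\in T$ are disjoint, so
\[
\Pr(Y=1)=\sum_{i\in T}\Pr(i\in B)\cdot\Pr\bigl(\forall j\in T\setminus\{i\}:j\notin B\,\bigm|\,i\in B\bigr)
\geq\sum_{i\in T}\frac{1}{2a}\cdot\Bigl(1-\E\bigl[\,|(T\setminus\{i\})\cap B|\,\bigm|\,i\in B\bigr]\Bigr),
\]
where the inequality is Markov on the complement. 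By pairwise independence, the conditional expectation equals $(a-1)/(2a)$, so the sum is $a\cdot\frac{1}{2a}\cdot\bigl(1-\frac{a-1}{2a}\bigr)\geq\frac{1}{2}\cdot\frac{1}{2}=\frac{1}{4}$. This is exactly the calculation the paper invokes (it points back to the $\NES$ analysis), so there is no need for $4$- or $6$-wise independence here; a single pairwise hash plus the $t$ pairwise bipartitions already meets the stated randomness budget.
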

\begin{proof}
	Let $h: [n] \rightarrow [2a]$ be a pair-wise independent hash function and define $H := \set{i \in [n] \mid h(i) = 1}$ (so that each index belongs to $H$ with probability $1/2a$ and the choice of vertices is pairwise independent). 
	Compute $z := \sum_{i \in H} x_i$. At the end of the stream, if $\card{H \cap T} \neq 1$, terminate and output FAIL. 
	Moreover, for $j=1$ to $t:= \log{(1/\gamma)}$ iterations: 
	\begin{enumerate}[label=$(\roman*)$]
		\item Pick a pairwise independent hash function $h_j: H \rightarrow \set{1,2}$. For $k \in \{1,2\}$, let $H_{jk} := \set{i \in H \mid h_j(i) = k}$ (so that $H_{j1}$ and $H_{j2}$ form a partition of $H$ with each index in $H$ having sent to each one uniformly). 
		\item For $k \in \set{1,2}$, compute  $z_{jk} := \sum_{i \in H_{jk}} x_i$. If both $z_{jk}$ are non-zero terminate and output FAIL. 
	\end{enumerate}
	If the algorithm never terminated, output $i$ where $\set{i} := H \cap T$ and $x_i = z$ as the answer. This concludes the description of the $\IndexR$ algorithm. 
	
	We now analyze  correctness and  bound  size and randomness of $\IndexR$.	Firstly, 
	\begin{align*}
		\Pr\paren{\card{H \cap T} = 1} = \Pr\paren{\text{there exists a \underline{unique} index $i \in T$ with $h(i) = 1$}} \geq \frac{1}{2} \cdot \paren{1-\frac{1}{2}} = \frac{1}{4},
	\end{align*}
	using the pairwise independence of $h$, exactly as in the proof of~\Cref{lem:nei-edge-sampler}. Note that the algorithm can detect this event exactly as it knows both $H$ and $T$ at the end of the stream. 
	Moreover, by union bound, 
	\begin{align*}
		\Pr\paren{\card{H \cap (N(S) \setminus T)} = 0} = 1-\Pr\paren{\text{there is an index $i \in (N(S) \setminus T)$ with $h(i) = 1$}} \geq 1-\frac{b}{2a} \geq 1-\frac{1}{200}. 
	\end{align*}
	Conditioned on both events above, for  $i \in T$ where $\set{i} = H \cap T$, we get that $z = x_i$. Moreover, for every iteration $j \in [t]$ in the algorithm, it can never be that both $z_{j1}$ and $z_{j2}$ are non-zero. 
	Thus, conditioned on these events, the algorithm outputs a correct answer. This in particular means that the probability the algorithm outputs FAIL is at most $3/4+{1}/{200} < 4/5$ as required. 
	
	Now note that the only way the algorithm may output a wrong answer is when $\card{H \cap T} = 1$ but $\card{H \cap (N(S) \setminus T)} > 0$. Firstly, using that $h$ is a pairwise independent hash function, 
	we have that, 
	\begin{align*}
		\Pr\paren{\card{H \cap (N(S) \setminus T)} \geq 2} \leq {\card{N(S) \setminus T} \choose 2} \cdot \frac{1}{4a^2} \leq \paren{\frac{e \cdot b}{2}}^2 \cdot \frac{1}{4a^2} < \frac{b^2}{2a^2}. \tag{as $\card{N(S) \setminus T} \leq b$}
	\end{align*}
	Thus, whenever this event happens, we can simply charge the error to the $b^2/2a^2$ term in $\delta_E$ in the lemma statement. In the following, we only need to handle the case when $\card{H \cap (N(S) \setminus T)} = 1$. 
	
	For the error to happen 
	in this case, we should have that the single element in $H \cap T$ and the single element in $H \cap N(S) \setminus T$ are always mapped the same by the hash function $h_j$ for every iteration $j \in [t]$. The probability of this event 
	happening is exactly $1/2^t = \gamma$ by the pairwise independence of each $h_j$ (and their independence across $t$ iterations). Thus, conditioned on the event that ${H} \cap N(S) \setminus T$ has size $1$, which itself happens with probability at most $b/2a$ calculated above, the probability of error is $\gamma$. 
	This is also accounted for  in the $\gamma \cdot b/2a$ term in $\delta_E$, concluding the proof of the correctness.
	
	As for the space, storing each of the hash functions requires $O(\log{n})$ and each sum $O(\log{q})$ bits. Given that we have $O(\log{(1/\gamma)})$ iterations, we get the desired space bound. 
\end{proof}

We now build on the sketch $\IndexR$ in~\Cref{lem:indexR} to recover a constant fraction of the indices of $x \in \IR^n$ in $T$ again with small error.  

\begin{lemma}\label{lem:partialR}
	There is a linear sketch, $\PartialR(x,T,a,b,\gamma)$, that given vector $x \in \IR^n$ and set $T \subseteq [n]$ and parameters $a \geq 100b$ (as specified in~\Cref{prob:sparse-recovery}), plus a confidence parameter $\gamma \in (0,1/4)$, 
	returns a vector $y \in \IR^n$ and a set $T_y \subseteq [n]$ with the following properties with probability at least $1-2\exp\paren{-\gamma b}$: 
	\begin{enumerate}[label=$(\roman*)$]
		\item $T_y$ is a subset of $T$ and has size at most $a/4$. 
		\item $x-y$ has at most $b+12 \cdot (\gamma b+\frac{b^2}{a})$ non-zero elements outside $T_y$. 
	\end{enumerate}
	The sketch has size $O(a \cdot \log{(1/\gamma)} \cdot \log{q})$ bits and requires $O(a \cdot \log{(1/\gamma)} \cdot \log{n})$ random bits.  
\end{lemma}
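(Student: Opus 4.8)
The plan is to run $a$ independent copies of the $\IndexR$ sketch from~\Cref{lem:indexR} in parallel, each with the same parameters $(x,T,a,b,\gamma)$, and then post-process their outputs to build $y$ and $T_y$. First I would describe the recovery: run all $a$ copies; each one either fails or returns a pair $(i, \tilde{x}_i)$ with $i \in T$ drawn uniformly. Let $R \subseteq T$ be the set of indices that are successfully recovered by at least one copy (picking an arbitrary copy for each such index), and set $y$ to be the vector that agrees with the reported values $\tilde{x}_i$ on $R$ and is zero elsewhere; set $T_y := T \setminus R$. Since $y$ is supported on $R \subseteq T$ and matches $x$ on the correctly-recovered indices, the entries of $x-y$ outside $T_y$ are exactly the indices $i \in R$ where the reported value was \emph{wrong}, together with the $\le b$ elements of $N(S) \setminus T$ that were already nonzero outside $T$ in $x$. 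So $x - y$ has at most $b + |\{\text{erroneously recovered indices}\}|$ nonzero entries outside $T_y$.

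The two things to control are therefore (a) $|T_y| = |T \setminus R| = a - |R|$ is at most $a/4$, i.e., $|R| \ge 3a/4$, and (b) the number of erroneously recovered indices is at most $12(\gamma b + b^2/a)$. For (a): each of the $a$ copies succeeds (does not output FAIL) with probability $\ge 1/5$ by~\Cref{lem:indexR}, and conditioned on success returns a uniformly random index of $T$. The number of \emph{distinct} indices covered is a coupon-collector-type quantity. A clean way to lower-bound $|R|$: think of it as follows — with $a$ independent uniform draws from a size-$a$ set (each draw present with prob.\ $\ge 1/5$), the expected number of uncovered indices is at most $a \cdot (1 - \tfrac{1}{5a})^a \le a/e^{1/5} \le 0.82 a$, which is not quite enough, so one actually wants slightly more copies or a sharper argument. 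The honest fix, matching the lemma as stated, is presumably that $\IndexR$ is run some constant multiple of $a$ times (absorbed into the $O(a)$ size bound), so that the expected number of uncovered indices drops below $a/8$ and a concentration bound (McDiarmid, \Cref{prop:mcdiarmid}, viewing the count of uncovered indices as a $1$-Lipschitz function of the $\Theta(a)$ independent copies, with deviation parameter $\Theta(a)$) gives failure probability $\exp(-\Omega(a))$, which is $\le \exp(-\gamma b)$ since $a \ge 100 b \ge \gamma b$. For (b): the expected number of error events is at most $(\text{number of copies}) \cdot \delta_E \le O(a) \cdot (\gamma b/(2a) + b^2/(2a^2)) = O(\gamma b + b^2/a)$; again apply McDiarmid (the count of erroneous reports is $1$-Lipschitz in the independent copies) to conclude it is at most $12(\gamma b + b^2/a)$ with probability $1 - \exp(-\Omega(\gamma b + b^2/a)) \ge 1 - \exp(-\gamma b)$. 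A union bound over (a) and (b) gives the claimed $1 - 2\exp(-\gamma b)$.

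The main obstacle — and the step I would spend the most care on — is (a), getting $|T_y| \le a/4$ with the stated super-polynomially-small failure probability $2\exp(-\gamma b)$ rather than just in expectation; this forces the constant in the $\Theta(a)$ number of $\IndexR$ copies to be chosen appropriately, and one must check that the McDiarmid deviation bound indeed yields a failure probability dominated by $\exp(-\gamma b)$ (using $a \ge 100b$). A secondary subtlety is that the $a$ copies of $\IndexR$ must use \emph{independent} randomness for the coupon-collector argument to apply, which is consistent with the $O(a \log(1/\gamma)\log n)$ random-bit budget (each copy needs $O(\log(1/\gamma)\log n)$ bits). The sketch-size and randomness bounds are then immediate: $O(a)$ copies of a sketch of size $O(\log(1/\gamma)\log q)$ and randomness $O(\log(1/\gamma)\log n)$ each.
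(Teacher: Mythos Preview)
Your approach is essentially the paper's: run $\Theta(a)$ independent copies of $\IndexR$, let $T_y$ be the set of indices never returned, let $y$ carry the reported values on $R = T \setminus T_y$, and then control (a) $|T_y|$ via a coupon-collector calculation plus McDiarmid, and (b) the number of erroneously reported indices. The paper takes $t = 5a\ln 8 < 12a$ copies, shows each $i \in T$ is missed with probability at most $1/8$ so $\expect{|T_y|} \le a/8$, and applies McDiarmid exactly as you outline to get $\Pr[|T_y| > a/4] \le \exp(-a/400) \le \exp(-\gamma b)$. Your self-correction from $a$ copies to a suitable constant multiple of $a$ is spot on.

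There is, however, a genuine gap in your treatment of (b). You invoke McDiarmid on the error count and claim failure probability $\exp(-\Omega(\gamma b + b^2/a))$, but McDiarmid with $m = \Theta(a)$ coordinates, Lipschitz constant $1$, and deviation $\Theta(\gamma b + b^2/a)$ only gives
\[
\exp\!\paren{-\Theta\!\paren{\frac{(\gamma b + b^2/a)^2}{a}}},
\]
which for $a \ge 100b$ is at best of order $\exp(-\Theta(\gamma^2 b^2/a)) \ll \exp(-\gamma b)$. The fix (and what the paper does) is to note that the number of \emph{copies} that err is a sum of $\Theta(a)$ independent Bernoulli indicators with mean $\mu_H \le 6(\gamma b + b^2/a)$, so the multiplicative Chernoff bound (\Cref{prop:chernoff} with $\eps = 1$) yields $\Pr[Y \ge 12(\gamma b + b^2/a)] \le \exp(-\mu_H/4) \le \exp(-\gamma b)$ directly; since the number of erroneously recovered indices is at most the number of erring copies, this suffices.
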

\begin{proof}
	Initialize $T_y \leftarrow T$ and $y \leftarrow \bm{0}^n$. For $t:= (5a\cdot \ln{8}) < 12a$ iterations: Run $\IndexR(x,T,a,b,\gamma)$ independently and, if the output is \emph{not} FAIL, let $y_i = x_i$ for $i \in T$ and $x_i$ returned by the sketch, and
	update $T_y \leftarrow T_y \setminus \set{i}$. This concludes the description of $\PartialR$. 
	 
	Fix any index $i \in T$. In any iteration that $\IndexR$ does not output FAIL, the probability that $i$ is returned as the index is $1/a$ by~\Cref{lem:indexR}. Thus, the probability that this index $i$ is never returned through all the iterations is 
	at most
	\begin{align*}
		\paren{1-\paren{\frac{1}{5} \cdot \frac{1}{a}}}^{t} \leq \exp\paren{-\frac{1}{5} \cdot \frac{t}{a}} = \exp\paren{-\ln{8}} = \frac{1}{8}. \tag{by the choice of $t = (5a \cdot \ln{8})$} 
	\end{align*}
	
	For any $i \in T$, define an indicator random variable $X_i$ which is $1$ iff $i$ still belongs to $T_y$ at the end of $\PartialR$. Let $X = \sum_{i \in T} X_i$ denote the size of $T_y$.
	By the above calculation, we have $\expect{X} \leq a/8$. Moreover, consider the $t$ \emph{independent} random variables denoting the randomness of each iteration of the algorithm. 
	We have that $X$ is a $1$-Lipschitz function of these variables (as changing randomness of one iteration, can only make a single index $i$ join or leave $T_y$). Thus, 
	by McDiarmid's inequality (\Cref{prop:mcdiarmid}), 
	\begin{align*}
	\Pr\paren{\card{T_y} > a/4} \leq \Pr\paren{\card{X - \expect{X}}> a/8} \leq 2 \cdot \exp\paren{-\frac{2 \, a^2}{64 t}} < \exp\paren{-\frac{a}{400}}. \tag{by the choice of $t < 12a$}
	\end{align*}
	
	Now again consider an index $i \in T$. For $x_i \neq y_i$ to happen, we should have that the iteration in which $\IndexR$ returns $i$ makes an error. Thus, the number of indices of $x-y$ that are different (outside $T_y$), 
	is upper bounded by the number of iterations wherein $\IndexR$ errs. Each iteration makes an error with probability at most $\gamma \cdot \frac{b}{2a} + \frac{b^2}{2a^2}$ by~\Cref{lem:indexR}. Thus, letting $Y$ denote the random variable
	for the number of erroneous iterations, we get that 
	\begin{align*}
		\expect{Y} \leq t \cdot \paren{\gamma \cdot \frac{b}{2a} + \frac{b^2}{2a^2}} \leq 6 \cdot (\gamma \cdot b + \frac{b^2}{a}).  \tag{by the choice of $t < 12a$}
	\end{align*}
	Moreover, since $Y$ is a sum of $t$ independent random variables (one per  iteration), by Chernoff bound, 
	\begin{align*}
		\Pr\paren{Y \geq 12 \cdot (\gamma \cdot b + \frac{b^2}{a})} \leq \Pr\paren{Y \geq 2 \expect{Y}} \leq \exp\paren{-{\gamma \cdot b}{}}. 
	\end{align*}
	As there are at most $b$ indices outside $T$ that are non-zero in $x$ (but zero in $y$), we get that $x-y$ has at most $b + 12 \cdot (\gamma \cdot b + \frac{b^2}{a})$ non-zero entries outside $T_y$ 
	with probability at least $1-\exp\paren{-{\gamma \cdot b}{}}$. 
	
	Combining the above two bounds, and since $a \geq 100b$ and $\gamma < 1/4$, we get the final probability bound on the properties of the algorithm. 
	The space and randomness are also $t = O(a)$ times that of $\IndexR$, which implies the lemma by the bounds in~\Cref{lem:indexR}. 
\end{proof}

We now present the sketching matrix of $\SNR$ and postpone its recovery algorithm to later. 

\begin{Algorithm} \label{alg:snr-sketch} 
	The \textbf{sketching matrix} of $\SNR(G,S)$ and its corresponding sketches. 
	\begin{enumerate}[leftmargin=15pt]
		\item Define the recursive sequences $\set{\gamma_j}$, $\set{a_j}$, and $\set{b_j}$ for any $j \geq 1$: 
		\[
			\gamma_1 = \frac{1}{24}, ~ \gamma_{j} := \frac{\gamma_{j-1}}{2} \quad \text{and} \quad a_1 := a  \text{,} ~ a_j := \frac{a_{j-1}}{4} \quad \text{and} \quad b_1 = b  \text{,} ~ b_{j} := b_{j-1} + 12 \cdot (\gamma_{j-1} \cdot b_{j-1} + \frac{b_{j-1}^2}{a_{j-1}}). 
		\]
		\item For $j =1$ to $t$ iterations where $t$ is the \underline{minimum} of $\log\log{n}$ and the \underline{last} index where $a_t \geq 100 \cdot b_t$:
		\begin{enumerate}
			\item Let $\Phi_j$ be the sketching matrix of $\PartialR(\cdot,\cdot,a_j,b_j,\gamma_j)$ (the sketching matrix is independent of first two arguments of the input and thus we do not provide those arguments). 
		\end{enumerate}
		\item Let $\Phi^*$ be the sketching matrix of $\sprec$ to recover a $(a_t+b_t)$-sparse vector in $\mathbb{F}_q^n$. 
		\item Compute the sketches $\set{\Phi_j \cdot x}_{j=1}^{t}$ and $\Phi^* \cdot x$ for the vector representation $x$ of $N(S)$ defined earlier. 
	\end{enumerate}
\end{Algorithm}

We bound the sketch size and randomness of $\SNR$ in the following claim. 
\begin{claim}\label{clm:snr-size}
	The sketch and randomness of $\SNR$ in~\Cref{alg:snr-sketch} have  size 
	\[
	O((a + b \cdot \log{n}) \cdot \log{q}) \quad \text{and} \quad O(a \cdot \log{n})
	\]
	 bits, respectively. 
\end{claim}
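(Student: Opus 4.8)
The plan is to bound separately the three groups of objects stored by the sketch in~\Cref{alg:snr-sketch}: the $t$ sketches $\Phi_j \cdot x$ coming from $\PartialR(\cdot,\cdot,a_j,b_j,\gamma_j)$, and the single $\sprec$ sketch $\Phi^* \cdot x$ for the $(a_t+b_t)$-sparse vector $x$. The first group will contribute only $O(a\log q)$ bits of sketch and $O(a\log n)$ random bits (thanks to the geometric decay of the $a_j$'s), while the second will contribute $O((a+b\log n)\log q)$ bits and no randomness; adding these gives the claim. As a preliminary step I would record the elementary facts about the recursions: $a_j = a/4^{j-1}$ and $\gamma_j = 1/(24\cdot 2^{j-1})$, so that $\log(1/\gamma_j) = \Theta(j)$ and $12\gamma_j = 2^{-j}$; and, crucially, that the ratio $a_j/b_j$ shrinks by a factor of at least $4$ per step (since $a_j = a_{j-1}/4$ while $b_j \geq b_{j-1}$), so the predicate ``$a_j \geq 100 b_j$'' holds for every index up to $t^*$, hence — since $t \leq t^*$ by definition — for every $j \leq t$. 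In particular $a_j \geq 100 b_j \geq 100 b$ throughout the iterations.

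The heart of the argument is the claim that $b_j = O(b)$ for all $j \leq t$, which I would prove in the form $b_j \leq 5b$ by induction on $j$. Expanding $b_j = b + \sum_{i<j}\big(12\gamma_i b_i + 12 b_i^2/a_i\big)$, the $\gamma$-sum is controlled by $b_1/2 + 5b\sum_{i\geq 2}2^{-i} = 3b$ using $12\gamma_i = 2^{-i}$ and the inductive hypothesis, while the $b_i^2/a_i$-sum — even though it has $\Theta(\log\log n)$ terms — is a \emph{geometric} sum whose total is at most $12(5b)^2\sum_{i<j}4^{i-1}/a < 100b^2\cdot 4^{j-1}/a = 100 b^2/a_j \leq b$, where the final step uses $a_j \geq 100 b$. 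This yields $b_j < 5b$ and closes the induction. This is the step I expect to be the main obstacle: a naive bound on $\sum_i 12 b_i^2/a_i$ that pulls out $b_i/a_i \leq 1/100$ termwise would cost an extra factor of $t = \Theta(\log\log n)$ and break the claimed space bound, so one must both keep the invariant $a_j \geq 100 b_j$ alive across all iterations and exploit the geometric structure of the $a_j$'s.

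With $b_t \leq 5b$ in hand, I would bound the sparsity $a_t + b_t$ of the final $\sprec$ sketch by splitting on which quantity defines $t$. If $t = t^*$, then $a_{t+1} = a_t/4 < 100 b_{t+1}$ and $b_{t+1} \leq (1 + 12\gamma_t + 12 b_t/a_t)\,b_t \leq 1.62\,b_t$ (using $a_t \geq 100 b_t$), so $a_t = O(b_t) = O(b)$ and thus $a_t + b_t = O(b)$. If instead $t = \log\log n$, then $a_t = a/4^{\log\log n - 1} = \Theta(a/\log^2 n)$, and since $a_t \geq 100 b_t$ we get $a_t + b_t = O(a_t) = O(a/\log^2 n)$. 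In both cases $(a_t + b_t)\cdot \log n = O(a + b\log n)$, so by~\Cref{prop:sparse-rec} the sketch $\Phi^* \cdot x$ has size $O((a_t+b_t)\log n\log q) = O((a+b\log n)\log q)$ and uses no random bits.

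Finally, by~\Cref{lem:partialR} each $\Phi_j \cdot x$ has size $O(a_j\log(1/\gamma_j)\log q)$ and uses $O(a_j\log(1/\gamma_j)\log n)$ random bits; since $a_j\log(1/\gamma_j) = O\big(j\cdot a/4^{j-1}\big)$ and $\sum_{j\geq 1} j/4^{j-1} = O(1)$, summing over $j \leq t$ gives $O(a\log q)$ total sketch bits and $O(a\log n)$ total random bits. Adding this to the contribution of $\Phi^*\cdot x$ from the previous paragraph yields sketch size $O(a\log q) + O((a+b\log n)\log q) = O((a+b\log n)\log q)$ and randomness $O(a\log n) + 0 = O(a\log n)$, which is exactly the claim.
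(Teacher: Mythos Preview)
Your proof is correct and follows the same overall decomposition as the paper: bound the $\PartialR$ sketches via the geometric sum $\sum_j a_j\log(1/\gamma_j) = O(a)$, then bound the final $\sprec$ sketch by showing $a_t + b_t = O(a/\log n + b)$ via the same case split on whether $t=\log\log n$ or $t$ is the last index with $a_t \geq 100 b_t$.

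The one genuine difference is in how you establish $b_t = O(b)$. The paper unrolls the recursion as a product $b_t \leq b_1 \prod_{j\leq t}(1 + 24\gamma_1/2^j + 12 b_t/a_{j-1})$, exponentiates via $1+x \leq e^x$, and bounds the resulting exponent to get $b_t \leq e^2 b_1$. You instead run a direct induction $b_j \leq 5b$, using the telescoping identity $b_j = b + \sum_{i<j}(12\gamma_i b_i + 12 b_i^2/a_i)$ and the key observation that $\sum_{i<j} 1/a_i$ is a geometric sum dominated by its last term $1/a_{j-1} = 4/a_j$, so the whole sum is $O(b^2/a_j) \leq b$ once $a_j \geq 100b$. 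Your argument is a bit more elementary and yields an explicit constant; the paper's product bound is slicker but relies on the same geometric structure. Both correctly identify and handle the main obstacle you flag, namely that a termwise bound on $\sum 12 b_i^2/a_i$ would lose a $\log\log n$ factor.
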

\begin{proof}
	By~\Cref{lem:partialR}, we have that $\Phi_j \cdot x$ requires $O(a_j \cdot \log{(1/\gamma_j)} \cdot \log{q})$ size and $O(a_j \cdot \log{(1/\gamma_j)} \cdot \log{n})$ bits of randomness, respectively. 
	Thus, the total size of the sketches $\set{\Phi_j \cdot x}_{j=1}^{t}$ is
	\begin{align*}
		O(\log{q}) \cdot \sum_{j=1}^{t} a_j \cdot \log{(1/\gamma_j)} = O(\log{q}) \cdot \sum_{j=1}^{t} \frac{a_1}{4^j} \cdot \log{(2^j/\gamma_1)} = O(a_1 \cdot \log{(1/\gamma_1)} \cdot \log{q}),
	\end{align*}
	where the second to last equality is by explicitly computing the recursive definition of $a_j,\gamma_j$, and the last one is since the series $\sum_{j=1}^{\infty} {j}/{4^j}$ converges to some constant. Considering that $a_1 = a$ and $\gamma_1$ is a constant,
	we get that the total size of sketches  $\set{\Phi_j \cdot x}_{j=1}^{t}$ is $O(a \cdot \log{q})$. The same exact calculation, by plugging in the randomness bound of~\Cref{lem:partialR} instead, also gives an $O(a \cdot \log{n})$ bound
	on the number of random bits. 
	
	We now need to also calculate the size of $\Phi^*$ (this part of the sketch does not involve any randomness). By~\Cref{prop:sparse-rec}, size of $\Phi^* \cdot x$ is 
	\begin{align}
	O((a_t+b_t) \cdot \log{n} \cdot \log{q}). \label{eq:at-bt}
	\end{align}
	We start by bounding the value of $b_t$. By the recursive definition of $\gamma_j,a_j,b_j$, we have that for all $j \in [t]$
	\begin{align*}
		b_j &= b_{j-1} + 12 \cdot (\gamma_{j-1} \cdot b_{j-1} + \frac{b_{j-1}^2}{a_{j-1}}) = b_{j-1} \cdot \paren{1 + 12 \cdot \frac{\gamma_1}{2^{j-1}} + 12 \cdot \frac{b_{j-1}}{a_{1}}} \\
		&\leq b_{j-1} \cdot \paren{1 + 24 \cdot \frac{\gamma_1}{2^j} + 12 \cdot \frac{b_{t}}{a_{j-1}}}. \tag{as $b_j$'s are increasing}  
	\end{align*}
	As such, for $b_t$ itself, we have that, 
	\begin{align*}
		b_t &\leq b_1 \cdot \prod_{j=1}^{t}\paren{1 + 24 \cdot \frac{\gamma_1}{2^j} + 12 \cdot \frac{b_{t}}{a_{j-1}}} \tag{by the inequality above} \\
		&\leq b_1 \cdot \exp\paren{\sum_{j=1}^{t} 24 \cdot \frac{\gamma_1}{2^j} + 12 \cdot \frac{b_t}{a_{j-1}}} \tag{as $(1+x) \leq \exp(x)$ for all $x > 0$} \\
		&\leq b_1 \cdot \exp\paren{24 \cdot \gamma_1} \cdot \exp\paren{12 b_t \cdot \sum_{j=1}^{t} \frac{1}{a_{j-1}}} \tag{as $\sum_{j=1}^{\infty} 1/2^j = 1$} \\
		&= b_1 \cdot \exp\paren{24 \cdot \gamma_1} \cdot \exp\paren{12\cdot \frac{b_t}{a_t} \cdot \sum_{j=1}^{t} \frac{1}{4^{t-j+1}}} \tag{as $a_{j-1} = a_1/4^{j-1} = 4^{t-j+1} \cdot a_t$} \\
		&\leq b_1 \cdot \exp\paren{24 \cdot \gamma_1} \cdot \exp\paren{4\cdot \frac{b_t}{a_t}} \tag{as $\sum_{j=1}^{t} {1}/{4^{t-j+1}} \leq 1/3$} \\
		&\leq b_1 \cdot  e^2. \tag{as $\gamma_1 = 1/24$ and $a_t \geq 100 b_t$} 
	\end{align*}
	Thus, even though $b_j$'s are increasing, we still have $b_t = O(b_1)$.  
	
	As for the value of $a_t$, we have that $a_t = a_1/4^t$. Thus, either $t = \log\log{n}$ and so we get $a_t \leq a_1/\log{n}$, or $a_{t} < 400 \cdot b_{t+1}$, which implies $a_t = O(b_t) = O(b_1)$ in this case. This implies 
	that $a_t = O(a_1/\log{n} + b_1)$. 
	
	Plugging in the bounds on $a_t,b_t$ in~\Cref{eq:at-bt}, we get that the size for $\Phi^* \cdot x$ is   $O(a \cdot \log{q} + b \cdot \log{n} \cdot \log{q})$. This concludes the proof.  
\end{proof}

We now show how to use these sketches to perform the recovery part. 
\begin{Algorithm} \label{alg:snr-recovery} 
	The \textbf{recovery algorithm} of $\SNR(G,S)$. 
	\begin{enumerate}
		\item Let $x_1 = x$ and $T_1 = T$. For $j=1$ to $t$ iterations (where $t$ is the parameter in~\Cref{alg:snr-sketch}): 
		\begin{enumerate}
			\item Use $\Phi_j \cdot x$ to obtain the vector $y_j$ and set $ T_{j+1}:= T_{y_j}$ of $\PartialR(x_{j},T_j,a_j,b_j,\gamma_j)$ and let $x_{j+1} = x_j - y_j$ (here, and throughout the proof, $a_j,b_j,\gamma_j$ are
			as defined in~\Cref{alg:snr-sketch}). 
			
			This step can be done by exploiting the linearity
			of the sketches to compute 
			\[
			\Phi_j \cdot x_j = \Phi_j \cdot (x_{j-1} - y_{j-1}) = \Phi_j \cdot x - \sum_{k=1}^{j-1} \Phi_j \cdot y_k,
			\] as the algorithm has already calculated $\Phi_j \cdot x$, 
			and knows $y_1,\ldots,y_{j-1}$ explicitly. 
		\end{enumerate} 
			\item Use $\Phi^* \cdot x$ to compute $\Phi^* \cdot x_{t}$ as specified above and recover $x_t$ by $\sprec$. As we have all $y_1,\ldots,y_t$ also, 
			we can compute $x = x_t + \sum_{j=1}^t y_j$ which recovers  $N(S)$ accordingly. 
		
	\end{enumerate}
\end{Algorithm}

The following lemma proves the correctness of the algorithm. 

\begin{claim}\label{clm:snr-inductive}
	With probability $1-4\exp\paren{-\dfrac{b}{24\log{n}}}$, for every $j \in [t]$, $T_j$ has size at most $a_j$ and $x_j$ has at most $b_j$ non-zero entries outside $T_j$. 
\end{claim}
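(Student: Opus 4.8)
The plan is to prove \Cref{clm:snr-inductive} by induction on $j$, tracking the event that all $t$ invocations of $\PartialR$ succeed simultaneously. For the base case $j=1$, we have $T_1 = T$ and $x_1 = x$; by Promise~\ref{promise1} of~\Cref{prob:sparse-recovery} (and the reduction to $|T|=a$ exactly), $|T_1| \leq a = a_1$, and by Promise~\ref{promise2}, $x_1$ has at most $b = b_1$ non-zero entries outside $T_1$. For the inductive step, I would condition on the event that $\PartialR(x_j, T_j, a_j, b_j, \gamma_j)$ succeeds in the sense of~\Cref{lem:partialR}. First one must check the hypothesis $a_j \geq 100 b_j$ of that lemma holds for every $j \in [t]$ — but this is exactly how $t$ is defined in~\Cref{alg:snr-sketch} (the last index where $a_t \geq 100 b_t$), so it holds throughout. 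Given success, \Cref{lem:partialR}(i) gives $T_{j+1} = T_{y_j} \subseteq T_j$ with $|T_{j+1}| \leq a_j/4 = a_{j+1}$, and \Cref{lem:partialR}(ii) gives that $x_{j+1} = x_j - y_j$ has at most $b_j + 12(\gamma_j b_j + b_j^2/a_j) = b_{j+1}$ non-zero entries outside $T_{j+1}$, matching the recursive definitions. This closes the induction.

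It remains to bound the failure probability by $4\exp(-b/(24\log n))$. By~\Cref{lem:partialR}, iteration $j$ fails with probability at most $2\exp(-\gamma_j b_j)$, so by a union bound over $j \in [t]$ the total failure probability is at most $\sum_{j=1}^t 2\exp(-\gamma_j b_j)$. Since $b_j \geq b_1 = b$ (the $b_j$ are increasing) and $\gamma_j = \gamma_1/2^{j-1} = 1/(24 \cdot 2^{j-1})$, we have $\gamma_j b_j \geq b/(24 \cdot 2^{j-1})$. The worst (largest) term is $j=t$ where $\gamma_t b_t \geq b/(24 \cdot 2^{t-1}) \geq b/(24\log n)$ using $t \leq \log\log n$ so $2^{t-1} \leq 2^{\log\log n - 1} \leq \log n$. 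Then $\sum_{j=1}^t 2\exp(-\gamma_j b_j) \leq 2t \cdot \exp(-b/(24\log n)) \leq 2\log\log n \cdot \exp(-b/(24\log n))$; I would absorb the $\log\log n$ factor by a slightly crude bound or, more cleanly, sum the geometric-like series $\sum_j 2\exp(-b/(24\cdot 2^{j-1}))$ directly — since the exponents halve, the terms grow, so the tail is dominated by its last term up to a constant, yielding at most $4\exp(-b/(24\log n))$ (valid once $b \geq 24\log n$, which is the hypothesis of~\Cref{lem:sparse-recovery}).

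The main obstacle I anticipate is the summation of the failure probabilities: because $\gamma_j b_j$ is \emph{decreasing} in $j$ (the $\gamma_j$ halve while the $b_j$ grow only mildly, by the $b_t = O(b_1)$ bound from \Cref{clm:snr-size}), the per-iteration failure bounds $2\exp(-\gamma_j b_j)$ are increasing, so one cannot naively geometric-sum from the front; one must argue the sum is within a constant factor of the last term $2\exp(-\gamma_t b_t)$, and then separately control $\gamma_t b_t \geq b/(24\log n)$ using both the $t \leq \log\log n$ cap and the lower bound $b_t \geq b_1$. The rest — verifying the recursive inequalities line up with \Cref{alg:snr-sketch}'s definitions, and that $a_j \geq 100b_j$ is exactly the termination condition for $t$ — is routine bookkeeping. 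Finally, I would note that once \Cref{clm:snr-inductive} holds, $x_t$ is $(a_t + b_t)$-sparse (at most $a_t$ non-zeros inside $T_t$ since $x$ is supported on $N(S)$ and... actually at most $|T_t| \le a_t$ entries could be non-zero there, plus $b_t$ outside), so $\sprec$ recovers it exactly by~\Cref{prop:sparse-rec}, and the telescoping $x = x_t + \sum_{j=1}^t y_j$ completes the recovery — but that final deduction belongs to the proof of \Cref{lem:sparse-recovery} itself, not this claim.
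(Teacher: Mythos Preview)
Your proposal is correct and follows essentially the same approach as the paper: induction on $j$ using the guarantees of \Cref{lem:partialR} for the inductive step, then a union bound over the $t$ invocations, bounding each failure term by $2\exp(-\gamma_1 b_1/2^{j})$ and arguing the sum is within a constant factor of its last term (valid when $b \geq 24\log n$). Your explicit check that $a_j \geq 100 b_j$ holds by the definition of $t$ is a nice clarification the paper leaves implicit, and your identification of the summation as the only non-routine step matches the paper's treatment exactly.
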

\begin{proof}
	We prove this by induction wherein we assume that the high probability event of~\Cref{lem:partialR} happens every time we invoke it -- we then bound the probability that this does not happen explicitly. 
	
	For $j=1$, the claim statement holds trivially by the promise of~\Cref{prob:sparse-recovery}. Now suppose this is the case for some index $j$. At this point, by~\Cref{lem:partialR}, when invoking $\PartialR(x_{j},T_j,a_j,b_j,\gamma_j)$, 
	we get that the resulting pair of vector $y_j$ and set $T_{y_j}$ have the following properties: 
	\begin{enumerate}[label=$(\roman*)$]
		\item $T_{y_j}$ is a subset of $T_j$ with size at most $a_j/4$. Hence, $T_{j+1} = T_{y_j}$ has size at most $a_{j+1} = a_j/4$ also.  
		\item $x_j-y_j$ has at most $b_j+12(\gamma_j b_j + \frac{b_j^2}{a_j})$ non-zero elements outside $T_{y_j}$. Hence, $x_{j+1} = x_j - y_j$ has at most $b_{j+1} = b_j+12(\gamma_j b_j + \frac{b_j^2}{a_j})$ non-zero elements outside of 
		$T_{j+1} = T_{y_j}$ also. 
	\end{enumerate}
	This proves the induction step. 
	
	We now need to also account for the error probability of each application of~\Cref{lem:partialR}, which, for an iteration $j \in [t]$, is at most $2\exp\paren{-\gamma_j \cdot b_j} $.  
	As such, by union bound, probability of error is at most 
	\begin{align*}
		2 \cdot \sum_{j=1}^{t} \exp\paren{-\gamma_j \cdot b_j} &\leq 2 \cdot \sum_{j=1}^{t} \exp\paren{-\frac{\gamma_1}{2^j} \cdot b_1} \tag{as $\gamma_j = \gamma_1/2^j$ and $b_j \geq b_1$} \\
		&\leq 4 \cdot \exp\paren{-\frac{\gamma_1}{2^t} \cdot b_1} \tag{as the largest term in the series dominates the sum of the rest} \\
		&\leq 4 \cdot \paren{-\frac{b}{24 \cdot \log{n}}},
	\end{align*}
	by the choice of $\gamma_1$ and since $t \leq \log\log{n}$. 
\end{proof}

Conditioned on the event of~\Cref{clm:snr-inductive}, we have that at the end of the last iteration, $x_t$ is $(a_t+b_t)$-sparse. Thus, by the guarantee of $\sprec$ in~\Cref{prop:sparse-rec}, the algorithm 
correctly recovers $x_t$. As $x = x_t + \sum_{j=1}^t y_j$ and the algorithm has already computed $y_j$'s also, it will recover $x$ correctly. This, combined with the bounds on the sketch size and randomness in~\Cref{clm:snr-size}, 
concludes the proof of~\Cref{lem:sparse-recovery}.

\begin{remark}
	The number of random bits needed by the $\SNR$ sketch is $O(a\log n)$ which is more than our budget to store individually for each sketch. But we can reuse these random bits for all copies of $\SNR$ sketches and union bound over the failure probability. Thus, we use and store at most $O(a\log n)$ random bits over \emph{all} $\SNR$ sketches that we use in our dynamic streaming algorithm for matching. 
\end{remark}

\subsection{Neighborhood-Size Tester} \label{sec:nbhd-size-test}

Before we move on from this section, we also mention a simple helper sketch that allows to approximately verify if the promises of~\Cref{prob:sparse-recovery} are satisfied for a given input. We formally define the problem as follows:

\begin{problem}[Neighborhood-Size Testing]\label{prob:nbhd-test}
	Let $a,\inside \geq 1$ be known integers such that $a \geq 16\inside$. Consider a graph $G=(V,E)$ specified in a dynamic stream and let $S \subseteq V$ be a known subset of vertices. The goal is to, given a set $T \subseteq V$ at the \underline{end} of the stream, 
	return ``Yes" if $\card{N(S) - T}\leq \inside$ and ``No" if $\card{N(S) - T}\geq 2 \inside$, assuming the following promises: 
	\begin{enumerate}[label=$(\roman*)$]
		\item size of $T$ is at most $a$;
		\item $\card{N(S) - T} \leq \inside$ or $\card{N(S) - T} \geq 2\, \inside$;
	\end{enumerate}
\end{problem}
In words, in \Cref{prob:nbhd-test}, we have a set $S$ of vertices, known at the \emph{start} of the stream, and we are interested in the size of the neighborhood \emph{outside} a given set $T$, specified at the \emph{end} of the stream. 
We guarantee that $T$ is not ``too large'' (parameter $a$) and that $a$ is slightly larger than $\inside$ and want to know the size of the neighborhood of $S$ outside $T$. If the size is between $\inside$ and $2 \inside$ then the answer can be arbitrary.

\begin{lemma}\label{lem:nbhd-size-test}
	There is a linear sketch, called $\NET(G,S)$, for~\Cref{prob:nbhd-test} that uses sketch and randomness of size
	\[
	\snet = \snet(n,a,\inside) = O\left(\frac{a}{\inside} \cdot \log^3 n\right)
	\]
	 bits and with high probability outputs the correct answer. 
\end{lemma}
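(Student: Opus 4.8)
The plan is to prove \Cref{lem:nbhd-size-test} by reducing the promise problem to recovering, via a small linear sketch, the neighborhood of $S$ inside a \emph{randomly subsampled} vertex set, and then reading off the answer once $T$ is revealed. Fix $r := \max\set{1, \floor{\inside/(20\log n)}}$, let $h:[n]\to[r]$ be an $O(\log n)$-wise independent hash function (stored via \Cref{prop:k-wise} in $O(\log^2 n)$ bits), and set $R := \set{i : h(i)=1}$, so each vertex survives into $R$ with probability $p := 1/r$ (with $O(\log n)$-wise independence); note $\inside p \geq 20\log n$ when $\inside \geq 20\log n$, and $p=1$, $R=[n]$ otherwise. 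Work with the vector $x=x(G,S)\in\mathbb{F}_q^n$ of \Cref{eq:x-GS} for a prime $q=\poly(n)$ (so the entries of $x$, which are at most $\card{S}\le n<q$, do not wrap and $\supp{x}=N(S)$), and let $x_R$ be $x$ with all coordinates outside $R$ zeroed out, so $\supp{x_R}=N(S)\cap R$ and $x_R$ is a linear image of $x$, hence maintainable in the stream. The $\NET$ sketch stores (i) the $\sprec$ sketch of \Cref{prop:sparse-rec} for $(100\,a/r)$-sparse vectors in $\mathbb{F}_q^n$ applied to $x_R$, and (ii) a single fingerprint $f := \sum_{i\in R}\alpha^i\,x_i\in\mathbb{F}_q$ for a uniformly random $\alpha\in\mathbb{F}_q$; both are linear in $x$.

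At the end, run the $\sprec$ recovery to produce a candidate $(100\,a/r)$-sparse vector $\hat x$ (if the decoder produces none, output ``No''), and \emph{accept} $\hat x$ only if $\sum_{i\in R}\alpha^i\hat x_i = f$; if accepted, compute $M := \card{\set{i\in R\setminus T : \hat x_i\neq 0}}$ and output ``Yes'' iff $M \leq \tfrac32\,\inside p$, otherwise ``No''. Correctness of recovery: when $\inside<20\log n$ we have $x_R=x$ and, in the ``Yes'' case, $\card{N(S)}\leq\card{T}+\card{N(S)-T}\leq a+\inside\leq 2a < 100\,a/r$, so $\sprec$ is exact; when $\inside\geq 20\log n$, $\E\bracket{\card{N(S)\cap R}}\leq 2ap$ and (in the ``Yes'' case) $ap=a/r\geq 16\cdot 20\log n=\Omega(\log n)$ since $a\geq 16\inside$, so by the limited-independence Chernoff bound (\Cref{prop:chernoff-limited}), $\card{N(S)\cap R}\leq 100\,a/r$ with high probability and $\sprec$ is again exact. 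Moreover, whenever $\hat x$ is accepted we have $\hat x=x_R$: the candidate $\hat x$ is a deterministic function of $h$ and the input and is independent of $\alpha$, so if $\hat x\neq x_R$ then $\Pr_\alpha\bracket{\sum_i \alpha^i(\hat x_i-(x_R)_i)=0}<n/q=1/\poly(n)$ by Schwartz--Zippel. Hence, with high probability, the ``Yes'' case always reaches the thresholding step with $\hat x=x_R$, and any ``No'' output made without a successful recovery is correct (a ``Yes'' instance would have been recovered).

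Conditioned on a correct recovery, analyze $M=\card{(N(S)-T)\cap R}$. If $\inside<20\log n$ this is deterministic: $p=1$ gives $M=\card{N(S)-T}\leq\inside<\tfrac32\inside$ (``Yes'') and $\geq 2\inside>\tfrac32\inside$ (``No''). If $\inside\geq 20\log n$, write $m:=\card{N(S)-T}$, so $M=\sum_{v\in N(S)-T}\mathbb{I}[h(v)=1]$ is a sum of $m$ such $O(\log n)$-wise independent, mean-$p$ indicators. In the ``Yes'' case $m\leq\inside$; padding with $\inside-m$ dummy coordinates outside $N(S)$ yields $M'\geq M$, a sum of exactly $\inside$ such indicators with $\E\bracket{M'}=\inside p\geq 20\log n$, so $\Pr\bracket{M\geq\tfrac32\inside p}\leq\Pr\bracket{M'\geq\tfrac32\E M'}\leq\exp(-\Omega(\inside p))=1/\poly(n)$ by \Cref{prop:chernoff-limited}. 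In the ``No'' case $M$ dominates a sum of exactly $2\inside$ such indicators with mean $2\inside p\geq 40\log n$, so $\Pr\bracket{M\leq\tfrac32\inside p}\leq\exp(-\Omega(\inside p))=1/\poly(n)$. A union bound over the $O(1)$ bad events gives the ``with high probability'' guarantee. For the space: $\sprec$ on a $(100\,a/r)$-sparse vector over $\mathbb{F}_q$ costs $O((a/r)\cdot\log n\cdot\log q)=O((a/\inside)\log^3 n)$ bits (using $\log q=O(\log n)$ and $a/r=\Theta((a/\inside)\log n)$ in both regimes), the fingerprint and $\alpha$ add $O(\log n)$ bits, and the randomness is $O(\log^2 n)$ bits for $h$ plus $O(\log n)$ for $\alpha$ — all within $O((a/\inside)\log^3 n)$ since $a/\inside\geq 16$.

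The main obstacle is obtaining a usable multiplicative gap for the threshold uniformly over all relations between $\inside$ and $\log n$: subsampling at rate $\Theta(\log n/\inside)$ is exactly what makes the surviving count $\Theta(\log n)$ and hence concentrated, but it inflates the sparsity budget to $\Theta((a/\inside)\log n)$ — affordable precisely because $a/\inside\geq 16$ and $\log q=O(\log n)$ — while for $\inside$ smaller than $\log n$ one cannot subsample at all and must recover all of $N(S)$ directly, which is only possible in the ``Yes'' case and therefore forces the fingerprint-based certification so that a dense (hence ``No'') instance is never mistaken for a valid recovery. The remaining points — the edge/multiplicity modelling that makes $\supp{x_R}$ over $\mathbb{F}_q$ equal to $N(S)\cap R$, and the $r=1$ boundary case in the definition of $p$ — are routine.
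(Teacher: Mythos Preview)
Your argument is correct and the space bound matches, but the route is genuinely different from the paper's. The paper solves \Cref{prob:nbhd-test} by \emph{sampling}: it adds artificial edges from every vertex of $T$ to $S$ so that $T\subseteq N(S)$, then runs $k=\Theta((a/\inside)\log n)$ independent $\Lsampler$'s on the neighborhood vector of $S$, each returning a uniform element of $N(S)$; it simply counts how many sampled vertices lie outside $T$ and thresholds at $\Theta(\log n)$. The whole analysis is two Chernoff bounds on a Bernoulli sum with mean $\Theta(\log n)$ in either case.

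You instead solve it by \emph{subsampling then exact recovery}: thin the universe at rate $\Theta(\log n/\inside)$, recover the entire (now $O((a/\inside)\log n)$-sparse, in the Yes case) subsampled neighborhood via $\sprec$, certify correctness with a Schwartz--Zippel fingerprint so that a dense No instance cannot masquerade as a valid recovery, and then threshold the exact count in the subsample. This is more intricate---you need the fingerprint precisely because $\sprec$ has no graceful-degradation guarantee on non-sparse inputs, and you need the padding trick to make \Cref{prop:chernoff-limited} apply with a fixed mean---but it avoids $\Lsampler$ entirely and, as a byproduct, actually hands you the subsampled neighborhood rather than just a statistic on it. Either way the $\log^3 n$ factor arises the same: $\Theta((a/\inside)\log n)$ ``units'' (samplers for the paper, sparse-recovery rows for you) each costing $\Theta(\log^2 n)$ bits.
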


The solution to this problem is standard and is included for completeness. We will solve this problem by sampling random neighbors (using $\Lsampler$) and see how many of them lie outside $T$. Note that all vertices in $T$ may not be neighbor to $S$, but we can fix that by adding artificial edges from all vertices of $T$ to $S$ only for the tester. We can then count the number of neighbors picked by the neighborhood samplers outside $T$ and get an estimate of the number of neighbors outside $T$. Formally, 
\begin{Algorithm}
	$\NET(G,S)$: A  linear sketch for~\Cref{prob:nbhd-test}. 
	\medskip
	
	\textbf{Input:} A graph $G=(V,E)$ specified via $\vect(E)$, a set $S \subseteq V$ specified at the beginning of the stream, and a set $T \subseteq V$ specified at the end of the stream.
	
	\medskip
	
	\textbf{Output:} ``Yes" if $\card{N(S)-T} \leq \inside$ and  ``No" if $\card{N(S)-T} \geq 2 \inside$.
	
	\medskip
	
	\textbf{Parameters:} Let $k := \frac{100}{99} \cdot 150 \ln{n} \cdot \left(\frac{a+\inside}{\inside}\right)$.
	
	\medskip
	
	\textbf{Sketching matrix:} 
	
	\begin{enumerate}
		\item Given $T$ at the end of the stream, add edges from all vertices in $T$ to an arbitrary vertex in $S$ to the stream for this tester.
		\item Sample $k$ copies of $\Lsampler$ for $N(S)$ with parameters $\delta_E=n^{-10}$ and $\delta_F=1/100$.
	\end{enumerate}
	
	\textbf{Recovery:} 
	
	\begin{enumerate}
		\item Go over the $k$ copies of $\Lsampler$ that do not fail and extract a vertex from them.
		\item If the number of sampled vertices outside $T$ is at most $200 \log n$, output ``Yes"; otherwise ``No". 
	\end{enumerate}
\end{Algorithm}

We now analyze the algorithm. 
Let $Y:=N(S) \setminus T$ and $y :=\card{Y}$. We consider the cases when $y \leq \inside$ and $y \geq 2\inside$ separately in the following. 

\begin{claim}
	If $y \leq \inside$ then the algorithm outputs ``Yes'' correctly with probability  at least $1-n^{-4}$. 
\end{claim}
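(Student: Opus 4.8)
The plan is to show that when $y := \card{N(S)\setminus T} \le \inside$, the number of sampled vertices landing outside $T$ is concentrated below the threshold $200\log n$, so the algorithm outputs ``Yes''. First I would condition on the ``clean'' event that at least $k$ of the $k$ copies of $\Lsampler$ do not fail and none of them returns a wrong answer; the latter holds with probability $1 - O(k\cdot n^{-10}) \ge 1 - n^{-5}$ by a union bound over the $\delta_E$ guarantee of \Cref{prop:l0-sampler}. (Strictly speaking we run more than $k$ copies or re-define $k$ so that after discarding failures we still have $\ge k$ usable samples; the commented-out Chernoff computation in the excerpt handles exactly this and I would reinstate that argument.) Note that after adding the artificial edges from every vertex of $T$ to a fixed vertex of $S$, the neighborhood seen by the samplers is $N(S)\cup T$, whose size is between $\card{T}$ and $\card{T}+\inside \le a+\inside$, and among these at most $y\le\inside$ vertices lie outside $T$.

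Next I would set up the tail bound. For each usable sample $i$, let $Z_i$ be the indicator that the $i$-th sampled vertex lies outside $T$. Since each $\Lsampler$ returns a uniformly random element of $N(S)\cup T$, we have $\Pr(Z_i = 1) = \dfrac{\card{N(S)\setminus T}}{\card{N(S)\cup T}} \le \dfrac{\inside}{\card{T}} $, but the cleaner bound to use is $\Pr(Z_i=1) \le \dfrac{\inside}{a+\inside - \text{(something)}}$; I would simply bound it by $\dfrac{\inside}{a+\inside}\cdot(1+o(1))$ using $\card{N(S)\cup T}\ge \card{T}\ge$ a suitable quantity, or more robustly observe $\card{N(S)\cup T} \ge (a+\inside)\cdot\tfrac{99}{100}$ is not guaranteed — instead I would note $\card{N(S)\cup T}\ge \card{T}$ and handle the case $\card{T}$ small separately, or (matching the constant $k$ in the algorithm) use that the samplers see a ground set of size $\ge \tfrac{99}{100}(a+\inside)$ after a trivial adjustment. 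Let $Z = \sum_i Z_i$ over the $\ge k$ usable samples; then $\expect{Z} \le k\cdot\dfrac{\inside}{a+\inside} = \dfrac{100}{99}\cdot 150\ln n$. Since $k/(a+\inside)\cdot\inside$ is a constant times $\ln n$, the threshold $200\log n$ sits a constant factor above $\expect{Z}$ (roughly $200\log n \ge (1+\eps)\expect{Z}$ for a constant $\eps$ of order $1$), so the Chernoff bound of \Cref{prop:chernoff} with the $Z_i\in[0,1]$, independent across the independent sampler copies, gives $\Pr(Z \ge 200\log n) \le \exp(-\Omega(\ln n)) \le n^{-4}$.

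Finally I would assemble the pieces: union-bound the failure of the clean event ($\le n^{-5}$ from wrong answers plus $\le n^{-3}$ from too many sampler failures, by the reinstated Chernoff computation), and the event $Z \ge 200\log n$ ($\le n^{-4}$), to conclude the algorithm outputs ``Yes'' with probability at least $1 - n^{-4}$ (after adjusting constants, or stating it as $1-O(n^{-3})$ which suffices for ``with high probability''). The main obstacle — really a bookkeeping subtlety rather than a deep one — is pinning down the exact denominator $\card{N(S)\cup T}$: the promise only gives $\card{T}\le a$, not a lower bound on $\card{T}$, so one must either argue that the interesting regime has $\card{T}$ comparable to $a$ (otherwise $a/\inside$ is small and the sketch is cheap and one can afford to recover exactly), or adjust the artificial-edge gadget so that the sampled ground set has size exactly in $[\,a,\,a+\inside\,]$ by padding $T$ up to size $a$ with dummy vertices, which is harmless since the recovery only tests membership in $T$. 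With that fixed, the constants in the definition of $k$ (the factor $\tfrac{100}{99}\cdot 150$) are precisely chosen to make the Chernoff exponent exceed $4\ln n$, and the claim follows.
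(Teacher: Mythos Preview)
Your overall approach—define indicator variables for ``sample lands outside $T$'' and apply Chernoff—is the same as the paper's, and your observation about padding $T$ to have size exactly $a$ is a genuine bookkeeping point that the paper leaves implicit (it writes the probability as $\tfrac{y}{a+y}$ as though $\card{T}=a$). So the core is right.

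However, your conditioning on a ``clean'' event where at least $k$ of the $k$ samplers succeed is both unnecessary and, as you yourself note, doesn't quite work without modifying the algorithm. The cleaner route, which is what the paper does, is to observe that in the ``Yes'' case a \emph{failed} sampler only helps: it contributes nothing to the count of samples outside $T$. So you can simply set $Z_i=0$ whenever sampler $i$ outputs FAIL, sum over all $k$ copies, and bound $\Pr(Z_i=1)\le \tfrac{y}{a+y}\le \tfrac{\inside}{a+\inside}$ directly (the error probability $\delta_E=n^{-10}$ is absorbed by a union bound at the very end). This gives $\expect{Z}\le k\cdot\tfrac{\inside}{a+\inside}=\tfrac{100}{99}\cdot 150\ln n$, and a single application of \Cref{prop:chernoff} with $\mu_H\approx 150\ln n$ and $\eps=1/3$ finishes the argument. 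There is no need to reinstate the commented-out computation, run extra copies, or reason about how many samplers survive—all of that machinery is for the complementary ``No'' case, where you need \emph{lower} bounds on the count and hence must control failures.
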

\begin{proof}
	A sample lies outside $T$ with probability $\frac{y}{a+y} \leq \frac{\inside}{a+\inside}$ and does not fail with probability $0.99$ by the guarantee of $\Lsampler$. For any sample $i \in [k]$, 
	let $Z_i$ be an indicator random variable which is $1$ iff the sampled vertex is outside $T$. We thus have 
	\[
	\expect{Z_i} \leq 0.99 \cdot \frac{\inside}{a+\inside} = \frac{150 \log{n}}{k}. \tag{by the choice of $k$ in the algorithm}
	\]
	Let $Z := \sum_{i=1}^{k} Z_i$ denote the number of sampled vertices outside of $T$. As $Z_i$'s are independent of each other, by Chernoff bound (\Cref{prop:chernoff}) with $\mu_H = 150\,\log{n} \geq \expect{Z}$ and $\eps=1/3$, 
	we have, 
	\[
		\Pr\paren{Z > 200\log{n}} = \Pr\paren{Z > (1+\eps) \cdot \mu_H} \leq \exp\paren{-\frac{150\ln{n}}{36}} < n^{-4},
	\]
which concludes the proof, as when the number of sampled vertices that lie outside $T$ is at most $200 \log n$, the algorithm outputs ``Yes''. 
\end{proof}

We now consider the complementary case when $y \geq 2\inside$. 

\begin{claim}
	If $y \geq 2 \inside$ then the algorithm outputs ``No'' with probability at least $1-n^{-3}$.
\end{claim}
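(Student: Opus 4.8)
The plan is to re-run the analysis of the complementary ``Yes'' case, only now \emph{lower}-bounding the count of sampled vertices landing outside $T$ and invoking a lower-tail concentration bound. As there, for $i \in [k]$ let $Z_i$ be the indicator that the $i$-th copy of $\Lsampler$ does not fail and returns a vertex outside $T$, and set $Z := \sum_{i=1}^{k} Z_i$; this $Z$ is exactly the quantity the algorithm compares against $200\log n$. After the artificial edges from every vertex of $T$ to $S$ are added, the support of the neighborhood vector of $S$ is precisely the disjoint union $T \cup (N(S)\setminus T)$, of size $\card{T}+y \le a+y$, so a non-failing sampler returns a vertex outside $T$ with probability $\frac{y}{\card{T}+y} \ge \frac{y}{a+y}$; since $t \mapsto t/(a+t)$ is increasing and $y \ge 2\inside$ here, this probability is at least $\frac{2\inside}{a+2\inside}$. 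Combined with the $0.99$ lower bound on a sampler's non-failure probability, this yields $\expect{Z_i} \ge 0.99 \cdot \frac{2\inside}{a+2\inside}$.

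Next I would compute $\expect{Z} = k \cdot \expect{Z_i}$ by substituting $k = \frac{100}{99}\cdot 150\ln n \cdot \frac{a+\inside}{\inside}$: the $\inside$-factors cancel and the $0.99$ cancels against the $\frac{100}{99}$, leaving $\expect{Z} \ge 300\ln n \cdot \frac{a+\inside}{a+2\inside}$. Using the promise $a \ge 16\inside$ to bound $\frac{a+\inside}{a+2\inside} \ge \frac{17}{18}$ gives $\expect{Z} \ge 283\ln n$, a constant factor above the threshold $200\log n$ (reading $\log$ as $\ln$, as in the definition of $k$). Keeping these constants aligned so that a concentration bound genuinely separates the ``Yes'' regime (where $\expect{Z} \le 150\ln n$) from the ``No'' regime (where $\expect{Z} \ge 283\ln n$) across the single threshold $200\log n$ is the one place that needs care, but it is bookkeeping rather than a real obstacle.

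Finally I would apply the lower-tail Chernoff bound (\Cref{prop:chernoff}) to the sum $Z$ of independent $\set{0,1}$ variables with $\mu_L = 250\ln n \le \expect{Z}$ and $\eps = 1/5$, so that $(1-\eps)\mu_L = 200\ln n$ and $\prob{Z \le 200\log n} \le \exp\paren{-\eps^2 \mu_L/(2+\eps)} < n^{-3}$; off this event the algorithm sees more than $200\log n$ sampled vertices outside $T$ and correctly outputs ``No''. A union bound over the $\poly(n)$ many $\Lsampler$ copies absorbs their $\delta_E = n^{-10}$ error events into the $n^{-3}$ slack. Together with the previous claim and the promise of \Cref{prob:nbhd-test}, this establishes the correctness of $\NET$; the size bound is then immediate, since the sketch consists of $k = O((a/\inside)\log n)$ copies of $\Lsampler$, each of size $O(\log^2 n)$ by \Cref{prop:l0-sampler}.
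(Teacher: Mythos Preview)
Your proposal is correct and follows essentially the same approach as the paper: define the same indicators $Z_i$, lower-bound $\expect{Z_i}$ via $\frac{y}{a+y} \ge \frac{2\inside}{a+2\inside}$ and the $a \ge 16\inside$ promise, then apply the lower-tail Chernoff bound with $\mu_L = 250\ln n$ and $\eps = 1/5$. The only cosmetic difference is that the paper uses the looser factor $\frac{a+\inside}{a+2\inside} \ge \frac{7}{8}$ where you use $\frac{17}{18}$, and the paper defers the $\Lsampler$-error union bound and the size calculation to after both claims rather than folding them in here.
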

\begin{proof}
A sample lies outside $T$ with probability $\frac{y}{a+y} \geq \frac{2\inside}{a+2\inside}$ and does not fail with probability $0.99$ by the guarantee of $\Lsampler$. For any sample $i \in [k]$, 
	let $Z_i$ be an indicator random variable which is $1$ iff the sampled vertex is outside $T$. We thus have 
	\[
	\expect{Z_i} \geq 0.99 \cdot \frac{2\inside}{a+2\inside} \geq 0.99 \cdot \frac{2\inside}{a+\inside} \cdot \frac{7}{8} \geq \frac{250 \log{n}}{k}. \tag{by the choice of $k$ in the algorithm and since $a \geq 16\inside$}
	\]
	Let $Z := \sum_{i=1}^{k} Z_i$ denote the number of sampled vertices outside of $T$. As $Z_i$'s are independent of each other, by Chernoff bound (\Cref{prop:chernoff}) with $\mu_L = 250\,\log{n} \geq \expect{Z}$ and $\eps=1/5$, 
	we have, 
	\[
		\Pr\paren{Z \leq 200\log{n}} = \Pr\paren{Z \leq (1-\eps) \cdot \mu_H} \leq \exp\paren{-\frac{150\ln{n}}{50}} < n^{-3},
	\]
which concludes the proof, as when the number of sampled vertices that lie outside $T$ is more than $200 \log n$, the algorithm outputs ``No''. 
\end{proof}

Therefore, we showed that we can distinguish between the two cases. To find the error probability we can union bound over the error probabilities of both cases and the error probabilities of all copies of $\Lsampler$ and conclude that the error probability is at most $2n^{-3}$ (since there are at most $n^2$ copies of $\Lsampler$).

The algorithm uses $k$ copies of $\Lsampler$ each of which has size
$O( \log^2 n)$ bits (\Cref{prop:l0-sampler} with $\delta_E=n^{-10}$ and $\delta_F=1/100$).
Thus, the sketch size is $k \cdot O(\log^2 n) = O\left(\dfrac{a}{\inside} \log^3 n\right)$ bits, proving \Cref{lem:nbhd-size-test}.


\newcommand{\guess}{\ensuremath{{\textnormal{opt}}}\xspace}

\newcommand{\LL}{\ensuremath{\mathcal{L}}}
\renewcommand{\RR}{\ensuremath{\mathcal{R}}}

\renewcommand{\MM}{\ensuremath{\mathcal{M}}}

\newcommand{\EE}{\ensuremath{\mathcal{E}}}

\newcommand{\EErep}{\EE_{\textnormal{rep}}}
\newcommand{\EEsrep}{\EE_{\textnormal{str-rep}}}

\newcommand{\grem}{G_{\textnormal{rem}}}
\newcommand{\gspa}{G_{\textnormal{sparse}}}
\newcommand{\mgred}{M_{\textnormal{greedy}}}
\newcommand{\mrem}{M_{\textnormal{rem}}}
\newcommand{\vrem}{V_{\textnormal{rem}}}
\newcommand{\mstr}{M_{\textnormal{strong}}}

\newcommand{\Mone}{\ensuremath{M_{\textnormal{\textsc{alg-1}}}}}
\newcommand{\Mtwo}{\ensuremath{M_{\textnormal{\textsc{alg-2}}}}}
\newcommand{\Mthree}{\ensuremath{M_{\textnormal{\textsc{alg-3}}}}}

\newcommand{\Distgroup}{\ensuremath{\mathbb{D}_{\textnormal{grp}}}\xspace}

\newcommand{\Measy}{\ensuremath{M_{\textnormal{easy}}}\xspace}
\newcommand{\Geasy}{\ensuremath{\overline{G}_{\textnormal{easy}}}\xspace}

\newcommand{\ME}[1]{\ensuremath{M_{#1}}\xspace}
\newcommand{\GE}[1]{\ensuremath{\overline{G}_{#1}}\xspace}

\newcommand{\High}[1]{\ensuremath{\textnormal{High}_{#1}}\xspace}
\newcommand{\Med}[1]{\ensuremath{\textnormal{Med}_{#1}}\xspace}
\newcommand{\Low}[1]{\ensuremath{\textnormal{Low}_{#1}}\xspace}

\newcommand{\dlow}[2]{\ensuremath{\textnormal{deg}_{\textnormal{low}}(#1,#2)}\xspace}
\newcommand{\degG}[2]{\ensuremath{\textnormal{deg}_{#2}(#1)}\xspace}

\newcommand{\eventmatch}{\ensuremath{\mathcal{E}_{\textnormal{M}}}\xspace}
\newcommand{\eventsparsify}{\ensuremath{\mathcal{E}_{\textnormal{S}}}\xspace}

\newcommand{\rand}{\ensuremath{Y}}

\newcommand{\com}{\ensuremath{\textnormal{com}}\xspace}
\newcommand{\Com}{\ensuremath{\textnormal{Com}}\xspace}

\newcommand{\NL}{\ensuremath{N\!L}}

\section{Main Result and Setup}\label{sec:alg}

In this section, we present our main results for $\alpha$-approximating the maximum matching of any given graph in dynamic streams using $O(n^2/\alpha^3)$ bits of space. 
Specifically, we prove the following theorem for linear sketches which immediately gives a dynamic streaming algorithm with the same guarantees by~\Cref{prop:linear-sketch}. 

\begin{theorem}\label{thm:main}
	There is a linear sketch that given any parameter $\alpha \leq n^{1/2-\delta}$ for any constant $\delta > 0$, and any  $n$-vertex graph $G=(V,E)$ specified via $\vect(E)$, 
	with high probability outputs an $\alpha$-approximate maximum matching of $G$ using $O(n^2/\alpha^3)$ \underline{\emph{bits}} of space. 
\end{theorem}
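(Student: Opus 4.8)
The plan is to prove \Cref{thm:main} by assembling the two main lemmas announced in the introduction: the dichotomy \Cref{lem:ms}, which classifies every input graph as either ``easy'' --- a small non-uniform edge sample already contains an $\alpha$-approximate matching --- or as containing a \emph{sparse induced subgraph that carries essentially all of its matching}; and \Cref{lem:sc}, which gives an $O(n^2/\alpha^3)$-bit sketch solving graphs of the latter form. As a preprocessing step I would reduce to the promise setting in which a factor-$2$ estimate $\mu$ of the maximum matching size is known (and, where convenient, the vertex count normalized so that $n=\Theta(\mu)$), by running $O(\log n)$ independent copies of the construction below, one per dyadic guess of the matching size, and outputting at the end the largest matching returned by any copy. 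Since every copy outputs a genuine matching of $G$, correctness only needs the $\alpha$-approximation guarantee to hold for the copy with the correct guess, and the $O(\log n)$ overhead --- like all the $\polylog(n)$ and constant-factor losses incurred later --- is absorbed by the $n^{\Omega(1)}$ slack that the hypothesis $\alpha\le n^{1/2-\delta}$ leaves between $n^2/\alpha^3$ and the $\Omega((n/\alpha)\log n)$ cost of merely writing down the output.

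For a fixed guess I would maintain two linear sketches in parallel and return whichever of their matchings is larger. The first, the ``easy-graph'' sketch, $L_0$-samples $o(n^2/\alpha^3)$ edges from a carefully chosen non-uniform distribution over vertex pairs --- morally the distribution used in~\cite{AssadiKLY16,ChitnisCEHMMV16}, but tuned so that only $o(n^2/\alpha^3)$ samples are drawn --- so that by \Cref{prop:l0-sampler} its total size is $o(n^2/\alpha^3)\cdot\polylog(n)=O(n^2/\alpha^3)$ bits. \Cref{lem:ms} is precisely the statement that a maximum matching of this sample is an $\alpha$-approximate matching of $G$ \emph{unless} $G$ admits a set $S\subseteq V$ with $|V\setminus S|=o(n/\alpha)$, with only $O(n)$ edges inside $G[S]$, and with a matching of $G[S]$ of size $n-o(n/\alpha)$; so it remains only to handle graphs of this ``hard'' form.

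The second, ``hard-graph'' sketch implements \Cref{lem:sc}, and here the central difficulty is that the witnessing set $S$ (equivalently $T:=V\setminus S$) is revealed only at the end of the stream, whereas the sketch must be committed in advance. The plan is to (i) maintain, throughout the stream, sketches that let us estimate vertex degrees and classify vertices, neighbourhood-size testers (\Cref{lem:nbhd-size-test}), and \SNR sketches (\Cref{lem:sparse-recovery}) for suitably chosen groups of vertices, with the ``$a$'' parameter set to the small bound $|T|=o(n/\alpha)$ and the ``$b$'', ``$c$'' parameters set to the resulting (small) bounds on how many group-neighbours land outside $T$ and on the back-degrees; (ii) at the end of the stream, recover an approximate partition $(\widehat S,\widehat T)$ from the degree information, certify the \SNR promises with \NET, and read off the relevant edges inside $\widehat S$ from the \SNR sketches; and (iii) output a maximum matching of the recovered sparse subgraph, which has size $n-o(n/\alpha)\ge(1-o(1))\mu$ and is therefore a $(1+o(1))$-approximation, in particular an $\alpha$-approximation. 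The reason this fits the budget is \Cref{lem:sparse-recovery}: recovering the out-of-$T$ neighbourhood of a group costs $O(a\log c+b\log n\log c)$ rather than $O((a+b)\log^2 n)$, so the $a$-terms --- which carry \emph{no} $\log n$ --- dominate the budget and, once the group sizes are chosen appropriately, sum to $O(n^2/\alpha^3)$, while the $b\log n$-terms are arranged to be negligible against it; the randomness of all \SNR sketches is stored once, as in the remark following \Cref{lem:sparse-recovery}, so it does not dominate either.

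I expect the main obstacle to be step (ii): recovering at the end of the stream a partition $(\widehat S,\widehat T)$ close enough to the true $(S,T)$ that, simultaneously, (a) all the \SNR promises ($a\ge 100b$, $|T|\le a$, $|N(\cdot)-T|\le b$, back-degree $<c$) hold for \emph{every} group with high probability, and (b) the edge set read off still supports a matching of size $n-o(n/\alpha)$. Making this go through forces one to choose the grouping and the parameters $a,b,c$ so that the set of ``bad'' vertices on which the end-of-stream classification can err has size $o(n/\alpha)$, and then to argue that so few misclassified vertices can destroy only $o(n/\alpha)$ matching edges; this balancing act --- together with the analogous but separate combinatorial argument behind the dichotomy \Cref{lem:ms} --- is the genuinely delicate part, carried out in \Cref{sec:sc} and \Cref{sec:ms}, and it is exactly where the comfortable $n^{\Omega(1)}$ gap between the two regimes is spent to keep every loss within $O(n^2/\alpha^3)$.
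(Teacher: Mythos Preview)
Your plan tracks the paper's structure --- assemble \Cref{lem:ms} and \Cref{lem:sc} under a promise on the matching size, then remove the promise --- but two of the bookkeeping steps do not go through as written.

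First, the ``$O(\log n)$ overhead'' from running one copy per dyadic guess is \emph{not} absorbed by any slack: the theorem asks for $O(n^2/\alpha^3)$ bits, and an extra $\log n$ factor would already break it. The paper instead observes that the construction for guess $\opt$ uses only $O(\opt^2/\alpha^3)$ bits, so $\sum_i (2^i)^2/\alpha^3 = O(n^2/\alpha^3)$ as a geometric series, with no multiplicative loss. You also omit the companion step of discharging \Cref{assumption:range}: when $\opt < \alpha^2 n^{\delta}$ the paper falls back on the $\polylog$-overhead algorithms of~\cite{AssadiKLY16,ChitnisCEHMMV16} (fine because then $\opt < n^{1-\delta}$ so $\opt^2/\alpha^3 \cdot \polylog(n) = o(n^2/\alpha^3)$), and when $\alpha \le 100$ it just stores the whole graph in $O(n^2)$ bits.

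Second, the end-of-stream partition is not recovered ``from degree information''; it is simply $T = V(\Measy)$, the vertices matched by the easy-graph sketch, known once \Cref{lem:ms} has run. More to the point, the $\SNR$ parameter $a$ is \emph{not} $|T| = O(\opt/\alpha)$ as you write. The paper connects its $k = \Theta(\opt/\alpha)$ groups by a fixed $(k/\alpha)$-regular auxiliary graph $\Filter$, and for each group $V_i$ the relevant set $T_i$ consists only of those vertices of $V(\Measy)$ that land in $V_i$'s neighbour-groups under $\Filter$, so $|T_i| = O(\opt/\alpha^2)$ with high probability. This extra factor of $1/\alpha$ in $a$ is exactly what makes $k$ groups times $O(a)$ bits each total $O(\opt^2/\alpha^3)$; with your setting $a = O(\opt/\alpha)$ the hard-graph sketch would cost $\Theta(\opt^2/\alpha^2)$, a factor of $\alpha$ too much.
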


We will make the following (more or less standard) assumptions when designing our algorithms. Both assumptions are made for simplicity of exposition and we show how to remove them 
later in this section.

\begin{assumption}[Knowledge of matching size]\label{assumption:guess}
	At the beginning of the stream, we are given an estimate $\guess$ with the promise that the maximum matching size of the input graph $G$ has size \underline{at least} $\guess$. 	The goal 
	is then to return a matching of size  $(\eta_0 \cdot \opt/\alpha)$ for some absolute constant $\eta_0 > 0$ at the end of the stream. 
\end{assumption}

\begin{assumption}[Range of parameters]\label{assumption:range}
	We assume that the parameter $\opt$ of~\Cref{assumption:guess} and approximation factor $\alpha$ satisfy the following equations: 
	\[
		\opt \geq \alpha^2 \cdot n^{\delta} \qquad \textnormal{and} \qquad \alpha > 100.
	\]
\end{assumption}

\begin{remark}\label{rem:assumption}
	While we assume~\Cref{assumption:guess} and~\Cref{assumption:range} when designing our algorithms, even if these assumptions are \emph{not} satisfied, the algorithms (with high probability) will not output an edge that does not belong to the graph, 
	but may output a matching that is not sufficiently large for our purpose. 
\end{remark}

The plan for designing our main algorithms is then to focus on the problem of~\Cref{assumption:guess} (and further assume~\Cref{assumption:range}). 
We first give a linear sketch that can handle ``easy'' graphs for this problem. In particular, we prove the following lemma. 

\begin{Lemma}[Match-Or-Sparsify Lemma]\label{lem:ms}
	There is a linear sketch that given any graph $G=(V,E)$ specified via $\vect(E)$, uses $O(\guess^2/\alpha^3)$ bits of space and with high probability outputs a matching $\Measy$ 
	that satisfies \underline{at least one} of the following conditions:  
	\begin{itemize}
		\item \textbf{Match-case:} The matching $\Measy$ has at least $(\opt/8\alpha)$ edges; 
		\item \textbf{Sparsify-case:} The induced subgraph of $G$ on vertices not matched by $\Measy$, denoted by $\Geasy$,  has at most $(20\,\opt \cdot \log^4\!{n})$ edges and a matching of size at least $3\,\opt/4$. 
	\end{itemize} 
\end{Lemma}

This lemma should be interpreted as follows: we can either find a matching of size $(\opt/8\alpha)$ (thus already solve the problem of~\Cref{assumption:guess} with $\eta_0 = 1/8$), or 
certify that we had a ``hard'' graph to work on. Our main saving in the space then comes from the subsequent algorithm that handles any input that leads to the sparsify-case of~\Cref{lem:ms}. 
We prove~\Cref{lem:ms} in~\Cref{sec:ms}. 

We note that~\Cref{lem:ms} bears  similarities to the so-called ``residual sparsity property'' of greedy matching established in~\cite{AhnCGMW15} (see also~\cite{Konrad18}). In this context, those results prove that 
if one samples $\approx \guess^2/\alpha^3$ edges of the graph uniformly at random, and compute a maximal matching of the sample greedily, then the induced subgraph of $G$ on unmatched vertices have 
maximum degree $\approx \alpha^3 \log{n}$ with high probability, thus $O(\alpha^3 \cdot n\log{n})$ edges in total. Our~\Cref{lem:ms} uses a non-uniform sampling method and exploits the fact that the resulting matching is small (otherwise we are in the match-case), to bound the total number of edges in the induced subgraph of unmatched vertices more strongly by $\approx \opt \cdot \poly\!\log{(n)}$. Finally, the non-uniform sampling method used in this lemma is inspired by prior 
work on dynamic streaming matching algorithms in~\cite{AssadiKLY16,ChitnisCEHMMV16} although the analysis of the algorithm is quite different. 

The following lemma is the heart of the proof. We emphasize that the information provided by algorithm of~\Cref{lem:ms} will only be available to the algorithm of this lemma \emph{at the end of the stream} as we have to run both algorithms in parallel in a single pass.

\begin{Lemma}[Algorithm for Sparsify-Case]\label{lem:sc}
	There is a linear sketch that given any graph $G=(V,E)$ specified via $\vect(E)$, uses $O(\opt^2/\alpha^3)$ bits of space and with high probability, given the matching $\Measy$ of~\Cref{lem:ms} in the recovery step, can recover a matching of size $(\opt/8\alpha)$ in $G$. 
\end{Lemma}

\Cref{lem:sc} gives an efficient way of solving ``hard instances'' of the dynamic streaming matching problem, namely, the ones left out by our more standard approach in~\Cref{lem:ms}. This lemma is where we use our $\SNR$ sketches in place 
of $L_0$-samplers and is the source of 
efficiency of our general algorithm. We prove~\Cref{lem:sc} in~\Cref{sec:sc}. 

\Cref{thm:main} now follows easily from~\Cref{lem:ms} and~\Cref{lem:sc}  by lifting~\Cref{assumption:guess} and~\Cref{assumption:range}.  

\begin{proof}[Proof of~\Cref{thm:main}]
	By~\Cref{lem:ms} and~\Cref{lem:sc}, we can use $O(\opt^2/\alpha^3)$ bits of space under~\Cref{assumption:guess} and~\Cref{assumption:range} and find a matching of size at least $(\opt/8\alpha)$ in $G$ with high probability.

	\paragraph{Removing~\Cref{assumption:range}.} Firstly, if $\opt < \alpha^2 \cdot n^{\delta}$, then by the promise of~\Cref{thm:main} that $\alpha < n^{1/2-\delta}$, we get that $\opt < n^{1-\delta}$. 
	At this point, even if we run an algorithm with $O((\opt^2/\alpha^3) \cdot \poly\log{(n)})$ space, it will still be $o(n^2/\alpha^3)$ bits as required by~\Cref{thm:main}. Thus, we can run any of the previously-best algorithms 
	for this problem, e.g. the ones in~\cite{AssadiKLY16,ChitnisCEHMMV16}, to solve the problem.
	
	Secondly, if $\alpha \leq 100$, we can simply maintain a counter mod two between every pairs of vertices to store all edges of $G$ in $O(n^2)$ bits of space which is permitted by~\Cref{thm:main} when $\alpha = O(1)$. 
	This allows us to solve the problem exactly.  
	
	\paragraph{Removing~\Cref{assumption:guess}.} Let $\alg(\opt,\alpha)$ be the algorithm we obtained so far under~\Cref{assumption:guess}. We simply run $\alg(o,\beta)$ for all choices of $o \in \set{2^i \mid \text{$i=0$ to $\log{n}$}}$ and 
	$\beta = (\alpha/2\eta_0)$ in parallel 
	and return the largest matching found. By~\Cref{rem:assumption}, these matchings all belong to the input graph with high probability and for the choice of $o \geq \mu(G)/2$, where $\mu(G)$ is the maximum matching size of $G$, 
	we can apply our results for $\opt=o$ to get a matching of size $\eta_0 \cdot \opt/(\beta/2) = \mu(G)/\alpha$ in the graph, which is precisely an $\alpha$-approximation as desired. Finally, the space of this new algorithm is
	\[
		O(1) \cdot \sum_{\substack{o \in \set{2^{i} \mid i \in [\log{n}]}}} \frac{o^2}{\alpha^3} = O(n^2/\alpha^3)~\text{bits}
	\]
	as the sum is forming a geometric series. This concludes the proof of~\Cref{thm:main}. 
\end{proof}

We conclude this section by making the following remark about the sketches we use. 

\begin{remark}\label{rem:l0-err}
Throughout our main algorithms in the remainder of the paper, we use at most $O(n^2)$ copies of the sketching primitives $\NET,\NES$ and $\SNR$ developed in~\Cref{sec:tools}. For all these sketches the probabilities of failure and error are
$1/100$ and $n^{-10}$, respectively.  We can simply do a union bound over all these sketches and have that with a high probability, none of 
them are going to err. Hence, we condition on this high-probability event here and do not explicitly account for the error probability of this part each time. However, we will consider the case that (some of) these sketches may output FAIL still. 
\end{remark}


\newcommand{\VV}{\ensuremath{\mathcal{V}}}

\newcommand{\HR}{\ensuremath{H_{\textnormal{rec}}}}

\newcommand{\Filter}{\ensuremath{{{F}}}\xspace}

\newcommand{\Mstar}{\ensuremath{M^\star}}

\newcommand{\WR}{\ensuremath{W\!R}}
\newcommand{\SR}{\ensuremath{S\!R}}
\newcommand{\NR}{\ensuremath{N\!R}}

\newcommand{\Mhard}{\ensuremath{M_{\textnormal{hard}}}}

\newcommand{\outside}{\tilde{b}}

\section{Main Algorithm: Handling the Sparsify-Case}\label{sec:sc} 

As the main part of our work in this paper is the algorithm in~\Cref{lem:sc}, we change the order of presentation and start with this algorithm and postpone the proof of~\Cref{lem:ms} to the next section. 

\Cref{lem:ms} allows us to find a matching $\Measy$ which is either large enough, or the subgraph induced by its unmatched vertices is sparse and has a large matching. Our task now is to handle 
the latter case efficiently, i.e., prove~\Cref{lem:sc}. We emphasize that we can only know this particular sparse subgraph of the input \emph{after} the pass over the input, and by that point we should have collected all the required information from the graph already. The following lemma is a slightly weaker version of~\Cref{lem:sc}. 

\begin{lemma}[Slightly weaker version of~\Cref{lem:sc}]\label{lem:sc2}
	There is a linear sketch that given any graph $G=(V,E)$ specified via $\vect(E)$, with high probability uses $O(\opt^2/\alpha^3)$ bits of space and given the matching $\Measy$ of~\Cref{lem:ms} in the recovery step that satisfies the sparsify-case, 
	can recover a matching of size $(\opt/8\alpha)$ in $G$ with probability at least $(1-n^{-\delta/6})$ and does not output any edge that is not in $G$ with high probability. 
\end{lemma}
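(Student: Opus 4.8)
The plan is to reduce the problem, \emph{at the end of the stream}, to recovering a suitably sparse induced subgraph of $G$ and extracting a matching from it. First note that we may assume $|V(\Measy)| < \opt/(4\alpha)$, since otherwise $\Measy$ itself already has $\opt/(8\alpha)$ edges and the recovery step simply outputs $\Measy$. So set $T := V(\Measy)$ --- a set of size $< \opt/(4\alpha)$ revealed only in the recovery step --- and $S := V \setminus T$; in the sparsify-case, $\Geasy = G[S]$ has at most $20\,\opt\log^4 n$ edges and a matching of size at least $3\opt/4 \gg \opt/(8\alpha)$. Were $S$ known during the stream, one could just recover the edges of $G[S]$; the whole difficulty is that $S$ is revealed only after the pass, so the information has to be collected obliviously.

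The conceptual core is to replace the $\Lsampler$s of prior work by the new primitives of \Cref{sec:tools}, exploiting the key feature of this configuration: a vertex $v \in S$ can have many neighbors only \emph{inside} $T$, whereas inside $S$ the graph is sparse --- exactly what $\SNR$ is built for. The plan is: during the stream, maintain $\SNR$ sketches, over appropriately chosen groups of $S$-vertices, which when fed $T$ in the recovery step return the neighborhoods of those groups \emph{outside} $T$, i.e. the edges of $G[S]$; here the $\SNR$ parameter $a$ is an a-priori bound on $|T| \lesssim \opt/\alpha$, the parameter $b$ bounds the residual neighborhood left inside $S$ (kept small using that $G[S]$ is sparse), and $c$ is a degree bound, enforced by discarding the few high-degree vertices of $G[S]$ --- there are at most $\Ot(\opt/\alpha)$ of them, so a maximum matching of $G[S]$ over the surviving vertices is still $\Omega(\opt/\alpha)$. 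A companion $\NET$ sketch per group certifies at the end that the promises of \Cref{prob:sparse-recovery} actually hold for that group, so the few ``bad'' groups can be dropped. Since $G[S]$ has only $\approx\opt\,\polylog n$ edges overall, the $b$-terms of all these sketches sum to only $\Ot(\opt)$, confining the $\polylog n$-per-edge overhead of $\Lsampler$s to this cheap term --- the source of the savings. If recovering all of $G[S]$ is still too costly, I would first subsample the vertices via a limited-independence hash function fixed at the start of the stream (independent of $G$ and of the \Cref{lem:ms}-sketch) so that the part of $G[S]$ that must be recovered becomes sufficiently sparse while, by subsampling a fixed matching of size $3\opt/4$, a matching of size $\opt/(8\alpha)$ still survives --- with probability $1-n^{-\delta/6}$, using that $\opt/\alpha \ge \alpha n^\delta$ is large enough for the Chernoff/McDiarmid bounds. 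In the recovery step I then feed $T = V(\Measy)$ to every sketch, assemble the recovered edges of $G[S]$, discard the bad groups, and greedily extract a matching of size $\opt/(8\alpha)$.

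For accounting and correctness: the failure probability $n^{-\delta/6}$ comes from the concentration in the subsampling step together with a union bound over the $\poly(n)$ many $\SNR$, $\NET$ and $\NES$ sketches (whose per-copy errors are $n^{-\Omega(1)}$ by \Cref{lem:sparse-recovery}, \Cref{lem:nbhd-size-test}, \Cref{lem:nei-edge-sampler}), exactly as in \Cref{rem:l0-err}; the $O(a\log n)$ random bits of $\SNR$ are shared across all its copies as in the remark following \Cref{lem:sparse-recovery}. The claim that no edge outside $G$ is ever output holds with high probability \emph{unconditionally}, regardless of whether the structural hypotheses hold, because $\SNR$, $\NET$ and $\NES$ are built on $\Lsampler$s and exact sparse recovery and hence only ever return genuine edges of $G$, in line with \Cref{rem:assumption}.

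The main obstacle, by far, is making the total space $O(\opt^2/\alpha^3)$ bits. The tension is that the $a$-term of each $\SNR$ sketch is $\Theta(|T|)$ no matter how few of $T$'s vertices are actually adjacent to the corresponding group, so one cannot afford a separate $\SNR$ per $S$-vertex; the groups must be pushed to the largest size for which the promise $a \ge 100b$ still holds --- forcing a delicate joint choice of the group size, the degree-truncation threshold, and the subsampling rate --- so that there are few enough groups for the $a$-terms to sum to $O(\opt^2/\alpha^3)$, all while ensuring that subsampling, truncation, and dropping bad groups together erode the matching of $G[S]$ by at most a constant factor. I expect the bulk of the technical work, and the place where the imprecise intuition above must be made rigorous, to lie exactly here.
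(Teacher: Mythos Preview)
Your high-level architecture is right --- group the vertices, store one $\SNR$ and one $\NET$ per group, feed them $T=V(\Measy)$ in recovery, drop bad groups, and extract a matching --- but the proposal is missing the one structural device that makes the space actually close, and it hand-waves a second nontrivial step.

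\textbf{The main gap: the filter graph.} You take $a$ to be an a-priori bound on $|T|=|V(\Measy)|\lesssim \opt/\alpha$. With that choice, $k$ groups cost $\Theta(k\cdot\opt/\alpha)$ bits just from the $a$-terms, so the budget $\opt^2/\alpha^3$ allows only $k=O(\opt/\alpha^2)$ groups --- a factor $\alpha$ short of the $\Omega(\opt/\alpha)$ groups you need to host a matching of size $\opt/(8\alpha)$. Your subsampling idea does not rescue this: subsampling each vertex with probability $p$ shrinks both the matching (to $\Theta(p^{2}\opt)$ edges) and $T$ (to $\Theta(p\,\opt/\alpha)$ vertices), and a short calculation shows the two constraints are simultaneously satisfiable only when $\opt=\Omega(n\alpha)$, which is not assumed. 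The paper's fix is a fixed $(k/\alpha)$-regular \emph{filter} graph $\Filter$ on the $k=10\opt/\alpha$ groups: group $V_i$'s sketch is run not on $G$ but on $G(V_i)$, the subgraph of edges going from $V_i$ to its $k/\alpha$ neighbor groups. Consequently the set $T_i$ fed to $\SNR(G(V_i),V_i)$ is only the $\Measy$-vertices that land in those neighbor groups, of size $\approx(\opt/4\alpha)\cdot(1/\alpha)$, so one may take $a=2\opt/\alpha^{2}$ and afford all $k=O(\opt/\alpha)$ groups within $O(\opt^{2}/\alpha^{3})$ bits. Your sentence ``the $a$-term of each $\SNR$ sketch is $\Theta(|T|)$ no matter how few of $T$'s vertices are actually adjacent'' is exactly the belief the filter graph overturns.

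\textbf{A second gap: recovering edges from $\SNR$.} $\SNR$ returns the \emph{vertex set} $N(V_i)\setminus T_i$, not edges; once groups are non-singleton (which they must be, as above), knowing $v\in N(V_i)$ does not tell you which $u\in V_i$ is its mate. The paper handles this by looking at ordered pairs $(V_i,V_j)$ of \emph{neighbor} groups and declaring an edge $(u,v)$ precisely when both $N(V_i)\cap V_j=\{v\}$ and $N(V_j)\cap V_i=\{u\}$ --- the entire ``weakly/strongly represented'' machinery (\Cref{def:Mstar-def}, \Cref{lem:no_strong-rep}) is there to prove that $\geq\opt/(8\alpha)$ edges of $\Mstar$ are isolated between some pair of clean, non-expanding neighbor groups with probability $1-n^{-\delta/6}$. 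Your proposal's ``assemble the recovered edges of $G[S]$'' skips this, and it is where the $1-n^{-\delta/6}$ actually comes from.
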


Let us show that~\Cref{lem:sc2} immediately proves~\Cref{lem:sc} in its full generality. Firstly, it is without loss of generality to assume that $\Measy$ satisfies the sparsify-case as otherwise, the algorithm 
can simply return $\Measy$ itself which is of size at least $(\opt/8\alpha)$ in the match-case and satisfies the promise of~\Cref{lem:sc}. 

Secondly, to improve the success probability to a high-probability bound, we can run the algorithm of above lemma in parallel for $(60/\delta) = O(1)$ times
 and return the largest matching output by any copy. With high probability, the algorithm still does not 
output an edge not in the graph and uses $O(\opt^2/\alpha^3)$ bits of space (as $\delta=\Theta(1)$). The probability that none of these matchings are large enough is only $\paren{n^{-\delta/6}}^{60/\delta} = n^{-10}$; thus, the algorithm
also outputs a large enough matching with high probability. This proves~\Cref{lem:sc} assuming~\Cref{lem:sc2}. As such, in this section, 
we focus on proving~\Cref{lem:sc2}. 

To simplify the exposition, we present and analyze the sketching matrix and recovery step of the linear sketch in~\Cref{lem:sc2} separately. 

\subsection{The Sketching Matrix}

The sketching matrix of~\Cref{lem:sc2} is computed as follows. We create $k \approx \opt/\alpha$ groups of vertices and each group is obtained by sampling each vertex independently with probability $1/k$ (so vertices can belong to more than one group or none at all). We connect these groups using a fixed $(k/\alpha)$-regular graph and throughout the stream, we only focus on the edges appearing between vertices of connected groups. Over these edges then, 
we maintain one $\NET$ and one $\SNR$ for each group with parameters $a \approx (\opt/\alpha^2)$, $b \approx \outside \approx n^{\delta}$, and $c = \Theta(1)$. Moreover, to save space in the sketching matrices, we use the same 
set of random bits for sketching matrices of all $\SNR$ copies. This amounts to a total of $O(\opt^2/\alpha^3)$ bits of space. 

\begin{Algorithm}\label{alg:sc} 
	The sketching matrix  of~\Cref{lem:sc2}.
	
	\medskip
	
	\textbf{Input:} A graph $G=(V,E)$ specified via $\vect(E)$. 
	
	\medskip
	
	\textbf{Parameters:} Let $k := {10\opt}/{\alpha}$, $a = 2\opt/\alpha^2$, $\outside =n^{\delta/4}$, and $c=30/\delta$. 
	
	\begin{enumerate}[label=$(\roman*)$]
		\item\label{sc1} Create a collection of \textbf{groups} of vertices $\VV := (V_1,\ldots,V_k)$ as follows: For each $i \in [k]$, independently sample a $(\log^2\!{n})$-wise independent hash function $h_i: V \rightarrow [k]$ and set $V_i := \set{v \in V \mid h_i(v) = 1}$. 
		\item\label{sc2} Let $\Filter \in \set{0,1}^{k \times k}$ be the adjacency matrix of an arbitrarily fixed $(k/\alpha)$-regular graph on $[k]$ (with no self-loops or parallel edges). We say that two groups $V_i$ and $V_j$ are \textbf{neighbor}  
			whenever $\Filter(i,j) = 1$.  
		\item\label{sc3} For any group $i \in [k]$, define the subgraph $G(V_i)$ on vertices $V$ but only consisting of edges between $V_i$ and its neighbor-groups, i.e., with edges $\set{(u,v) \in E \mid u \in V_i,\, v \in V_j,\, \Filter(i,j)=1, \; \forall j}$. 
		\item\label{sc4} For every $i \in [k]$, return sketching matrices of $\SNR(G(V_i),V_i)$ with parameters $a,b=2 \outside,c$ and $\NET(G(V_i),V_i)$ with parameters $a$ and $\outside$ as the final sketching matrix -- to save space, use the \emph{same random bits} 
		for  sketching matrices of all copies of $\SNR$.  
	\end{enumerate}
\end{Algorithm}

We note that in~\Cref{alg:sc}, each vertex of $V$ may appear in more than one group of $\VV$ or no group at all. 
We start by bounding the size of the sketching matrix and the extra information stored by~\Cref{alg:sc}. 
\begin{lemma}\label{lem:sc-space}
	\Cref{alg:sc} uses $O(\opt^2/\alpha^3)$ bits of space with high probability. 
\end{lemma}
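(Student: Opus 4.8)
The plan is to charge the space of \Cref{alg:sc} to four groups of stored objects, bound each one using the corresponding result of \Cref{sec:tools} together with \Cref{prop:k-wise}, and then sum, absorbing all lower-order terms into $O(\opt^2/\alpha^3)$ via \Cref{assumption:range}. Recall the parameters $k = 10\opt/\alpha$, $a = 2\opt/\alpha^2$, $\outside = n^{\delta/4}$ (so the $\SNR$ copies use $b = 2\outside$), and $c = 30/\delta$, and observe that since $c = O(1)$ the prime $q$ used inside $\SNR$ has $\log q = O(1)$. From $\opt \geq \alpha^2 n^{\delta}$ and $\alpha > 100$ (\Cref{assumption:range}) I will repeatedly use the two inequalities $a \geq 2 n^{\delta}$ and $\opt^2/\alpha^3 \geq \alpha\, n^{2\delta} > n^{2\delta}$; in particular, for $n$ large these give $a \geq 100\, b$ and $a \geq 16\,\outside$, so the preconditions of \Cref{lem:sparse-recovery} and \Cref{lem:nbhd-size-test} hold and their stated size bounds apply, and they also make $ka = 20\,\opt^2/\alpha^3$ the dominant term in the whole count.

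The four groups are as follows. (i) The $k$ hash functions $h_1,\dots,h_k$: each is $(\log^2 n)$-wise independent from $[n]$ to $[k]$, so by \Cref{prop:k-wise} each costs $O(\log^3 n)$ bits, for a total $O(k\log^3 n) = o(ka)$ since $\log^3 n \ll n^{\delta} \leq a$. (ii) The filter $\Filter$: although nominally a $k\times k$ matrix, it is an \emph{arbitrarily fixed} $(k/\alpha)$-regular graph, so I take it to be a circulant (Cayley) graph on $\IIZ_k$, i.e. $\Filter(i,j)=1$ iff $(i-j)\bmod k$ lies in a fixed symmetric subset of size $k/\alpha$; adjacency is then decidable from $i,j,k,\alpha$ alone, so $\Filter$ costs only $O(\log n)$ bits, which is $\ll n^{2\delta} \leq \opt^2/\alpha^3$. (iii) The $k$ copies of $\SNR$, all with the \emph{same} parameters $(a,2\outside,c)$ hence the same sketching matrix: by \Cref{clm:snr-size} (using $\log q = O(1)$) each sketch body is $O(a + \outside\log n)$ bits, while the randomness, reused across all copies, is $O(a\log n)$ bits in total; so the bodies contribute $O(ka + k\,\outside\log n) = O(ka)$ because $\outside\log n / a \leq n^{\delta/4}\log n / (2n^{\delta}) \to 0$, and the shared randomness contributes $O(a\log n) = O\!\big(\tfrac{\log n}{n^{\delta}}\cdot \tfrac{\opt^2}{\alpha^3}\big) = o(\opt^2/\alpha^3)$. (iv) The $k$ copies of $\NET$: by \Cref{lem:nbhd-size-test} each costs $O((a/\outside)\log^3 n)$ bits of sketch plus randomness, for a total $O((ka/\outside)\log^3 n) = O\!\big(\tfrac{\log^3 n}{n^{\delta/4}}\cdot\tfrac{\opt^2}{\alpha^3}\big) = o(\opt^2/\alpha^3)$.

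Summing (i)--(iv) gives $O(\opt^2/\alpha^3)$, which proves the lemma; in fact this bound is deterministic, since every one of the four pieces has length a fixed function of $n,\opt,\alpha$ that depends neither on the input nor on the randomness, so the ``with high probability'' qualifier is automatic. I expect item (ii) to be the only genuinely delicate point: storing a generic $(k/\alpha)$-regular graph as an explicit matrix costs $\Theta(k^2) = \Theta(\opt^2/\alpha^2)$ bits --- a factor $\alpha$ over budget --- and even an adjacency-list representation costs $\Theta((k^2/\alpha)\log k) = \Theta((\opt^2/\alpha^3)\log n)$ bits --- a $\log n$ factor over budget --- so it is essential that $\Filter$ is chosen by us and can be taken structured enough to admit an $O(\log n)$-bit description. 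A secondary point the accounting must respect is that the $\SNR$ randomness is reused across the $k$ copies (as instructed in \Cref{alg:sc}): keeping it per-copy would again cost $\Theta((\opt^2/\alpha^3)\log n)$, a log-factor over budget.
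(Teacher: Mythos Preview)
Your proof is correct and follows essentially the same decomposition as the paper's own proof: bound the hash functions via \Cref{prop:k-wise}, observe that $\Filter$ is a fixed input-independent object needing no explicit storage, and charge the $k$ copies of $\SNR$ and $\NET$ via \Cref{lem:sparse-recovery} and \Cref{lem:nbhd-size-test} with the shared $\SNR$ randomness as a lower-order term. Your additional observations---that the space bound is in fact deterministic (so ``with high probability'' is automatic), and that a generic or adjacency-list representation of $\Filter$ would overshoot the budget---are accurate and make the argument more explicit than the paper's version, which relegates the circulant construction to a footnote.
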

\begin{proof}
	Line~\ref{sc1} requires storing $k = O(\opt/\alpha)$ different $(\log^2{n})$-wise independent hash functions, each of which requiring $O(\log^3{n})$ bits by~\Cref{prop:k-wise}. This is bounded by $O(\opt^2/\alpha^3)$ bits by~\Cref{assumption:range}. 
	
	Line~\ref{sc2} does \emph{not} require storing $\Filter$ explicitly as it is  fixed and input-independent (we can use any standard way of generating a fixed $(k/\alpha)$-regular graph\footnote{For instance, to generate a $2d$-regular graph on $N$ vertices connect vertex $i$ to vertices in $[i-1,i-d]$ and in $[i+1,i+d]$. To generate a $2d+1$-regular graph connect $i$ to $i+N/2$ in addition to the previous vertices. All the calculations are done mod $N$. Note that $N$ has to be even in the $2d+1$ case which is okay for us because we have $10 \cdot (\opt/\alpha)$ groups.}). 
	
	Line~\ref{sc3} and the graphs it works with are deterministic functions of $\vect(E)$ in the stream and the groups stored in Line~\ref{sc1}. Note that we are \emph{not} going to store these subgraphs in the stream but rather for each update $(u,v)$ to $\vect(E)$, 	we only update all subgraphs $G(V_i)$ for $i \in [k]$ by checking whether $(u,v)$ also belongs to $G(V_i)$ for $i \in [k]$. Thus, we require no further storage in this line. 
	
	Line~\ref{sc4} stores sketching matrix of $k$ copies of $\SNR$ with the same parameters $a,b=2\outside,c$ and $k$ copies of $\NET$ with the same parameters $a,\outside$. By~\Cref{lem:sparse-recovery}, each $\SNR$ sketch will take $O((a+\outside\log{n})\log{c})$ bits 
	and by~\Cref{lem:nbhd-size-test}, each $\NET$ takes $O(\frac{a}{\outside} \cdot \log^3{n})$ bits. Both of these are $O(\opt/\alpha^2)$ bits by~\Cref{assumption:range}. As we are storing $k = O(\opt/\alpha)$ of these sketches, 
	the total space will then be $O(\opt^2/\alpha^3)$ as desired. Finally, since we share the randomness of copies of $\SNR$, we only need $O(a\log{n})$ bits \emph{in total} which is a lower-order term. \Qed{\Cref{lem:sc-space}}
	
\end{proof}

By the sparsify-case, we get a sparse graph $\Geasy$ with a large matching.
We identify edges of this matching with certain properties that make them easy to recover while accounting for a constant fraction of the matching. 
In our subsequent recovery algorithm we will show that we recover a superset of these edges.
We now analyze \Cref{alg:sc} and describe the useful properties of certain edges. To continue, we need some notation and definitions.

\paragraph{Notation.} We say an edge $e=(u,v)$ \textbf{appears between} two groups $V_i,V_j \in \VV$ iff $u \in V_i$ and $v \in V_j$, and  $V_i$ and $V_j$ are neighbor groups, i.e., $\Filter(i,j) = 1$. 
Similarly, we say $e$ \textbf{appears inside} a group $V_i \in \VV$ if there exists some group $V_j$ such that $e$ appears between $(V_i,V_j)$. We write `$e \in (V_i,V_j)$' or `$e \in V_i$' when 
$e$ appears between $(V_i,V_j)$ or inside $V_i$, respectively.

\begin{definition}[Group definitions]\label{def:group-def}
For each group $V_i \in \VV$, we say that $V_i$ is: 
\begin{itemize}
	\item[$-$] \emph{\textbf{clean}} if it does not contain any vertex of $\Measy$. 
	\item[$-$] \emph{\textbf{expanding}} if more than $\outside$ edges of $\Geasy$ appear inside $V_i$ and \emph{\textbf{non-expanding}} otherwise. 
\end{itemize}
\end{definition}

\noindent
Let $M$ be the matching of size at least $3\,\opt/4$ in $\Geasy$ as guaranteed by~\Cref{lem:ms} (recall that $\Geasy$ is the subgraph of $G$ induced on vertices not matched by $\Measy$).
We define $\Mstar$ as the following subset of $M$ on ``low-degree'' vertices of $\Geasy$, namely: 
\begin{align}
	\Mstar := \set{(u,v) \in M \mid \text{each of $u$ and $v$ has at most $(\outside/2)$ neighbors in $\Geasy$}}.  \label{eq:M*}
\end{align}
We will focus on recovering edges of $\Mstar$ (which we show are sufficiently many). 
For this, we need to define several conditions for each edge $(u,v) \in \Mstar$ that if satisfied, allows us to recover this edge via our recovery algorithm using the sketches stored by~\Cref{alg:sc}.

\begin{definition}[$\Mstar$-edges definitions]\label{def:Mstar-def}
For any edge $e$ of $\Mstar$, we say that $e$ is:  
\begin{itemize}
	\item[$-$] \emph{\textbf{weakly-represented}} by pairs of groups $V_i \neq V_j \in \VV$ iff: 
	\begin{enumerate}[label=$(\roman*)$]
	\item $e$ appears between $V_i$ and $V_j$ (this means $V_i$ and $V_j$ has to be neighbor groups),
	\item no edge of $\Geasy$ other than $e$ appears between $V_i$ and $V_j$, and
	\item both $V_i$ and $V_j$ are clean. 
	\end{enumerate}
	\item[$-$] \emph{\textbf{strongly-represented}} by pairs of groups $V_i \neq V_j \in \VV$ iff:
		\begin{enumerate}[label=$(\roman*)$]
	\item $e$ is weakly-represented by $(V_i,V_j)$, and
	\item both of $V_i$ and $V_j$ are non-expanding.
	\end{enumerate}
\end{itemize}
\end{definition}
\noindent
\Cref{fig2,fig3} give illustrations of this definition. 

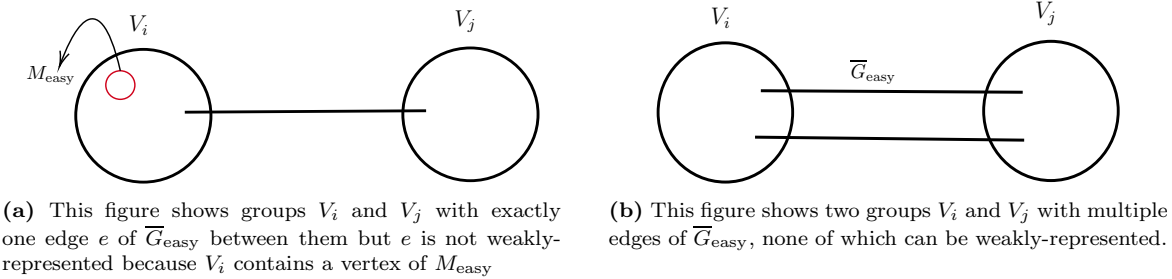
\begin{figure}[h!]
	\centering
	\subcaptionbox{This figure shows groups $V_i$ and $V_j$ with exactly one edge $e$ of $\Geasy$ between them but $e$ is not weakly-represented because $V_i$ contains a vertex of $\Measy$ \label{}}%
	[.45\linewidth]{ 	\resizebox{200pt}{80pt}{

\tikzset{every picture/.style={line width=0.75pt}} 

\begin{tikzpicture}[x=0.75pt,y=0.75pt,yscale=-1,xscale=1]
	
	\draw  [line width=1.5]  (100,164.5) .. controls (100,138.82) and (120.82,118) .. (146.5,118) .. controls (172.18,118) and (193,138.82) .. (193,164.5) .. controls (193,190.18) and (172.18,211) .. (146.5,211) .. controls (120.82,211) and (100,190.18) .. (100,164.5) -- cycle ;
	\draw  [line width=1.5]  (325,163.5) .. controls (325,137.82) and (345.82,117) .. (371.5,117) .. controls (397.18,117) and (418,137.82) .. (418,163.5) .. controls (418,189.18) and (397.18,210) .. (371.5,210) .. controls (345.82,210) and (325,189.18) .. (325,163.5) -- cycle ;
	\draw [line width=1.5]    (175,162) -- (341,161) ;
	\draw  [color={rgb, 255:red, 208; green, 2; blue, 27 }  ,draw opacity=1 ] (120.67,143) .. controls (120.67,137.48) and (125.14,133) .. (130.67,133) .. controls (136.19,133) and (140.67,137.48) .. (140.67,143) .. controls (140.67,148.52) and (136.19,153) .. (130.67,153) .. controls (125.14,153) and (120.67,148.52) .. (120.67,143) -- cycle ;
	\draw    (130.67,133) .. controls (115.15,64.45) and (95.86,110.38) .. (89.24,128.72) ;
	\draw [shift={(88.67,130.33)}, rotate = 289.44] [color={rgb, 255:red, 0; green, 0; blue, 0 }  ][line width=0.75]    (10.93,-3.29) .. controls (6.95,-1.4) and (3.31,-0.3) .. (0,0) .. controls (3.31,0.3) and (6.95,1.4) .. (10.93,3.29)   ;
	
	\draw (135,93) node [anchor=north west][inner sep=0.75pt]  [font=\large] [align=left] {$V_{i}$};
	\draw (359,89) node [anchor=north west][inner sep=0.75pt]  [font=\large] [align=left] {$V_{j}$};
	\draw (64.67,128) node [anchor=north west][inner sep=0.75pt]  [font=\normalsize] [align=left] {$\Measy$};

\end{tikzpicture}}}
	\hspace{0.4cm}
	\subcaptionbox{This figure shows two groups $V_i$ and $V_j$ with multiple edges of $\Geasy$, none of which can be weakly-represented.\label{}}%
	[.45\linewidth]{ \resizebox{180pt}{70pt}{

\tikzset{every picture/.style={line width=0.75pt}} 

\begin{tikzpicture}[x=0.75pt,y=0.75pt,yscale=-1,xscale=1]
	
	\draw  [line width=1.5]  (100,164.5) .. controls (100,138.82) and (120.82,118) .. (146.5,118) .. controls (172.18,118) and (193,138.82) .. (193,164.5) .. controls (193,190.18) and (172.18,211) .. (146.5,211) .. controls (120.82,211) and (100,190.18) .. (100,164.5) -- cycle ;
	\draw  [line width=1.5]  (325,163.5) .. controls (325,137.82) and (345.82,117) .. (371.5,117) .. controls (397.18,117) and (418,137.82) .. (418,163.5) .. controls (418,189.18) and (397.18,210) .. (371.5,210) .. controls (345.82,210) and (325,189.18) .. (325,163.5) -- cycle ;
	\draw [line width=1.5]    (167,181) -- (353,183) ;
	\draw [line width=1.5]    (171,150) -- (352,151) ;
	
	\draw (135,93) node [anchor=north west][inner sep=0.75pt]  [font=\large] [align=left] {$V_{i}$};
	\draw (359,89) node [anchor=north west][inner sep=0.75pt]  [font=\large] [align=left] {$V_{j}$};
	\draw (231.67,128) node [anchor=north west][inner sep=0.75pt]   [align=left] {$\Geasy$};

\end{tikzpicture}}}
	\caption{Illustration of edges that satisfy the first condition, but not other conditions of weakly-represented edges.}
	\label{fig2}
\end{figure}

\begin{figure}[h!]
	\centering
	{   \resizebox{180pt}{80pt}{ 

\tikzset{every picture/.style={line width=0.75pt}} 

\begin{tikzpicture}[x=0.75pt,y=0.75pt,yscale=-1,xscale=1]
	
	\draw  [line width=1.5]  (120,184.5) .. controls (120,158.82) and (140.82,138) .. (166.5,138) .. controls (192.18,138) and (213,158.82) .. (213,184.5) .. controls (213,210.18) and (192.18,231) .. (166.5,231) .. controls (140.82,231) and (120,210.18) .. (120,184.5) -- cycle ;
	\draw  [line width=1.5]  (345,183.5) .. controls (345,157.82) and (365.82,137) .. (391.5,137) .. controls (417.18,137) and (438,157.82) .. (438,183.5) .. controls (438,209.18) and (417.18,230) .. (391.5,230) .. controls (365.82,230) and (345,209.18) .. (345,183.5) -- cycle ;
	\draw [color={rgb, 255:red, 74; green, 144; blue, 226 }  ,draw opacity=1 ][line width=1.5]    (199,188) -- (360,183) ;
	\draw    (179.67,159.33) -- (257.67,91.33) ;
	\draw    (199.67,179.33) -- (277.67,111.33) ;
	\draw    (189.67,170.33) -- (267.67,102.33) ;
	\draw    (384.67,148.33) -- (338,94) ;
	\draw    (375.67,157.33) -- (329,103) ;
	
	\draw (155,113) node [anchor=north west][inner sep=0.75pt]  [font=\large] [align=left] {$\displaystyle V_{i}$};
	\draw (386,109) node [anchor=north west][inner sep=0.75pt]  [font=\large] [align=left] {$\displaystyle V_{j}$};
	\draw (200,90) node [anchor=north west][inner sep=0.75pt]  [font=\large,color={rgb, 255:red, 208; green, 2; blue, 27 }  ,opacity=1 ] [align=left] {$\displaystyle < \outside$};
	\draw (259,191) node [anchor=north west][inner sep=0.75pt]   [align=left] {$\Geasy$};
	\draw (350,88) node [anchor=north west][inner sep=0.75pt]  [font=\large,color={rgb, 255:red, 208; green, 2; blue, 27 }  ,opacity=1 ] [align=left] {$\displaystyle < \outside$};

\end{tikzpicture}}  }
	\caption{This figure shows a strongly-represented edge (in blue). There is exactly one edge $e$ of $\Geasy$ between $V_i$ and $V_j$ both of which are non-expanding and clean. Thus, $e$ is strongly-represented. \label{fig3}}
\end{figure}
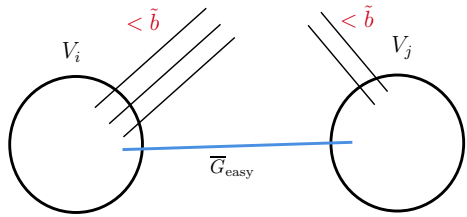

\smallskip

In this subsection, we show that $\approx \opt/\alpha$ edges of $\Mstar$ are strongly represented. Then, in the next subsection, we design our recovery algorithm in a way that can recover all strongly represented edges with high probability. Since these
edges are coming from a matching themselves, this allows us to find a large enough matching in the input graph.  We now state the main lemma for this subsection.

\begin{lemma}\label{lem:no_strong-rep}
	The number of strongly-represented edges is at least $\nicefrac{\opt}{8\alpha}$ with probability at least $1-n^{-\delta/6}$.
\end{lemma}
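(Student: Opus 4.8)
The plan is to study the fixed subset $\Mstar \subseteq M$ of the large matching guaranteed in the sparsify-case, show that $\Mstar$ itself has $\Omega(\opt)$ edges, prove that each fixed edge $e\in\Mstar$ becomes strongly-represented with probability $\Omega(1/\alpha)$ over the random groups $\VV$, and then upgrade this to a concentration statement for the \emph{total} count $X$ of strongly-represented edges. Throughout I work in the regime where $\Measy$ satisfies the sparsify-case, and without loss of generality $\card{\Measy} < \opt/8\alpha$ (otherwise the recovery algorithm simply returns $\Measy$), so that by~\Cref{lem:ms} we have $\card{E(\Geasy)} \le 20\,\opt\log^4 n$ and $\Geasy$ contains a matching $M$ with $\card{M}\ge 3\opt/4$; I also use $\opt \ge \alpha^2 n^\delta$ and $\alpha > 100$ from~\Cref{assumption:range}, and the fact that the $k$ hash functions $h_1,\dots,h_k$ are independent and each $(\log^2 n)$-wise independent, which supports all the $O(1)$-order moment computations below (in particular ``$u\in V_i$'' and ``$v\in V_j$'' are independent events of probability $1/k$ each for $i\ne j$).

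\textbf{$\Mstar$ is large, and each edge survives with probability $\Omega(1/\alpha)$.} Since $\sum_v \deg_{\Geasy}(v) = 2\card{E(\Geasy)} \le 40\,\opt\log^4 n$, fewer than $80\,\opt\log^4 n/\outside = o(\opt)$ vertices of $\Geasy$ have $\Geasy$-degree more than $\outside/2$ (using $\outside=n^{\delta/4}\gg\log^4 n$); as $M$ is a matching each of them destroys at most one of its edges, so $\card{\Mstar}\ge (3/4-o(1))\opt \ge 0.7\,\opt$ for large $n$. Next fix $e=(u,v)\in\Mstar$ and let $\#$ be the number of ordered neighbor-group pairs $(i,j)$ (i.e.\ $\Filter(i,j)=1$) with $u\in V_i,v\in V_j$. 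Then $\Exp[\#]=(k\cdot k/\alpha)\cdot k^{-2}=1/\alpha$ and $\Exp[\#(\#-1)]\le 3/\alpha^2$ by a short case analysis on pairs of pairs, so by inclusion–exclusion $\Pr[e\text{ appears between some neighbor pair}]=\Pr[\#\ge 1]\ge 1/\alpha - 3/2\alpha^2 \ge 0.98/\alpha$. For a fixed such pair $(i,j)$, conditioned on $u\in V_i,v\in V_j$, each obstruction to $(V_i,V_j)$ witnessing strong-representation has small constant probability: a group being non-clean has probability $\le 2\card{\Measy}/k\le 1/40$ (there are $<\opt/4\alpha = k/40$ matched vertices and $u,v\notin V(\Measy)$); a group being expanding has probability $\le 1/50$ by Markov, since the expected number of $\Geasy$-edges inside $V_i$ is at most $\outside/2\alpha$ (from $u$'s $\le\outside/2$ neighbors, each landing in a Filter-neighbor group with probability $\le 1/\alpha$) plus $O(\log^4 n)$ (generic $\Geasy$-edges, each inside $V_i$ with probability $\le 2/k\alpha$), which is $\le\outside/100$; and a second $\Geasy$-edge appearing between $V_i,V_j$ has probability $o(1)$ since the expected count is $\le 40\,\opt\log^4 n/k^2 + o(1) = o(1)$ by the range assumption. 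Hence $\Pr[(i,j)\text{ bad}\mid u\in V_i,v\in V_j]\le 0.11$, so $\Pr[\exists\text{ a bad pair where }e\text{ appears}] \le \sum_{(i,j)} k^{-2}\cdot 0.11 = 0.11/\alpha$, and since $\{e\text{ appears but is not strongly-represented}\}\subseteq\{\exists\text{ a bad pair where }e\text{ appears}\}$, we get $\Pr[e\text{ strongly-represented}] \ge 0.98/\alpha - 0.11/\alpha \ge 0.8/\alpha$.

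\textbf{Concentration.} Writing $X=\sum_{e\in\Mstar}\mathbb{1}[e\text{ strongly-represented}]$, the previous step gives $\Exp[X]\ge 0.8/\alpha\cdot 0.7\,\opt \ge \opt/2\alpha$. To finish I would bound $\mathrm{Var}[X]=O(\opt/\alpha)$ and invoke Chebyshev: then $\Pr[X<\opt/8\alpha]\le \Pr[\,|X-\Exp X|>\opt/4\alpha\,] \le O(\alpha/\opt) \le n^{-\delta}$, which is below $n^{-\delta/6}$ for $n$ large. For the variance, $\sum_e\mathrm{Var}[\mathbb{1}_e]\le\Exp[X]$, and the covariance sum is split by how two distinct matching edges $e\ne e'$ interact. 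If their potential witnessing group-pairs share no group index and no Filter edge, the covariance is $\Pr[\mathbb{1}_e=1]\Pr[\mathbb{1}_{e'}=1]$ up to a $1+n^{-\Omega(\delta)}$ factor: the group memberships of $e$'s endpoints are independent of those of $e'$'s (disjoint vertices, pairwise-independent hashing), and the cleanliness/expansion bits of shared groups barely move under the conditioning, because conditioning on $u'\in V_{i'}$ shifts the expected number of $\Geasy$-edges across any fixed pair of groups by only $\le\outside/2k = o(1)$. The remaining pairs share a group index or a Filter edge; for any fixed group the expected number of $\Mstar$-edges using it is $2\card{\Mstar}/k = O(\alpha)$ and each such coincidence contributes $O(1/\alpha^2)$ to the covariance, so summing over the $k$ groups and their $O(k/\alpha)$ Filter-neighbors gives $O(k)=O(\opt/\alpha)$ total.

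\textbf{Main obstacle.} The hard part will be exactly this variance bookkeeping. A naive bounded-differences argument is doomed — a single $h_i$ influences every edge with an endpoint in $V_i$ (and, via the non-expanding condition, every edge whose representation touches a Filter-neighbor of $i$), and an adversarial $h_i$ can blow up $\card{V_i}$, so the worst-case Lipschitz constant is far too big — which is why one is forced into the second-moment route. Within that, the delicate point is the ``non-expanding'' condition: unlike ``clean'', it reaches into the Filter-neighborhood of a group, so one must verify carefully that this only perturbs the covariances by a lower-order factor, and that after organizing the $\Theta(\opt^2)$ edge pairs by shared group index / shared Filter edge the whole sum stays $O(\opt/\alpha)$. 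Everything else reduces to Markov and inclusion–exclusion estimates powered by $\opt\ge\alpha^2 n^\delta$ and by the $(\log^2 n)$-wise independence of the hash functions (e.g.\ via~\Cref{prop:chernoff-limited}).
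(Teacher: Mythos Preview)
Your per-edge analysis is correct and in fact a bit more direct than the paper's: you establish $\Pr[e\text{ strongly-represented}]\ge 0.8/\alpha$ in one shot, whereas the paper first bounds $\Pr[e\text{ weakly-represented}]\ge 1/3\alpha$ and handles the ``non-expanding'' condition later. The $\card{\Mstar}$ bound is also fine.

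The real divergence is in the concentration step, and here your sketch has a genuine gap. You propose a second-moment bound $\mathrm{Var}[X]=O(\opt/\alpha)$, but the covariance analysis you outline is not convincing. The difficulty is precisely the one you flag: the ``non-expanding'' condition for a group $V_i$ depends on \emph{all} hash functions $h_\ell$ with $\Filter(i,\ell)=1$, so the indicator $\mathbb{1}[B_e]$ depends on roughly $2k/\alpha$ hash functions, not just two. Your split into ``generic pairs'' versus ``shared-group pairs'' is not well-defined as stated (whether $e$ and $e'$ share a witnessing group is itself a random event), and the claim that generic covariances are $\Pr[B_e]\Pr[B_{e'}](1+n^{-\Omega(\delta)})$ would need a careful coupling or conditioning argument that you have not supplied. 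Since there are $\Theta(\opt^2)$ pairs and each covariance is a priori $O(1/\alpha^2)$, you need the vast majority of covariances to be $O(1/(\alpha\opt))$, and the Filter-neighborhood dependencies make this delicate to verify.

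The paper sidesteps this entirely by decoupling the two issues. It first proves concentration for the \emph{weakly}-represented count $\WR$ via McDiarmid: changing one $h_i$ only affects weakly-represented edges with an endpoint in $V_i$, and a separate tail bound (\Cref{clm:sc-group-weak-size}) shows each group hosts at most $\log^2 n$ such edges with high probability, so a truncated version of $\WR$ is $O(\log^2 n)$-Lipschitz in the $k$ independent variables $h_1,\dots,h_k$. This is exactly the bounded-differences argument you dismissed as ``doomed'' --- it works for $\WR$ because weak-representation does not look into Filter-neighbors. Then the gap $\WR-\SR$ is bounded by (number of expanding groups) $\times\log^2 n$, and a first-moment (Markov) bound on the number of expanding groups suffices; this Markov step is where the $n^{-\delta/6}$ loss comes from. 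The decomposition into ``weak + (weak $-$ strong)'' is the key idea: it isolates the non-local non-expanding condition so that only a first-moment bound is needed for it, avoiding your covariance bookkeeping altogether.
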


We start the proof with an easy claim that lower bounds the size of $\Mstar$. 

\begin{claim}\label{clm:sc-Mstar-large}
	There are at least $2\,\opt/3$ edges in $\Mstar$. 
\end{claim}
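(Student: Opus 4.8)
The plan is to prove this deterministically by a simple degree-counting argument; no randomness is needed here, since once we condition on the sparsify-case, the objects $\Measy$, $\Geasy$, and the matching $M$ are all fixed. The only fact about $\Geasy$ we will use is its sparsity: the sparsify-case of~\Cref{lem:ms} guarantees $\card{E(\Geasy)} \leq 20\,\opt\log^4{n}$, and the only fact about $M$ is $\card{M} \geq 3\,\opt/4$.

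First I would bound the number of ``high-degree'' vertices of $\Geasy$, i.e., vertices $v$ with more than $\outside/2$ neighbors in $\Geasy$. Since the sum of degrees in $\Geasy$ is $2\,\card{E(\Geasy)} \leq 40\,\opt\log^4{n}$, the number of such vertices is at most $\frac{40\,\opt\log^4{n}}{\outside/2} = \frac{80\,\opt\log^4{n}}{\outside}$. By the definition of $\Mstar$ in~\Cref{eq:M*}, an edge $(u,v)\in M$ can fail to be in $\Mstar$ only if $u$ or $v$ is one of these high-degree vertices; and since $M$ is a matching, each high-degree vertex is incident to at most one edge of $M$. Hence the number of edges of $M$ excluded from $\Mstar$ is at most $\frac{80\,\opt\log^4{n}}{\outside}$, which gives
\[
	\card{\Mstar} \;\geq\; \frac{3\,\opt}{4} - \frac{80\,\opt\log^4{n}}{\outside} \;=\; \frac{3\,\opt}{4} - \frac{80\,\opt\log^4{n}}{n^{\delta/4}},
\]
using $\outside = n^{\delta/4}$ from~\Cref{alg:sc}.

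Finally, since $\delta$ is a fixed constant, for all sufficiently large $n$ we have $n^{\delta/4} \geq 960\log^4{n}$ (a polynomial in $n$ dominates any polylog), so $\frac{80\,\opt\log^4{n}}{n^{\delta/4}} \leq \opt/12$, and therefore $\card{\Mstar} \geq \frac{3\,\opt}{4} - \frac{\opt}{12} = \frac{2\,\opt}{3}$, as claimed. I do not anticipate any genuine obstacle here; the only care needed is in tracking the constant factors in the degree-sum bound and in verifying that the $\log^4{n}$ loss from the sparsity bound is comfortably absorbed by the polynomial factor $n^{\delta/4}$ built into $\outside$ — which is exactly what the choice of $\outside$ in~\Cref{alg:sc} (together with~\Cref{assumption:range}) is designed to ensure.
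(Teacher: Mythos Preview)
Your proof is correct and takes essentially the same approach as the paper: both bound the number of vertices of $\Geasy$ with degree exceeding $\outside/2$ via the degree-sum / Markov argument, and then subtract at most that many edges from $M$. The only cosmetic difference is that the paper phrases the bound as $o(\opt)$ while you track the explicit constants; the argument is otherwise identical.
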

\begin{proof}
	Recall that $M$ is the $(3\,\opt/4)$-size matching of $\Geasy$ and $\Mstar$ is a subset of $M$ on vertices with degree at most $(\outside/2)$ in $\Geasy$. For any $v \in V(M)$, let $d(v)$ denote the degree of $v$ in $\Geasy$. 
	We have
	\[
		\sum_{v \in V(M)} d(v) \leq 2\cdot\card{E(\Geasy)} = 40\cdot\opt \cdot \log^4{n} \tag{by the sparsify-case  of $\Geasy$ in~\Cref{lem:ms}}. 
	\]
	Thus, the average degree of vertices in $V(M)$ is at most $(40\log^4{n}) = o(\outside)$. By Markov bound, the total number of vertices in $V(M)$ with degree more than $(\outside/2)$ is then at most $o(\opt)$.
	Removing all these vertices still leaves out $3\,\opt/4 - o(\opt) > 2\,\opt/3$ edges which all belong to $\Mstar$. \Qed{\Cref{clm:sc-Mstar-large}}
	
\end{proof}

Our goal is now to show that $\approx 1/\alpha$ fraction of edges of $\Mstar$ are strongly represented by some pairs of groups. In the following, we first bound the probability that an edge is weakly-represented and prove that the number of 
weakly-represented edges is both large enough and concentrated. We will then bound the number of these edges that will be strongly-represented also (which no longer is necessarily concentrated). 

\begin{lemma}\label{lem:sc-prob-weak}
	For any edge $e=(u,v) \in \Mstar$,
	\vspace{-0.25cm}
	\[
		\Pr\paren{\textnormal{$e$ is weakly-represented by some pairs of groups}} \geq \frac{1}{3\alpha}. 
	\] 
\end{lemma}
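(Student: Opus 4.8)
The plan is to prove the statement as a first‑plus‑second moment (Bonferroni) estimate on the number of pairs of groups that witness weak‑representation. Fix the graph $G$ and the matching $\Measy$ of \Cref{lem:ms} (so that $\Geasy$, $M$, and $\Mstar$ are all fixed, and recall we may assume $|\Measy| < \opt/(8\alpha)$, since otherwise the algorithm just returns $\Measy$); all probabilities below are over the random hash functions $h_1,\dots,h_k$ of \Cref{alg:sc}. Let $W$ denote the number of ordered pairs $(i,j)$ with $i\ne j$, $\Filter(i,j)=1$, $u\in V_i$, $v\in V_j$, and for which conditions $(ii)$ and $(iii)$ of weak‑representation additionally hold. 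Since $W\ge 1$ implies $e$ is weakly‑represented by some pair, it suffices to show $\Pr[W\ge 1]\ge 1/(3\alpha)$, and I will use the elementary bound $\Pr[W\ge 1]\ge \E[W]-\E[W(W-1)]$, valid for any nonnegative integer random variable.

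To lower bound $\E[W]$, fix a neighbor pair $(i,j)$ with $i\ne j$, $\Filter(i,j)=1$. Since $h_i,h_j$ are independent with uniform marginals, $\Pr[u\in V_i\wedge v\in V_j]=1/k^2$; call this event $E_0$. Conditioned on $E_0$, I claim $(ii)$ and $(iii)$ hold with probability at least $9/10$. For $(iii)$: $u,v\notin V(\Measy)$ because $e\in E(\Geasy)$, and for each $w\in V(\Measy)$ pairwise independence gives $\Pr[h_i(w)=1\mid E_0]=\Pr[h_j(w)=1\mid E_0]=1/k$, so a union bound over $V(\Measy)$ and the two hash functions gives failure probability at most $2|V(\Measy)|/k\le 2\cdot(\opt/4\alpha)/(10\opt/\alpha)=1/20$. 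For $(ii)$: an edge $e'=(u',v')\ne e$ of $\Geasy$ vertex‑disjoint from $e$ lies between $V_i,V_j$ with conditional probability at most $2/k^2$, contributing at most $2|E(\Geasy)|/k^2=O((\alpha^2/\opt)\log^4 n)=o(1)$ by the sparsify‑case bound on $|E(\Geasy)|$ and \Cref{assumption:range}; an edge $e'$ of $\Geasy$ sharing exactly one endpoint with $e$ lies between $V_i,V_j$ with conditional probability at most $2/k$, and since $e\in\Mstar$ each of $u,v$ has at most $\outside/2$ neighbors in $\Geasy$, so these contribute at most $2\outside/k=o(1)$, again by \Cref{assumption:range}. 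Hence $\Pr[(ii)\wedge(iii)\mid E_0]\ge 9/10$ for $n$ large, and as the $(k/\alpha)$‑regular graph $\Filter$ has $k^2/\alpha$ ordered edges, $\E[W]\ge (k^2/\alpha)\cdot(1/k^2)\cdot(9/10)=9/(10\alpha)$.

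To upper bound $\E[W(W-1)]$, write it as a sum over ordered pairs of distinct pairs $(i,j)\ne(i',j')$ with $\Filter=1$ on both; dropping $(ii),(iii)$ it is at most $\sum \Pr[u\in V_i\cap V_{i'}\wedge v\in V_j\cap V_{j'}]$. Since distinct $h$'s are independent and each is at least $2$‑wise independent, this joint probability equals $k^{-|\{i,i'\}|-|\{j,j'\}|}$. Splitting into the cases $i=i'$ (forcing $j\ne j'$), $j=j'$ (forcing $i\ne i'$), and $i\ne i',\,j\ne j'$, and counting configurations with the $(k/\alpha)$‑regularity of $\Filter$ — at most $k^3/\alpha^2$, $k^3/\alpha^2$, and $k^4/\alpha^2$ configurations with corresponding probabilities $k^{-3},k^{-3},k^{-4}$ — each case contributes at most $1/\alpha^2$, so $\E[W(W-1)]\le 3/\alpha^2$. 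Combining, $\Pr[W\ge 1]\ge 9/(10\alpha)-3/\alpha^2=\tfrac1\alpha\bigl(9/10-3/\alpha\bigr)>\tfrac{1}{3\alpha}$ since $\alpha>100$ by \Cref{assumption:range}.

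The main obstacle is that a vertex can belong to several groups (or to none), so the witnessing events over different group‑pairs are neither disjoint nor independent, and summing $\Pr[\text{witnessed by }(i,j)]$ over the $k^2/\alpha$ neighbor pairs overcounts; this is exactly what forces the second‑moment correction, which in turn is affordable only because $\Filter$ has degree $k/\alpha$ and not $k$. The remaining work is routine bookkeeping: each estimate touches only a constant number of hash evaluations so the $(\log^2 n)$‑wise independence is overkill, and the ``stray edge'' error for condition $(ii)$ must be split so that the sparsity of $\Geasy$ handles disjoint edges while $e\in\Mstar$ handles edges incident to $u$ or $v$, with \Cref{assumption:range} making both terms $o(1)$.
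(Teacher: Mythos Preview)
Your proof is correct and follows essentially the same approach as the paper: both arguments are Bonferroni/inclusion--exclusion bounds over the neighbor pairs $(i,j)$, both split condition~$(ii)$ into edges of $\Geasy$ sharing an endpoint with $e$ (handled via $e\in\Mstar$) versus vertex-disjoint ones (handled via the sparsify-case bound on $|E(\Geasy)|$), and both bound condition~$(iii)$ via the size of $V(\Measy)$. The only difference is organizational: the paper first does inclusion--exclusion on condition~$(i)$ alone (\Cref{clm:sc-weak-appear}) and then handles $(ii)$ and $(iii)$ conditionally (\Cref{clm:sc-weak-another,clm:sc-weak-clean}), whereas you fold all three conditions into the first moment of $W$ and drop $(ii),(iii)$ for the second moment --- arguably a cleaner packaging of the same computation.
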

To prove~\Cref{lem:sc-prob-weak}, we bound the probability of each condition of being weakly-represented separately. 

The first condition is that $e$ should appear between some $(V_i,V_j)$. Given that there are $\approx k^2/\alpha$ pairs neighboring groups (by choice of $\Filter$) and $e$ can appear between each of these groups with probability $1/k^2$, 
we will get that the probability $e$ appear between a pair of groups is $\approx 1/\alpha$. Formally, 

\begin{claim}[Condition $(i)$ of weakly-represented]\label{clm:sc-weak-appear}
	\[
	\Pr\paren{\textnormal{$e$ appears between some pairs of groups}} \geq \frac{2}{5\alpha}.
	 \]
\end{claim}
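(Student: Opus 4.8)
The plan is to lower bound, via the second Bonferroni inequality, the probability of the union over all neighbor-pairs of groups $\{i,j\}$ (i.e.\ $\Filter(i,j)=1$) of the event $A_{\{i,j\}}$ that $e=(u,v)$ appears between $V_i$ and $V_j$ --- meaning $u\in V_i \wedge v\in V_j$, or $u\in V_j \wedge v\in V_i$. Concretely, I would start from
\[
\Pr\paren{\bigcup_{\{i,j\}:\,\Filter(i,j)=1} A_{\{i,j\}}} \;\geq\; \sum_{\{i,j\}:\,\Filter(i,j)=1} \Pr\paren{A_{\{i,j\}}} \;-\; \sum_{\{i,j\}\neq\{i',j'\}} \Pr\paren{A_{\{i,j\}}\cap A_{\{i',j'\}}}.
\]
For the first-order term: $\Filter$ is $(k/\alpha)$-regular on $[k]$, so it has exactly $k^2/(2\alpha)$ edges, and for each edge $\{i,j\}$ the hash functions $h_i,h_j$ are independent (they are sampled independently in Line~\ref{sc1}), whence $\Pr\paren{A_{\{i,j\}}}\geq \Pr\paren{h_i(u)=1\wedge h_j(v)=1} = 1/k^2$; summing over the $k^2/(2\alpha)$ edges gives a first-order contribution of at least $1/(2\alpha)$.

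The main calculation is the pairwise correction term, which I would control by splitting on how two distinct edges $\{i,j\},\{i',j'\}$ of $\Filter$ intersect; since $\Filter$ is simple they share zero or one vertex. If they are vertex-disjoint, the two events depend on four distinct (hence mutually independent) hash functions, so the joint probability factors and is at most $(2/k^2)^2 = 4/k^4$; since there are at most $\tfrac12 (k^2/(2\alpha))^2 = k^4/(8\alpha^2)$ such pairs, this case contributes at most $1/(2\alpha^2)$ in total. If they share one vertex, say $j=j'$ with $i,i',j$ distinct, I would expand $A_{\{i,j\}}\cap A_{\{i',j\}}$ into its four conjunctive sub-cases: two of these pin down three distinct hash functions at a single point each (probability $1/k^3$), while the other two additionally require $h_j(u)=h_j(v)=1$, which has probability $1/k^2$ by the (at least) pairwise independence of $h_j$ --- so those two sub-cases have probability $1/k^4$ --- giving $\Pr\paren{A_{\{i,j\}}\cap A_{\{i',j\}}}\leq 4/k^3$. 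The number of edge-pairs of $\Filter$ sharing a vertex is $\sum_{j\in[k]}\binom{k/\alpha}{2}\leq k^3/(2\alpha^2)$, so this case contributes at most $2/\alpha^2$.

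Combining, the Bonferroni bound yields
\[
\Pr\paren{\textnormal{$e$ appears between some pair of groups}} \;\geq\; \frac1{2\alpha}-\frac1{2\alpha^2}-\frac2{\alpha^2} \;=\; \frac1{2\alpha}\paren{1-\frac5{\alpha}},
\]
which is at least $\tfrac{19}{40\alpha} > \tfrac{2}{5\alpha}$ since $\alpha>100$ by~\Cref{assumption:range}. The one place that needs care is the shared-vertex case: the repeated occurrence of $h_j$ must be handled through its pairwise independence rather than treated as fully independent, otherwise the exponents (and hence the whole bound) come out wrong. Everything else --- counting edges and pairs of edges in a regular simple graph, and the final arithmetic, which has comfortable slack from $\alpha>100$ --- is routine.
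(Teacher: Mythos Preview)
Your proof is correct and follows essentially the same approach as the paper: a second-order Bonferroni/inclusion--exclusion bound over the pairs of neighbor groups, with the pairwise correction split according to whether two edges of $\Filter$ share a vertex. The paper works with ordered pairs and obtains the looser correction $4/\alpha^2$, while your unordered-pair formulation and more careful case analysis (in particular, using the pairwise independence of $h_j$ in the shared-vertex case) gives the tighter $5/(2\alpha^2)$; both comfortably clear the $2/(5\alpha)$ threshold once $\alpha>100$.
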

\begin{proof}
	Since $\Filter$ is an adjacency matrix of a $(k/\alpha)$-regular graph on $k$ vertices, there are exactly $(k^2/2\alpha)$ pairs of neighboring groups in $\VV$ (assuming both directions of pairs are included). 
	Note that $e$ could appear between multiple groups, so we need to avoid over-counting.
	In the following, the summands are always only over neighboring pairs. By inclusion-exclusion principle, we have, 
	\begin{align*}
		\textnormal{LHS of~\Cref{clm:sc-weak-appear}}  &\geq 
		\hspace{-0.15cm}\sum_{(V_i,V_j)} \Pr\paren{e \in (V_i,V_j)} -  \hspace{-1cm}\sum_{(V_{i_1},V_{j_1})\neq (V_{i_2},V_{j_2})} \hspace{-1cm}\Pr\paren{e \in (V_{i_1},V_{j_1}) \wedge e \in (V_{i_2},V_{j_2})} \\
		&= \sum_{(V_i,V_j)}  \frac{1}{k^2} - \sum_{\substack{(V_{i_1},V_{j_1})\neq (V_{i_2},V_{j_2})\\ \card{\set{i_1,j_1} \cap \set{i_2,j_2}}=1}} \frac{1}{k^3} -  
		\sum_{\substack{(V_{i_1},V_{j_1})\neq (V_{i_2},V_{j_2})\\ \card{\set{i_1,j_1} \cap \set{i_2,j_2}}=0}} \frac{1}{k^4},
	\end{align*}
	where each term of the second inequality is because: for the first-term, the probability depends on the choice of $h_i(u)$ and $h_j(v)$ which are independent; for the second-term, the probability depends 
	on the choice of $h_{i_1}(u)$ and $h_{j_1}(v)$ and $h_{j_2}(v)$ (assuming $i_1 = i_2$, and similarly for other cases) which are independent; and for the last-term, the probability depends on 
	$h_{i_1}(u),h_{j_1}(v)$ and $h_{i_2}(u),h_{j_2}(v)$ which are all independent; as these hash functions are marginally uniform over $[k]$, we get the bound. 
	
	Moreover, there are exactly $(k^2/2\alpha)$ choices for the first summand, at most $3k^3/\alpha^2$ for the second one, and at most $k^4/\alpha^2$ for the last one. Thus, 
	\[
		\textnormal{LHS of~\Cref{clm:sc-weak-appear}}  \geq \frac{k^2}{2\alpha} \cdot \frac{1}{k^2} - \frac{3k^3}{\alpha^2} \cdot \frac{1}{k^3} - \frac{k^4}{\alpha^2}\cdot \frac{1}{k^4} > \frac{2}{5\alpha}, 
	\]
	as $\alpha > 100$ by~\Cref{assumption:range}. \Qed{\Cref{clm:sc-weak-appear}} 
	
\end{proof}

The second condition is that no other edge of $\Geasy$ should appear between $(V_i,V_j)$ (conditioned on $e$ already appearing between $V_i$ and $V_j$). Since $\Geasy$ has $\approx \opt \cdot \poly\log{(n)}$ edges only 
and each edge appears between $(V_i,V_j)$ specifically with probability only $1/k^2 \approx \alpha^2/\opt^2$, the probability that another edge appears between $V_i$ and $V_j$ is only $o(1)$. Formally, 

\begin{claim}[Condition $(ii)$ of weakly-represented]\label{clm:sc-weak-another}
	\[
	\Pr\paren{\textnormal{another edge of $\Geasy$ appear between $(V_i,V_j)$} \mid e=(u,v) \in (V_i,V_j)}= o(1).
	\]
\end{claim}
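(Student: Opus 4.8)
The plan is a union bound over all edges $e'$ of $\Geasy$ other than $e$, exploiting that the hash functions $h_i$ defining the groups are (at least) pairwise independent, mutually independent across groups, and that $e \in \Mstar$ forces both endpoints of $e$ to have low degree in $\Geasy$. Write $e = (u,v)$, and condition — without loss of generality by symmetry between the two endpoints of $e$ — on $h_i(u) = 1$ and $h_j(v) = 1$ (this is what ``$e \in (V_i,V_j)$'' means, up to swapping $i$ and $j$); note $i \neq j$ since $\Filter$ has no self-loops. I would split $E(\Geasy) \setminus \set{e}$ into the edges $e' = (u',v')$ that are vertex-disjoint from $e$ and those sharing exactly one endpoint with $e$ (no $e' \neq e$ can share both, since $\Geasy$ is simple).

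For a vertex-disjoint $e'$, all four of $u, v, u', v'$ are distinct, so conditioned on $h_i(u) = 1$ and $h_j(v) = 1$ the pair $(h_i(u'), h_j(v'))$ is uniform on $[k]^2$ by pairwise independence within each hash function and independence of $h_i$ from $h_j$; accounting for both orientations of $e'$, $\prob{e' \in (V_i, V_j) \mid e \in (V_i, V_j)} \leq 2/k^2$. Summing over the at most $\card{E(\Geasy)} \leq 20\,\opt \log^4 n$ such edges (by the sparsify-case of~\Cref{lem:ms}) and plugging in $k = 10\,\opt/\alpha$ bounds this part by $O(\alpha^2 \log^4 n / \opt) = o(1)$, using $\opt \geq \alpha^2 n^\delta$ from~\Cref{assumption:range}.

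For an $e'$ sharing one endpoint — say $u' = u$ and $v' \notin \set{u, v}$ — one orientation already places $u \in V_i$ for free and needs only $v' \in V_j$ (probability $1/k$), while the other needs both $u \in V_j$ and $v' \in V_i$ (probability $1/k^2$), so $\prob{e' \in (V_i, V_j) \mid e \in (V_i, V_j)} \leq 2/k$. The point that saves the day is that $e \in \Mstar$ caps the number of edges of $\Geasy$ incident to $u$ or $v$ by $\outside$, so this part contributes at most $2\,\outside/k = O(n^{\delta/4}\alpha/\opt) = o(1)$, again by~\Cref{assumption:range} (with $\outside = n^{\delta/4}$ and $\opt \geq \alpha^2 n^\delta$). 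Adding the two $o(1)$ bounds proves the claim.

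The one step to get right — and the only mildly delicate one — is this last case: when $e'$ shares an endpoint with $e$, the event that $e'$ appears between $V_i$ and $V_j$ is a union over two orientations, one of which imposes only a single hash constraint and thus has probability $1/k$ rather than $1/k^2$, making the per-edge bound only $O(1/k)$; the argument goes through solely because $\Mstar$ was defined to keep the number of such edges below $\outside \ll k$. Everything else is a routine union bound.
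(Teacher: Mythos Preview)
Your proof is correct and follows essentially the same route as the paper: union bound over edges of $\Geasy\setminus\{e\}$, split into those sharing an endpoint with $e$ (bounded by $O(1/k)$ each, at most $\outside$ of them by the definition of $\Mstar$) versus those vertex-disjoint from $e$ (bounded by $O(1/k^2)$ each, at most $20\,\opt\log^4 n$ of them by the sparsify-case). Your handling of the two orientations of $e'$ is slightly more explicit than the paper's, yielding $2/k$ and $2/k^2$ where the paper writes $1/k$ and $1/k^2$, but this is only a constant-factor difference and the argument is the same.
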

\begin{proof}
	We partition the edges of $\Geasy$ into two parts: the edges $E_1:=E_1(\Geasy,u,v)$ that are incident on either $u$ or $v$, and the remaining edges $E_2:=E_2(\Geasy,u,v)$. By union bound, 
	\begin{align*}
		\textnormal{LHS of~\Cref{clm:sc-weak-another}} &\leq \sum_{f \in E_1} \Pr\paren{f \in (V_i,V_j) \mid e \in (V_i,V_j)} + \sum_{f \in E_2} \Pr\paren{f \in (V_i,V_j) \mid e \in (V_i,V_j)} \\
		&= \card{E_1} \cdot \frac{1}{k} +  \card{E_2} \cdot \frac{1}{k^2}, 
	\end{align*}
	where each term of the second inequality is because: for the first term, assuming $f_1 = (u,w)$ for $w \neq v$ (the other case is symmetric), we need to have $h_j(w) = h_j(v)$ which  happens with probability $1/k$ as $h_j(\cdot)$ is $(>2)$-wise independent; and for the second term, assuming $f=(w,z)$ for $w\neq u$ and $z \neq v$ we need to have $h_i(w) = h_i(u)$ and $h_j(z) = h_j(v)$ which only happens with probability $1/k^2$ as both $h_i(\cdot)$ and $h_j(\cdot)$ are $(>2)$-wise independent, and also independent of each other. 
	
	Moreover, as $e \in \Mstar$, 
	we have that $\card{E_1} \leq \outside=n^{\delta/4}$ (\Cref{eq:M*}) and by sparsify-case property of~\Cref{lem:ms}, we have $\card{E_2} < 20\,\opt\cdot\log^4{n}$. Since $k= (10\opt/\alpha)$, we have, 
	\[
		\textnormal{LHS of~\Cref{clm:sc-weak-another}} \leq n^{\delta/4} \cdot \frac{\alpha}{10\,\opt} +(20\,\opt\cdot\log^4{n}) \cdot \frac{\alpha^2}{100\,\opt^2} = o(1),
	\]
	as $\opt \geq \alpha^2 \cdot n^{\delta}$ by~\Cref{assumption:guess}. \Qed{\Cref{clm:sc-weak-another}}

\end{proof}

Finally, the last condition is that both $V_i$ and $V_j$ should be clean, namely, there is no vertex of $\Measy$ inside either of them (again, conditioned on $e$ already appearing between $(V_i,V_j)$). 
There are at most $(\opt/4\alpha)$ vertices in $\Measy$ and each one appear in either group with probability $1/k \approx \alpha/\opt$, thus we can bound the probability that neither group has any vertex of $\Measy$ by some small constant. 
Formally, 

\begin{claim}[Condition $(iii)$ of weakly-represented]\label{clm:sc-weak-clean}
	\[
	\Pr\paren{\textnormal{one of $V_i$ or $V_j$ is not clean} \mid e=(u,v) \in (V_i,V_j)} < \frac{1}{20}.
	\]
\end{claim}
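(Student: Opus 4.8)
The plan is a direct union bound over the vertices matched by $\Measy$. I would begin by pinning down the size of $V(\Measy)$: we may assume $\card{\Measy} < \opt/8\alpha$, since otherwise $\Measy$ is already a matching of the size promised by~\Cref{lem:sc2} and the recovery algorithm would simply output it. Hence $\card{V(\Measy)} < \opt/4\alpha$. I would also record the (immediate but needed) observation that $u,v \notin V(\Measy)$: indeed $e=(u,v) \in \Mstar \subseteq M$, which is a matching in $\Geasy$, and $\Geasy$ is by definition the subgraph of $G$ induced on the vertices \emph{not} matched by $\Measy$.

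Next I would unpack the conditioning event. By the construction in~\Cref{alg:sc}, the neighbor relation $\Filter(i,j)=1$ is fixed and deterministic, so for a fixed neighboring pair $(V_i,V_j)$ the event ``$e \in (V_i,V_j)$'' is exactly the event $h_i(u)=1 \wedge h_j(v)=1$. Since $h_i$ and $h_j$ are sampled independently (note $i \neq j$), and each is $(\log^2 n)$-wise — in particular pairwise — independent with uniform marginals on $[k]$, for every vertex $w \in V(\Measy)$ (which is distinct from both $u$ and $v$ by the previous paragraph) we get
\[
	\Pr\paren{w \in V_i \mid e \in (V_i,V_j)} = \Pr\paren{h_i(w)=1 \mid h_i(u)=1} = \frac{1}{k},
\]
and likewise $\Pr\paren{w \in V_j \mid e \in (V_i,V_j)} = 1/k$.

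Finally, a group fails to be clean precisely when it contains some $w \in V(\Measy)$, so a union bound over the at most $\card{V(\Measy)}$ such vertices and over the two groups $V_i, V_j$ yields
\[
	\Pr\paren{\textnormal{$V_i$ or $V_j$ is not clean} \mid e=(u,v) \in (V_i,V_j)} \leq \frac{2\,\card{V(\Measy)}}{k} < \frac{2}{k}\cdot\frac{\opt}{4\alpha} = \frac{2\alpha}{10\opt}\cdot\frac{\opt}{4\alpha} = \frac{1}{20},
\]
using $k = 10\opt/\alpha$. There is no real obstacle in this argument; the only points needing (minor) care are justifying $\card{V(\Measy)} < \opt/4\alpha$ by first reducing to the case where $\Measy$ is not already large enough, and noting that the conditioning event touches only the hash values at $u$ and $v$, which lie outside $V(\Measy)$, so that pairwise independence is enough to preserve the $1/k$ marginal for each vertex of $\Measy$.
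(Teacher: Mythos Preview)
Your proof is correct and follows essentially the same approach as the paper: a union bound over the vertices of $\Measy$ and the two groups, using the at-least-pairwise independence of each $h_i$ to get the $1/k$ marginal for each $w \in V(\Measy)$ conditioned on $h_i(u)=1$, $h_j(v)=1$. You are in fact slightly more explicit than the paper in justifying $\card{V(\Measy)} < \opt/4\alpha$ (via reduction to the sparsify-case) and in noting that $u,v \notin V(\Measy)$ so that the conditioning does not fix the hash value at any $w$.
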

\begin{proof}
	Recall that $\Measy$ has less than $(\opt/4\alpha)$ vertices. By union bound, we thus have, 
	\begin{align*}
		\textnormal{LHS of~\Cref{clm:sc-weak-clean}} &\leq \sum_{w \in V(\Measy)} \Paren{\Pr\paren{h_i(w)=1 \mid e \in (V_i,V_j)} + \Pr\paren{h_j(w)=1 \mid e \in (V_i,V_j)}} \\
		&= \card{V(\Measy)} \cdot \paren{\frac{1}{k}+\frac{1}{k}} \tag{as both of $h_i(\cdot)$ and $h_j(\cdot)$ or $(>2)$-wise independent} \\
		&< \frac{\opt}{4\alpha} \cdot \frac{\alpha}{5 \cdot \opt} = \frac{1}{20},
	\end{align*}
	by the choice of $k= 10\opt/\alpha$. \Qed{\Cref{clm:sc-weak-clean}}
	
\end{proof}

We can now conclude the proof of~\Cref{lem:sc-prob-weak}.

\begin{proof}[Proof of~\Cref{lem:sc-prob-weak}]
	An edge $e \in \Mstar$ is weakly-represented iff it satisfies all the conditions $(i)$ to $(iii)$ of \Cref{def:Mstar-def}. \Cref{clm:sc-weak-appear} lower bounds the probability that $e$ satisfies condition $(i)$. 
	Conditioned on this event,~\Cref{clm:sc-weak-another,clm:sc-weak-clean} each upper bound the probability that $e$ does not satisfy conditions $(ii)$ or $(iii)$, respectively. 
	Thus, 
	\begin{align*}
		\Pr\paren{\textnormal{$e$ is weakly-represented}} \geq \frac{2}{5 \alpha} \cdot (1-o(1)- \frac{1}{20}) > \frac{1}{3\alpha}, 
	\end{align*}
	concluding the proof. \Qed{\Cref{lem:sc-prob-weak}}
	
\end{proof}

Let $\WR$ be a random variable for the number of weakly-represented edges. By~\Cref{clm:sc-Mstar-large} and~\Cref{lem:sc-prob-weak}, 
\begin{align}
	\expect{\WR} = \card{\Mstar} \cdot \Pr\paren{\textnormal{weakly-represented}} \geq \frac{2\,\opt}{3} \cdot \frac{1}{3\alpha} \geq \frac{\opt}{6\alpha}. \label{eq:WR}
\end{align}
While there is a degree of correlation between different edges of $\Mstar$ being weakly-represented, it is not too much and thus we can prove $\WR$ is also concentrated using a careful argument. 
We first need to bound the number of weakly-represented edges that can appear inside any group. 

\begin{claim}\label{clm:sc-group-weak-size}
	For any group $V_i \in \VV$, 
	\[
		\Pr\paren{\text{more than $(\log^2{n})$ weakly-represented edges appear inside $V_i$}} \ll 1/\poly{(n)}. 
	\]
\end{claim}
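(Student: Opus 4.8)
The plan is to let $Z_i$ denote the number of weakly-represented edges that appear inside $V_i$ and show $\Pr\paren{Z_i > \log^2{n}} \ll 1/\poly{(n)}$ by a union bound over $t := \lfloor\log^2{n}\rfloor$-element subsets of $\Mstar$. Three structural facts drive the argument: (a) every weakly-represented edge lies in $M \subseteq \Geasy$, so weakly-represented edges are pairwise vertex-disjoint and $\card{\Mstar} \leq \card{M} \leq 3\,\opt/4$; (b) if $e \in \Mstar$ is weakly-represented inside $V_i$, then there is a \emph{witness} neighbor group $V_j$ of $V_i$ (i.e.\ $\Filter(i,j)=1$) and an orientation $e=(u,v)$ with $u \in V_i$ and $v \in V_j$; and (c) two distinct weakly-represented edges inside $V_i$ cannot share a witness group, since by definition a weakly-represented edge is the \underline{only} edge of $\Geasy$ between its pair of groups and both edges lie in $\Geasy$.

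Using these, if $Z_i \geq t$ then there exist $t$ distinct edges $e_1,\dots,e_t \in \Mstar$, orientations $e_\ell = (u_\ell,v_\ell)$, and \emph{distinct} neighbor groups $V_{j_1},\dots,V_{j_t}$ of $V_i$ with $u_\ell \in V_i$ and $v_\ell \in V_{j_\ell}$ for all $\ell$. I would bound the probability of this event by summing over all such configurations. For a fixed configuration: the events $\set{u_\ell \in V_i}_\ell$ depend only on $h_i$ evaluated at the $t$ \emph{distinct} points $u_1,\dots,u_t$, so by $(\log^2{n})$-wise independence of $h_i$ (and $t \leq \log^2 n$) they hold jointly with probability exactly $(1/k)^t$; the events $\set{v_\ell \in V_{j_\ell}}_\ell$ depend on the $t$ \emph{distinct and mutually independent} hash functions $h_{j_1},\dots,h_{j_t}$ — all distinct from $h_i$ since $\Filter$ has no self-loops — each evaluated once, hence hold jointly with probability $(1/k)^t$; and the two families are independent. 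So each configuration contributes $(1/k)^{2t}$. The number of configurations is at most $\card{\Mstar}^t \cdot 2^t \cdot (k/\alpha)^t$ (an ordered $t$-tuple of edges, an orientation per edge, and an ordered $t$-tuple of distinct neighbor groups; over-counting only weakens the bound), so
\[
	\Pr\paren{Z_i \geq t} \;\leq\; \card{\Mstar}^t \cdot 2^t \cdot \Paren{\frac{k}{\alpha}}^t \cdot \Paren{\frac{1}{k}}^{2t} \;=\; \Paren{\frac{2\,\card{\Mstar}}{\alpha\, k}}^t.
\]

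Finally I would plug in $\card{\Mstar} \leq 3\,\opt/4$ and $k = 10\,\opt/\alpha$ to get $\Pr\paren{Z_i \geq t} \leq (3/20)^t = (3/20)^{\lfloor \log^2{n}\rfloor}$, which is smaller than $n^{-c}$ for every constant $c$ once $n$ is large; since $Z_i > \log^2{n}$ implies $Z_i \geq t$ (and if $\card{\Mstar} < t$ there is nothing to prove), this yields the claim. The only part requiring care is setting up the union bound so that (i) each hash function $h_i$ is queried at no more than $t \leq \log^2{n}$ points — which is precisely why we fix exactly $t$ edges rather than all of them — and (ii) the witness neighbor groups are taken pairwise distinct, which is what lets the $v_\ell$-events factor using only limited-independence hash functions; this step is the sole place where the "unique $\Geasy$-edge between the pair'' clause of weak-representation is actually used.
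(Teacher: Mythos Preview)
Your union-bound approach works and is genuinely different from the paper's. The paper does not use weak-representation at all here: it upper-bounds by the number of $\Mstar$-edges appearing inside $V_i$, shows each indicator has expectation at most $1/\opt$, argues the indicators are $(\log^2 n)$-wise independent (vertex-disjointness of $\Mstar$ plus the hash-function independence), and applies the limited-independence Chernoff bound of~\Cref{prop:chernoff-limited}. Your direct $t$-subset bound is more elementary and sidesteps that concentration machinery. (A small slip: the paper guarantees $\card{M}\ge 3\opt/4$, not $\le$; but one may take $M$ of size exactly $\lceil 3\opt/4\rceil$ without loss, so your final arithmetic survives.)

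One caution about step~(c). Under the literal reading of the claim, an edge counts if it is weakly-represented by \emph{some} pair $(V_a,V_b)$ \emph{and} appears inside $V_i$; nothing forces $\{a,b\}$ to contain $i$. So two counted edges can both appear between $(V_i,V_j)$ for the same $j$ while each being weakly-represented by a pair disjoint from $\{i,j\}$ --- the ``unique $\Geasy$-edge'' clause yields no contradiction there, and (c) fails. Fortunately you do not need (c): even with repeated witnesses $j_\ell$, the events $\set{v_\ell\in V_{j_\ell}}$ still factor as $(1/k)^t$, since each $h_j$ is $(\log^2 n)$-wise independent and is queried at at most $t\le\log^2 n$ \emph{distinct} points (the $v_\ell$ are all distinct because $\Mstar$ is a matching), and distinct $h_j$'s are mutually independent. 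Your configuration count $(k/\alpha)^t$ already upper-bounds the case with repeated witnesses, so the final $(3/20)^t$ bound is unaffected. Dropping (c) actually brings your argument closer in spirit to the paper's, which ignores weak-representation in this step entirely.
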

\begin{proof}
	As each weakly-represented edge belong to $\Mstar$, we simply upper bound the number of edges of $\Mstar$ that appear inside $V_i$. 
	For any edge $e \in \Mstar$, define an indicator random variable $X_e \in \set{0,1}$ which is $1$ iff $e \in V_i$. 
	For $X_e = 1$ to happen at least one endpoint of $e$ should be in $V_i$ and at least one endpoint of $e$ should be in one of $k/\alpha$ neighboring groups of $V_i$ (this is an ``upper bound'' because these endpoints of $e$ should be different, but 
	we are only interested in upper bounding the probability of $X_e=1$). As such, 
\begin{align*}
	\expect{X_e} &\leq \Pr\paren{\text{one end of $e$ is in $V_i$}} \cdot \Pr\paren{\text{one end of $e$ is in a neighbor of $V_i$}} \tag{by the independence of choice of vertices in different groups} \\
		&\leq \frac{2}{k} \cdot \frac{k}{\alpha} \cdot \frac{2}{k} \leq \frac{1}{\opt}. \tag{by the choice of $k=10\opt/\alpha$ and as each vertex belongs to a group with probability $1/k$}
\end{align*}
	Define $X:= \sum_{e \in \Mstar} X_e$ as the number of edges of $\Mstar$ appearing inside $V_i$. As such, $\expect{X} \leq 1$. Moreover,  $\Mstar$ is a matching (thus vertex-disjoint edges), choice of 
	vertices inside each $V_i$ is $(\log^2{n})$-wise independent, and choice of different $V_i$'s are independent. Hence, the set of variables $\set{X_e}_{e \in \Mstar}$ are $(\log^2{n})$-wise independent. 
	By concentration results for sum of $(\log^2{n})$-wise independent random variables (\Cref{prop:chernoff-limited}), 
	\[
		\Pr\paren{X > (\log^2{n})} \leq \Pr\paren{X > (\log^2{n}) \cdot \expect{X}} \leq \exp\paren{-\frac{(\log^2{n})}{6}} \ll 1/\poly{(n)}. 
	\]
	A union bound over all groups concludes the proof. \Qed{\Cref{clm:sc-group-weak-size}} 
	
\end{proof}

We now prove that the random variable $\WR$ is concentrated. 

\begin{claim}[Number of weakly-represented edges is concentrated]\label{clm:WR-conc}
	\[
	\Pr\paren{\card{\WR - \expect{\WR}} > \frac{\opt}{42 \cdot \alpha}} \ll 1/\poly{(n)}. 
	\]
\end{claim}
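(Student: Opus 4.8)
The plan is to write $\WR = \sum_{e \in \Mstar} W_e$ where $W_e \in \set{0,1}$ indicates that $e$ is weakly-represented, and to view $\WR$ as a function $f(h_1,\ldots,h_k)$ of the $k$ \emph{independent} hash functions drawn in step~\ref{sc1} of~\Cref{alg:sc}. Since~\Cref{eq:WR} already gives $\expect{\WR} \geq \opt/6\alpha$, it is enough to bound the probability that $f$ deviates from its mean by $\opt/42\alpha$, which I would do through a bounded-differences (McDiarmid-type) argument on $f$. The key structural observation I would use is that resampling a single $h_i$ changes $W_e$ only for those edges $e \in \Mstar$ that \emph{appear inside $V_i$} (in the sense of~\Cref{def:group-def}) under the old hash or under the new one: for any neighboring pair $(V_a,V_b)$ with $i \notin \set{a,b}$, all three conditions of~\Cref{def:Mstar-def} for $(V_a,V_b)$ to weakly-represent $e$ depend only on $h_a,h_b$ (applied to the endpoints of $e$, to $V(\Measy)$, and to $V(\Geasy)$), hence are untouched by $h_i$; so an edge can only gain or lose a representing pair that involves $V_i$, and any such pair forces the edge to appear inside $V_i$.

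The main obstacle is that the \emph{worst-case} Lipschitz constant of $f$ is far too large for a direct application of~\Cref{prop:mcdiarmid}: an adversarial choice of $h_i$ can place a constant fraction of $V(\Mstar)$ into $V_i$ and flip the status of up to $\Theta(k)$ edges at once, which (for polynomially large $\alpha$) would make the resulting tail bound vacuous. To get around this I would lean on the high-probability structure established in~\Cref{clm:sc-group-weak-size}: let $\mathcal{G}$ be the set of hash-tuples $(h_1,\ldots,h_k)$ for which \emph{every} group $V_i$ has at most $\log^2{n}$ edges of $\Mstar$ appearing inside it; by~\Cref{clm:sc-group-weak-size} together with a union bound over the $k \leq \poly(n)$ groups, $\Pr\paren{(h_1,\ldots,h_k) \notin \mathcal{G}} \leq \exp\paren{-\Omega(\log^2{n})} \ll 1/\poly(n)$. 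Combining this with the observation above, whenever the old \emph{and} the new hash-tuple both lie in $\mathcal{G}$, changing a single $h_i$ flips at most $2\log^2{n}$ of the indicators $W_e$, so $f$ is $(2\log^2{n})$-Lipschitz on $\mathcal{G}$; globally $f$ is trivially $n$-Lipschitz since $0 \leq \WR \leq \card{\Mstar} \leq n$.

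Finally I would invoke a standard ``bounded differences with a rare exceptional set'' strengthening of McDiarmid's inequality (the typical-bounded-differences inequalities of Kutin and of Warnke): for a function that is $c$-Lipschitz on a set of probability $1-p$ and $d$-Lipschitz in general, one obtains $\Pr\paren{\card{f - \expect{f}} \geq b} \leq 2\exp\paren{-\Omega(b^2/(k c^2))} + \poly(k,n,d/c)\cdot p$. Plugging in $k = O(\opt/\alpha)$, $c = 2\log^2{n}$, $d = n$, $b = \opt/42\alpha$ and $p = \exp\paren{-\Omega(\log^2{n})}$ gives a tail of $2\exp\paren{-\Omega(\opt/(\alpha\log^4{n}))} + \poly(n)\cdot\exp\paren{-\Omega(\log^2{n})}$. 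Since $\opt \geq \alpha^2 n^{\delta}$ and $\alpha > 100$ by~\Cref{assumption:range}, we have $\opt/(\alpha\log^4{n}) \geq \alpha n^{\delta}/\log^4{n} \gg \log^2{n}$, so both terms are $\ll 1/\poly(n)$, which is exactly the bound claimed in~\Cref{clm:WR-conc}. (To stay within the tools already cited in the paper one can instead apply~\Cref{prop:mcdiarmid} to a truncated version of $f$ that is globally $(2\log^2{n})$-Lipschitz and agrees with $f$ on $\mathcal{G}$, paying an additive $n\cdot\Pr\paren{\mathcal{G}^c} \ll 1/\poly(n)$ both in the shift of the mean and in the tail; the construction of this truncation is where~\Cref{clm:sc-group-weak-size} does its work, and is the one slightly delicate point of the argument.)
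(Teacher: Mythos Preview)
Your proposal is correct and is essentially the paper's own argument: view $\WR$ as a function of the $k$ independent hash functions, use~\Cref{clm:sc-group-weak-size} to cap the number of $\Mstar$-edges appearing in any single group by $\log^2{n}$ with high probability, and apply a bounded-differences inequality with Lipschitz constant $2\log^2{n}$. The only difference is cosmetic: the paper takes your parenthetical ``truncation'' route rather than citing an external Kutin--Warnke-type inequality --- it explicitly builds a surrogate function $g$ (counting only those weakly-represented edges that land in groups with at most $\log^2{n}$ weakly-represented edges inside), verifies by a four-case analysis that $g$ is globally $(2\log^2{n})$-Lipschitz, and applies~\Cref{prop:mcdiarmid} directly to $g$, noting $\Pr(f\neq g)\ll 1/\poly(n)$ and $|\expect{f}-\expect{g}|\le 1$.
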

\begin{proof}
	$\WR$ is the sum of $\card{\Mstar}$ random variables, each determining whether a given edge in $\Mstar$ is weakly-represented or not. 
	These random variables are \emph{not} independent because an edge between $V_i$ and $V_j$ being weakly-represented restricts other edges between $V_i$ and $V_j$ from being weakly-represented.
	Thus, we cannot directly apply the Chernoff bound. Instead, we are going to use McDiarmid's inequality (\Cref{prop:mcdiarmid}). 
	
	 Define random variables $Z_1,\ldots,Z_{k}$ as the choice of vertices in each group in $\VV$, i.e., each $Z_i \subseteq V$ and is equal to the set of vertices in $V_i$. The choice of random variables $\set{Z_i}$ are independent of each other (as they are decided by different hash functions $\set{h_i(\cdot)}$). The value of random variable $\WR$  is a deterministic
	function of $Z_1,\ldots,Z_{k}$ so we can set $\WR = f(Z_1,\ldots,Z_{k})$ for some function $f$. To apply \Cref{prop:mcdiarmid}, we need to have that $f(\cdot)$ is Lipschitz (for some relatively small parameter), but this is not the case in general. We fix 
	this in the following. 
	
	Define another function $g(Z_1,\ldots,Z_k)$ as follows. Let $g$ count the number of weakly-represented edges defined by the choices of $Z_1,\ldots,Z_k$ that appear between groups with 
	at most $(\log^2{n})$ other weakly-represented edges appearing inside them. By~\Cref{clm:sc-group-weak-size}, we have that, 
	\[
		\Pr\paren{f \neq g} \ll 1/\poly{(n)} \quad \text{and thus} \quad \expect{g} \geq \expect{f} - 1.
	\]
	As such, we can prove a concentration for $g$ instead of $f$ and obtain the result for $f$ as well. We now prove that $g$ is $(2 \log^2{n})$-Lipschitz which allows us to prove its concentration. 
	Suppose we change the realization of a single variable $Z_i$ from a set of vertices $U_i$ to $U'_i$. Then, the following may happen: 
	\begin{itemize}
		\item Both $U_i$ and $U'_i$ have less than $(\log^2{n})$ weakly-represented edges appearing inside them: Thus, the change of $U_i$ to $U'_i$ can only change the value of $g$ by at most $(2\log^2{n})$. 
		\item $U_i$ has more and $U'_i$ has less than $(\log^2{n})$ weakly-represented edges appearing inside them: In $g$, none of the weakly-represented edges incident on $U_i$ were counted. In $U'_i$, at most $(\log^2{n})$ new edges
		will be counted toward $g$. This changes the value of $g$ by at most $(\log^2{n})$. 
		\item $U_i$ has less and $U'_i$ has more than $(\log^2{n})$ weakly-represented edges appearing inside them: At most all the $(\log^2{n})$ weakly-represented edges incident on $U_i$ are going to be not counted toward $g$ when
		switching to $U'_i$, thus changing the value of $g$ by at most $(\log^2{n})$. 
		\item Both $U_i$ and $U'_i$ have more than $(\log^2{n})$ weakly-represented edges appearing inside them: Neither contributed any value to $g$ so value of $g$ remains the same. 
	\end{itemize}
	Thus, $g$ is $(2\log^2{n})$-Lipschitz. Given this, we can apply McDiarmid's inequality (\Cref{prop:mcdiarmid}) and get, 
	\begin{align*}
		\Pr\paren{\card{g - \expect{g}} > \frac{\opt}{42 \cdot \alpha}} &\leq 2 \cdot \exp\paren{-\frac{2 \cdot \opt^2}{42^2 \cdot \alpha^2 \cdot k \cdot (2\log^2{n})^2}} 
		\ll 1/\poly{(n)}. \tag{as by~\Cref{assumption:range} $\opt \geq \alpha^2 \cdot n^{\delta}$} 
	\end{align*}
	This concludes the proof. \Qed{\Cref{clm:WR-conc}} 

\end{proof}

By~\Cref{clm:WR-conc} and~\Cref{eq:WR}, we have that with high probability 
\begin{align}
	\WR > \frac{\opt}{7 \cdot \alpha}. \label{eq:WR-high}
\end{align}

Let us now bound the number of strongly-represented edges, denoted by $\SR$. In the following lemma, 
we bound the number of strongly-represented edges in an indirect way by bounding how many edges among weakly-represented edges can no longer be strongly-represented. 

\begin{lemma}\label{lem:sc-strong}
	With probability $1-n^{-\delta/6}$, we have 
	$
		\SR > \WR - \dfrac{\opt}{56 \cdot \alpha}. 
	$
\end{lemma}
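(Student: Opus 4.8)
The plan is to control the gap $\WR - \SR$, i.e., the number of edges of $\Mstar$ that are weakly-represented but fail to be strongly-represented. By \Cref{def:Mstar-def} and \Cref{def:group-def}, such an edge $e$ has the property that \emph{every} pair $(V_i,V_j)$ of groups weakly-representing $e$ contains at least one \emph{expanding} group --- otherwise $e$ would be strongly-represented by a weakly-representing pair of non-expanding groups. So I would bound $\WR - \SR$ by a charging argument: for each such $e$, fix one weakly-representing pair, pick an expanding group $W$ in it, and charge $e$ to $W$. Since $e$ being weakly-represented by $(W,V_j)$ forces $e$ to appear inside $W$, the edges charged to any fixed expanding group $W$ lie among the weakly-represented edges appearing inside $W$.

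The first ingredient is a bound on how many weakly-represented edges can appear inside a single group, which is exactly \Cref{clm:sc-group-weak-size}: after a union bound over the $k$ groups, with probability $1 - 1/\poly(n)$ \emph{every} group has at most $(\log^2 n)$ weakly-represented edges appearing inside it; call this event $\mathcal{E}_1$. Under $\mathcal{E}_1$, each expanding group is charged at most $(\log^2 n)$ times, so $\WR - \SR$ is at most $(\log^2 n)$ times the number of expanding groups.

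The second ingredient is therefore a bound on the number of expanding groups. For a fixed group $V_i$, an edge of $\Geasy$ appears inside $V_i$ only if one of its endpoints lands in $V_i$ and the other in one of the $(k/\alpha)$ neighbor-groups of $V_i$; by marginal uniformity of the hash functions and a union bound over the neighbor-groups this has probability $O(1/(\alpha k)) = O(1/\opt)$ using $k = 10\,\opt/\alpha$. Since $\card{E(\Geasy)} \le 20\,\opt\log^4 n$ by the sparsify-case of \Cref{lem:ms}, the expected number of edges of $\Geasy$ inside $V_i$ is $O(\log^4 n)$, so by Markov's inequality $\Pr(V_i \text{ expanding}) = \Pr(\text{more than }\outside\text{ edges of }\Geasy\text{ appear inside }V_i) = O(\log^4 n \,/\, n^{\delta/4})$, using $\outside = n^{\delta/4}$. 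Summing over the $k$ groups gives that the expected number of expanding groups is $O(\opt \log^4 n \,/\, (\alpha n^{\delta/4}))$, and a second application of Markov's inequality shows that, except with probability at most $O(\log^6 n)\cdot n^{-\delta/4} \le n^{-\delta/6}$ (the last inequality holding for $n$ large since $n^{\delta/12}$ dominates any fixed power of $\log n$), the number of expanding groups is at most $\opt/(56\,\alpha\log^2 n)$; call this event $\mathcal{E}_2$.

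Putting it together: conditioned on $\mathcal{E}_1 \cap \mathcal{E}_2$, the charging argument gives $\WR - \SR \le (\log^2 n)\cdot \opt/(56\,\alpha\log^2 n) = \opt/(56\,\alpha)$, and since $\Pr(\overline{\mathcal{E}_1} \cup \overline{\mathcal{E}_2}) \le 1/\poly(n) + n^{-\delta/6} \le n^{-\delta/6}$ (absorbing the $1/\poly(n)$ term, which we are free to make as small as we like in \Cref{clm:sc-group-weak-size}), \Cref{lem:sc-strong} follows. I expect the only delicate points to be (a) making sure the notion ``an edge appears inside a group'' used in \Cref{def:group-def} for expandingness is literally the same one used when counting weakly-represented edges per group, so that the charging is sound, and (b) that the $\polylog(n)$-versus-$n^{\Omega(\delta)}$ bookkeeping closes to exactly the claimed $n^{-\delta/6}$ failure probability; given the $n^{\Omega(1)}$ gap between $\outside$ and the relevant $\polylog(n)$ terms, neither is a genuine obstacle --- the whole argument is essentially two applications of Markov's inequality on top of \Cref{clm:sc-group-weak-size}.
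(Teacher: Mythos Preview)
Your proposal is correct and follows essentially the same approach as the paper: bound the probability a single group is expanding via Markov (expected number of $\Geasy$-edges inside a group is $O(\log^4 n)$, compared against the threshold $\outside=n^{\delta/4}$), then Markov again to bound the number of expanding groups by $\opt/(56\,\alpha\log^2 n)$, and finally use \Cref{clm:sc-group-weak-size} to cap each expanding group's contribution at $\log^2 n$ weakly-represented edges. The paper states the last step slightly less explicitly (``all remaining weakly-represented edges will also be strongly-represented''), but your charging argument is exactly the content of that sentence.
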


We need the following claim that bounds the probability that a group is expanding. We will then simply subtract the edges of all expanding groups from the weakly-represented edges to obtain the number of strongly-represented one. 
This probability should be small because $\Geasy$ only has $\approx \opt \cdot \poly\log{(n)}$ edges and each edge appear inside a group with probability roughly $\approx 1/\opt$. Formally, 

\begin{claim}\label{clm:sc-expanding}
	For any group $V_i$, 
	$
	\Pr\paren{\textnormal{$V_i$ is expanding}} <  n^{-\delta/5}.
	$
\end{claim}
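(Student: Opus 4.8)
The plan is to set $X$ equal to the number of edges of $\Geasy$ that appear inside $V_i$, show $\E[X]=O(\log^4 n)\ll\outside=n^{\delta/4}$, and then bound $\Pr[X>\outside]$ by a concentration argument. First I would rewrite the event combinatorially: letting $W_i:=\bigcup_{j:\Filter(i,j)=1}V_j$ be the union of the $k/\alpha$ groups neighboring $V_i$, an edge of $\Geasy$ appears inside $V_i$ iff one endpoint lies in $V_i$ and the other in $W_i$, so $X$ is the number of $\Geasy$-edges between $V_i$ and $W_i$. The structural point is that $V_i$ is a function of $h_i$ only (each vertex in with probability $1/k$), while $W_i$ is a function of $\{h_j:\Filter(i,j)=1\}$ (each vertex in with probability at most $(k/\alpha)\cdot(1/k)=1/\alpha$), and since $\Filter$ has no self-loops these hash functions are disjoint; hence $V_i$ and $W_i$ are independent random sets, each $(\log^2 n)$-wise independent across vertices. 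For the expectation, each $e=(u,v)\in E(\Geasy)$ lies between $V_i$ and $W_i$ with probability at most $\Pr[u\in V_i]\Pr[v\in W_i]+\Pr[v\in V_i]\Pr[u\in W_i]\le 2/(k\alpha)$, so using $|E(\Geasy)|\le 20\,\opt\log^4 n$ (the sparsify-case of \Cref{lem:ms}) and $k=10\,\opt/\alpha$ we get $\E[X]\le 4\log^4 n$.

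The main work is the concentration, and the obstacle is that $\Geasy$ may contain high-degree vertices: if such a vertex landed in the small set $V_i$ it would pull in many edges at once, so a naive Chernoff or second-moment bound fails. I would handle this by first excluding high-degree vertices. Calling $v$ \emph{heavy} if $\deg_{\Geasy}(v)>\tau$ for a threshold $\tau$ of order $\alpha\,n^{\delta/5}\log^4 n$, there are at most $2|E(\Geasy)|/\tau$ heavy vertices, each landing in $V_i$ with probability $1/k$ and in $W_i$ with probability at most $1/\alpha$; a union bound shows that with probability $1-O(\log^4 n/n^{\delta/5})$ no heavy vertex lies in $V_i\cup W_i$, and with the same probability no vertex of $W_i$ has $\ge\outside/2$ neighbors inside $V_i$ (this last event is controlled via \Cref{prop:chernoff-limited} for the low-degree vertices and by a counting bound for the high-degree ones, using that large $\deg_{\Geasy}$ is itself rare in a sparse graph). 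Conditioned on these events and on the value of $V_i$, the only contributing edges run between a light vertex of $V_i$ and a light vertex of $W_i$ with few neighbors in $V_i$; writing $X\le\sum_w c_w\,\mathbb{I}[w\in W_i]$ with $c_w=|N_{\Geasy}(w)\cap V_i|\le\outside/2$ and $\sum_w c_w=e(V_i):=\sum_{u\in V_i}\deg_{\Geasy}(u)$, the summands are $(\log^2 n)$-wise independent $\mathrm{Bernoulli}(\le 1/\alpha)$ with bounded weights, and a bucketed application of \Cref{prop:chernoff-limited} (the usual exponential tail on the small-weight buckets, plus the observation that high-weight buckets contain so few vertices — because $e(V_i)$ is itself small with high probability, again after excluding heavy vertices and bucketing by degree — that Markov suffices there) gives $X\le\outside$ with probability $1-n^{-\omega(1)}$ conditioned on the good $V_i$.

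Putting the pieces together, $\Pr[X>\outside]$ is at most the sum of the failure probabilities of the heavy-vertex event, the large-$c_w$ event, the bad-$e(V_i)$ event, and the residual concentration failure, all of which are $O(\log^4 n/n^{\delta/4})$ or smaller, so for $n$ large this is below $n^{-\delta/5}$, as claimed. The genuinely delicate step is the last one: both $e(V_i)$ and $X$ are sums of limited-independence indicators carrying non-uniform weights dictated by the uncontrolled degree sequence of $\Geasy$, so one cannot apply an off-the-shelf bound and must instead carry out the degree/weight bucketing carefully, ensuring the $O(\polylog{(n)})$-factor losses it introduces stay comfortably within the $n^{\Omega(\delta)}$ gap between $\E[X]$ and $\outside$ — precisely the kind of slack the paper builds in everywhere by using $n^{\delta}$ rather than $\polylog{(n)}$ margins.
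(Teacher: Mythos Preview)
Your expectation calculation is right (and matches the paper's), but you then embark on a concentration argument that is entirely unnecessary. The claim only asks for $\Pr[V_i\text{ is expanding}] < n^{-\delta/5}$, and you already showed $\E[X]\le 4\log^4 n$ while $\outside = n^{\delta/4}$. Markov's inequality immediately gives
\[
\Pr[X > \outside] \;\le\; \frac{\E[X]}{\outside} \;\le\; \frac{O(\log^4 n)}{n^{\delta/4}} \;\ll\; n^{-\delta/5},
\]
and this is exactly what the paper does. No independence, no degree bucketing, no heavy-vertex exclusion is needed.

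Your more elaborate route may be salvageable, but as written it is not a proof: the ``bucketed application of \Cref{prop:chernoff-limited}'' step is only sketched, and it hides a real difficulty. You need to control $X$ conditioned on $V_i$, which requires first controlling $e(V_i)=\sum_{u\in V_i}\deg_{\Geasy}(u)$, which is itself a weighted sum over a $(\log^2 n)$-wise independent sample with unbounded weights, so you propose to bucket again --- but each bucketing layer costs you polylog factors and a union bound, and you never verify the arithmetic closes. More to the point, the very slack you invoke at the end (``precisely the kind of slack the paper builds in\ldots'') is what makes Markov work directly in one line; there is no reason to spend that slack on a multi-stage concentration argument when the target probability is only $n^{-\delta/5}$, not $1/\poly(n)$.
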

\begin{proof}
	For $V_i$ to be expanding, at least $\outside$ edges of $\Geasy$ should appear inside $V_i$. Let $E_1 := E(\Geasy)$ denote the edges in $\Geasy$.  
	We have that size of $E_1$ is at most $(20\opt\cdot\log^4{n})$ by sparsify-case of~\Cref{lem:ms}.  
	
	For any $e \in E_1$, let $X_e \in \set{0,1}$ be an indicator random variable which is $1$ iff $e$ appears inside $V_i$. As we proved in~\Cref{clm:sc-group-weak-size}, 
	\[
		\expect{X_e} < \frac{1}{\opt}. 
	\]
	Let $X := \sum_{e \in E_1} X_e$ denote the number of edges of $E_1$ that appear inside $V_i$. Thus, by Markov bound (and since $\outside=n^{\delta/4}$),  
	\begin{align*}
		\Pr\paren{X > \outside} \leq \frac{\expect{X}}{\outside} \leq \frac{1}{\outside} \cdot \card{E_1} \cdot \frac{1}{\opt}  = n^{-\delta/4} \cdot (20\opt\cdot\log^4{n}) \cdot \frac{1}{\opt} \ll n^{-\delta/5}, 
	\end{align*}
	concluding the proof. \Qed{\Cref{clm:sc-expanding}}
	
\end{proof}

\begin{proof}[Proof of~\Cref{lem:sc-strong}]
	Let $X$ denote the number of expanding groups. By~\Cref{clm:sc-expanding}, combined with a Markov bound (and linearity of expectation), 
	\[
		\Pr\paren{X > \frac{\opt}{\alpha \cdot 56 \cdot (\log^2{n})}} \leq (56\,\log^2{n}) \cdot \frac{\alpha}{\opt} \cdot k \cdot n^{-\delta/5}  \ll n^{-\delta/6},
	\]
	as $k= (10 \cdot \opt/\alpha)$. Additionally, by~\Cref{clm:sc-group-weak-size}, at most $(\log^2{n})$ weakly-represented edges appear in each group with high probability. 
	Thus, with probability at least $1-n^{-\delta/6}$, the total 
	number of weakly-represented edges incident on expanding groups is at most 
	\[
		\frac{\opt}{56 \cdot \alpha \cdot \log^2{n}} \cdot \log^2{n} = \frac{\opt}{56 \cdot \alpha}. 
	\]
	All remaining weakly-represented edges will also be strongly-represented, proving the lemma.
	\Qed{\Cref{lem:sc-strong}}
	
\end{proof}

We can now bound the number of strongly represented edges.
\Cref{lem:no_strong-rep} follows by combining~\Cref{eq:WR-high} with~\Cref{lem:sc-strong}. We can conclude that, 
\begin{align}
	\Pr\paren{\textnormal{number of strongly-represented edges} \geq \frac{\opt}{8\cdot\alpha}} \geq 1-n^{-\delta/6}. \label{eq:SR-high} \qedhere
\end{align}

\subsection{The Recovery Algorithm}

We now show how to recover a large matching from the sketch computed by~\Cref{alg:sc}. The idea is to find all strongly-represented edges (or rather a superset of them). 
This is enough because the number of strongly-represented edges is at least ${\opt}/{8\alpha}$ with high enough probability (\Cref{lem:no_strong-rep}).
To do so, we first remove all groups that have a vertex from $\Measy$ inside them. Next, we run a ``weak tester'' using sketches for $\NET$ to essentially remove all expanding sets (this step is done slightly differently in the algorithm). 
Finally, we use $\SNR$ to recover the neighborhood of each group inside $\Geasy$ by setting the $T$-set of the sketches in the recovery as vertices matched by $\Measy$. Then, whenever between two groups we only recovered a single pair 
of vertices, we consider this pair as an edge and store them\footnote{$\SNR$ can find the neighbor vertex $v$ of a group $V_i$ inside another group $V_j$; however, it cannot recover an edge because it does not specify the endpoint of
neighbor $v$ inside $V_i$. This is fixed by the process mentioned above by also finding a unique neighbor inside $V_i$ from the $\SNR$ run on $V_j$ instead}. At the end, we compute a maximum matching among the stored edges. 

\begin{Algorithm}\label{alg:sc-rec} 
	The recovery algorithm  of~\Cref{lem:sc2}.
	
	\smallskip
	
	\textbf{Input:} Groups $\VV$ and sketches computed by~\Cref{alg:sc} and a matching $\Measy$ (of~\Cref{lem:ms}). 
	\smallskip
	
	\textbf{Output:} A matching $\Mhard$ in $G$. 
	
	\begin{enumerate}[label=$(\roman*)$]
		\item\label{line:sc-rec1} For every group $V_i \in \VV$, define $T_i$ as the vertices in the graph $G(V_i)$ (defined in~\Cref{alg:sc}) that also appear in $\Measy$. Run the following tests: 
		\begin{itemize}[leftmargin=15pt]
			\item \textbf{\Measy-test}: if $V_i$ has any vertex of $\Measy$ inside it, \textbf{remove} $V_i$;
			\item \textbf{Expanding-test}:  Run the recovery algorithm of $\NET(G(V_i),V_i)$ with the set $T=T_i$ to test if $V_i$ has at most $\outside$ neighbors or at least $2 \outside$ neighbors
			 out of $T_i$ in $G(V_i)$: if $2 \outside$, \textbf{remove} $V_i$. 
		\end{itemize}
		\item For any remaining group $V_i$,  run the recovery algorithm $\SNR(G(V_i),V_i)$ with set $T=T_i$ to recover $\NR(V_i)$ in the graph $G(V_i)$. 
		\item Define the following \textbf{recovered graph} $\HR$ on vertices $V$. For any two remaining groups $V_i,V_j$, if sizes of both $\NR(V_i) \cap V_j$ and $\NR(V_j) \cap V_i$ is $1$, add the edge $(u,v)$ to $H$ where $u$ and $v$ are the unique 
		vertices in the aforementioned sets. 
		 Return $\Mhard$ as a maximum matching of $H$. 
	\end{enumerate}
\end{Algorithm}

\begin{figure}[h!]
	\centering
	\subcaptionbox{This figure shows the group $V_i$ and $\NR(V_i)$ has only one element $u$ in $V_j$.   \label{fig:Spar_a}}%
	[.45\linewidth]{ 	\resizebox{160pt}{100pt}{\tikzset{every picture/.style={line width=0.75pt}} 

\begin{tikzpicture}[x=0.75pt,y=0.75pt,yscale=-1,xscale=1]

\draw   (121,179) .. controls (121,149.18) and (145.18,125) .. (175,125) .. controls (204.82,125) and (229,149.18) .. (229,179) .. controls (229,208.82) and (204.82,233) .. (175,233) .. controls (145.18,233) and (121,208.82) .. (121,179) -- cycle ;
\draw   (286,108) .. controls (286,78.18) and (310.18,54) .. (340,54) .. controls (369.82,54) and (394,78.18) .. (394,108) .. controls (394,137.82) and (369.82,162) .. (340,162) .. controls (310.18,162) and (286,137.82) .. (286,108) -- cycle ;
\draw   (319,104.5) .. controls (319,100.63) and (322.13,97.5) .. (326,97.5) .. controls (329.87,97.5) and (333,100.63) .. (333,104.5) .. controls (333,108.37) and (329.87,111.5) .. (326,111.5) .. controls (322.13,111.5) and (319,108.37) .. (319,104.5) -- cycle ;
\draw    (319,104.5) -- (258.2,131.72) -- (185,164.5) ;

\draw (145,100) node [anchor=north west][inner sep=0.75pt]   [align=left] {$V_i$};
\draw (278,55) node [anchor=north west][inner sep=0.75pt]   [align=left] {$V_j$};
\draw (336,94) node [anchor=north west][inner sep=0.75pt]   [align=left] {$u$};

\end{tikzpicture}}}
	\hspace{0.4cm}
	\subcaptionbox{This figure shows the group $V_j$ and $\NR(V_j)$ has only one element $v$ in $V_i$.  \label{fig:Spar_b}}%
	[.45\linewidth]{ \resizebox{160pt}{100pt}{\tikzset{every picture/.style={line width=0.75pt}} 

\begin{tikzpicture}[x=0.75pt,y=0.75pt,yscale=-1,xscale=1]

\draw   (121,179) .. controls (121,149.18) and (145.18,125) .. (175,125) .. controls (204.82,125) and (229,149.18) .. (229,179) .. controls (229,208.82) and (204.82,233) .. (175,233) .. controls (145.18,233) and (121,208.82) .. (121,179) -- cycle ;
\draw   (286,108) .. controls (286,78.18) and (310.18,54) .. (340,54) .. controls (369.82,54) and (394,78.18) .. (394,108) .. controls (394,137.82) and (369.82,162) .. (340,162) .. controls (310.18,162) and (286,137.82) .. (286,108) -- cycle ;
\draw   (171,164.5) .. controls (171,160.63) and (174.13,157.5) .. (178,157.5) .. controls (181.87,157.5) and (185,160.63) .. (185,164.5) .. controls (185,168.37) and (181.87,171.5) .. (178,171.5) .. controls (174.13,171.5) and (171,168.37) .. (171,164.5) -- cycle ;
\draw    (319,104.5) -- (258.2,131.72) -- (185,164.5) ;

\draw (145,100) node [anchor=north west][inner sep=0.75pt]   [align=left] {$V_i$};
\draw (278,55) node [anchor=north west][inner sep=0.75pt]   [align=left] {$V_j$};
\draw (156,168) node [anchor=north west][inner sep=0.75pt]   [align=left] {$v$};

\end{tikzpicture}}}
	\caption{Illustration of two groups $V_i$ and $V_j$ that pass the $\Measy$-test and Expanding-test. We have that $\NR(V_i)$ has only one element $u$ in $V_j$ and $\NR(V_j)$ has only one element $v$ in $V_i$. This means that $(u,v)$ must be an edge in $G$. }
	\label{fig:Spar}
\end{figure}
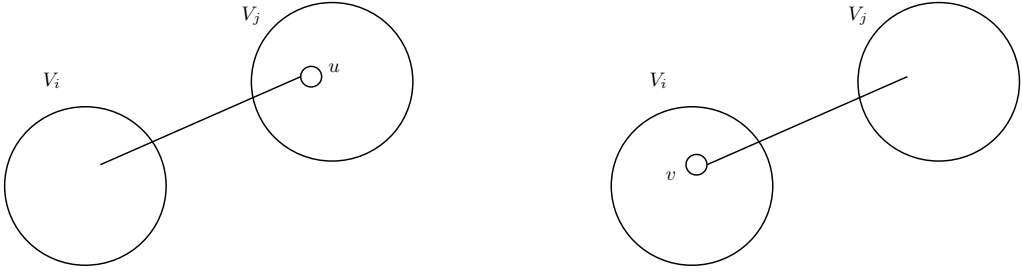

\Cref{fig:Spar} shows which edges are added to $H$.
Our goal now is to show that with high probability $\HR$ contains all strongly-represented edges and moreover it does not contain any edge that is not part of $G$. Putting these two together with~\Cref{lem:no_strong-rep}
then finalizes the proof. 

The first  step is to show that for both $\NET$ and $\SNR$ sketches run by the algorithm, the promise on the input is satisfied. We first prove that the set $T_i$
satisfies $\card{T_i} \leq a$ for all $i \in [k]$ (as required by both algorithms with given parameter $a$). 

\begin{claim}\label{clm:sc-Ti-small}
	With high probability, for every $i \in [k]$, size of $T_i$ is at most $a$. 
\end{claim}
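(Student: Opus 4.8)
The plan is to show that every vertex counted in $T_i$ must lie in $V(\Measy)$ and, being incident to an edge of $G(V_i)$, must also fall into $V_i$ itself or into one of the $k/\alpha$ groups neighboring $V_i$ under $\Filter$; since $\Measy$ has only $O(\opt/\alpha)$ vertices, each of these two events is rare enough to keep $|T_i| \le a = 2\opt/\alpha^2$. Concretely, every edge of $G(V_i)$ has one endpoint in $V_i$ and the other in a neighbor group $V_j$, so a vertex incident to such an edge lies in $V_i \cup \bigcup_{j:\Filter(i,j)=1} V_j$, and keeping only the vertices also in $\Measy$ gives
\[
	T_i \;\subseteq\; V(\Measy) \cap \Big( V_i \;\cup\; \bigcup_{j\,:\,\Filter(i,j)=1} V_j \Big).
\]
Thus it suffices to prove $|V(\Measy)\cap V_i| \le a/2$ and $\big|V(\Measy)\cap\bigcup_{j:\Filter(i,j)=1}V_j\big| \le a/2$, each with probability $1-1/\poly(n)$. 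I would also recall here that we may assume $|V(\Measy)| < \opt/(4\alpha)$ — otherwise $\Measy$ already has at least $\opt/(8\alpha)$ edges and the algorithm simply returns it, as noted at the start of \Cref{sec:sc} and used in \Cref{clm:sc-weak-clean} — and that $\Measy$, being determined by the (independent) randomness of the sketch of \Cref{lem:ms}, may be treated as an arbitrary fixed set of this size in the analysis over $h_1,\dots,h_k$.

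Both quantities are sums of $\{0,1\}$ indicators of the form $\mathbb{I}[h_j(w)=1]$, which I would control with the limited-independence Chernoff bound (\Cref{prop:chernoff-limited}, used with independence parameter $\log^2 n$). For the first, $|V(\Measy)\cap V_i| = \sum_{w\in V(\Measy)}\mathbb{I}[h_i(w)=1]$ uses the single $(\log^2 n)$-wise independent function $h_i$ and has mean $|V(\Measy)|/k < 1/40$. For the second, $\big|V(\Measy)\cap\bigcup_{j:\Filter(i,j)=1}V_j\big| \le \sum_{j:\Filter(i,j)=1}\sum_{w\in V(\Measy)}\mathbb{I}[h_j(w)=1]$, and here I would observe that the map $H(j,w):=h_j(w)$ is itself $(\log^2 n)$-wise independent on the product domain $\{j:\Filter(i,j)=1\}\times V$ — any $\le\log^2 n$ of its outputs involve at most $\log^2 n$ distinct arguments of each individual $h_j$, which are jointly uniform, while distinct $h_j$'s are independent — so these indicators are $(\log^2 n)$-wise independent, with total mean at most $(k/\alpha)\cdot|V(\Measy)|/k = |V(\Measy)|/\alpha < \opt/(4\alpha^2) = a/8$. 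In either case the mean $\mu$ satisfies $\mu \le a/8$, so writing $a/2 = (1+\eps)\mu$ forces $\eps \ge 3$; \Cref{prop:chernoff-limited} then bounds the probability of exceeding $a/2$ by $\exp\big(-\min\{\tfrac12\log^2 n,\, \tfrac{\eps^2\mu}{4+2\eps}\}\big)$, and since $\tfrac{\eps^2\mu}{4+2\eps} = \tfrac{(a/2-\mu)^2}{a+2\mu} = \Omega(a) = \Omega(n^{\delta})$ by \Cref{assumption:range}, this is $\exp(-\Omega(\log^2 n)) \ll 1/\poly(n)$.

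Putting the two tail bounds together gives $\Pr[|T_i|>a] \ll 1/\poly(n)$ for each fixed $i$, and a union bound over the (fewer than $n$) groups finishes the proof. The step needing the most care is the second concentration estimate: because its indicators involve several distinct hash functions, one cannot quote \Cref{prop:chernoff-limited} directly and must first repackage them as a single $(\log^2 n)$-wise independent hash on a product domain. Conceptually, the crucial observation — which is what makes the bound survive even when $G$ is dense — is that the event ``$w$ falls into a neighbor group of $V_i$'', of probability only $\le 1/\alpha$, already caps $\expect{|T_i|}$ at $O(\opt/\alpha^2)$, so no control on vertex degrees is needed.
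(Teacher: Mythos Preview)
Your proof is correct and follows essentially the same approach as the paper's: bound $\card{T_i}$ by the number of vertices of $\Measy$ falling into neighbor groups of $V_i$, observe the expectation is at most $\opt/(4\alpha^2)=a/8$, and apply the limited-independence Chernoff bound (\Cref{prop:chernoff-limited}). Your treatment is in fact a bit more careful than the paper's: the paper applies \Cref{prop:chernoff-limited} directly to the indicators $X_v=\mathbb{I}[\exists\,j:\Filter(i,j)=1,\,h_j(v)=1]$, which are $(\log^2 n)$-wise independent but not literally of the form ``$h(i)=1$'' for a single hash, whereas you upper-bound by the multiplicity-counting sum $\sum_{j,w}\mathbb{I}[h_j(w)=1]$ and explicitly repackage these indicators as a single $(\log^2 n)$-wise independent hash on the product domain, so the proposition applies verbatim. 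Your additional split handling $V(\Measy)\cap V_i$ separately is harmless (mean $<1/40$) and is not needed under the paper's reading of $T_i$, which only contains vertices in neighbor groups; either way the argument goes through.
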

\begin{proof}
	For any $v \in V(\Measy)$, define an indicator random variable $X_v \in \set{0,1}$ which is $1$ iff $v$ is in $T_i$, i.e., there exists an index $j$ such that $h_j(v) = 1$ and $\Filter(i,j) = 1$. 
	Given there are exactly $(k/\alpha)$ choices for $j \in [k]$ with $\Filter(i,j)=1$, we have that by union bound, 
	\[
		\expect{X_v} \leq \frac{k}{\alpha} \cdot \frac{1}{k} = \frac{1}{\alpha},
	\]
	as each $h_j(\cdot)$ is uniform over $[k]$. Let $X := \sum_{v \in V(\Measy)} X_v$ denote the size of $T_i$. As there are at most $(\opt/4\alpha)$ vertices in $\Measy$, we get
	\[
		\expect{X} \leq \card{V(\Measy)} \cdot \frac{1}{\alpha} \leq \frac{\opt}{4\alpha^2}. 
	\]
	Finally, note that random variables $\set{X_v}$ are only correlated through the choice of $(\log^2{n})$-wise independent hash functions $\set{h_j(\cdot)}$ (which are themselves independent for different $h_j$'s).
	Thus, by concentration results for sum of $(\log^2{n})$-wise independent random variables (\Cref{prop:chernoff-limited}), 
	\[
		\Pr\paren{X > a} \leq \Pr\paren{X > 8 \cdot \expect{X}} \leq \exp\paren{-(\log^2{n})/2}\ll 1/\poly{(n)}. 
	\]
	A union bound over all groups concludes the proof. \Qed{\Cref{clm:sc-Ti-small}}
	 
\end{proof}

The above claim along with $a\geq 16 \outside$ (by \Cref{assumption:range}) is enough for running $\NET$. We now show that the guarantees for $\SNR$ are also satisfied. This first requires proving that $N(V_i) - T_i$, for each $V_i$ that is not removed by the algorithm, has size at most $2 \outside$ (here, and throughout 
the rest of the analysis $N(V_i)$ is in the 
graph $G(V_i)$). 

\begin{claim}\label{clm:sc-SNR-b}
	With high probability, for every remaining group $V_i$, size of $N(V_i) - T_i$ is at most $2 \outside$. 
\end{claim}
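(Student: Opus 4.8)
The plan is to derive this from the correctness of the $\NET$ sketch together with the contrapositive of the Expanding-test in \Cref{alg:sc-rec}. By construction a group $V_i$ is \emph{remaining} only if it survived the Expanding-test, i.e., the recovery of $\NET(G(V_i),V_i)$ run with $T=T_i$ did \emph{not} return ``$2\outside$''. Hence it suffices to show that, with high probability, every group $V_i$ with $\card{N(V_i)-T_i}\ge 2\outside$ (all neighborhoods taken in $G(V_i)$, as in the claim's preamble) has its $\NET$ return ``$2\outside$'': then no such $V_i$ can be remaining, which is exactly the statement.

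To invoke \Cref{lem:nbhd-size-test} for $\NET(G(V_i),V_i)$ with $T=T_i$ and parameters $a,\outside$, I would check the hypotheses of \Cref{prob:nbhd-test}. The condition $a\ge 16\outside$ holds by \Cref{assumption:range} (as already observed right before the claim: $a=2\opt/\alpha^2\ge 2n^{\delta}$ while $16\outside=16\,n^{\delta/4}$). Promise $(i)$, that $\card{T_i}\le a$, is exactly \Cref{clm:sc-Ti-small}, which holds for all $k$ groups simultaneously with high probability. Promise $(ii)$, that $\card{N(V_i)-T_i}\le\outside$ or $\ge 2\outside$, is satisfied in the case we analyze (the ``$\ge 2\outside$'' branch), so \Cref{lem:nbhd-size-test} gives that $\NET$ returns the correct answer ``No'', i.e.\ ``$2\outside$'', with high probability.

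To finish I would condition on the high-probability events that no $\NET$ sketch errs (\Cref{rem:l0-err}) and that $\card{T_i}\le a$ for every $i\in[k]$ (\Cref{clm:sc-Ti-small}), and union bound over the $k\le n$ groups; under these events any remaining $V_i$ cannot satisfy $\card{N(V_i)-T_i}\ge 2\outside$, so $\card{N(V_i)-T_i}<2\outside$, proving the claim.

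The only delicate point — and it is a non-issue — is that Promise $(ii)$ of \Cref{prob:nbhd-test} may fail for a group whose true value $\card{N(V_i)-T_i}$ lies strictly between $\outside$ and $2\outside$, so that \Cref{lem:nbhd-size-test} says nothing about $\NET$'s output there; but in that ``gray zone'' the desired bound $\card{N(V_i)-T_i}<2\outside$ already holds, so no guarantee on $\NET$ is needed. If one wishes to avoid relying on $\NET$'s behavior outside its stated promise, one can instead reuse the $y\ge 2\outside$ calculation inside the proof of \Cref{lem:nbhd-size-test} verbatim (it only uses $\card{N(V_i)-T_i}\ge 2\outside$, not equality), which shows the number of $\Lsampler$ draws outside $T_i$ concentrates above $200\log n$ and hence the test outputs ``No'' except with probability $n^{-3}$.
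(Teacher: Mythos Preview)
Your proposal is correct and follows essentially the same approach as the paper: both argue that, since the $\NET$ promise on $\card{T_i}\le a$ holds by \Cref{clm:sc-Ti-small}, the tester is correct with high probability, and therefore any group that survives the Expanding-test must have $\card{N(V_i)-T_i}<2\outside$. Your write-up is in fact more careful than the paper's terse proof, explicitly checking $a\ge 16\outside$ and handling the gray zone $\outside<\card{N(V_i)-T_i}<2\outside$ where Promise~$(ii)$ of \Cref{prob:nbhd-test} fails; the paper glosses over this point.
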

\begin{proof}
	Given that the promise to $\NET(G(V_i),V_i)$ is satisfied by~\Cref{clm:sc-Ti-small}, with high probability, $\NET$ is going to only output $\outside$-case for a set $V_i$ when $N(V_i) - T_i$ has size less than $2 \outside$ by~\Cref{lem:nbhd-size-test}. 
	Thus, any group not removed satisfies $\card{N(V_i) - T_i} \leq 2 \outside$ as desired.  \Qed{\Cref{clm:sc-SNR-b}}
	
\end{proof}

Finally, we also need to prove that for any vertex $v \in N(V_i) - T_i$, size of $V_i \cap N(v)$ is at most $c$ (again $N(V_i)$ in the graph $G(V_i)$). This is done in a rather indirect way in the following claim. 

\begin{claim}\label{clm:sc-SNR-c}
	With high probability, for every remaining group $V_i$ and any of its neighbor group $V_j$, size of $N(V_i) \cap V_j$ is at most $c$.  
\end{claim}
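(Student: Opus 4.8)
The plan is to prove the claim by a union bound over all ordered pairs of neighbor groups $(V_i,V_j)$. The first move is a reduction. For the downstream use of this claim — verifying Promise~\ref{promise3} of~\Cref{prob:sparse-recovery} for the sketch $\SNR(G(V_i),V_i)$ run with the set $T_i$ — only \emph{unmatched} vertices of $V_j$ adjacent to $V_i$ are relevant: any vertex of $V_j$ lying in $V(\Measy)$ and adjacent to $V_i$ already belongs to $T_i$. Since $V_i$ is a remaining group it is \emph{clean}, so every vertex of $V_i$ is unmatched by $\Measy$; hence each edge joining an unmatched $v\in V_j$ to $V_i$ has both endpoints outside $V(\Measy)$, so it lies in $\Geasy$ and appears between the neighbor groups $V_i,V_j$. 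Thus it suffices to show that for every pair of neighbor groups the number of distinct $V_j$-endpoints among edges of $\Geasy$ appearing between $V_i$ and $V_j$ is at most $c$; and, chaining the claim applied to the pair $(V_j,V_i)$, this also gives $\card{V_i\cap N(v)}\le\card{N(V_j)\cap V_i}\le c$ for the $v\in N(V_i)-T_i$ we must recover, so that the coordinate of $v$ survives the reduction modulo $q$ (recall $q$ is the smallest prime exceeding $c$).

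For a fixed pair $(V_i,V_j)$, each edge of $\Geasy$ appears between $V_i$ and $V_j$ with probability $\Theta(1/k^2)$, since its two endpoints are sent to $V_i$ and $V_j$ by the marginally-uniform hash functions $h_i,h_j$, exactly as in the computations of~\Cref{clm:sc-weak-appear,clm:sc-weak-another}. As $\card{E(\Geasy)}\le 20\,\opt\cdot\log^4 n$ by the sparsify-case and $\opt\ge\alpha^2 n^{\delta}$ by~\Cref{assumption:range}, the expected number of such edges is $O(\log^4 n/n^{\delta})=o(1)$. The indicators involved depend only on the $(\log^2 n)$-wise independent hash functions $h_i,h_j$, and $2c=60/\delta\ll\log^2 n$; so a $c$-th moment (binomial) bound over the sub-collections of $c$ edges that induce a bipartite subgraph with $c$ distinct endpoints on the $V_j$-side yields $\Pr\!\big[\card{N(V_i)\cap V_j}\ge c\big]\le\big(O(\log^4 n/n^{\delta})\big)^{\Omega(c)}=n^{-\Omega(c\delta)}$, and since $c\delta=30$ this is $n^{-\Omega(1)}$ with a large exponent. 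A union bound over the $O(k^2)=O(n^2)$ ordered pairs of groups then gives the claim with high probability, which is exactly what is needed to invoke~\Cref{lem:sparse-recovery} for each $\SNR(G(V_i),V_i)$.

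I expect the delicate point to be the contribution of vertices of large $\Geasy$-degree. In the crude moment bound the worst sub-configuration is a ``star'': a single vertex $u\in V_i$ of large $\Geasy$-degree could, on its own, supply more than $c$ distinct neighbors in $V_j$, and $\Geasy$ admits no deterministic degree cap. This is controlled on two fronts. First, the Expanding-test removes any group with more than $2\outside$ distinct visible neighbors, which in turn caps the $\Geasy$-degree of every member $u$ of a surviving group at roughly $\alpha\outside$; combined with $k\ge 10\,\opt/\alpha\ge 10\,\alpha n^{\delta}$ this keeps $\expect{\card{N_{\Geasy}(u)\cap V_j}}=O(\alpha\outside/k)=o(1)$ for every fixed triple $(u,V_i,V_j)$, so the star term is also summable via the limited-independence tail bound (\Cref{prop:chernoff-limited}). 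Second — and conceptually more important — the vertices whose coordinates we are actually \emph{required} to recover are only the low-$\Geasy$-degree endpoints of $\Mstar$ (degree at most $\outside/2$), for which $\expect{\card{V_i\cap N(v)}}\le(\outside/2)/k=o(1)$ directly; a vertex of larger $\Geasy$-degree is never an endpoint of a strongly-represented edge, so even if the $\SNR$ promise is technically violated for it, the recovered set is always a subset of the true $N(V_i)-T_i$ (hence introduces no spurious edge into $\HR$) and still contains every strongly-represented edge — which by~\Cref{lem:no_strong-rep} is all that is needed. Reconciling these two mechanisms while staying inside the $(\log^2 n)$-wise independence budget is the part that will require the most care.
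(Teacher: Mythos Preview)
Your direct moment-bound approach has a genuine gap precisely at the star configuration you flag. The claimed tail $\big(O(\log^4 n/n^{\delta})\big)^{\Omega(c)}$ tacitly assumes each of the $c$ edges contributes an independent $\Theta(1/k^2)$ factor; but for a star (one $V_i$-vertex $u$ with $c$ distinct $V_j$-neighbors) the probability is only $\Theta(1/k^{c+1})$, while the number of such stars in $\Geasy$ is $\sum_u \binom{d_u}{c}$, which with no max-degree cap can be as large as $\Theta\!\big((\opt\log^4 n)^c\big)$. Plugging in, the star contribution to the $c$-th moment is of order $(\alpha\log^4 n)^c/k$, which is nowhere near $n^{-\Omega(1)}$. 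Your fix via the Expanding-test is also not right as stated: the test caps $|N(V_i)-T_i|\le 2\outside$ (the number of distinct unmatched neighbors of the \emph{whole} group in $G(V_i)$), not the $\Geasy$-degree of individual members of $V_i$; the ``$\alpha\outside$'' per-vertex cap you invoke does not follow, and so you cannot feed it into a per-vertex tail bound.

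The paper takes a different and cleaner route that sidesteps the star problem entirely. It conditions on the group $V_i$ remaining; this conditioning is a function only of $V_i$ and of the graph $G(V_i)$, and the latter depends on the \emph{union} $\bigcup_{j:\Filter(i,j)=1}V_j$ of the neighbor groups but \emph{not} on how this union is partitioned among the individual $V_j$'s. After the conditioning, the set $O_i:=N(V_i)-T_i$ is fixed with $|O_i|\le 2\outside$ (by~\Cref{clm:sc-SNR-b}), and each vertex of $O_i$ still lands in any specified neighbor group $V_j$ with probability $\approx \alpha/k$. A single binomial bound then gives
\[
\Pr\!\big[\,|O_i\cap V_j|>c\,\big]\ \le\ \binom{2\outside}{c}\Big(\frac{\alpha}{k}\Big)^{c}\ \le\ \Big(\frac{2\outside\,\alpha^2}{10\,\opt}\Big)^{30/\delta}\ <\ n^{-15},
\]
and a union bound over groups finishes. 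The key idea you are missing is to trade the nonexistent per-vertex degree cap for the deterministic bound $|O_i|\le 2\outside$, together with the observation that the partition of $O_i$ among neighbor groups is ``fresh'' randomness even after conditioning on $V_i$ surviving both tests.
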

\begin{proof}
	We condition on the group $V_i$ remaining after the tests. By~\Cref{clm:sc-SNR-b}, we have that there are at most $2 \outside$ vertices in $N(V_i) - T_i$. Let $O_i$ denote the set of these vertices. 
	
	Note that the randomness of this conditioning is only a function of the graph $G(V_i)$ and vertices $V_i$. Recall that the definition of graph $G(V_i)$ was the following 
	set of edges $\set{(u,v) \in E \mid u \in V_i,\, v \in V_j,\, \Filter(i,j)=1}$. This definition depends on the \emph{union} of the sets $\set{V_j}_{\Filter(i,j)=1}$ but \emph{not the partitioning} of vertices into different groups. 
	This is not true if multiple copies of a vertex exist in $O_i$ because if one copy belongs to a group then the others cannot belong to the same group. Assuming that multiple copies of a vertex could exist in a group only increases the size of $O_i \cap V_j$ which is okay for us since we only need an upper bound.
	Therefore, we can assume that each vertex of $O_i$ has an \emph{equal} probability of belonging to each of these $(k/\alpha)$ groups.  
	\begin{align*}
		\Pr\paren{\card{O_i \cap V_j} > c} &\leq {{\card{O_i}}\choose{c}} \cdot \paren{\frac{\alpha}{k}}^{c} \leq \paren{\frac{2 \outside \cdot \alpha}{k}}^c \tag{as size $O_i$ is at most $2 \outside$} \\
		&\leq  \paren{\frac{2 \cdot n^{\delta/4} \cdot \alpha^2}{10 \cdot \opt}}^{30/\delta} \tag{by the choice of $\outside=n^{\delta/4}$, $c=(30/\delta)$, and $k=(10\opt/\alpha)$} \\
		&< \paren{n^{-\delta/2}}^{30/\delta} = n^{-15}. \tag{by~\Cref{assumption:range} $\opt \geq \alpha^2 \cdot n^{\delta}$} \\
	\end{align*}
	A union bound over all choices for $V_i$ concludes the proof. 
	\Qed{\Cref{clm:sc-SNR-c}}
	
\end{proof}

We now show that~\Cref{clm:sc-SNR-c} implies that for every  $v \in N(V_i) - T_i$, size of $V_i \cap N(v)$ is at most $c$, thus satisfying the guarantee of $\SNR$. Suppose the event of \Cref{clm:sc-SNR-c} happens 
for all remaining groups. This means that among the remaining groups, between every pair of neighbor groups $V_i$ and $V_j$, there can be at most $c$ vertices in $V_i$ that have an edge to $V_j$. 
Naturally, any vertex in union of all $V_j$'s can then also only have $c$ neighbors in $V_i$. Since all of $N(V_i)-T_i$ is now a subset of these remaining $V_j$'s, we get the desired guarantee. 
To conclude, by~\Cref{clm:sc-Ti-small,clm:sc-SNR-b,clm:sc-SNR-c} we established that the guarantees required by $\NET$ and $\SNR$ are all satisfied. 

We now show how the algorithm can recovers strongly-represented edges. The first step is to show that the endpoint-groups of strongly-represented edges will not be removed by~\Cref{alg:sc-rec} with high probability. 

\begin{claim}\label{clm:sc-sr-remains}
	Suppose $e$ is a strongly-represented edge by groups $(V_i,V_j)$. Then, with high probability, neither of $V_i$ nor $V_j$ will be removed  by~\Cref{alg:sc-rec}. 
\end{claim}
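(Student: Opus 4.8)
The plan is to check that, when $e$ is strongly-represented by $(V_i,V_j)$, neither group is deleted by the two filtering steps of \Cref{alg:sc-rec} — the $\Measy$-test and the Expanding-test — arguing symmetrically for $V_i$ and $V_j$.

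The $\Measy$-test is immediate: a strongly-represented edge is in particular weakly-represented, so condition $(iii)$ of \Cref{def:Mstar-def} applies and both $V_i$ and $V_j$ are clean, i.e.\ contain no vertex of $\Measy$. Hence neither group is removed by the $\Measy$-test, and no randomness is used here.

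For the Expanding-test the crucial step is a deterministic bound $\card{N(V_i) - T_i} \le \outside$, with the neighborhood taken in $G(V_i)$. Unwinding the definitions: any $w \in N(V_i)-T_i$ is joined by some edge $(u,w)$ with $u \in V_i$; since $V_i$ is clean we get $u \notin V(\Measy)$, and since $w \notin T_i$ while $w$ lies in a neighbor-group of $V_i$, the definition of $T_i$ forces $w \notin V(\Measy)$ as well. Thus $(u,w)$ is an edge of $\Geasy$ that appears inside $V_i$, and distinct surviving neighbors $w$ are witnessed by distinct such edges. Because $V_i$ is non-expanding there are at most $\outside$ edges of $\Geasy$ appearing inside $V_i$, giving $\card{N(V_i)-T_i} \le \outside$. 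Granting this, the promises of \Cref{prob:nbhd-test} are met — $\card{T_i} \le a$ with high probability by \Cref{clm:sc-Ti-small}, $a \ge 16\outside$ by \Cref{assumption:range}, and we land in the ``at most $\outside$'' regime — so by \Cref{lem:nbhd-size-test} the sketch $\NET(G(V_i),V_i)$ run with $T = T_i$ returns ``Yes'' with high probability and the Expanding-test keeps $V_i$. The same reasoning applies to $V_j$, and a union bound over the $O(1)$ relevant events (or simply conditioning on the global high-probability event of \Cref{rem:l0-err} together with \Cref{clm:sc-Ti-small}) completes the argument.

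I expect the main obstacle to be exactly the deterministic inequality $\card{N(V_i)-T_i} \le \outside$: one has to pass carefully between the working graph $G(V_i)$ and $\Geasy$, use cleanness to place \emph{both} endpoints of the witnessing edge outside $V(\Measy)$, and confirm that different surviving neighbors really do correspond to different $\Geasy$-edges inside $V_i$ (a point that is mildly delicate because a single vertex may be sampled into several groups). Once that bound is in place, the rest is a routine invocation of \Cref{lem:nbhd-size-test} and a union bound.
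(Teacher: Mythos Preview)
Your proof is correct and follows the same approach as the paper, just with more care. The paper's proof is quite terse: it asserts that since $V_i,V_j$ are clean they pass the $\Measy$-test, and since they are non-expanding they pass the Expanding-test (provided $\NET$ works correctly, which it does with high probability). You have filled in the one step the paper leaves implicit, namely the deterministic inequality $\card{N(V_i)-T_i}\le\outside$, and your derivation of it---using cleanness of $V_i$ to place $u$ outside $V(\Measy)$, the definition of $T_i$ to place $w$ outside $V(\Measy)$, and then invoking non-expanding---is exactly right.
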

\begin{proof}
	Since $e$ is strongly-represented, we know both $V_i$ and $V_j$ are clean (thus will not be removed by $\Measy$-test) and are non-expanding, thus each have at most $\outside$ other edges of $\Geasy$ appearing inside them. 
	Thus, these groups will also pass the expanding-test, and so they will not be removed as long as $\NET$ is working correctly which happens with high probability. \Qed{\Cref{clm:sc-sr-remains}}
	
\end{proof}

\begin{lemma}\label{lem:sc-sr-pick}
	With high probability, $\HR$ is a subgraph of $G$ and contains all strongly-represented edges. 
\end{lemma}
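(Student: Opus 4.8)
The plan is to establish the two claims separately: (1) every edge in $\HR$ is a genuine edge of $G$, and (2) every strongly-represented edge ends up in $\HR$. For part (1), recall that an edge $(u,v)$ is added to $\HR$ only when, for two remaining groups $V_i, V_j$, both $\NR(V_i) \cap V_j = \set{u}$ (well, the unique vertex of $\NR(V_i)$ lying in $V_j$, call it $u$) and $\NR(V_j) \cap V_i = \set{v}$. By the guarantees established just above (Claims~\ref{clm:sc-Ti-small},~\ref{clm:sc-SNR-b},~\ref{clm:sc-SNR-c}), the promises of $\SNR$ are met for every remaining group, so — conditioning on the high-probability event of~\Cref{rem:l0-err} that none of the $\SNR$ sketches err — the recovery algorithm of $\SNR(G(V_i),V_i)$ correctly returns $N(V_i) - T_i$ in the graph $G(V_i)$; in particular $\NR(V_i) = N(V_i) - T_i$ is a subset of the true neighborhood $N(V_i)$ in $G(V_i)$. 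Hence $u \in N(V_i)$ means some vertex of $V_i$ is adjacent (in $G(V_i) \subseteq G$) to $u$; symmetrically $v \in N(V_j)$ means some vertex of $V_j$ is adjacent to $v$. Since $u \in V_j$ and $v \in V_i$ and the only vertex of $V_j$ that $V_i$ sees is $u$ and the only vertex of $V_i$ that $V_j$ sees is $v$, the unique edge of $G(V_i)$ between $V_i$ and $V_j$ must be exactly $(v,u)$; this is the argument sketched in the footnote of~\Cref{alg:sc-rec}. Therefore $(u,v) \in E(G(V_i)) \subseteq E(G)$, proving part (1).

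For part (2), let $e = (u,v)$ be a strongly-represented edge, represented by the pair $(V_i, V_j)$, with $u \in V_i$ and $v \in V_j$. By~\Cref{clm:sc-sr-remains}, with high probability neither $V_i$ nor $V_j$ is removed by~\Cref{alg:sc-rec}, so both survive to the $\SNR$-recovery stage. Since $e$ is weakly-represented, $e$ is the \emph{only} edge of $\Geasy$ between $V_i$ and $V_j$; moreover $V_i$ and $V_j$ are clean, so no vertex of $V(\Measy)$ lies in either, which means $T_i$ (the vertices of $G(V_i)$ appearing in $\Measy$) contains neither $u$ nor $v$. Now I claim $v$ is the unique vertex of $\NR(V_i) \cap V_j$: on one hand $v \in N(V_i)$ in $G(V_i)$ (because $(u,v)$ is an edge of $G(V_i)$, as $u \in V_i, v \in V_j, \Filter(i,j)=1$) and $v \notin T_i$, so $v \in N(V_i) - T_i = \NR(V_i)$ by correctness of $\SNR$; on the other hand, any other vertex of $V_j$ adjacent to $V_i$ in $G(V_i)$ would give a second $G(V_i)$-edge between $V_i$ and $V_j$ — but every edge of $G(V_i)$ between these groups lies in $G$, and since $u,v$ are not matched by $\Measy$ and all of $V_i \cup V_j$ avoids $\Measy$, any such edge lies inside $\Geasy$ (the induced subgraph on unmatched vertices), contradicting that $e$ is the only $\Geasy$-edge between $V_i$ and $V_j$. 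Symmetrically, $u$ is the unique vertex of $\NR(V_j) \cap V_i$. Hence the condition in step~$(iii)$ of~\Cref{alg:sc-rec} is met and $(u,v)$ is added to $\HR$.

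The main subtlety — and the step I would be most careful about — is the argument that ``the only $G(V_i)$-edge between $V_i$ and $V_j$ is $e$.'' Weakly-represented only directly promises that $e$ is the unique $\Geasy$-edge between the two groups, whereas $G(V_i)$ a priori contains \emph{all} $G$-edges between them, including edges touching vertices matched by $\Measy$. The resolution is precisely that cleanliness of $V_i$ and $V_j$ forces every vertex of $V_i \cup V_j$ to be unmatched by $\Measy$, so every $G$-edge between $V_i$ and $V_j$ has both endpoints unmatched and therefore lies in $\Geasy$; combined with the second condition of weakly-represented, this pins it down to $e$. Once this is in place, both inclusions follow cleanly, and a union bound over the $O(n^2)$ sketches (via~\Cref{rem:l0-err}) together with Claims~\ref{clm:sc-Ti-small}--\ref{clm:sc-sr-remains} gives the high-probability statement. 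I would then note that combining this lemma with~\Cref{lem:no_strong-rep} yields that $\Mhard$, the maximum matching of $\HR \subseteq G$, has size at least the number of strongly-represented edges $\geq \opt/(8\alpha)$ (these edges form a matching since $\Mstar$ does), which completes the proof of~\Cref{lem:sc2}.
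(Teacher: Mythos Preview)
Your proposal is correct and follows essentially the same approach as the paper's proof: condition on the high-probability correctness of $\NET$ and $\SNR$ together with Claims~\ref{clm:sc-Ti-small}--\ref{clm:sc-sr-remains}, then argue (1) that any pair $(u,v)$ added to $\HR$ must be a genuine edge of $G$ because the unique neighbor of $V_i$ in $V_j$ and the unique neighbor of $V_j$ in $V_i$ force the edge, and (2) that for a strongly-represented edge the cleanliness of both groups makes every $G$-edge between them a $\Geasy$-edge, so condition $(ii)$ of weakly-represented pins down $e$ uniquely. Your explicit discussion of the ``main subtlety'' (why cleanliness upgrades the $\Geasy$-uniqueness to $G$-uniqueness between $V_i$ and $V_j$, and hence why $\NR(V_i)\cap V_j$ equals $N(V_i)\cap V_j$) is exactly the point the paper leaves implicit, so your write-up is in fact a bit more careful than the original.
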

\begin{proof}
	We condition on the high probability events of~\Cref{clm:sc-Ti-small,clm:sc-SNR-b,clm:sc-SNR-c,clm:sc-sr-remains} and correctness of $\NET$ and $\SNR$. 
	
	Firstly, consider an edge $(u,v)$ added to $\HR$ and assume
	$u \in V_i$ and $v \in V_j$. For $(u,v)$ to be included in $\HR$, we need to have $N(V_i) \cap V_j = \set{v}$ and $N(V_j) \cap V_i = \set{u}$. This means there is an edge $(u,v)$ in $G$ as well, proving the second part of the lemma. 
	
	For the first part, consider any strongly-represented edge $(u,v)$ and again assume $u \in V_i$ and $v \in V_j$ (condition $(i)$ of weakly-represented edges).  These groups will not be removed
	as we conditioned on the event of~\Cref{clm:sc-sr-remains}. Moreover, there is no other edge of $\Geasy$ between $V_i$ and $V_j$ (condition $(ii)$ of weakly-represented edges) and since $V_i$ and $V_j$ do not have 
	any vertex of $\Measy$ (condition $(iii)$ of weakly-represented edges), we have that $(u,v)$ is the unique edge between $V_i$ and $V_j$. By the argument above, this means $(u,v)$ will be added to $\HR$. \Qed{\Cref{lem:sc-sr-pick}}
	 
\end{proof}

We can now conclude the proof of~\Cref{lem:sc2}. Firstly, by~\Cref{lem:sc-sr-pick}, with high probability the algorithm does not make an error. Moreover, by~\Cref{lem:no_strong-rep},  with probability at least $1-n^{-\delta/6}$, there are at least $(\opt/8\alpha)$ strongly-represented edges. Since these edges are coming from a matching $\Mstar$ and by~\Cref{lem:sc-sr-pick} we get all of those in $\HR$, 
the output matching $\Mhard$ is going to have size at least $(\opt/8\alpha)$. As we already established the bound on the space in~\Cref{lem:sc-space}, this concludes the proof.


\section{Match-Or-Sparsify Lemma}\label{sec:ms}

We prove~\Cref{lem:ms}, restated below, in this section. Informally speaking, given a graph $G$, this lemma gives an algorithm that either finds a large matching in $G$ or identifies a sparse induced subgraph of $G$ that contains a large matching.

\begin{lemma*}[Re-statement of~\Cref{lem:ms}]
	There is a linear sketch that given any graph $G=(V,E)$ specified via $\vect(E)$, uses $O(\guess^2/\alpha^3)$ bits of space and with high probability outputs a matching $\Measy$ 
	that satisfies \underline{at least one} of the following conditions:  
	\begin{itemize}
		\item \textbf{Match-case:} The matching $\Measy$ has at least $(\opt/8\alpha)$ edges; 
		\item \textbf{Sparsify-case:} The induced subgraph of $G$ on vertices not matched by $\Measy$, denoted by $\Geasy$,  has at most $(20\,\opt \cdot \log^4\!{n})$ edges and a matching of size at least $3\,\opt/4$.  
	\end{itemize} 
\end{lemma*}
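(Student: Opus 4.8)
I would design a linear sketch that samples edges non-uniformly, depending on a degree-estimate for each vertex, computes a greedy maximal matching on the sample, and then uses the residual structure to decide which case we are in. Concretely: partition the vertices into $O(\log n)$ degree classes $\High{i}$ according to a dyadic estimate of their degree in $G$ (obtained cheaply via an $L_0$-sampler–based degree sketch or a coarse counter per vertex), and for each pair of degree classes run a batch of $L_0$-samplers over the bipartite edges between those classes, with the number of samplers tuned so that a vertex of degree class $2^i$ is "hit" by roughly $\opt^2/(\alpha^3 \cdot 2^i)$ samples in total. The total number of samplers is then $\sum_i |\High{i}| \cdot \opt^2/(\alpha^3 2^i) = O(\opt^2/\alpha^3)$ many (using that $\sum_i |\High{i}| \cdot 2^i \le 2|E| \le n^2$ together with the assumed range of parameters), each of size $\polylog(n)$, for a total of $O(\opt^2/\alpha^3)$ bits — matching the claim. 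At the end of the stream, greedily build a maximal matching $\Measy$ from the sampled edges.

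**Key steps.** (1) First I would verify the degree estimates are correct up to a factor of $2$ with high probability (standard $L_0$-sampler counting or $\ell_0$-estimation sketch). (2) Next, the core dichotomy: if the greedy matching $\Measy$ on the sample already has $\ge \opt/8\alpha$ edges, we are done (Match-case). (3) Otherwise — $|\Measy| < \opt/8\alpha$ — I claim the induced subgraph $\Geasy$ on unmatched vertices is sparse. The point is that every edge of $\Geasy$ has both endpoints unmatched, hence every edge incident to an unmatched vertex was *not* sampled. For a vertex $v$ of true degree class $2^i$ that survives unmatched, the probability that none of its $\approx \opt^2/(\alpha^3 2^i)$ dedicated samplers returned one of its edges is exponentially small in $\opt^2/(\alpha^3 2^i) \cdot (\text{fraction of } v\text{'s edges still "live"})$; but if $v$ has many unmatched neighbors, a large fraction of its edges are live, forcing a contradiction with high probability once the live-degree exceeds $\approx \alpha^3 2^i \cdot \polylog(n)/\opt$ — wait, I need to be careful: the clean statement is that, conditioned on $\Measy$ being the greedy matching, no unmatched $v$ can have more than $O(\alpha^3 \cdot 2^i / \opt \cdot \log n)$... this needs the sampling to be done *obliviously* and then a union bound over the $\le 2^{O(\text{poly}\log)}$ possible realizations of which vertices end up unmatched, which is too many — so instead I would run the greedy process *inside* the analysis carefully, revealing samplers in a fixed order, so that the "surviving unmatched" event for $v$ is determined before $v$'s own samplers are revealed, making the concentration bound legitimate. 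This yields: $\deg_{\Geasy}(v) \le O(2^i \cdot \alpha^3 \log n / \opt)$ for each surviving $v$ in class $i$, hence $|E(\Geasy)| \le \sum_i |\High{i}| \cdot 2^i \cdot O(\alpha^3 \log n/\opt) \le O(n^2 \alpha^3 \log n / \opt) \le O(\opt \cdot \log^4 n)$ using $\opt \ge \alpha^2 n^\delta$ and $\opt = O(n)$ — I'd double-check the exponents make this go through with room to spare (the paper explicitly says it is cavalier with polylog factors). (4) Finally, for the matching lower bound in $\Geasy$: since $\Measy$ has $< \opt/8\alpha < \opt/4$ edges, it matches $< \opt/2$ vertices; take any maximum matching $M^{\mathrm{opt}}$ of $G$ of size $\ge \opt$, delete the $\le \opt/2$ edges of $M^{\mathrm{opt}}$ touching a vertex matched by $\Measy$, and what remains is a matching of size $\ge \opt/2$ — hmm, I need $3\opt/4$, so I should instead use that $|\Measy| < \opt/8\alpha \le \opt/800$ matches $< \opt/400$ vertices, killing $< \opt/400$ edges of $M^{\mathrm{opt}}$, leaving $\ge \opt - \opt/400 > 3\opt/4$ edges, all lying in $\Geasy$.

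**Main obstacle.** The delicate part is step (3): making the concentration argument for $\deg_{\Geasy}(v)$ rigorous despite the adaptivity introduced by greedy matching. The fix — processing the samplers in a fixed order and arguing that $v$'s unmatched status is a stopping-time-like event measurable with respect to the "earlier" randomness, so that $v$'s own batch of fresh samplers is independent of it — is standard but needs to be set up correctly, and one must also handle the case where $v$'s degree *estimate* is off (absorbed into the factor-$2$ slack). A secondary subtlety is ensuring the non-uniform sampling rates genuinely sum to $O(\opt^2/\alpha^3)$ samplers rather than merely $O((\opt^2/\alpha^3)\polylog n)$; this is where the $\opt \ge \alpha^2 n^\delta$ assumption and the freedom to lose polylog factors are used, and it is exactly the place where prior work's $\polylog n$ overhead would reappear — but since \Cref{lem:ms} is allowed $O(\opt^2/\alpha^3)$ bits and the truly tight sketch is the $\SNR$-based one in \Cref{lem:sc}, a loose-but-$\opt$-dependent bound here is fine.
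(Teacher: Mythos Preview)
Your proposal has two genuine gaps, and the paper's argument is structurally different.

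\textbf{Space accounting.} You claim $\sum_i |\High{i}| \cdot \opt^2/(\alpha^3 2^i) = O(\opt^2/\alpha^3)$ by invoking $\sum_i |\High{i}|\cdot 2^i \le 2|E|$. But the quantity you need bounded is $\sum_i |\High{i}|/2^i$, not $\sum_i |\High{i}|\cdot 2^i$; these are unrelated, and the former can be $\Theta(n)$ (take a graph where most vertices have degree $O(1)$). So your sampler count is not $O(\opt^2/\alpha^3)$. Even if it were, multiplying by $\polylog(n)$ bits per $L_0$-sampler gives $O((\opt^2/\alpha^3)\cdot\polylog n)$, not $O(\opt^2/\alpha^3)$ --- you cannot be cavalier with polylogs \emph{here}, because the lemma's whole point is to stay within the exact $O(\opt^2/\alpha^3)$ bit budget. (Degree estimation via per-vertex counters is also out of budget when $\opt^2/\alpha^3 \ll n$.)

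\textbf{Edge-count bound.} Your residual-degree argument gives, at best, $\deg_{\Geasy}(v) \le O(2^i \alpha^3 \log n/\opt)$ for a class-$i$ vertex, and summing yields $|E(\Geasy)| \le O(|E|\cdot \alpha^3 \log n/\opt)$. For this to be $O(\opt\cdot\polylog n)$ you need $|E| \le O(\opt^2/\alpha^3 \cdot \polylog n)$, which for dense graphs forces $\alpha = O(\polylog n)$ --- far short of the $\alpha \le n^{1/2-\delta}$ regime the lemma must cover. This is precisely the limitation the paper notes about the residual-sparsity-of-greedy approach: it bounds degrees, yielding $O(\alpha^3 n \log n)$ total edges, not $O(\opt\cdot\polylog n)$.

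\textbf{What the paper actually does.} The paper does \emph{not} do per-vertex or per-degree-class sampling. It samples $2s = O(\opt^2/(\alpha\log n)^3)$ random vertex sets $V_i$ (each vertex included with probability $\alpha/\opt$) and for each stores a single $\NES$ sketch --- which returns an edge to a \emph{uniformly random neighbor} of $V_i$, not a uniformly random edge --- costing $O(\log^3 n)$ bits each, for $O(\opt^2/\alpha^3)$ bits total. The analysis is a direct progress argument: as long as both $|\Measy|<\opt/8\alpha$ \emph{and} $|E(\Geasy)|>20\,\opt\log^4 n$ hold, each step adds an edge to the greedy matching with probability $\Omega(\alpha^2\log^3 n/\opt)$ (via a case split on whether many medium-degree vertices or many low-low edges remain). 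After $s$ steps this forces one of the two events to fail. The uniform-neighbor sampling of $\NES$ is what makes the low-degree case go through, and the group-based (rather than per-vertex) sampling is what keeps the sketch count small enough.
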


The algorithm in~\Cref{lem:ms} samples $\approx \opt^2/(\alpha \cdot \log{n})^3$ edges from the graph using a non-uniform distribution as follows: 
for each sample, we first pick $\approx \opt/\alpha$ vertices $S$ uniformly at random and then use $\NES$ to sample an edge from $S$ to a vertex of $N(S)$ chosen uniformly at random. 
Given the bound of $O(\log^3{n})$ bits on the size of sketches for $\NES$, the total space of the algorithm can be bounded by $O(\opt^2/\alpha^3)$ bits.
In the recovery phase then, we compute a greedy matching over these sampled edges and return it as $\Measy$. Formally, the algorithm is as follows. 

\begin{Algorithm}\label{alg:ms} 
	The algorithm of Match-Or-Sparsify Lemma~(\Cref{lem:ms}). 
	
	\medskip
	
	\textbf{Input:} A graph $G=(V,E)$ specified via $\vect(E)$; $\qquad$ \textbf{Output:} A matching $\Measy$ in $G$. 
	
	\medskip
	
	\textbf{Parameters:} Let $k := {\opt}/{\alpha}$ and $s := \opt^2/(\alpha \cdot \log{n})^3$.  

\medskip

	 \textbf{Sketching matrix:} 
	
	\begin{enumerate}[label=$\arabic*.$, leftmargin=15pt]
	
	\item For $i=1$ to $(2s)$ steps\footnote{We  partition the steps into two \textbf{batches} of $s$ steps each in the analysis, hence the use of $2s$ for the number of steps.}: 
	\begin{enumerate}[leftmargin=15pt]
		\item \label{ms:a} Sample a pair-wise independent hash function $h_i : V \rightarrow [k]$ and set $V_i := \set{v \in V \mid h_i(v) = 1}$.
		\item \label{ms:b} Let $\Phi(V_i)$ be the sketching matrix of $\NES(G,V_i)$. 
	\end{enumerate}
	\item Return $\Phi := [\Phi(V_1); \, \cdots \, ; \Phi(V_{2s})]$ as the sketching matrix. 
	\end{enumerate}
	
\textbf{Recovery algorithm:}
	
		\begin{enumerate}[label=$\arabic*.$, leftmargin=15pt]
	
	\item For all $i \in [2s]$, run the recovery algorithm of $\NES(G,V_i)$ using $\Phi(V_i)$ and $\Phi(V_i) \cdot \vect(E)$  to get an output edge $e_i$ (we write $e_i = \perp$ if the sampler outputs FAIL).

	\item Let $\Measy \leftarrow \emptyset$ initially. For $i=1$ to $2s$ steps: greedily include  $e_i$  in $\Measy$ whenever $e_i \neq \perp$ and both its endpoints are unmatched by $\Measy$. 
	\end{enumerate}
	
\end{Algorithm}

Note that it is equivalent to think of the edges being recovered one by one and fed to the greedy matching algorithm. We will use this in our analysis.
We first bound the space complexity of this algorithm. 

\begin{lemma}\label{lem:ms-space}
	\Cref{alg:ms} uses $O(\guess^2/\alpha^3)$ bits of space with high probability. 
\end{lemma}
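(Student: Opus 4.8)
\textbf{Proof proposal for \Cref{lem:ms-space}.} The plan is to simply add up the three contributions to the memory footprint of \Cref{alg:ms} — the stored hash functions, the stored $\NES$ sketches, and the working memory of the greedy recovery step — and check that after substituting the parameter $s := \opt^2/(\alpha\cdot\log{n})^3$ each is $O(\opt^2/\alpha^3)$ bits.

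First I would bound the sketching matrix. Over the $2s$ steps the algorithm maintains $2s$ independent linear sketches $\Phi(V_i)\cdot\vect(E)$. Each step samples a pairwise-independent hash function $h_i\colon V\to[k]$ with $k=\opt/\alpha\le n$; by \Cref{prop:k-wise} storing $h_i$ (which is all that is needed to recover the set $V_i$) costs $O(\log{n})$ bits, hence $O(s\log{n})$ bits in total. Each step also maintains the linear sketch of $\NES(G,V_i)$; by \Cref{lem:nei-edge-sampler} such a sketch — including the implicit description of its sketching matrix and the sketch vector $\Phi(V_i)\cdot\vect(E)$ itself — has size $\snes=O(\log^3{n})$ bits, and by \Cref{prop:linear-sketch} maintaining it incrementally over the dynamic stream adds only $O(\log{m})=O(\log{n})$ bits of bookkeeping per copy. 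Summing over the $2s$ copies gives $O(s\log^3{n})$ bits for this part. For the recovery phase: running the recovery algorithm of each $\NES(G,V_i)$ only reads the already-stored sketch plus $O(\log^3{n})$ scratch space, and the greedy matching $\Measy$ being built contains at most $2s$ edges (at most one is added per step), so it occupies $O(s\log{n})$ bits; both are dominated by the $O(s\log^3{n})$ term. Thus the whole algorithm uses $O(s\log^3{n})$ bits.

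Finally I would plug in $s=\opt^2/(\alpha\cdot\log{n})^3=\opt^2/(\alpha^3\log^3{n})$, so that $s\log^3{n}=\opt^2/\alpha^3$, giving a total of $O(\opt^2/\alpha^3)=O(\guess^2/\alpha^3)$ bits as claimed. I would note that every quantity counted above has a fixed, input-independent size, so the bound actually holds deterministically — in particular with high probability — and that \Cref{assumption:range} (via $\opt\ge\alpha^2n^{\delta}$) guarantees $s\ge 1$, so no rounding of the parameter $s$ inflates the estimate. There is essentially no substantive obstacle here; the only point requiring a little care is the standard observation that the $O(\log^3{n})$ "size" attributed to each $\NES$ copy in \Cref{lem:nei-edge-sampler} already includes its implicit sketching-matrix randomness (the internal hash functions and the $\Lsampler$/$\NEC$ seeds from \Cref{prop:l0-sampler} and \Cref{lem:size-test}), so that randomness is not an extra cost on top.
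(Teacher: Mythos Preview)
Your proposal is correct and follows essentially the same approach as the paper: account for the $O(\log n)$ bits per hash function (via \Cref{prop:k-wise}) and $O(\log^3 n)$ bits per $\NES$ sketch (via \Cref{lem:nei-edge-sampler}) across the $2s$ steps, then substitute $s=\opt^2/(\alpha\log n)^3$ to obtain $O(\opt^2/\alpha^3)$. The paper's proof is terser and omits your remarks on bookkeeping, recovery-phase scratch space, and the deterministic nature of the bound, but the core accounting is identical.
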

\begin{proof}
 
In each step, Line~\ref{ms:a} requires storing a pair-wise independent hash function which needs $O(\log n)$ bits of space by~\Cref{prop:k-wise}. Line~\ref{ms:b} requires storing an $\NES$ which needs $O(\log^3 n)$ bits by \Cref{lem:nei-edge-sampler}.
There are $2s = O(\opt^2/(\alpha \cdot \log{n})^3)$ steps, so the total space  $O(\opt^2/\alpha^3)$ bits.
\end{proof}

We now prove that the matching $\Measy$ output by~\Cref{alg:ms} satisfies the guarantees of~\Cref{lem:ms}. 
To continue, we need some notation. 

\paragraph{Notation.} For any $i \in [2s]$, let $\ME{i}$  be the set of edges included in $\Measy$ in the first $i-1$ steps of the recovery, i.e., from $\set{e_j}_{j=1}^{i-1}$, 
and $\GE{i}$ to be the subgraph of $G$ induced on \emph{unmatched} vertices of $\ME{i}$.
We use $\degG{v}{i}$ to denote the degree of each vertex in $\GE{i}$ to other vertices in $\GE{i}$. 
We partition vertices of $\GE{i}$ based on their degrees in $\GE{i}$ into \textbf{low-}, \textbf{medium-}, and \textbf{high-degree} as follows: 
\begin{align*}
& \Low{i} := \set{v: \degG{v}{i} \!<\! (\alpha \, \log^3\!{n})},  \Med{i} := \set{v : (\alpha \, \log^3\!{n}) \!\leq\! \degG{v}{i} \!<\! (\frac{\opt}{8\alpha})}, 
  \High{i} := \set{v :  \degG{v}{i} \!\geq\! \frac{\opt}{8\alpha}}.
\end{align*}
We define the following two events: 
\begin{itemize}
	\item $\eventmatch(i)$: the matching $\ME{i}$  has less than $(\opt/8\alpha)$ edges (i.e., matching-case not happened); 
	\item $\eventsparsify(i)$: the subgraph $\GE{i}$ has more than $(20\,\opt \cdot \log^4\!{n})$ edges (i.e., sparsify-case not happened). 
\end{itemize}
Finally, we say that a choice of $V_i$ in step $i \in [2s]$ is \textbf{clean} if $V_i$ does not contain any matched vertices of $\ME{i}$. 

\smallskip

We start by proving that if for some $i \in [2s]$ at least one of these events do not happen, then~\Cref{alg:ms} succeeds in outputting the desired matching of~\Cref{lem:ms}. 
The proof is straightforward. 
\begin{claim}\label{clm:event-ms}
	Suppose for some $i \in [2s]$, either of $\eventmatch(i)$ or $\eventsparsify(i)$ does not happen; then,  $\Measy$ of~\Cref{alg:ms} satisfies the guarantees of~\Cref{lem:ms}. 
\end{claim}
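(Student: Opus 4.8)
The plan is to case on the final size of the output matching $\Measy$, using one simple monotonicity observation to pass between the snapshot objects $\ME{i},\GE{i}$ and the final objects $\Measy,\Geasy$. Concretely, I would first note that the greedy recovery in \Cref{alg:ms} only ever \emph{adds} edges, so $\ME{i}\subseteq\Measy$ for every $i\in[2s]$; hence $|\ME{i}|\le|\Measy|$, and the set of vertices left unmatched by $\Measy$ is contained in the set left unmatched by $\ME{i}$, so $\Geasy$ is a vertex-induced subgraph of $\GE{i}$ and in particular $|E(\Geasy)|\le|E(\GE{i})|$. This containment is the only structural fact needed.

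\textbf{Case 1: $|\Measy|\ge\opt/8\alpha$.} Then the match-case of \Cref{lem:ms} already holds, and we are done regardless of which of $\eventmatch(i),\eventsparsify(i)$ was the one that failed.

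\textbf{Case 2: $|\Measy|<\opt/8\alpha$.} Then $|\ME{i}|\le|\Measy|<\opt/8\alpha$ for the fixed index $i$, i.e. $\eventmatch(i)$ \emph{does} happen. By hypothesis at least one of $\eventmatch(i),\eventsparsify(i)$ does not happen, so it must be $\eventsparsify(i)$ that fails, meaning $|E(\GE{i})|\le 20\,\opt\log^4 n$. By the monotonicity observation, $|E(\Geasy)|\le|E(\GE{i})|\le 20\,\opt\log^4 n$, which is the first half of the sparsify-case. For the second half I would use that deleting a vertex destroys at most one edge of any fixed matching: $\Geasy$ is obtained from $G$ by deleting the $\le 2|\Measy|<\opt/4\alpha$ vertices matched by $\Measy$, so a maximum matching of $G$ (of size $\ge\opt$ by \Cref{assumption:guess}), after deleting these vertices, still induces a matching in $\Geasy$ of size at least $\opt-\opt/4\alpha$, which exceeds $3\,\opt/4$ because $\alpha>100$ by \Cref{assumption:range}. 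Hence $\Measy$ satisfies the sparsify-case.

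I do not anticipate a genuine obstacle here — the statement is essentially a definitional unwinding. The only point that must be handled carefully is getting the direction of the induced-subgraph inclusion right (from $\ME{i}\subseteq\Measy$ one gets $\Geasy$ \emph{inside} $\GE{i}$, not the other way), and the trivial but necessary remark that if $\Measy$ itself is large the match-case is automatic, so we may assume $|\Measy|<\opt/8\alpha$ when arguing the sparsify-case.
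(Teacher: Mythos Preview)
Your proposal is correct and is essentially the paper's own proof, just reorganized to case first on $|\Measy|$ rather than on which of $\eventmatch(i),\eventsparsify(i)$ fails; the monotonicity observation $\ME{i}\subseteq\Measy$ (hence $\Geasy\subseteq\GE{i}$) and the counting argument $\opt-\opt/4\alpha>3\opt/4$ are exactly what the paper uses.
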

\begin{proof}
	Suppose first that $\eventmatch(i)$ does not happen. This means $\ME{i}$ has size at least $(\opt/8\alpha)$ and by the greedy choice of $\Measy$, we have $\card{\Measy} \geq \card{\ME{i}} \geq (\opt/8\alpha)$,  satisfying the match-case condition. 
		
	Now suppose that $\eventsparsify(i)$ does not happen. Since the number of edges of $\Geasy$ can only be smaller than that of $\GE{i}$, we have that $\Geasy$ also only has $(20\,\opt\cdot\log^4\!{n})$ edges. 
	We can also assume that size of $\Measy$ is at most $(\opt/8\alpha)$ as otherwise we will be done by the matching-case. This means that at most $(\opt/4\alpha)$ vertices of any maximum matching of $G$ are 
	incident on $\Measy$, thus $\Geasy$ still contains a matching of size at least $\opt - (\opt/4\alpha) > 3\,\opt/4$, satisfying the sparsify-case condition. 
\end{proof}

The goal at this point is to show that with high probability, for some $i \in [2s]$, one of the events $\eventmatch(i)$ or $\eventsparsify(i)$ is \emph{not} going to happen. In order to do so, we partition the steps of the algorithm into two \textbf{batches} of size $s$ 
each and analyze each one separately as follows:
\begin{itemize}[leftmargin=10pt]
	\item \textbf{First batch:} We first show that as long as $\eventmatch(i)$ and $\eventsparsify(i)$ happen for all $i \in [s]$, with high probability, the set $\High{s+1}$ (and thus $\High{j}$ for all $j \in (s,2s]$) will be empty for the second batch (a technical condition needed for our variance reduction
ideas in the next part). Formally, 

\begin{lemma}\label{lem:ms-1st-batch}
	With high probability, either at least one of $\eventmatch(i)$ and $\eventsparsify(i)$ does not happen for some step $i \in [s]$ or $\High{s+1}$ will be empty.
\end{lemma}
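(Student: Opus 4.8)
The plan is to show that if a high-degree vertex $v \in \High{s+1}$ survives, i.e.\ $\degG{v}{s+1} \geq \opt/8\alpha$, then with overwhelming probability one of the $s$ samples in the first batch would have landed on an edge incident to $v$ while $v$ was still unmatched, and greedily including that edge would have matched $v$ — a contradiction. First I would fix a step $i \in [s]$ and condition on the history up to step $i$ (so on $\ME{i}$, hence on $\GE{i}$ and the degree classes). If both $\eventmatch(i)$ and $\eventsparsify(i)$ hold, then $|\ME{i}| < \opt/8\alpha$, so fewer than $\opt/4\alpha$ vertices are matched; since $|V_i|$ concentrates around $|V|/k = \alpha n /\opt \cdot (\text{something})$ — more to the point, each matched vertex lands in $V_i$ with probability $1/k = \alpha/\opt$, so the expected number of matched vertices in $V_i$ is at most $(\opt/4\alpha)\cdot(\alpha/\opt) = 1/4$, and with constant probability $V_i$ is clean. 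Condition on $V_i$ being clean.

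The key step: fix any vertex $v$ that is in $\High{j}$ for every $j \le i$ (equivalently $v$ has high degree in $\GE{i}$ and is unmatched there — since degrees only shrink as more vertices get matched, a vertex high-degree at step $s+1$ is high-degree at every earlier step and unmatched throughout the first batch). I want to lower bound $\Pr(e_i \text{ is incident to } v \mid \text{clean}, \text{history})$. The sampler $\NES(G,V_i)$ returns an edge from $V_i$ to a uniformly random element of $N(V_i)$. For the returned edge to be incident to $v$ we want $v \notin V_i$ (so $v$ can be on the $N(V_i)$ side; this has probability $1-1/k \ge 1/2$), we want $v \in N(V_i)$, and we want $v$ to be the one of $|N(V_i)|$ neighbors selected. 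Since $v$ has $\ge \opt/8\alpha$ neighbors in $\GE{i}$ and each lands in $V_i$ with probability $1/k = \alpha/\opt$, the expected number is $\ge 1/8$, so by a second-moment / pairwise-independence argument (exactly the style of Claim in the proof of Lemma 3.1) $v \in N(V_i)$ with constant probability. Meanwhile $|N(V_i)| \le |V| $ always, but more usefully we can bound $|N(V_i)| = O(|V_i| \cdot \Delta)$; actually the cleaner route is: $\NES$ picks a \emph{uniform} element of $N(V_i)$, so conditioned on $v \in N(V_i)$, the probability $v$ is returned is $1/|N(V_i)| \ge 1/n$. That only gives probability $\Omega(1/n)$ per step, which over $s = \opt^2/(\alpha\log n)^3$ steps is $\Omega(\opt^2/(n \alpha^3 \log^3 n))$ — not obviously $\gg 1$. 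So I expect the real argument must instead bound $|N(V_i)|$: since $V_i$ has $\approx \alpha n/\opt$ vertices... hmm, $|N(V_i)|$ could be as large as $n$.

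The main obstacle, then, is controlling $|N(V_i)|$ so that the uniform-neighbor sampler hits a \emph{specific} high-degree vertex $v$ with probability $\Omega(1/s)$ per step rather than $O(1/n)$. I would resolve this by a union-bound-over-a-net argument run the other direction: rather than fixing $v$ and hoping to hit it, show that across the batch the sampler collects $\Omega(\log^2 n)$ \emph{distinct} neighbors of $v$ among those lying in $\GE{}$ but this still needs $v \in N(V_i)$ often and $v$ selected often. The clean way (and what I'd pursue) is: condition on $v$ high-degree and unmatched throughout; in step $i$, ask for the event ``$v \in N(V_i)$ \emph{and} $|N(V_i)| \le C\alpha^2 n^2/\opt^2 \cdot \Delta$'' — no. Better: use that $\GE{i}$ has $\le 20\opt\log^4 n$ edges by $\eventsparsify(i)$, so $\sum_u \deg_{\GE{i}}(u) \le 40\opt\log^4 n$, hence the number of vertices in $N(V_i)\cap V(\GE{i})$ reached \emph{through low-degree edges} is controlled; combined with $\High{i}$-type vertices being few (at most $O(\opt\log^4 n/(\opt/\alpha)) = O(\alpha\log^4 n)$ of them). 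I would argue $\Pr(e_i \text{ incident to } v) = \Omega(\alpha^3\log^3 n/\opt^2) = \Omega(1/s)$ using that, conditioned on $v$ being sampled into $N(V_i)$, the sampler picks it with probability $1/|N(V_i)|$ and $|N(V_i)|$ is $O(\opt/\alpha \cdot \text{polylog})$ \emph{on the event that no high-degree vertex of $V\setminus\GE{i}$ is adjacent} — this is getting delicate, which is precisely why I flag it as the crux. Granting the per-step bound $\Pr(e_i \text{ matches } v) = \Omega(1/s)$, the rest is routine: these events across the $s$ steps are not independent but one conditions sequentially on ``$v$ still unmatched,'' so the probability $v$ survives all $s$ steps is $\prod (1 - \Omega(1/s)) \le e^{-\Omega(1)}$; boost by noting $v$ being unmatched at step $s+1$ means it was unmatched and high-degree at \emph{all} of steps $s/2,\dots,s$, giving $e^{-\Omega(s/s)}$... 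I'd actually want $\ge \log n$ effective trials, which comes from $v$ having $\ge \opt/8\alpha \gg \log n$ neighbors so that even a weaker per-step bound suffices. Finally union bound over all $\le n$ candidate vertices $v$ and over the (constant-probability) cleanliness conditioning, using $\opt \ge \alpha^2 n^\delta$ from Assumption 4.5 to make $n^{-\Omega(1)}$ swallow the union bound.
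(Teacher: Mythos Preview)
Your proposal correctly isolates the crux---controlling $|N(V_i)|$ so that the uniform-neighbor sampler hits a specific high-degree vertex $v$ with usable probability---but it does not resolve it, and your attempted workarounds cannot succeed. First, you invoke $\eventsparsify(i)$ to assert $\GE{i}$ has $\le 20\,\opt\log^4 n$ edges, but the direction is reversed: $\eventsparsify(i)$ is the event that $\GE{i}$ has \emph{more} than this many edges (sparsify-case has \emph{not} happened). Second, even setting that aside, during the first batch there is no control on high-degree vertices of $\GE{i}$---that is precisely what the first batch is supposed to establish---so $V_i$ may contain a vertex with $\Theta(n)$ neighbors, and $|N(V_i)|$ can genuinely be $\Theta(n)$.

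The idea you are missing is a dichotomy on $|N(V_i)|$ rather than a uniform bound. Condition on $v\in N(V_i)$ and $V_i$ clean (constant probability, as you argued). If $|N(V_i)| > \opt/2\alpha$ (a ``matching-step''), do not try to hit $v$: since at most $\opt/4\alpha$ vertices are matched under $\eventmatch(i)$, at least half of $N(V_i)$ is unmatched, so with constant probability $e_i$ lands on an unmatched vertex and $|\ME{i+1}|>|\ME{i}|$. If $|N(V_i)|\le \opt/2\alpha$ (a ``vertex-step''), then the uniform sampler picks $v$ with probability $\ge 2\alpha/\opt$, matching $v$. Now split on which case occurs for at least half the $s$ steps. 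If matching-steps dominate for even one $v$, the matching grows by $\Omega(s)\gg \opt/\alpha$ with high probability, so $\eventmatch$ fails. Otherwise every candidate $v$ sees $\ge s/2$ vertex-steps, each matching $v$ with probability $\Omega(\alpha/\opt)$, so $\Pr[v\text{ survives}]\le (1-\Omega(\alpha/\opt))^{s/2}\le \exp(-\Omega(\opt/(\alpha^2\log^3 n)))\ll 1/\poly(n)$ by $\opt\ge \alpha^2 n^\delta$; union bound over $v$ finishes.
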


\item \textbf{Second batch:} We then show that whenever both $\eventmatch(i)$ and $\eventsparsify(i)$ happen in a step $i \in (s , 2s]$, there will be a probability of $\approx k/s$ in increasing 
the size of $\ME{i}$ by one in this step (this is the main part of the argument). Given that we repeat this process for $s$ steps also, this allows us to argue $\Measy$ will eventually become of size $\approx k = \opt/\alpha$, thus satisfying the matching-case 
condition (or one of the events happen along the way, and we can use~\Cref{clm:event-ms} instead). Formally, 

\begin{lemma}\label{lem:ms-2nd-batch}
	Assuming $\High{s+1}$ is empty, with high probability, at least one of the events $\eventmatch(i)$ or $\eventsparsify(i)$ does not happen for some $i \in (s:2s]$.
\end{lemma}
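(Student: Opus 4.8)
The plan is to argue by contradiction over the $s$ steps of the second batch, $i\in(s,2s]$. Suppose that both $\eventmatch(i)$ and $\eventsparsify(i)$ hold for \emph{every} $i\in(s,2s]$ --- this is exactly the negation of the statement, so if it fails we are already done --- and let us show it happens only with negligible probability. The heart of the matter is a per-step lower bound: conditioned on the history through step $i$ (which determines $\ME{i}$ and $\GE{i}$) and on both events holding at step $i$, the conditional probability that step $i$ inserts a new edge into the greedy matching is $\Omega(k/s)=\Omega(\alpha^2\log^3 n/\opt)$. Granting this, the successes over the batch are generated adaptively, but as long as the contradiction hypothesis has not yet been falsified each has conditional probability at least $\Omega(k/s)$; a standard stopping-time coupling with an i.i.d.\ Bernoulli sequence and a Chernoff bound then show that with probability $1-e^{-\Omega(k)}=1-o(1/\poly(n))$ at least $\Omega(k)=\Omega(\opt/\alpha)>\opt/8\alpha$ edges are added during the batch. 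But then $\ME{i}$ exceeds $\opt/8\alpha$ at some step of the batch, so $\eventmatch$ fails there, contradicting the hypothesis; hence with high probability at least one of $\eventmatch(i),\eventsparsify(i)$ fails for some $i\in(s,2s]$ (which, via \Cref{clm:event-ms}, also yields the matching promised by \Cref{lem:ms}).

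To prove the per-step bound I would fix the history at step $i$ with both events holding, so $|V(\ME{i})|=2|\ME{i}|<\opt/4\alpha$ and $|E(\GE{i})|>20\,\opt\log^4 n$. Since each matched vertex lies in $V_i$ with probability $1/k=\alpha/\opt$, a plain union bound gives that $V_i$ is \emph{clean} with probability at least $3/4$; condition on this. By \Cref{lem:nei-edge-sampler}, $\NES(G,V_i)$ returns, up to its $\tfrac1{100}$ failure and $n^{-8}$ error probabilities, an edge $(u,v)$ with $u\in V_i$ and $v$ uniform in $N_G(V_i)$; because $V_i$ is clean, ``$v$ unmatched'' is exactly the event $v\in N_{\GE{i}}(V_i)$, and on that event $(u,v)$ is a genuine edge of $\GE{i}$ between two unmatched vertices, hence is added to the matching. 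As there are at most $|V(\ME{i})|<\opt/4\alpha$ matched neighbors of $V_i$ in $G$, we have $|N_G(V_i)|<|N_{\GE{i}}(V_i)|+\opt/4\alpha$, so, writing $X:=|N_{\GE{i}}(V_i)|$, the success probability at step $i$ is, up to the above constant factors, at least $\mathbb{E}\!\left[\tfrac{X}{X+(\opt/4\alpha)}\right]$; and whenever $X\ge\alpha\log^3 n$ this fraction is at least $\min\{\tfrac12,\ 2\alpha\log^3 n/k\}=2k/s$, so it suffices to lower bound $\mathbb{E}[X/(X+(\opt/4\alpha))]$, which in turn requires $X$ to be reasonably large with reasonable probability.

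This last estimate is where the three structural facts meet, and it is the step I expect to be the main obstacle. A first-moment computation --- using pairwise independence of $h_i$ restricted to the neighborhood of a single vertex, and the fact that $\High{s+1}=\emptyset$ forces every degree in $\GE{i}$ to be below $\opt/8\alpha<k$ (a property that persists for all $i>s$ since $\GE{i}$ only loses vertices and edges as $i$ grows) --- gives $\mathbb{E}[X]\ge|E(\GE{i})|/k>20\,\alpha\log^4 n\gg\alpha\log^3 n$, and the same maximum-degree bound keeps the second moment of the closely related sum $\sum_{u\in V_i}\deg_{\GE{i}}(u)$ in check (each summand is at most $\opt/8\alpha$). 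The difficulty is that $X$ is the size of a \emph{union} of neighborhoods of a pairwise-independent sample, and in the relevant regime $|E(\GE{i})|\ll k^2$ (relevant because $\opt\ge\alpha^2 n^\delta$ by \Cref{assumption:range}) the variable $X$ is ``spiky'': one medium-degree vertex landing in $V_i$ can move it by as much as $\opt/8\alpha$. The argument therefore splits along the $\Low{i}/\Med{i}/\High{i}$ decomposition --- the case where $\Omega(k)$ vertices of $\GE{i}$ have degree $\ge\alpha\log^3 n$, so $V_i$ meets $\Med{i}$ with constant probability; the case where few, higher-degree vertices carry the edge mass, where a hit of $\Med{i}$ is rarer but makes $X$ (hence $X/(X+(\opt/4\alpha))$) correspondingly larger; and the case where $|E(\GE{i})|$ is large enough that $X$ concentrates above $\opt/4\alpha$ outright --- and one also has to absorb the mild distortion that conditioning on $V_i$ being clean introduces under only pairwise independence. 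Once this estimate is in hand, the two paragraphs above are routine.
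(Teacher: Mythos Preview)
Your plan is correct and mirrors the paper's: establish a per-step success probability of $\Omega(\alpha^2\log^3 n/\opt)$ conditioned on both events holding (\Cref{lem:ms-increase}), then couple with i.i.d.\ Bernoullis and apply Chernoff. The paper sharpens your first two cases into a single dyadic argument (\Cref{clm:ms-geasy-large}): if the edge mass is incident on $\Med{i}$, some degree bucket $[d,2d)$ carries $\Theta(\opt\log^3 n/d)$ vertices, and the $\Theta(\alpha\log^3 n/d)$ probability that $V_i$ hits one of them is traded against the $\Theta(d\alpha/\opt)$ conditional success, with the product independent of $d$.

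Your third case, however, is miscalibrated and leaves a genuine gap. When the edge mass lies entirely inside $\Low{i}$ with $\card{E(\GE{i})}=\Theta(\opt\log^4 n)$, one has $\expect{X}=\Theta(\alpha\log^3 n)$, which under \Cref{assumption:range} is $\ll \opt/4\alpha$; so $X$ cannot concentrate above $\opt/4\alpha$ here, and your three cases as stated do not cover this regime. The paper handles it by showing only $X\ge 2\alpha\log^3 n$ with constant probability --- already enough for $X/(X+\opt/4\alpha)\ge\Omega(\alpha^2\log^3 n/\opt)$ --- via Chebyshev and a covariance bound (\Cref{clm:ms-NVsteps-var}) that double-counts common neighbors and exploits the $\Low{i}$ degree cap to control $\sum_{u\neq v}\com(u,v)$. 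This is the concentration step you flagged as the main obstacle; your instinct that the second moment is tractable is right, but the target threshold is $\Theta(\alpha\log^3 n)$, not $\opt/4\alpha$.
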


\end{itemize}
\noindent
\Cref{lem:ms} then follows immediately from these two lemmas combined with~\Cref{lem:ms-space,clm:event-ms}. 

Before we get to the proofs of these lemmas, we make the following important remark. 

\begin{remark}
{The actions of~\Cref{alg:ms} are clearly \emph{not} independent across different steps (in the recovery phase). However, in our upcoming probability analysis in each step $i \in [2s]$ we fix {the randomness of all prior steps} conditioned on that events $\eventmatch(i)$ and $\eventsparsify(i)$, and use \emph{only} the randomness of the choice of $(V_i,e_i)$ in this step. This randomness is independent of prior steps. 
As such, in the following, \textbf{\underline{all} our probability calculations in a step $i$ are conditioned on randomness of prior steps and events $\eventmatch(i)$ and $\eventsparsify(i)$}, without writing it explicitly each time. These 
probability calculations may not necessarily remain correct when either of these events do not happen, but we will be done by~\Cref{clm:event-ms} in those cases anyway.} 
\end{remark}

We use the following simple helper claim in the subsequent proofs (this claim would have been trivial had $h_i$ was a truly independent hash function instead of a pairwise-independent one). 

\begin{claim}\label{clm:ms-helper-1}
	Consider any step $i \in [s]$ and let $v$ be any arbitrary vertex in $\GE{i}$. Then, 
	\[
		\Pr_{V_i}\Paren{v \in V_i ~\textnormal{and $V_i$ is clean}} \geq \frac{3}{4k}. 
	\]
\end{claim}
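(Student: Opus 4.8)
The plan is to bound the probability of the conjunction of two events: $v \in V_i$ and $V_i$ is clean (i.e.\ contains no matched vertex of $\ME{i}$). First I would note that, since we have conditioned on $\eventmatch(i)$, the matching $\ME{i}$ has fewer than $\opt/8\alpha$ edges, so the set $V(\ME{i})$ of matched vertices has size at most $\opt/4\alpha = k/4$. The goal is to show that with probability at least $3/4k$, the hash function $h_i$ maps $v$ to $1$ while mapping every vertex of $V(\ME{i}) \setminus \set{v}$ away from $1$.

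The main step is a simple inclusion--exclusion / union bound using only pairwise independence of $h_i$. Write $p = 1/k$ for the marginal probability $\Pr[h_i(u)=1]$ of any fixed vertex $u$. Then
\[
	\Pr_{V_i}\Paren{v \in V_i ~\text{and}~ V_i ~\text{clean}} \;\geq\; \Pr\Paren{h_i(v)=1} - \sum_{u \in V(\ME{i}) \setminus \set{v}} \Pr\Paren{h_i(v)=1 \wedge h_i(u)=1}.
\]
By pairwise independence each term in the sum equals $1/k^2$, and there are at most $k/4$ such terms (at most $k/4$ matched vertices, minus possibly $v$ itself), so the right-hand side is at least $\frac{1}{k} - \frac{k}{4}\cdot\frac{1}{k^2} = \frac{1}{k} - \frac{1}{4k} = \frac{3}{4k}$, which is exactly the claimed bound. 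One should be slightly careful about the edge case $v \in V(\ME{i})$: but $v \in \GE{i}$ means $v$ is an \emph{unmatched} vertex of $\ME{i}$, so $v \notin V(\ME{i})$ and the sum genuinely ranges over at most $k/4$ vertices distinct from $v$; in any case including $v$ in the sum only weakens the bound and we would still be fine since $k/4 + 1$ terms still gives essentially $3/4k$ up to the already-generous slack.

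I do not expect any real obstacle here — the claim is a one-line second-moment-style bound. The only thing to be careful about is making the conditioning explicit: all probabilities are over the choice of $(V_i, e_i)$ in step $i$ with the randomness of prior steps fixed and events $\eventmatch(i), \eventsparsify(i)$ assumed, as flagged in the remark preceding the claim; under $\eventmatch(i)$ the size bound $\card{V(\ME{i})} < \opt/4\alpha = k/4$ holds deterministically, which is all that is needed. The $\eventsparsify(i)$ conditioning plays no role in this particular claim.
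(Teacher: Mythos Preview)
Your proof is correct and essentially the same as the paper's. The paper factors the probability as $\Pr[v \in V_i] \cdot \Pr[V_i \text{ clean} \mid v \in V_i]$ and then applies a union bound on the conditional, whereas you subtract off $\sum_u \Pr[h_i(v)=1 \wedge h_i(u)=1]$ directly; both use pairwise independence and the bound $\card{V(\ME{i})} \leq k/4$ in the same way to arrive at $1/k - 1/4k = 3/4k$.
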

\begin{proof}
	Recall that there are at most $\opt/4\alpha = k/4$ vertices matched by $\ME{i}$. 	We have, 
	\begin{align*}
		\Pr_{V_i}\Paren{v \in V_i ~\textnormal{and $V_i$ is clean}} &= \Pr\paren{h_i(v) = 1} \cdot \Pr\paren{\text{$V_i$ is clean} \mid h_i(v)=1} \tag{$v \in V_i$ iff $h_i(v)=1$} \\
		&=  \frac1k \cdot \Paren{1-\Pr\paren{\text{$V_i$ is not clean}  \mid h_i(v)=1}} \tag{as $h_i(v) = 1$ w.p. $1/k$} \\
		&\geq \frac1k \cdot \paren{1-\sum_{u \in V(\ME{i})} \Pr\paren{h_i(u)=1 \mid h_i(v)=1}} \tag{by union bound and since $V_i$ is not clean iff $h_i(u)=1$ for some $u \in V(\ME{i})$} \\
		&= \frac1k \cdot \paren{1-\sum_{u \in V(\ME{i})} \frac1k} \tag{$h_i(\cdot)$ is a pairwise-independent hash function} \\
		&= \frac1k \cdot \paren{1-\frac{k}{4} \cdot \frac{1}{k}} \tag{as $h_i(u)=1$ w.p. $1/k$ and there are at most $k/4$ choices for matched vertices}, 
	\end{align*}
	which is at least $3/4k$ as desired. 
\end{proof}

\subsection{First Batch: Proof of~\Cref{lem:ms-1st-batch}}

	Let $v$ be any vertex in $V$ and consider any step $i \in [s]$. If $\degG{v}{i} < (\opt/8\alpha)$, then $v$ cannot be part of $\High{i}$ and subsequently $\High{s+1}$ since $\GE{s+1}$ is a subgraph of $\GE{i}$. 
	In the following, we consider the case where $\degG{v}{i} \geq (\opt/8\alpha)$ and prove that there is a non-trivial chance of ``progress'' (to be defined later) in each step. 
	We first bound the probability of the following useful event for our analysis. 
	
	\begin{claim}\label{clm:ms-1st-v}
		In step $i$, if $\degG{v}{i} \geq \opt/8 \alpha$, we have $\Pr_{V_i}\paren{v \in N(V_i) ~\textnormal{and $V_i$ is clean} } \geq \dfrac{1}{16}$.
	\end{claim}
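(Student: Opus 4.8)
\textbf{Proof proposal for Claim~\ref{clm:ms-1st-v}.}

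The plan is to decompose the event $\{v \in N(V_i) \text{ and } V_i \text{ is clean}\}$ using the high degree of $v$ in $\GE{i}$. Since $v$ need not itself lie in $V_i$, the relevant thing is whether at least one \emph{neighbor} of $v$ (in $\GE{i}$) lands in $V_i$, together with the cleanness of $V_i$. First I would restrict attention to the neighborhood $N_i(v)$ of $v$ inside $\GE{i}$, which has size $\degG{v}{i} \geq \opt/(8\alpha) = k/8$. Writing $A$ for the event that $V_i \cap N_i(v) \neq \emptyset$ and $B$ for the event that $V_i$ is clean (contains no matched vertex of $\ME{i}$), the goal is to lower bound $\Pr(A \wedge B) \geq \Pr(A) - \Pr(\neg B)$, so it suffices to show $\Pr(A)$ is bounded below by a constant comfortably larger than $\Pr(\neg B)$.

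For $\Pr(A)$, I would use the standard second-moment / inclusion–exclusion trick with pairwise independence, exactly as in the proof of Lemma~\ref{lem:nei-edge-sampler} and Lemma~\ref{lem:indexR}: let $d := \degG{v}{i} \geq k/8$ and consider the indicator $Y_u = \mathbb{I}[h_i(u)=1]$ for $u \in N_i(v)$. Then $\Pr(A) = \Pr(\sum_u Y_u \geq 1) \geq \sum_u \Pr(Y_u=1) - \sum_{u<u'} \Pr(Y_u = Y_{u'} = 1) = d/k - \binom{d}{2}/k^2$. Since the map $x \mapsto x - x^2/2$ is increasing on $[0,1]$ and $d/k \in [1/8, 1]$ here (using $\degG{v}{i} < \opt/(8\alpha)$ would be the small-degree case handled in the claim's hypothesis; for the genuine bound I only need $d/k \geq 1/8$, and if $d/k > 1$ one can truncate $N_i(v)$ to exactly $\lceil k/8 \rceil$ vertices before running this computation), we get $\Pr(A) \geq \tfrac{1}{8} - \tfrac{1}{2}\cdot\tfrac{1}{64} > \tfrac{1}{9}$, and with a slightly more careful truncation one obtains a clean constant. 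Then $\Pr(\neg B) \leq \sum_{u \in V(\ME{i})} \Pr(h_i(u)=1) \leq (k/4)\cdot(1/k) = 1/4$ by union bound and $\eventmatch(i)$ — but this is too large, so instead I would condition: actually the cleanest route is to mimic Claim~\ref{clm:ms-helper-1}. One conditions on the event $A$ that some neighbor of $v$ falls in $V_i$; by pairwise independence of $h_i$, conditioning on $h_i(u)=1$ for one such $u$ only raises $\Pr(h_i(w)=1)$ to at most... no — pairwise independence gives $\Pr(h_i(w)=1 \mid h_i(u)=1) = 1/k$ exactly for $w \neq u$. So $\Pr(\neg B \mid A)$: here $A$ is a union over the $u$'s, so I would instead just bound $\Pr(A \wedge \neg B) \leq \sum_{u \in N_i(v)} \sum_{w \in V(\ME{i})} \Pr(h_i(u)=1 \wedge h_i(w)=1) \leq d \cdot (k/4) \cdot (1/k^2)$, which is only small if $d \ll k$ — not our regime. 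The right move, as in Claim~\ref{clm:ms-helper-1}, is to pick a \emph{single} representative neighbor $u^* \in N_i(v)$ \emph{first}, demand $h_i(u^*) = 1$ (probability $1/k$), and then bound $\Pr(V_i \text{ not clean} \mid h_i(u^*)=1) \leq \sum_{w \in V(\ME{i})}\Pr(h_i(w)=1\mid h_i(u^*)=1) = (k/4)(1/k) = 1/4$ by pairwise independence — but that only yields $(1/k)(3/4)$, not a constant.

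So the main obstacle is getting a \emph{constant} lower bound rather than a $\Theta(1/k)$ one, which forces us to genuinely exploit the $\Omega(k)$ candidate neighbors simultaneously while controlling cleanness. The resolution I would pursue: condition on cleanness \emph{first}. By Claim~\ref{clm:ms-helper-1}'s computation (or directly), $\Pr(V_i \text{ clean}) \geq 1 - (k/4)(1/k) = 3/4$. Conditioned on $V_i$ being clean, the restriction of $h_i$ to $V \setminus V(\ME{i})$ is still pairwise independent with each vertex in $V_i$ with probability $\geq 1/k$ (cleanness only removes mass from matched vertices; one checks $\Pr(h_i(u)=1 \mid V_i \text{ clean}) = \Pr(h_i(u)=1, V_i\text{ clean})/\Pr(V_i \text{ clean}) \geq (1/k - (k/4)(1/k^2))/1 \geq 3/(4k)$ for unmatched $u$, and pairwise covariances only decrease), so the second-moment bound gives $\Pr(A \mid V_i \text{ clean}) \geq d\cdot\tfrac{3}{4k} - \binom{d}{2}\tfrac{1}{k^2} \geq \tfrac{3}{32} - \tfrac{1}{128} > \tfrac{1}{16}\cdot\tfrac{4}{3}$ after truncating $d$ to $\lceil k/8\rceil$. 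Multiplying, $\Pr(A \wedge B) = \Pr(A\mid B)\Pr(B) \geq \tfrac{1}{16}$, as claimed. I would present this last route in the writeup, being careful that the conditional pairwise-independence estimates are stated as inequalities (lower bounds on marginals, upper bounds on pairwise collisions), since exact conditional independence fails; the truncation of the neighborhood to exactly $\lceil \opt/(8\alpha)\rceil$ vertices keeps the $d/k$ ratio pinned near $1/8$ and makes the quadratic correction negligible.
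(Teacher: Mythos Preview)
Your final route lands on the right idea (truncate the neighborhood of $v$ to exactly $d:=\opt/(8\alpha)=k/8$ vertices, then use inclusion--exclusion), but one step is unjustified. You assert that under the conditioning on $B=\{V_i\text{ is clean}\}$, ``pairwise covariances only decrease,'' i.e.\ $\Pr(h_i(u)=h_i(w)=1\mid B)\leq 1/k^2$. With only pairwise independence of $h_i$, conditioning on an event involving \emph{other} coordinates can in principle increase such collision probabilities; all you can legitimately say is $\Pr(h_i(u)=h_i(w)=1\mid B)\leq \Pr(h_i(u)=h_i(w)=1)/\Pr(B)\leq 4/(3k^2)$. With this patched constant you still get $\Pr(A\mid B)\geq 3/32-1/96=1/12$ and hence $\Pr(A\wedge B)\geq (3/4)(1/12)=1/16$, so the argument survives, but the claim as written is not correct.

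The paper avoids this conditioning detour entirely. It applies Bonferroni directly to the \emph{joint} event: with $D(v)$ a set of $d=k/8$ neighbors of $v$ in $\GE{i}$,
\[
\Pr\!\bigl(D(v)\cap V_i\neq\emptyset\ \text{and}\ V_i\ \text{clean}\bigr)\ \geq\ \sum_{u\in D(v)}\Pr\!\bigl(u\in V_i\ \text{and}\ V_i\ \text{clean}\bigr)\ -\ \sum_{u\neq w\in D(v)}\Pr(u,w\in V_i).
\]
The first-order terms are each $\geq 3/(4k)$ by Claim~\ref{clm:ms-helper-1}, and the second-order (subtracted) terms are $\leq 1/k^2$ by \emph{unconditional} pairwise independence---dropping ``and clean'' from the subtracted events only makes them larger, which is the right direction. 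This gives $\tfrac{d}{k}\bigl(\tfrac34-\tfrac{d}{k}\bigr)=\tfrac18\cdot\tfrac58=\tfrac{5}{64}>\tfrac{1}{16}$. The conceptual difference from your approach is exactly that the paper folds cleanness into the first-order events (where Claim~\ref{clm:ms-helper-1} handles it) and drops it from the second-order ones, rather than conditioning globally on cleanness and then having to control conditional collision probabilities.
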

	\begin{proof}
		Let $d(v) := (\opt/8\alpha)$ and $D(v)$ be a set of $d$ arbitrary neighbors of $v$ in $\GE{i}$. We know that $v$ will be included in $N(V_i)$ if any of vertices in $D(v)$ is sampled in $V_i$. We have, 
		\begin{align*}
			\Pr_{V_i}\paren{v \in N(V_i) ~\textnormal{and $V_i$ is clean} } &\geq \Pr\paren{D(v) \cap V_i \neq \emptyset ~\text{and $V_i$ is clean}} \tag{$D(v) \subseteq N(v)$} \\
			&\geq \sum_{u \in D(v)} \Pr\paren{u \in V_i ~\text{and $V_i$ is clean}}  - \sum_{u\neq w \in D(v)} \Pr\paren{u,w \in V_i} \tag{by inclusion-exclusion principle and bounding 
			$\Pr\paren{u,w \in V_i} \geq \Pr\paren{u,w \in V_i~\text{and $V_i$ is clean}}$} \\
			&> \frac{3d(v)}{4k}  - \frac{d(v)^2}{k^2} \tag{by~\Cref{clm:ms-helper-1} and as $h_i(\cdot)$ is a pair-wise independent hash function with range $[k]$} \\
			&= \frac{(\opt/8\alpha)}{(\opt/\alpha)} \cdot \paren{\frac{3}{4} - \frac{(\opt/8\alpha)}{(\opt/\alpha)}}, \tag{as $d(v)=\opt/8\alpha$ and $k=\opt/\alpha$} 
		\end{align*}
		which is at least $1/16$ as desired. \Qed{\Cref{clm:ms-1st-v}}
		
	\end{proof}
	Let us now condition on the choice of $V_i$ and assume the event of~\Cref{clm:ms-1st-v} has happened. We say that this step $i$ is a \textbf{matching-step} if $N(V_i) > (\opt/2\alpha)$; otherwise, 
	we call this step a \textbf{vertex-step}. We argue that in a matching-step we have a constant probability of increasing the size of $\ME{i}$ by one and in a vertex-step we have a probability $\approx \alpha/\opt$ of matching the vertex $v$ and thus no longer  
	including it in $\GE{i+1}$ and $\High{i+1}$. We formalize this in the following. 
	
	\begin{claim}\label{clm:ms-1st-matching-step}
		Fix $V_i$ and suppose step $i$ is a \underline{matching-step} and the event of~\Cref{clm:ms-1st-v} has happened. Then, 
		\[
			\Pr_{e_i}\paren{e_i \in \ME{i+1} \mid V_i} \geq \frac{1}{3}. 
		\] 
	\end{claim}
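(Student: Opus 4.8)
The plan is to show that conditioned on the event of \Cref{clm:ms-1st-v}, in a matching-step the recovered edge $e_i$ has at least one endpoint already "available" in $V_i$ through $v$, and there is enough of $N(V_i)$ lying outside the matched vertices of $\ME{i}$ that with constant probability the sampled neighbor of $V_i$ is itself unmatched, so that $e_i$ can be greedily added to the matching. First I would recall the structure: $\NES(G,V_i)$ returns an edge $(u,w)$ where $u \in V_i$ and $w$ is a \emph{uniformly random} vertex of $N(V_i)$ (modulo the $1/100$ failure probability of the sketch and its negligible error probability, which we absorb into the high-probability conditioning of \Cref{rem:l0-err}). For $e_i$ to be included in $\ME{i+1}$, it suffices that $e_i \neq \perp$, that the $V_i$-side endpoint $u$ is unmatched by $\ME{i}$, and that the $N(V_i)$-side endpoint $w$ is unmatched by $\ME{i}$. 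We have conditioned on $V_i$ being clean, so \emph{every} vertex of $V_i$ is unmatched, which handles the $u$-side for free. Hence the only thing to control is the probability that the uniformly random vertex $w \in N(V_i)$ is unmatched, together with the $1/100$ failure chance of $\NES$.

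So the key step is to lower bound $|N(V_i) \setminus V(\ME{i})| / |N(V_i)|$. Here I would use the matching-step hypothesis $|N(V_i)| > \opt/2\alpha = k/2$ together with the bound that $\ME{i}$ has at most $\opt/8\alpha = k/4$ edges (this is the event $\eventmatch(i)$ we are conditioning on throughout the second-batch-style analysis, and in the first batch likewise; $|V(\ME{i})| \le 2 \cdot \opt/8\alpha = \opt/4\alpha = k/2$... wait, more carefully $|V(\ME{i})| \le 2|\ME{i}| < \opt/4\alpha$). Actually let me be a bit careful: $|\ME{i}| < \opt/8\alpha$ by $\eventmatch(i)$, so $|V(\ME{i})| < \opt/4\alpha = k/2$. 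Hmm, that is exactly the size lower bound on $|N(V_i)|$, which is not quite enough margin. I would instead note we may assume $|\ME{i}| < \opt/16\alpha$ as otherwise already $|\ME{i}| \ge \opt/16\alpha$ and we could have argued progress was made — or, more cleanly, the paper's constants are chosen so that in a matching-step $|N(V_i)| > \opt/2\alpha$ while $|V(\ME{i})| < \opt/4\alpha$, giving $|N(V_i) \setminus V(\ME{i})| > \opt/2\alpha - \opt/4\alpha = \opt/4\alpha > |N(V_i)|/2$. So a uniformly random $w \in N(V_i)$ lies outside $V(\ME{i})$ with probability $> 1/2$. Combining with the $\Lsampler$/$\NES$ not failing (probability $\ge 99/100$), we get $\Pr_{e_i}(e_i \in \ME{i+1} \mid V_i) \ge (99/100)\cdot(1/2) > 1/3$, matching the claim.

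The main obstacle I anticipate is bookkeeping the constants so that the margin $|N(V_i)| - |V(\ME{i})|$ is a definite constant fraction of $|N(V_i)|$; this requires using the precise thresholds ($\opt/2\alpha$ for matching-step, $\opt/8\alpha$ for $\eventmatch$) and the factor-of-two slack between $|\ME{i}|$ and $|V(\ME{i})|$, and then folding in the $1/100$ failure probability of $\NES$ without the product dropping below $1/3$. A secondary subtlety is making sure the conditioning is legitimate: the randomness of $(V_i, e_i)$ in step $i$ is independent of all prior steps (as emphasized in the remark preceding \Cref{clm:ms-helper-1}), so fixing $V_i$ and the prior matching $\ME{i}$ and then taking probability only over $e_i$ is valid, and $\ME{i}$, $\GE{i}$, $N(V_i)$ are all determined by that conditioning. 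Once these are in place the proof is a two-line computation.

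\begin{proof}[Proof of~\Cref{clm:ms-1st-matching-step}]
	We condition on the choice of $V_i$ and on the randomness of all prior steps (so $\ME{i}$, $\GE{i}$ and $N(V_i)$ are all fixed), and take the probability only over the randomness of $\NES(G,V_i)$ used to produce $e_i$; this randomness is independent of the prior steps.
	Recall that, conditioned on not failing, $\NES(G,V_i)$ outputs an edge $(u,w)$ with $u \in V_i$ and $w$ a uniformly random vertex of $N(V_i)$ (its error probability is negligible and absorbed into the high-probability event of~\Cref{rem:l0-err}).
	Since we are in a step with the event of~\Cref{clm:ms-1st-v}, the group $V_i$ is clean, i.e., no vertex of $V_i$ is matched by $\ME{i}$. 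Hence the endpoint $u \in V_i$ of $e_i$ is always unmatched by $\ME{i}$, and for $e_i$ to be greedily added to the matching (i.e., $e_i \in \ME{i+1}$) it suffices that $\NES$ does not fail and that $w \notin V(\ME{i})$.

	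Now, since $\eventmatch(i)$ holds we have $\card{\ME{i}} < \opt/8\alpha$, hence $\card{V(\ME{i})} < \opt/4\alpha$. Since step $i$ is a matching-step, $\card{N(V_i)} > \opt/2\alpha$. Therefore
	\[
		\card{N(V_i) \setminus V(\ME{i})} > \frac{\opt}{2\alpha} - \frac{\opt}{4\alpha} = \frac{\opt}{4\alpha} > \frac{1}{2} \cdot \card{N(V_i)}.
	\]
	As $w$ is chosen uniformly at random from $N(V_i)$, it lies outside $V(\ME{i})$ with probability more than $1/2$. Combining this with the fact that $\NES$ does not output FAIL with probability at least $99/100$,
	\[
		\Pr_{e_i}\paren{e_i \in \ME{i+1} \mid V_i} \geq \frac{99}{100} \cdot \frac{1}{2} > \frac{1}{3},
	\]
	as desired. \Qed{\Cref{clm:ms-1st-matching-step}}
\end{proof}
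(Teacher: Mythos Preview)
Your approach is essentially identical to the paper's, and the overall strategy is correct. However, there is a genuine slip in your displayed inequality chain:
\[
\card{N(V_i) \setminus V(\ME{i})} > \frac{\opt}{2\alpha} - \frac{\opt}{4\alpha} = \frac{\opt}{4\alpha} > \frac{1}{2} \cdot \card{N(V_i)}.
\]
The last inequality $\opt/4\alpha > \tfrac{1}{2}\,\card{N(V_i)}$ is false: by the matching-step hypothesis $\card{N(V_i)} > \opt/2\alpha$, so in fact $\tfrac{1}{2}\,\card{N(V_i)} > \opt/4\alpha$, the reverse direction. The problem is that you first lower-bounded $\card{N(V_i)}$ by $\opt/2\alpha$ and then tried to compare back to $\card{N(V_i)}$, which could be arbitrarily large.

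The fix is immediate: keep $\card{N(V_i)}$ intact and write
\[
\card{N(V_i) \setminus V(\ME{i})} \;\geq\; \card{N(V_i)} - \card{V(\ME{i})} \;>\; \card{N(V_i)} - \frac{\opt}{4\alpha} \;>\; \card{N(V_i)} - \frac{1}{2}\,\card{N(V_i)} \;=\; \frac{1}{2}\,\card{N(V_i)},
\]
where the last step uses $\opt/4\alpha < \tfrac{1}{2}\,\card{N(V_i)}$ since $\card{N(V_i)} > \opt/2\alpha$. With this correction your argument is complete and matches the paper's proof (the paper phrases the final bound as $1/2 - \delta_F > 1/3$ rather than $(1-\delta_F)\cdot 1/2 > 1/3$, but both are valid lower bounds).
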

	\begin{proof}
		As $N(V_i)$ contains more than $(\opt/2\alpha)$ vertices (as this is matching-step) while $\Measy$ has at most $(\opt/4\alpha)$ vertices (as $\eventmatch(i)$ has happened), we know that at least half the vertices in $N(V_i)$ are unmatched. Given that all of $V_i$ is also unmatched, 
		if $\NES(G,V_i)$ samples $e_i$ to any of the unmatched vertices in $N(V_i)$, we can include $e_i$ in $\ME{i+1}$ greedily. As the choice of $\NES(G,V_i)$ is uniform over $N(V_i)$, this event 
		happens with probability at least $(1/2 - \delta_F) > 1/3$, as desired (since $\delta_F=1/100$).  \Qed{\Cref{clm:ms-1st-matching-step}}
		
	\end{proof}
	
	\begin{claim}\label{clm:ms-1st-vertex-step}
		Fix $V_i$ and suppose step $i$ is a \underline{vertex-step} and the event of~\Cref{clm:ms-1st-v} has happened. Then, 
		\[
			\Pr_{e_i}\paren{v \in V(\ME{i+1}) \mid V_i} \geq \frac{\alpha}{\opt}. 
		\] 
	\end{claim}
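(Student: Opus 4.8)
The plan is to lower-bound the quantity $\Pr_{e_i}(v \in V(\ME{i+1}) \mid V_i)$ by the probability that the sampler $\NES(G,V_i)$ returns an edge incident on $v$ \emph{itself}, and then to observe that in this situation the greedy procedure is forced to match $v$ in step $i$.

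First I would unpack the conditioning. We have fixed $V_i$, we are in the regime of~\Cref{clm:ms-1st-v} (so $v \in N(V_i)$ and $V_i$ is clean, i.e.\ contains no endpoint of $\ME{i}$), and step $i$ is a vertex-step, i.e.\ $\card{N(V_i)} \leq \opt/2\alpha$. All of these are properties of $V_i$ (and of the randomness of prior steps), so the only fresh randomness is that of $e_i \sim \NES(G,V_i)$, whose internal randomness is independent of the hash function $h_i$ defining $V_i$. By~\Cref{lem:nei-edge-sampler}, with probability at least $1-\delta_F = 1-\tfrac{1}{100}$ (we can afford to ignore the $n^{-8}$ error term exactly as in~\Cref{clm:ms-1st-matching-step}) the sampler returns a genuine edge from $V_i$ to a vertex chosen uniformly at random from $N(V_i)$. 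Since $v \in N(V_i)$ and $\card{N(V_i)} \leq \opt/2\alpha$, this gives
\[
	\Pr_{e_i}\Paren{e_i \text{ is incident on } v \mid V_i} \;\geq\; \Paren{1-\tfrac{1}{100}} \cdot \frac{1}{\card{N(V_i)}} \;\geq\; \Paren{1-\tfrac{1}{100}} \cdot \frac{2\alpha}{\opt} \;>\; \frac{\alpha}{\opt}.
\]

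It then remains to argue that whenever $e_i$ is incident on $v$ we actually have $v \in V(\ME{i+1})$. Writing $e_i = (u,v)$ with $u \in V_i$: since $V_i$ is clean, $u$ is not matched by $\ME{i}$; and since we are in the case $\degG{v}{i} \geq \opt/8\alpha$, the vertex $v$ lies in $\GE{i}$ and is therefore also unmatched by $\ME{i}$. Hence when the greedy procedure processes $e_i$ both its endpoints are free, so $e_i$ is added and $v \in V(\ME{i+1})$. Combining this with the displayed inequality yields the claim. I do not anticipate any genuine obstacle here; the only point worth stating carefully is that the uniformity guarantee of $\NES$ over $N(V_i)$ survives the conditioning, because the vertex-step and cleanliness events constrain only $V_i$ and leave the independent internal randomness of $\NES(G,V_i)$ untouched.
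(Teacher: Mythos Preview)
Your proposal is correct and follows essentially the same argument as the paper: bound the probability that $\NES(G,V_i)$ outputs an edge with $v$ as its sampled endpoint (using uniformity over $N(V_i)$, the vertex-step bound $\card{N(V_i)} \leq \opt/2\alpha$, and $\delta_F=1/100$), and note that cleanliness of $V_i$ together with $v$ being unmatched forces the greedy step to add this edge. Your write-up is in fact a bit more explicit than the paper's about why $v$ itself is unmatched and why the conditioning does not affect the $\NES$ randomness, but the route is the same.
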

	\begin{proof}
		We know $v \in N(V_i)$ and that size of $N(V_i)$ is at most $(\opt/2\alpha)$. At the same time, since $V_i$ is clean, if $v$ is sampled as an endpoint of $e_i$ by $\NES(G,V_i)$, the edge $e_i$ will join the matching greedily and thus 
		$v$ will be matched. As the choice of $\NES(G,V_i)$ is uniform over $N(V_i)$ and $\delta_F =1/100$, 
		\[
			\Pr_{e_i}\paren{v \in V(\ME{i+1}) \mid V_i} \geq (1-\delta_F) \cdot \frac{1}{\card{N(V_i)}}  > \frac{\alpha}{\opt}. \Qed{\Cref{clm:ms-1st-vertex-step}}
		\]
		
	\end{proof}

	We can now conclude the proof of~\Cref{lem:ms-1st-batch} as follows. We have that at least half the steps are matching-steps or half of them are vertex-steps. We consider each case as follows. 
	
	\paragraph{When half the steps are matching-steps.} In this case, each matching-step $i$ increases size of $\ME{i}$ by one 
	with probability at least $(1/48)$ by~\Cref{clm:ms-1st-v,clm:ms-1st-matching-step}. Thus, 
	\[
		\Exp\card{\ME{s+1}} \geq (\frac s2) \cdot \frac{1}{48} = \frac{\opt^2}{(\alpha \cdot \log{n})^3} \cdot \frac{1}{48} \gg \opt/\alpha,
	\]
	given that $\opt \gg \alpha^2$ by~\Cref{assumption:guess}. Moreover, the distribution of $\ME{s+1}$ statistically dominates sum of $(s/2)$ Bernoulli random variables with mean $(1/48)$. 
	As such, by the Chernoff bound (\Cref{prop:chernoff}),
	\[
	\Pr\paren{\ME{s+1} < (\opt/8\alpha)} < \exp\paren{-\opt/\alpha} \ll 1/\poly{(n)},
	\]
	as $\opt \gg \alpha$ by~\Cref{assumption:guess}. This implies that $\eventmatch(s+1)$ happens, proving~\Cref{lem:ms-1st-batch} in this case. 
	
	\paragraph{When half the steps are vertex-steps.} In this case, each vertex-step $i$ can independently match the vertex $v$ with probability at least $(\alpha/16\,\opt)$ by~\Cref{clm:ms-1st-v,clm:ms-1st-vertex-step}. Thus, 
	\[
		\Pr\paren{v \in \High{s+1}} \leq (1-\frac{\alpha}{16\,\opt})^{s/2} \leq \exp\paren{-\frac{\alpha}{16\,\opt} \cdot  \frac{\opt^2}{2 \cdot (\alpha \cdot \log{n})^3}} < \exp\paren{-\frac{n^{\delta}}{32}} \ll 1/\poly{(n)},
	\]
	where we use $\opt \geq \alpha^2 \cdot n^{\delta}$ by~\Cref{assumption:range}. 
	Thus, with high probability $v$ will not be part of $\High{s+1}$. A union bound over all the vertices $v \in V$ then ensures that $\High{s+1}$ will be empty with high probability, thus proving~\Cref{lem:ms-1st-batch} in this case too.
	
	\smallskip
	
	\emph{Remark:} We note that the definition of matching-steps and vertex-steps are 
	tailored to individual vertices in $V$; however, even if one vertex leads to having at least half of the steps as matching-steps, we can apply the argument of first part and conclude the proof. Thus, when applying the second part of the argument, we can assume that all vertices lead to half of the steps being vertex-steps, and so we can union bound over all of them.

\subsection{Second Batch: Proof of~\Cref{lem:ms-2nd-batch}}

We now prove~\Cref{lem:ms-2nd-batch}. In the following, we condition on the event that $\High{s+1}$ (and $\High{i}$ for every $i \in (s,2s]$) is empty. 
Our goal is then to prove that at some step $i \in (s,2s]$, one of the events $\eventmatch(i)$ or $\eventsparsify(i)$ is not going to happen. 
The key to the proof of~\Cref{lem:ms-2nd-batch} (and~\Cref{lem:ms} itself) is the following. 

\begin{lemma}\label{lem:ms-increase}
	For any $i \in (s\,,\,2s]$, 
	\[
		\Pr_{(V_i,e_i)}\Paren{\ME{i+1} > \ME{i}} \geq \frac{\alpha^2 \cdot \log^3\!{n}}{4 \cdot \opt}.
	\] 
\end{lemma}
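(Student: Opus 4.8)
The plan is to fix a step $i \in (s,2s]$ and, following the remark just before~\Cref{clm:ms-helper-1}, work conditioned on the randomness of all earlier steps and on the events $\eventmatch(i)$ and $\eventsparsify(i)$, using only the fresh randomness of the pair $(V_i,e_i)$. Write $k := \opt/\alpha$. Since the recovery step feeds $e_i$ to a greedy matching, it suffices to lower bound the probability that $\NES(G,V_i)$ returns an edge $e_i = (u_i,v_i)$ with \emph{both} endpoints unmatched by $\ME{i}$, as this forces $\ME{i+1} \supsetneq \ME{i}$. Call $V_i$ \textbf{clean} if it contains no vertex of $V(\ME{i})$. If $V_i$ is clean then $u_i$ is automatically unmatched, a vertex of $N(V_i)$ is unmatched iff it lies in $N_{\GE{i}}(V_i)$ (the neighborhood of $V_i$ \emph{inside} $\GE{i}$), and moreover $\card{N(V_i)} \le \card{N_{\GE{i}}(V_i)} + \card{V(\ME{i})} < \card{N_{\GE{i}}(V_i)} + k/4$, since under $\eventmatch(i)$ there are fewer than $\opt/(4\alpha) = k/4$ matched vertices. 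As $\NES$ returns $v_i$ uniformly from $N(V_i)$ and fails with probability at most $\delta_F = 1/100$ (\Cref{lem:nei-edge-sampler}), setting $D := \card{N_{\GE{i}}(V_i)} \cdot \mathbf{1}\!\left[V_i \text{ clean}\right]$ we obtain
\[
\Pr_{(V_i,e_i)}\Paren{\ME{i+1} > \ME{i}} \;\ge\; (1-\delta_F)\cdot\Exp\!\left[\frac{D}{D+k/4}\right] \;\ge\; \frac{2(1-\delta_F)}{k}\cdot\Exp\!\left[\min\set{D,\,k/4}\right],
\]
using $x/(x+k/4) \ge (2/k)\min\set{x,k/4}$ for all $x \ge 0$. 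Since $\tfrac{\alpha^2\log^3 n}{4\opt} = \tfrac{\alpha\log^3 n}{4k}$, it therefore suffices to prove $\Exp\!\left[\min\set{D,k/4}\right] = \Omega(\alpha\log^3 n)$.

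The expectation of $D$ alone is comfortably large. We are conditioning on $\High{s+1}$ — hence $\High{i}$ — being empty, so every vertex $w$ of $\GE{i}$ has $\degG{w}{\GE{i}} < \opt/(8\alpha) = k/8$. An inclusion--exclusion over the $\GE{i}$-neighbors of $w$ (exactly as in~\Cref{clm:ms-1st-v}), together with the union-bound estimate of the loss from insisting $V_i$ be clean (exactly as in~\Cref{clm:ms-helper-1}), yields $\Pr\!\left[w \in N_{\GE{i}}(V_i) \text{ and } V_i \text{ clean}\right] = \Omega\!\left(\degG{w}{\GE{i}}/k\right)$. Summing over $w$ and invoking $\eventsparsify(i)$ (which gives $\card{E(\GE{i})} > 20\,\opt\log^4 n$) produces $\Exp[D] = \Omega\!\left(\card{E(\GE{i})}/k\right) = \Omega(\alpha\log^4 n)$ — a full $\log n$ factor above the target threshold $\alpha\log^3 n$.

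The crux is then to show that $\min\set{D,k/4}$, and not just $D$, is $\Omega(\alpha\log^3 n)$ in expectation, i.e. that truncating $D$ at $k/4$ costs at most one $\log n$ factor. Here the degree cap $\degG{\cdot}{\GE{i}} < k/8$ — precisely why the first batch was designed to empty $\High{s+1}$ — becomes essential: with only pairwise independence of $h_i$ it lets one control the second moment, bounding the collision sum $\sum_{w \ne w'} \Pr[w,w' \in N_{\GE{i}}(V_i)]$ by $\sum_x (\degG{x}{\GE{i}})^2/k + (2\card{E(\GE{i})}/k)^2 \le \tfrac14\card{E(\GE{i})} + O(\Exp[D]^2)$, hence $\mathrm{Var}(D) = O\!\left(\card{E(\GE{i})} + \Exp[D]^2\right)$. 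I would then split according to the degree profile of $\GE{i}$: when a constant fraction of $E(\GE{i})$ is spanned by vertices of degree below $\alpha\log^3 n$, the first collision term above shrinks by an extra $\approx\log n$ factor, so a Paley--Zygmund/second-moment argument gives $D \ge \alpha\log^3 n$ with constant probability; when instead a constant fraction of $E(\GE{i})$ touches $\Med{i}$, one argues that either $\card{\Med{i}}$ is large enough that $V_i$ meets $\Med{i}$ with constant probability — and then a single medium-degree vertex landing in $V_i$ already forces $\card{N_{\GE{i}}(V_i)} \ge \alpha\log^3 n$ — or $\Med{i}$ is small, a case handled more carefully using that its few vertices still carry many edges. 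In every case one concludes $\Exp\!\left[\min\set{D,k/4}\right] = \Omega(\alpha\log^3 n)$, which with the display above gives the lemma.

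I expect the genuine difficulty to lie in the constant-tracking of this last step across the whole range of $\opt$ — in particular the regime where $\GE{i}$ is only barely dense enough to violate $\eventsparsify(i)$ while $\opt \gg \alpha^2\,\mathrm{polylog}(n)$, so that $D$ is far from concentrated and the crude second-moment control of the $k/4$-truncation is delicate. Fortunately, as the paper notes, the argument has a full $\log n$ factor of slack ($\Exp[D] = \Omega(\alpha\log^4 n)$ against a target of $\alpha\log^3 n$), which should leave ample room. Downstream, \Cref{lem:ms-increase} plugs into the proof of~\Cref{lem:ms-2nd-batch} via the same stochastic-domination argument used for the first batch: over the $s$ steps of the second batch $\card{\Measy}$ dominates a sum of $s$ independent Bernoullis of mean $\ge \tfrac{\alpha^2\log^3 n}{4\opt}$, whose expectation is $\gg \opt/(8\alpha)$, so by a Chernoff bound $\card{\Measy} > \opt/(8\alpha)$ with high probability, forcing $\eventmatch(i)$ to fail for some $i$.
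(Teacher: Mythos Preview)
Your reduction to bounding $\Exp[\min\{D,k/4\}]$ is clean and correct, and your treatment of the low-degree case (edges spanned by $\Low{i}$) is essentially the paper's Case~$(ii)$ argument: the same covariance bound $\Cov[X_u,X_v] \le \com(u,v)/k + o(d_ud_v/k^2)$ and the same use of the degree cap $\alpha\log^3 n$ to control $\sum_{u\ne v}\com(u,v)$, after which Chebyshev and a union bound with cleanness finish. No bucketing is actually needed there, as you say.

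The genuine gap is your medium-degree subcase with $|\Med{i}|$ small. Your placeholder ``handled more carefully using that its few vertices still carry many edges'' does not supply an argument, and the naive one fails badly: hitting $\Med{i}$ with probability $\Theta(|\Med{i}|/k)$ and then invoking $D \ge \alpha\log^3 n$ yields only $\Exp[\min\{D,k/4\}] \ge \Omega(|\Med{i}|/k)\cdot\alpha\log^3 n$, and from $\sum_{v\in\Med{i}} d_v \ge \Omega(\opt\log^4 n)$ together with $d_v < k/8$ you only get $|\Med{i}| \ge \Omega(\alpha\log^4 n)$, hence a bound of $\Omega(\alpha^3\log^7 n/\opt)$ --- short of the target $\Omega(\alpha\log^3 n)$ by a factor $\opt/(\alpha^2\log^4 n) \ge n^{\delta}/\log^4 n$. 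Your second-moment bound does not rescue this case either: with only $d_x < k/8$ one gets $\sum_x d_x^2/k \le |E(\GE{i})|/4$, which swamps $\Exp[D]^2$ precisely when $|E(\GE{i})|$ is near its lower bound $20\,\opt\log^4 n$ and $\opt \gg \alpha^2\,\polylog n$. So this is not ``constant-tracking''; it is a missing idea.

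What closes the gap --- and what the paper does --- is a dyadic bucketing of degrees in $\Med{i}$ (this is the content of~\Cref{clm:ms-geasy-large}): pigeonholing over $O(\log n)$ degree classes produces some $d\in[\alpha\log^3 n, k/8)$ with at least $\opt\log^3 n/(2d)$ vertices of degree $\ge d$ in $\Med{i}$. Now the balance works: $\Pr[V_i\text{ hits this set and is clean}] \ge \Omega(\alpha\log^3 n/d)$, and on that event $D \ge d$, so $\Exp[\min\{D,k/4\}] \ge \Omega(\alpha\log^3 n)$. Equivalently, one can run a layer-cake integral $\Exp[\min\{D,k/4\}]=\int_0^{k/4}\Pr[D\ge t]\,dt$ and bound $\Pr[D\ge t]$ by $\min\{m_t,k/4\}/(2k)$ where $m_t = |\{v\in\Med{i}: d_v\ge t\}|$; since $\int_0^{k/8} m_t\,dt = \sum_{v\in\Med{i}} d_v \ge \Omega(\opt\log^4 n)$, the integral evaluates to $\Omega(\alpha\log^4 n)$ whenever $|\Med{i}|\le k/4$. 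Either route spends exactly the $\log n$ slack you observed --- but you have to spend it here, not merely note that it exists.
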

 
We first identify a simple structure in the graph $\GE{i}$. The following claim is based on a standard low-degree orientation of the graph plus geometric grouping 
of degrees of vertices.

\begin{claim}\label{clm:ms-geasy-large}
	\underline{At least one} of the following two conditions is true about $\GE{i}$: 
	\begin{enumerate}[label=$(\roman*)$]
		\item for some $d \in \left[\alpha \cdot \log^3\!{n},\dfrac{\opt}{8\alpha}\right)$, there are $\left( \dfrac{\opt \cdot \log^3{n}}{2d} \right)$ vertices $v$ in $\Med{i}$ with $\degG{v}{i} \geq d$;  
		\item for some $d \in [1,\alpha \cdot \log^3\!{n})$, there are $\left(\dfrac{19\,\opt\cdot\log^3{n}}{2d} \right)$ vertices in $\Low{i}$ with at least $d$ neighbors in $\Low{i}$. 
	\end{enumerate}
\end{claim}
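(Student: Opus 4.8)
The claim concerns the structure of the graph $\GE{i}$ under the conditioning that $\eventmatch(i)$ and $\eventsparsify(i)$ both hold, and that $\High{i}$ is empty (since we are in the second batch). So $\GE{i}$ has more than $(20\,\opt \cdot \log^4\!{n})$ edges, every vertex has degree less than $\opt/8\alpha$, and the matching $\ME{i}$ has fewer than $\opt/8\alpha$ edges (hence at most $\opt/4\alpha$ matched vertices). The plan is to show that the edge-count lower bound forces a ``dense'' substructure that we can exploit, and the dichotomy $(i)$ vs.\ $(ii)$ simply records whether this substructure lives among medium-degree or low-degree vertices.

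**Main steps.** First I would orient the edges of $\GE{i}$ using a standard low-degree acyclic orientation: process vertices in order of increasing degree, and orient each edge toward the vertex that comes later; this gives each vertex out-degree at most its degree in $\GE{i}$, and more importantly lets us charge each edge to exactly one endpoint (its tail). Since there are at least $20\,\opt\log^4{n}$ edges, summing out-degrees gives $\sum_v \text{outdeg}(v) \geq 20\,\opt\log^4{n}$. Now partition the vertices by which degree-regime they fall in: vertices in $\High{i}$ (empty, contributes $0$), vertices in $\Med{i}$, and vertices in $\Low{i}$. A vertex in $\Low{i}$ has out-degree $< \alpha\log^3{n}$, so there are at most $n$ such vertices contributing at most $n \cdot \alpha\log^3{n}$ edges; one checks using the range assumptions ($\opt \geq \alpha^2 n^\delta$, so $\opt\log^4 n \gg n\alpha\log^3 n$ is \emph{not} automatic — actually this needs care) — hmm, the cleaner split is: either at least half the $20\,\opt\log^4 n$ out-edges have tails in $\Med{i}$, or at least half (in fact at least $19/20$ after a more careful accounting, matching the constant $19$ in part $(ii)$) have tails in $\Low{i}$, where in the latter case the tail and all its out-neighbors must themselves be low-degree (since the orientation points to later/higher-degree vertices only among comparable-degree vertices — actually the out-neighbors of a $\Low{i}$ vertex need not be $\Low{i}$; this subtlety is exactly why part $(ii)$ says ``$d$ neighbors \emph{in} $\Low{i}$'' and the constant degrades to $19$ rather than $20$, absorbing the edges from low- to medium-degree vertices).

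**The geometric-grouping step.** Having identified, say, that $\ge 10\,\opt\log^4 n$ edges have their tail in $\Med{i}$, I bucket these tail-vertices by out-degree into dyadic ranges $[2^j, 2^{j+1})$ for $j$ ranging over roughly $\log(\opt/8\alpha)$ values. By pigeonhole one bucket, say around value $d$, accounts for at least a $\Omega(1/\log n)$ fraction of those edges, i.e.\ at least $\approx 10\,\opt\log^4 n / \log n = 10\,\opt\log^3 n$ edges; since each such vertex has out-degree $< 2d$, the number of such vertices is at least $\approx 10\,\opt\log^3 n / (2d) = 5\,\opt\log^3 n / d$, comfortably giving the bound $\opt\log^3{n}/2d$ demanded in $(i)$ (slack built in so the exact constants work out). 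The same argument in the low-degree case, using the $19/20$-fraction, yields $\ge 19\,\opt\log^3 n/(2d)$ vertices with $\ge d$ low-degree neighbors, which is part $(ii)$. The dyadic-bucket pigeonhole is what makes ``$\log^4 n$ edges'' turn into ``$\log^3 n$ vertices at a common degree level.''

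**Anticipated obstacle.** The genuinely delicate point is the bookkeeping in the orientation step: out-neighbors of a $\Low{i}$-vertex can be medium-degree, so one cannot directly say ``half the edges are internal to $\Low{i}$.'' The fix is to route those low-to-medium edges into the medium-degree count — charge an oriented edge $u \to w$ to $w$ when $w \in \Med{i}$ even if $u \in \Low{i}$ — and show the total number of edges with a medium-degree endpoint is either already large enough for case $(i)$, or else the remaining $\ge 19/20$ of the edges are genuinely internal to $\Low{i}$, giving case $(ii)$ with the degraded constant $19$. Getting this split right, and confirming the exact constants $\opt\log^3 n/2d$ and $19\,\opt\log^3 n/2d$ survive the pigeonhole loss of a $\Theta(\log n)$ factor (using that we started from $20\,\opt\log^4 n$ and $\log(\opt/8\alpha) \le \log n$), is the main calculation; everything else is routine.
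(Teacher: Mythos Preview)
Your proposal is correct and lands on essentially the same argument as the paper, but the acyclic orientation you introduce is an unnecessary detour that you yourself correctly patch away in the ``anticipated obstacle'' paragraph. The paper's proof never orients anything: since $\High{i}$ is empty, every edge of $\GE{i}$ is either incident on $\Med{i}$ or lies entirely inside $\Low{i}$, and the paper simply splits $20 = 1 + 19$ on that basis (either $\geq \opt\log^4 n$ edges touch $\Med{i}$, or $\geq 19\,\opt\log^4 n$ are internal to $\Low{i}$), then applies the dyadic degree-bucketing and pigeonhole exactly as you describe. Your fix of ``charging any edge with a $\Med{i}$ endpoint to that endpoint'' is precisely this observation, so the orientation machinery contributes nothing. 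One minor inconsistency: your constant arithmetic in the geometric-grouping paragraph uses $10\,\opt\log^4 n$ from the half--half split you later abandon; with the $1$--$19$ split you actually settle on, the $\Med{i}$ case gets $\opt\log^4 n$ edges, and after losing one $\log n$ to the pigeonhole over $\leq \log n$ dyadic buckets you get exactly $\opt\log^3 n /(2d)$ vertices of degree $\geq d$, matching the claim on the nose.
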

\begin{proof}
	Given that $\High{i}$ is empty, any edge in $\GE{i}$ is either incident on $\Med{i}$ or is between two vertices in $\Low{i}$. Consequently, given that by $\eventsparsify(i)$, we have at least 
	$(20\,\opt\cdot\log^4\!{n})$ edges in $\GE{i}$, there are either at least $(\opt \cdot \log^4\!{n})$ edges incident on $\Med{i}$ or $(19 \cdot \opt \cdot \log^4\!{n})$ edges entirely inside $\Low{i}$. 
	We prove that each case corresponds to one of the conditions in the claim.
	
	\paragraph{When $\geq (\opt \cdot \log^4\!{n})$ edges are incident on $\Med{i}$.} We partition vertices in $\Med{i}$ into sets $\set{D_j}$ where each $D_j$ contains  vertices $v$ with $\degG{v}{i} \in [2^j , 2^{j+1})$. 
	As such, 
	\begin{align*}
		\sum_{j} \card{D_j} \cdot 2^{j+1} \geq \text{\# edges incident on $\Med{i}$} \geq \opt \cdot \log^4\!{n}.
	\end{align*}
	As there are at most $\log{n}$ choices for $j$ in the summation above,  we should have some $D_{j^*}$ with 
	\[
	\card{D_{j^*}} \geq \dfrac{\opt \cdot \log^3\!{n}}{2^{j^*+1}}.
	\]
	Setting $d = 2^{j^*}$ and returning (a subset of) $D_{j^*}$ satisfies the bound in part $(i)$ of the claim: all vertices in $D_{j^*} \subseteq \Med{i}$ have $\degG{\cdot}{i}$ in $[\alpha \cdot \log^3\!{n},\dfrac{\opt}{8\alpha})$ by definition of $\Med{i}$, and we can pick
	a subset of $D_{j^*}$ with size prescribed by the claim as all vertices in $D_{j^*}$ have degree $d$ at least. 

	\paragraph{When $\geq (19\,\opt \cdot \log^4\!{n})$ edges are entirely inside $\Low{i}$.} The argument is almost identical to the above part by counting the degree of vertices in $\Low{i}$ but 
	only in $\Low{i}$ (instead of all of $\degG{\cdot}{i}$ as in the previous part). We partition vertices in $\Low{i}$ into sets $\set{D_j}$ where each $D_j$ contains all vertices with number of neighbors in $\Low{i}$ in $[2^j,2^{j+1})$. 
	As such, 
	\begin{align*}
		\sum_{j} \card{D_j} \cdot 2^{j+1} \geq \text{\# edges entirely inside $\Low{i}$} \geq 19\,\opt \cdot \log^4\!{n}.
	\end{align*}
	As there are at most $\log{n}$ choices for $j$ in the summation above, we should have some $D_{j^*}$ with 
	\[
	\card{D_{j^*}} \geq \dfrac{19\,\opt \cdot \log^3\!{n}}{2^{j+1}}.
	\]
	Setting $d = 2^{j^*}$ and returning (a subset of) $D_{j^*}$ satisfies the bound in part $(ii)$ of the claim: all vertices in $D^* \subseteq \Low{i}$ have degree less than $(\alpha \cdot \log^3\!{n})$ by the definition of $\Low{i}$ 
	(even in $\GE{i}$ and so between $\Low{i}$ also) and 
	we can pick a subset of $D_{j^*}$ with the required size as vertices in $D_{j^*}$ have degree $d$ at least. \Qed{\Cref{clm:ms-geasy-large}}
	
\end{proof}

In the following, we refer to a step $i \in (s,2s]$ as a \textbf{$\bm{V_i}$-step} whenever case $(i)$ of~\Cref{clm:ms-geasy-large} happens and a \textbf{$\bm{N(V_i)}$-step} otherwise. We will show that: 
\begin{itemize}
	 \item In a \textbf{$\bm{V_i}$-step}, we have ``enough'' large degree vertices and even if we sample one of them in $V_i$ it will make the intersection of $N(V_i)$ and $\GE{i}$ large;
	\item In a \textbf{$\bm{N(V_i)}$-step}, we have ``so many'' low degree vertices in $\GE{i}$ that many of them will appear in $N(V_i)$ and thus there is a large intersection 
between $N(V_i)$ and $\GE{i}$ again. 
\end{itemize}
In each case, we can  finalize the proof by showing that having $N(V_i)$ intersect largely with $\GE{i}$ allows us to recover an edge $e_i$ via $\NES(G,V_i)$ 
that can increase size of $\ME{i}$ with sufficiently large probability. 

\subsubsection*{Case $(i)$ of~\Cref{clm:ms-geasy-large}: \textbf{$\bm{V_i}$-steps}} 
Let 
\begin{align}
d \in [\alpha\log^3\!{n},\dfrac{\opt}{8\alpha}) \quad \text{and} \quad D \subseteq \Med{i} \quad \text{with} \quad \card{D} = \frac{\opt\cdot\log^3{n}}{2d} \label{eq:ms-Vsteps-1}
\end{align}
be, respectively, the degree-parameter and corresponding set guaranteed by Case $(i)$ of~\Cref{clm:ms-geasy-large}. The following claim lower bounds the probability that $V_i$ is both clean and  samples a vertex from $D$. 

\begin{claim}\label{clm:ms-Vstep-1}
	$\Pr_{V_i}\paren{V_i \cap D \neq \emptyset~\textnormal{and $V_i$ is clean} } \geq \dfrac{\alpha \cdot \log^3\!{n}}{8d}$. 
\end{claim}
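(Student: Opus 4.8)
The plan is to prove \Cref{clm:ms-Vstep-1} by the same inclusion–exclusion (Bonferroni lower bound) argument used in the proof of \Cref{clm:ms-1st-v}, now applied to the whole set $D$ rather than to a single fixed neighborhood. Write $B$ for the event that $V_i$ is clean, and for each $u \in D$ let $A_u$ be the event $u \in V_i$, i.e. $h_i(u) = 1$. Then $\set{V_i \cap D \neq \emptyset \text{ and } V_i \text{ clean}} = \bigcup_{u \in D}(A_u \cap B)$, and the Bonferroni inequality gives
\[
\Pr\Paren{V_i \cap D \neq \emptyset \text{ and $V_i$ clean}} \geq \sum_{u \in D}\Pr(A_u \cap B) \;-\; \sum_{u \neq w \in D}\Pr(A_u \cap A_w),
\]
where in the negative terms I have bounded $\Pr(A_u \cap A_w \cap B) \leq \Pr(A_u \cap A_w)$ to drop the conditioning on $B$.

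For the positive terms I would invoke \Cref{clm:ms-helper-1}: its proof uses only that $h_i$ is pairwise independent and that $\ME{i}$ matches at most $k/4$ vertices, and the latter holds here because we condition on $\eventmatch(i)$, which gives $\card{\ME{i}} < \opt/8\alpha$ and hence fewer than $\opt/4\alpha = k/4$ matched vertices. So $\Pr(A_u \cap B) \geq 3/(4k)$ for every $u \in D \subseteq \Med{i} \subseteq \GE{i}$, even though \Cref{clm:ms-helper-1} is phrased for the first batch. For the negative terms, pairwise independence gives $\Pr(A_u \cap A_w) = 1/k^2$ exactly, so with $\binom{\card{D}}{2} \leq \card{D}^2/2$,
\[
\Pr\Paren{V_i \cap D \neq \emptyset \text{ and $V_i$ clean}} \geq \frac{3\card{D}}{4k} - \frac{\card{D}^2}{2k^2}.
\]

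Finally I would substitute $\card{D} = \opt\log^3\!{n}/(2d)$ and $k = \opt/\alpha$ (so that $\card{D}/k = \alpha\log^3\!{n}/(2d)$), which turns the right-hand side into $\frac{3\alpha\log^3\!{n}}{8d} - \frac{\alpha^2\log^6\!{n}}{8d^2}$, and then absorb the quadratic error term: since we are in the regime $d \geq \alpha\log^3\!{n}$ (part $(i)$ of \Cref{clm:ms-geasy-large} and \Cref{eq:ms-Vsteps-1}), we have $\alpha\log^3\!{n}/d \leq 1$, hence $\alpha^2\log^6\!{n}/d^2 \leq \alpha\log^3\!{n}/d$, and the bound is at least $\frac{3\alpha\log^3\!{n}}{8d} - \frac{\alpha\log^3\!{n}}{8d} = \frac{\alpha\log^3\!{n}}{4d} \geq \frac{\alpha\log^3\!{n}}{8d}$, as required. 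There is no genuine obstacle here; the only point needing a sentence of care is the transfer of \Cref{clm:ms-helper-1} to steps $i \in (s,2s]$, which is immediate once one observes that its proof depends only on the matched-vertex count that $\eventmatch(i)$ still guarantees, and the rest is a routine substitution plus the bookkeeping of the Bonferroni bound.
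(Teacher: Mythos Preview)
Your proposal is correct and follows essentially the same approach as the paper: Bonferroni inclusion--exclusion, \Cref{clm:ms-helper-1} for the first-order terms, pairwise independence for the collision terms, then substitute $\card{D}$ and $k$ and use $d \geq \alpha\log^3\!{n}$ to absorb the quadratic error. Your extra remark that \Cref{clm:ms-helper-1} is literally stated for $i \in [s]$ but carries over verbatim to the second batch (since its proof only needs $\card{V(\ME{i})} \leq k/4$, which $\eventmatch(i)$ guarantees) is a point the paper glosses over, so your write-up is if anything slightly more careful.
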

\begin{proof}
	We have, 
\begin{align*}
	\Pr\paren{V_i \cap D \neq \emptyset ~\textnormal{and $V_i$ is clean}} &\geq \sum_{v \in D} \Pr\paren{v \in V_i ~\textnormal{and $V_i$ is clean}} \, - \sum_{u \neq w \in D} \Pr\paren{u,w \in V_i} 
	\tag{by inclusion-exclusion principle and dropping the `intersection' from the second event} \\
	&\geq \card{D} \cdot \frac{3}{4k}  - \card{D}^2 \cdot \frac{1}{k^2} \tag{by~\Cref{clm:ms-helper-1} and as $h_i(\cdot)$ is a pair-wise independent hash function with range $[k]$} \\
	&= \frac{\opt\cdot\log^3{n}}{2d} \cdot \frac{\alpha}{\opt} \cdot \paren{\frac{3}{4} - \frac{\opt\cdot\log^3{n} \cdot \alpha}{2d \cdot \opt}} \tag{by the choice of $k=\opt/\alpha$ and size of $D$ in~\Cref{eq:ms-Vsteps-1}} \\
	&\geq \frac{\alpha\cdot\log^3{n}}{8d},
\end{align*}
as $d \geq \alpha\log^3\!{n}$ by~\Cref{eq:ms-Vsteps-1}.  \Qed{\Cref{clm:ms-Vstep-1}}

\end{proof}

Let us now condition on the choice of $V_i$ and assume the event of~\Cref{clm:ms-Vstep-1} happens. Given that any vertex in $D$ already has $d$ neighbors in $\GE{i}$, 
we have that $N(V_i) \cap \GE{i}$ has size at least $d$ in this case. On the other hand, $N(V_i)$ can have at most $(\opt/4\alpha)$ neighbors outside $\GE{i}$ by the bound on the total number of 
matched vertices by~$\eventmatch(i)$. As the choice of $e_i$ from $\NES(G,V_i)$ is uniform over $N(V_i)$, we have, 
\begin{align*}
	\Pr_{e_i}\paren{\text{$e_i$ is from $V_i$ to $N(V_i) \cap \GE{i}$} \mid V_i} &\geq (1-\delta_F) \cdot \frac{\card{N(V_i) \cap \GE{i}}}{\card{N(V_i)}} \\
	&\geq (1-\delta_F) \cdot \frac{d}{(\opt/4\alpha) + d} \\
	&\geq (1-\delta_F) \cdot \frac{d \cdot 8\alpha}{3\opt} \tag{as $d \leq (\opt/8\alpha)$ in~\Cref{eq:ms-Vsteps-1}} \\
	&\geq \frac{2d \cdot \alpha}{\opt},
\end{align*}
as $\delta_F < 1/4$. Given that all of $V_i$ is also unmatched (as $V_i$ is clean by conditioning on the event of~\Cref{clm:ms-Vstep-1}), 
we can include $e_i$ in $\ME{i+1}$ greedily whenever $e_i$ is between $V_i$ and $N(V_i) \cap \GE{i}$.

Consequently, combining the two events above, we have, 
\begin{align*}
	\Pr_{(V_i,e_i)}\Paren{\ME{i+1} > \ME{i}} &\geq \Pr_{V_i}\paren{V_i \cap D \neq \emptyset~\textnormal{and $V_i$ is clean} } \cdot \Pr_{e_i}\paren{\text{$e_i$ is from $V_i$ to $N(V_i) \cap \GE{i}$} \mid V_i} \\
	&\geq  \dfrac{\alpha\log^3\!{n}}{8d} \cdot \frac{2d \cdot \alpha}{\opt} = \frac{\alpha^2\log^3\!{n}}{4\,\opt}.
\end{align*}
This concludes the proof of~\Cref{lem:ms-increase} in this case.

\subsubsection*{Case $(ii)$ of~\Cref{clm:ms-geasy-large}: \textbf{$\bm{N(V_i)}$-steps}}

 Let 
\begin{align}
d \in [1,\alpha\log^3\!{n}) \quad \text{and} \quad D \subseteq \Low{i} \quad \text{with} \quad \card{D} = \frac{19\,\opt\cdot\log^3{n}}{2d} \label{eq:ms-NVsteps-1}
\end{align}
be, respectively, the degree-parameter and corresponding set guaranteed by Case $(ii)$ of~\Cref{clm:ms-geasy-large}. 
For the rest of this analysis, we focus only on the subgraph of $\GE{i}$ induced on vertices of $\Low{i}$ and for each  $v \in D$, we pick exactly $d$ (arbitrary) neighbors from $\Low{i}$ and 
denote them by $\NL(v)$.
Our goal is to show that $N(V_i)$ and $\GE{i}$  intersect largely.
We will do so by counting the elements in $D$ that have neighbors in $V_i$. This works because $D \subseteq \GE{i}$ and having neighbors in $V_i$ means that the vertex itself is in $N(V_i)$.

For any vertex $v \in D$, define an indicator random variable $X_v \in \set{0,1}$ which is $1$ iff $\NL(v) \cap V_i \neq \emptyset$ (see \Cref{fig:MorS_a}). Notice that $X= \sum_{v \in D} X_v$ is a random variable that denotes the number of vertices $v$ in $D$ that have a neighbor in $\NL(v)$ that belongs to $V_i$.
Note that we do not consider all neighbors of $v$ in $\Low{i}$, only the ones in $NL(v)$; this is okay since we just need a lower bound on $\card{N(V_i) \cap \GE{i}}$.
It is easy to see that $X \leq \card{N(V_i) \cap \GE{i}}$ since $v$ contributes to $\card{N(V_i) \cap \GE{i}}$ if $X_v=1$ (see \Cref{fig:MorS_b}).
We first bound the probability of the event $\NL(v) \cap V_i \neq \emptyset$.

\begin{figure}[h!]
	\centering
	\subcaptionbox{This figure shows the set of vertices $\Low{i}$ and its subset $D$. $V_i$ (in blue) is the set of sampled vertices in step $i$. For a vertex $u$ in $D$ we define a set of neighbors $\NL(u)$ which if intersects with $V_i$ then we have random variable $X_u=1$.  \label{fig:MorS_a}}%
	[.45\linewidth]{ 	\resizebox{160pt}{110pt}{\tikzset{every picture/.style={line width=0.75pt}} 

\begin{tikzpicture}[x=0.75pt,y=0.75pt,yscale=-1,xscale=1]

\draw   (217,17) -- (408,17) -- (408,187) -- (217,187) -- cycle ;
\draw  [color={rgb, 255:red, 25; green, 2; blue, 208 }  ,draw opacity=1 ] (266,156.5) .. controls (266,125.85) and (313.23,101) .. (371.5,101) .. controls (429.77,101) and (477,125.85) .. (477,156.5) .. controls (477,187.15) and (429.77,212) .. (371.5,212) .. controls (313.23,212) and (266,187.15) .. (266,156.5) -- cycle ;
\draw   (231,86) .. controls (231,66.12) and (247.12,50) .. (267,50) .. controls (286.88,50) and (303,66.12) .. (303,86) .. controls (303,105.88) and (286.88,122) .. (267,122) .. controls (247.12,122) and (231,105.88) .. (231,86) -- cycle ;
\draw   (283,98) .. controls (283,95.51) and (285.01,93.5) .. (287.5,93.5) .. controls (289.99,93.5) and (292,95.51) .. (292,98) .. controls (292,100.49) and (289.99,102.5) .. (287.5,102.5) .. controls (285.01,102.5) and (283,100.49) .. (283,98) -- cycle ;
\draw   (306.5,98) .. controls (306.5,85.3) and (312.66,75) .. (320.25,75) .. controls (327.84,75) and (334,85.3) .. (334,98) .. controls (334,110.7) and (327.84,121) .. (320.25,121) .. controls (312.66,121) and (306.5,110.7) .. (306.5,98) -- cycle ;
\draw    (287.5,102.5) -- (320.25,121) ;
\draw    (287.5,93.5) -- (320.25,75) ;

\draw (274,82) node [anchor=north west][inner sep=0.75pt]   [align=left] {$u$};
\draw (337,71) node [anchor=north west][inner sep=0.75pt]   [align=left] {$\NL(u)$};
\draw (477,169) node [anchor=north west][inner sep=0.75pt]   [font=\large,color={rgb, 255:red, 25; green, 2; blue, 208 }  ,opacity=1 ] [align=left] {$V_i$};
\draw (415,25) node [anchor=north west][inner sep=0.75pt]   [align=left] {$\Low{i}$};
\draw (235,40) node [anchor=north west][inner sep=0.75pt]   [align=left] {$D$};

\end{tikzpicture}}}
	\hspace{0.4cm}
	\subcaptionbox{This figure shows $X \leq \card{N(V_i) \cap \GE{i}}$. The vertices $u_j$ are in $\Low{i}$. Notice that $X_{u_1}=1,X_{u_2}=1,X_{u_3}=0$ and $X_{u_4}$ is not defined so we have $X=2$. But $\card{N(V_i) \cap \GE{i}}=~3$ since $u_1,u_2$ and $u_4$ contribute to it.  \label{fig:MorS_b}}%
	[.45\linewidth]{ \resizebox{200pt}{100pt}{\tikzset{every picture/.style={line width=0.75pt}} 

\begin{tikzpicture}[x=0.75pt,y=0.75pt,yscale=-1,xscale=1]

\draw   (153,20) -- (441,20) -- (441,190) -- (153,190) -- cycle ;
\draw [color={rgb, 255:red, 25; green, 2; blue, 208 }  ,draw opacity=1 ]   (265,131.5) .. controls (265,100.85) and (312.23,76) .. (370.5,76) .. controls (428.77,76) and (476,100.85) .. (476,131.5) .. controls (476,162.15) and (428.77,187) .. (370.5,187) .. controls (312.23,187) and (265,162.15) .. (265,131.5) -- cycle ;
\draw   (206.5,82.25) .. controls (206.5,55.6) and (228.1,34) .. (254.75,34) .. controls (281.4,34) and (303,55.6) .. (303,82.25) .. controls (303,108.9) and (281.4,130.5) .. (254.75,130.5) .. controls (228.1,130.5) and (206.5,108.9) .. (206.5,82.25) -- cycle ;
\draw   (283,71) .. controls (283,68.51) and (285.01,66.5) .. (287.5,66.5) .. controls (289.99,66.5) and (292,68.51) .. (292,71) .. controls (292,73.49) and (289.99,75.5) .. (287.5,75.5) .. controls (285.01,75.5) and (283,73.49) .. (283,71) -- cycle ;
\draw   (306.5,71) .. controls (306.5,58.3) and (312.66,48) .. (320.25,48) .. controls (327.84,48) and (334,58.3) .. (334,71) .. controls (334,83.7) and (327.84,94) .. (320.25,94) .. controls (312.66,94) and (306.5,83.7) .. (306.5,71) -- cycle ;
\draw    (287.5,75.5) -- (320.25,94) ;
\draw    (287.5,66.5) -- (320.25,48) ;
\draw   (248,102) .. controls (248,99.51) and (250.01,97.5) .. (252.5,97.5) .. controls (254.99,97.5) and (257,99.51) .. (257,102) .. controls (257,104.49) and (254.99,106.5) .. (252.5,106.5) .. controls (250.01,106.5) and (248,104.49) .. (248,102) -- cycle ;
\draw   (276.41,98.54) .. controls (281.61,93.04) and (288.64,91.24) .. (292.12,94.53) .. controls (295.6,97.81) and (294.2,104.94) .. (289,110.44) .. controls (283.8,115.94) and (276.77,117.73) .. (273.3,114.45) .. controls (269.82,111.16) and (271.21,104.04) .. (276.41,98.54) -- cycle ;
\draw    (252.5,106.5) -- (273.3,114.45) ;
\draw    (252.5,97.5) -- (288,91.5) ;
\draw   (222,75) .. controls (222,72.51) and (224.01,70.5) .. (226.5,70.5) .. controls (228.99,70.5) and (231,72.51) .. (231,75) .. controls (231,77.49) and (228.99,79.5) .. (226.5,79.5) .. controls (224.01,79.5) and (222,77.49) .. (222,75) -- cycle ;
\draw   (167.5,83) .. controls (167.5,70.3) and (173.66,60) .. (181.25,60) .. controls (188.84,60) and (195,70.3) .. (195,83) .. controls (195,95.7) and (188.84,106) .. (181.25,106) .. controls (173.66,106) and (167.5,95.7) .. (167.5,83) -- cycle ;
\draw    (226.5,79.5) -- (181.25,106) ;
\draw    (226.5,70.5) -- (181.25,60) ;
\draw   (392.06,45.46) .. controls (394.55,45.5) and (396.54,47.54) .. (396.5,50.02) .. controls (396.47,52.51) and (394.43,54.5) .. (391.94,54.46) .. controls (389.46,54.43) and (387.47,52.39) .. (387.5,49.9) .. controls (387.54,47.42) and (389.58,45.43) .. (392.06,45.46) -- cycle ;
\draw   (391.75,68.96) .. controls (404.45,69.13) and (414.67,75.42) .. (414.57,83.02) .. controls (414.47,90.61) and (404.09,96.63) .. (391.39,96.46) .. controls (378.69,96.29) and (368.47,90) .. (368.57,82.41) .. controls (368.67,74.81) and (379.05,68.79) .. (391.75,68.96) -- cycle ;
\draw    (387.5,49.9) -- (368.57,82.41) ;
\draw    (396.5,50.02) -- (414.57,83.02) ;

\draw (269,48) node [anchor=north west][inner sep=0.75pt]   [align=left] {$u_2$};
\draw (474,150) node [anchor=north west][inner sep=0.75pt]  [font=\large,color={rgb, 255:red, 25; green, 2; blue, 208 }  ,opacity=1 ]  [align=left] {$V_i$};
\draw (451,22) node [anchor=north west][inner sep=0.75pt]   [align=left] {$\Low{i}$};
\draw (284,23) node [anchor=north west][inner sep=0.75pt]   [align=left] {$D$};
\draw (236,109) node [anchor=north west][inner sep=0.75pt]   [align=left] {$u_1$};
\draw (221,50) node [anchor=north west][inner sep=0.75pt]   [align=left] {$u_3$};
\draw (403,33) node [anchor=north west][inner sep=0.75pt]   [align=left] {$u_4$};

\end{tikzpicture}}}
	\caption{Illustration of random variables $X_v$.}
	\label{fig:MorS}
\end{figure}
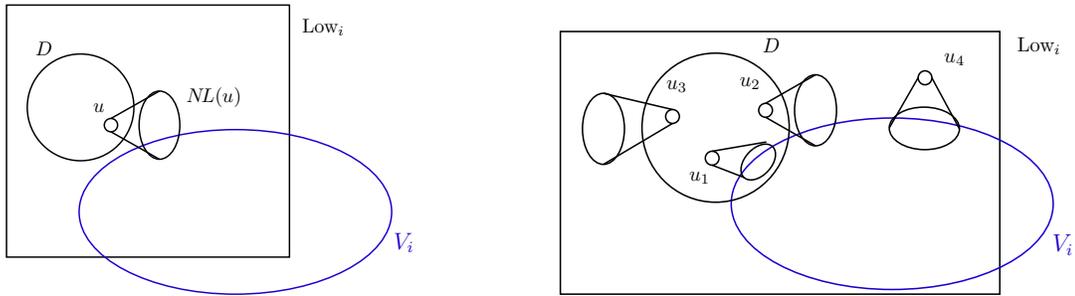

\begin{claim}\label{clm:ms-NVsteps-1}
	For any $v \in D$, 
	\[
	(1-o(1)) \cdot \frac{d \cdot \alpha}{\opt} \leq \Pr\paren{X_v = 1} \leq \frac{d \cdot \alpha}{\opt}.
	\]
\end{claim}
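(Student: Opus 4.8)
The plan is to estimate $\Pr(X_v = 1) = \Pr(\NL(v) \cap V_i \neq \emptyset)$ for a fixed vertex $v \in D$, where $\NL(v)$ is a set of exactly $d$ vertices in $\Low{i}$ and $V_i$ is obtained by including each vertex independently\footnote{via the pairwise-independent hash $h_i$} with probability $1/k = \alpha/\opt$. The upper bound is immediate from the union bound: $\Pr(\NL(v) \cap V_i \neq \emptyset) \leq \sum_{w \in \NL(v)} \Pr(w \in V_i) = d \cdot \tfrac1k = \tfrac{d\alpha}{\opt}$, using only that each $h_i(w)=1$ with probability $1/k$ (marginal uniformity, which needs no independence at all).

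For the lower bound I would again use inclusion–exclusion (Bonferroni), exactly as in the proofs of \Cref{clm:ms-1st-v} and \Cref{clm:ms-Vstep-1}:
\[
	\Pr\paren{\NL(v) \cap V_i \neq \emptyset} \geq \sum_{w \in \NL(v)} \Pr\paren{w \in V_i} - \sum_{w \neq w' \in \NL(v)} \Pr\paren{w, w' \in V_i} \geq d \cdot \frac1k - \binom{d}{2} \cdot \frac{1}{k^2},
\]
where the second inequality uses that $h_i$ is pairwise independent (so $\Pr(w,w' \in V_i) = 1/k^2$). Thus $\Pr(X_v=1) \geq \tfrac{d}{k}\paren{1 - \tfrac{d}{2k}} = \tfrac{d\alpha}{\opt}\paren{1 - \tfrac{d\alpha}{2\,\opt}}$. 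It remains to check that the correction factor is $1-o(1)$: since $d < \alpha \log^3 n$ by \Cref{eq:ms-NVsteps-1} and $\opt \geq \alpha^2 \cdot n^\delta$ by \Cref{assumption:range}, we have $\tfrac{d\alpha}{\opt} < \tfrac{\alpha^2 \log^3 n}{\opt} \leq \tfrac{\log^3 n}{n^\delta} = o(1)$, which gives the claimed $(1-o(1))\cdot\tfrac{d\alpha}{\opt}$ lower bound and completes the proof.

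There is no real obstacle here — this is a routine second-moment/Bonferroni computation of the same flavor as the earlier claims in this section; the only mild point to be careful about is invoking only pairwise independence of $h_i$ (rather than the $(\log^2 n)$-wise independence it actually has, which is overkill for this step) and correctly plugging in the parameter bounds to certify the $o(1)$ term.
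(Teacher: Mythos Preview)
Your proof is correct and matches the paper's argument essentially verbatim: the upper bound via a union bound over $\NL(v)$, and the lower bound via the Bonferroni (second-order inclusion--exclusion) inequality together with pairwise independence of $h_i$, followed by the parameter check $d\alpha/\opt = o(1)$ from $d < \alpha\log^3 n$ and $\opt \geq \alpha^2 n^\delta$. The only cosmetic difference is that you write $\binom{d}{2}/k^2$ while the paper loosens this to $d^2/k^2$; both yield the required $(1-o(1))$ factor.
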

\begin{proof}
	$X_v = 1$ iff one of the neighbors of $v$ in $\NL(v)$ belongs to $V_i$. For the upper bound, by union bound, 
	\begin{align*}
		\Pr\paren{X_v = 1} \leq \sum_{u \in \NL(v)} \Pr\paren{u \in V_i} = \card{\NL(v)} \cdot \frac{1}{k} = \frac{d \cdot \alpha}{\opt}. \tag{as $\card{\NL(v)} = d$ and $k = \opt/\alpha$}
	\end{align*}
	For the lower bound, by inclusion-exclusion principle, 
	\begin{align*}
		\Pr\paren{X_v = 1} &\geq \sum_{u \in \NL(v)} \Pr\paren{u \in V_i}  - \sum_{u \neq w \in \NL(v)} \Pr\paren{u,w \in V_i} \\
		&\geq \card{\NL(v)} \cdot \frac{1}{k} - \card{\NL(v)}^2 \cdot \frac{1}{k^2} \\
		&= \frac{d \cdot \alpha}{\opt} \cdot \paren{1-\frac{d \cdot \alpha}{\opt}} \tag{as $\card{\NL(v)} = d$ and $k = \opt/\alpha$} \\
		&\geq (1-o(1)) \cdot \frac{d \cdot \alpha}{\opt},
	\end{align*}
	as $d < \alpha\log^3{n}$ by~\Cref{eq:ms-NVsteps-1} and $\opt \geq \alpha^2 \cdot n^{\delta}$ by~\Cref{assumption:range}. \Qed{\Cref{clm:ms-NVsteps-1}}
	
\end{proof}

By~\Cref{clm:ms-NVsteps-1} and the size of $D$ in~\Cref{eq:ms-NVsteps-1}, we have, 
\begin{align}
	(1-o(1)) \cdot \frac{19}{2} \cdot \alpha \cdot \log^3{n} \leq \expect{X} \leq  \frac{19}{2} \cdot \alpha \cdot \log^3{n}. \label{eq:ms-NVsteps-exp}
\end{align}
Our goal now is to prove that $X$ is  concentrated. This requires a non-trivial proof as the variables $\set{X_v}_{v \in D}$ are correlated through their shared neighbors in $V_i$. But the fact that the subgraph 
induced on $\Low{i}$ is low-degree allows us to bound the variance of $X$ using a combinatorial argument in the following claim.  

\begin{claim}\label{clm:ms-NVsteps-var}
	$\var{X} \leq (1/8) \cdot \expect{X}^2$. 
\end{claim}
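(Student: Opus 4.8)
## Proof Plan for Claim \ref{clm:ms-NVsteps-var}

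The plan is to bound $\var{X} = \sum_{u,w \in D} \cov{X_u, X_w}$ by splitting the double sum according to whether the two vertices $u,w \in D$ share a neighbor in $\NL(\cdot)$. Since each $X_v$ is an indicator, $\var{X_v} \le \expect{X_v} \le d\alpha/\opt$, so the diagonal contributes at most $\card{D} \cdot d\alpha/\opt = \frac{19}{2}\alpha\log^3 n = O(\expect{X})$, which is a lower-order term compared to $\expect{X}^2 = \Theta(\alpha^2 \log^6 n)$ (recall $\opt \geq \alpha^2 n^{\delta}$, so $\expect{X} \gg 1$; more precisely $\expect{X} = \Theta(\alpha \log^3 n)$ and we want the variance bounded by a small constant times $\alpha^2 \log^6 n$). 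So the heart of the matter is the off-diagonal covariances.

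First I would observe that for $u \neq w \in D$, if $\NL(u) \cap \NL(w) = \emptyset$, then $X_u$ and $X_w$ depend on disjoint sets of coordinates $h_i(\cdot)$; since $h_i$ is pairwise independent, any two such coordinates are independent, but we actually need that the \emph{event} $\NL(u) \cap V_i \neq \emptyset$ and the event $\NL(w) \cap V_i \neq \emptyset$ are (nearly) independent. With only pairwise independence this is not exactly true, but we can bound $\cov{X_u, X_w} \le \expect{X_u X_w} = \Pr(\NL(u)\cap V_i \neq \emptyset \text{ and } \NL(w) \cap V_i \neq \emptyset) \le \sum_{a \in \NL(u)} \sum_{b \in \NL(w)} \Pr(a \in V_i \wedge b \in V_i)$. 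When $a \neq b$ this is $1/k^2$ by pairwise independence, so the disjoint-neighborhood pairs contribute at most $\card{D}^2 \cdot d^2 / k^2 = (\expect{X})^2$ — wait, that's already too big. So I need to be more careful: I should instead write $\cov{X_u,X_w} = \expect{X_u X_w} - \expect{X_u}\expect{X_w}$ and note that for disjoint neighborhoods the pairwise-independence gives $\expect{X_u X_w}$ equal to a sum over $(a,b)$ pairs of $\Pr(a,b \in V_i) = 1/k^2$ which, after inclusion-exclusion to lower-bound $\expect{X_u}\expect{X_w}$ similarly, leaves only an $O(d^3/k^3)$-type error per pair — these error terms sum to something negligible. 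The genuinely dangerous pairs are those with $\NL(u) \cap \NL(w) \neq \emptyset$: for such a pair, $\Pr(X_u = 1 \wedge X_w = 1)$ can be as large as $\Pr(\text{common neighbor} \in V_i) = 1/k$ (times the number of common neighbors), so each contributes $\lesssim d/k \cdot |\NL(u)\cap\NL(w)|$ or so to the covariance.

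The key combinatorial step is then to count the number of ordered pairs $(u,w) \in D \times D$ with $\NL(u) \cap \NL(w) \neq \emptyset$. This is where the low-degree structure of the subgraph induced on $\Low{i}$ enters: every vertex in $\Low{i}$ has degree less than $\alpha \log^3 n$ there, so each potential common neighbor $z \in \Low{i}$ lies in $\NL(u)$ for at most $\alpha \log^3 n$ choices of $u \in D$ (since $z$ has at most $\alpha\log^3 n$ neighbors total in $\Low{i}$). Hence the number of pairs $(u,w,z)$ with $z \in \NL(u) \cap \NL(w)$ is at most (number of vertices $z$) $\times (\alpha\log^3 n)^2$; bounding the number of relevant $z$'s by $\card{D} \cdot d$ (the total size of $\bigcup_{v\in D}\NL(v)$) or by $n$, I get a bound of the form $\card{D} \cdot d \cdot (\alpha\log^3 n)^2$ on the count of such triples. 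Each such triple contributes $O(1/k)$ to $\var{X}$, giving a total of $O\!\big(\card{D} \cdot d \cdot (\alpha\log^3 n)^2 / k\big) = O\!\big(\frac{19\,\opt \log^3 n}{2d} \cdot d \cdot \alpha^2 \log^6 n \cdot \frac{\alpha}{\opt}\big) = O(\alpha^3 \log^9 n)$ — hmm, this overshoots $\expect{X}^2 = \Theta(\alpha^2 \log^6 n)$ by a factor of $\alpha \log^3 n$, so the naive count is too lossy and I will need to sharpen it, likely by using $d < \alpha \log^3 n$ more carefully (noting $|\NL(u)| = d$, not $\alpha\log^3 n$, when bounding how many triples a given $z$ participates in through $\NL$, and that $z$ itself ranges over at most $\card{D} \cdot d$ values). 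A cleaner route: write $\var{X} \le \expect{X} + \sum_{u \neq w} \cov{X_u, X_w}$, and for the correlated pairs use $\sum_{u,w} \Pr(X_u = X_w = 1, \text{via common } z) \le \sum_z \Pr(z \in V_i) \cdot |\{u : z \in \NL(u)\}|^2 \le \frac{1}{k}\sum_z \deg_{\Low{i}}(z)^2 \le \frac{1}{k} \cdot (\alpha\log^3 n) \cdot \sum_z \deg_{\Low{i}}(z) \le \frac{1}{k}(\alpha\log^3 n)(2|E(\GE{i}[\Low{i}])|)$, and $|E(\GE{i}[\Low{i}])| \le 20\,\opt\log^4 n$; so this is $\le \frac{\alpha}{\opt}\cdot \alpha\log^3 n \cdot 40\,\opt\log^4 n = 40\,\alpha^2 \log^7 n$ — still a $\log n$ factor over target.

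I expect this $\poly\log$ gap to be exactly the main obstacle, and I anticipate the resolution is that the claim's constant $1/8$ and the factor $\expect{X}^2 = \Theta(\alpha^2\log^6 n)$ are not tight — either the authors restrict $D$ to have size exactly $\frac{19\,\opt\log^3 n}{2d}$ with a matching lower bound on $\expect X$, or (more likely) they bound $\sum_z \deg_{\Low{i}}(z)^2$ by observing that only $z$'s that are actually chosen as $\NL$-neighbors matter and re-bounding $\sum_z |\{u : z \in \NL(u)\}|^2$ using that $\sum_z |\{u : z \in \NL(u)\}| = \sum_{u \in D} |\NL(u)| = \card{D} \cdot d = \frac{19\,\opt \log^3 n}{2}$ together with the per-$z$ cap of $\alpha\log^3 n$, getting $\le \alpha\log^3 n \cdot \frac{19\,\opt\log^3 n}{2}$; then the correlated contribution is $\le \frac{1}{k}\cdot\alpha\log^3 n \cdot \frac{19\,\opt\log^3 n}{2} = \frac{19}{2}\alpha^2\log^6 n$ — and since $\expect X \ge (1-o(1))\frac{19}{2}\alpha\log^3 n$ we get $\expect{X}^2 \ge (1-o(1))\cdot\frac{361}{4}\alpha^2\log^6 n$, so the correlated contribution is at most $\approx \frac{19/2}{361/4}\expect{X}^2 < \frac{1}{9}\expect X^2$, and folding in the diagonal $\expect X = o(\expect X^2)$ term we land below $\frac{1}{8}\expect{X}^2$. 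So my plan is: (1) decompose $\var X = \sum_v \var{X_v} + \sum_{u\neq w}\cov{X_u,X_w}$; (2) bound the diagonal by $\expect X = o(\expect X^2)$; (3) for off-diagonal terms with disjoint $\NL$-neighborhoods, use inclusion–exclusion plus pairwise independence to show the covariance is negative up to a negligible $O(d^3/k^3)$ error summing to $o(\expect X^2)$; (4) for off-diagonal terms with a common neighbor $z$, bound $\cov{X_u,X_w} \le \Pr(z \in V_i \text{ for some common } z) \le \frac{1}{k}|\NL(u)\cap\NL(w)|$ and sum via $\sum_z |\{u: z\in\NL(u)\}|^2 \le (\alpha\log^3 n)\cdot\sum_z|\{u:z\in\NL(u)\}| = (\alpha\log^3 n)\cdot|D|\cdot d$, using the low-degree property of $\Low{i}$; (5) combine with the explicit value of $\expect X$ from \Cref{eq:ms-NVsteps-exp} to conclude $\var X \le \frac18 \expect X^2$.
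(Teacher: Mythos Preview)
Your final plan is correct and is essentially the paper's proof. The paper does exactly your steps (1), (2), and (5), and for the off-diagonal covariances it proves the uniform bound $\cov{X_u,X_w} \le \com(u,w)\cdot\tfrac{\alpha}{\opt} + o(1)\cdot\tfrac{d^2\alpha^2}{\opt^2}$ (where $\com(u,w)=|\NL(u)\cap\NL(w)|$), then sums the $o(1)\cdot d^2/k^2$ part over all $|D|^2$ pairs to get $o(\expect{X}^2)$, and handles $\sum_{u\neq w}\com(u,w)$ via the same double-counting $\sum_z |\{u:z\in\NL(u)\}|^2 \le (\alpha\log^3 n)\cdot|D|\cdot d$ that you arrived at. The only cosmetic difference is that the paper does not split into the disjoint/non-disjoint cases of your steps (3)--(4); it carries the $o(1)\cdot d^2/k^2$ term for \emph{every} pair, which is cleaner and also closes the small gap in your step (4) (as written, your bound $\cov{X_u,X_w}\le \tfrac{1}{k}\com(u,w)$ drops the residual $o(1)\cdot d^2/k^2$ term for pairs that do share a neighbor---harmless, but you should fold it back in).
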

\begin{proof}
For any two vertices $u \neq v \in D$, define $\Com(u,v) := \NL(u) \cap \NL(v)$ as the set of common
neighbors of $u$ and $v$ in subgraph of $\Low{i}$ defined by $\NL(\cdot)$ and let $\com(u,v) = \card{\Com(u,v)}$. We have, 
\begin{align}
	\var{X} &= \sum_{v \in D} \var{X_v} + \sum_{u \neq v \in D} \cov{X_u,X_v} \leq \expect{X} + \sum_{u \neq v \in D} \cov{X_u,X_v},
	\label{eq:ms-NVsteps-var}
\end{align}
as $X_v$ is an indicator random variable and thus $\var{X_v} \leq \expect{X_v}$. 
We thus need to bound the covariance-terms only. Recall that 
\[
\cov{X_u,X_v} = \expect{X_u \cdot X_v} - \expect{X_u}\expect{X_v} = \Pr\paren{\NL(u) \cap V_i \neq \emptyset \wedge \NL(v) \cap V_i \neq \emptyset} - \Pr\paren{X_u=1} \cdot \Pr\paren{X_v=1}.
\] 
We can bound the second part using~\Cref{clm:ms-NVsteps-1} for each probability-term. For the first part, notice that  for
$\NL(u) \cap V_i \neq \emptyset$ and $\NL(v) \cap V_i \neq \emptyset$ one of the following two things should happen: at least one of the shared neighbors of $u,v$ in $\Com(u,v)$ is chosen in $V_i$ or each of them separately have 
a neighbor in $\NL(u) - \Com(u,v)$ and $\NL(v) - \Com(u,v)$ those join $V_i$ (as $h_i(\cdot)$ is a pair-wise independent hash function, the probability of these two distinct vertices joining $V_i$ is independent). Thus, 
\begin{align*}
	\Pr\paren{\NL(u) \cap V_i \neq \emptyset \wedge \NL(v) \cap V_i \neq \emptyset} &\leq \sum_{w \in \Com(u,v)} \hspace{-0.5cm}\Pr\paren{w \in V_i} + 
	{\hspace{-0.5cm} \sum_{\substack{z_u \in \NL(u) - \Com(u,v)\\ z_v \in \NL(v) - \Com(u,v)}} \hspace{-1cm} \Pr\paren{z_u \in V_i} \cdot \Pr\paren{z_v \in V_i}} \\
	&\leq \com(u,v) \cdot \frac{1}{k} + d^2 \cdot \frac{1}{k^2} \tag{as $h_i(\cdot)$ is uniform over $[k]$ and $u,v \in D$ and each vertex in $D$ has exactly $d$ neighbors in $\NL(\cdot)$} \\
	&= \com(u,v) \cdot \frac{\alpha}{\opt} + \frac{d^2 \cdot \alpha^2}{\opt^2} \tag{as $k = \opt/\alpha$}. 
\end{align*}
Plugging in this for the first term of covariance and the bounds in~\Cref{clm:ms-NVsteps-1} for the second terms, we have, 
\[
	\cov{X_u,X_v} \leq \com(u,v) \cdot \frac{\alpha}{\opt} + \frac{d^2 \cdot \alpha^2}{\opt^2} - (1-o(1)) \cdot \frac{d^2 \cdot \alpha^2}{\opt^2} = \com(u,v) \cdot \frac{\alpha}{\opt} + o(1) \cdot \frac{d^2 \cdot \alpha^2}{\opt^2}. 
\]
By plugging in further in the RHS of~\Cref{eq:ms-NVsteps-var}, we get that, 
\begin{align*}
	\var{X} &\leq \expect{X} + \card{D^2} \cdot o(1) \cdot \frac{d^2 \cdot \alpha^2}{\opt^2} +  \sum_{u \neq v \in D} \com(u,v) \cdot \frac{\alpha}{\opt} \\
	&\leq \expect{X} + \paren{\frac{19 \cdot \opt \cdot \log^3{n}}{2d}}^2 \cdot o(1) \cdot \frac{d^2 \cdot \alpha^2}{\opt^2} + \sum_{u \neq v \in D} \com(u,v) \cdot \frac{\alpha}{\opt} \tag{by the bound on size of $D$ in~\Cref{eq:ms-NVsteps-1}} \\
	&\leq o(1) \cdot \expect{X}^2 +  \frac{\alpha}{\opt} \cdot \sum_{u \neq v \in D} \com(u,v). \tag{by the lower bound on $\expect{X} \geq (1-o(1)) \cdot (19/2) \cdot \alpha \cdot \log^3{n}$ in~\Cref{eq:ms-NVsteps-exp}}
\end{align*}
The remaining part is then to compute the summation in the RHS which we do below using a double-counting argument. 
Note that $\sum_{u \neq v \in D} \com(u,v)$ counts the number of common neighbors inside $\NL(\cdot)$-subgraph of $\Low{i}$ for each pair of vertices in $D$. This can be alternatively 
counted by going over vertices in $\Low{i}$ that are neighbor to $D$ and count the number of pairs of neighbors (in $\NL(\cdot)$) they have in $D$. 
\begin{align*}
	\sum_{u \neq v \in D} \com(u,v) &= \sum_{z \in \NL(D)} {{\card{\NL(z)}}\choose{2}} \leq \sum_{z \in \NL(D)} \card{\NL(z)}^2 \\
	&\leq (\alpha \cdot \log^3{n}) \cdot \sum_{z \in \NL(D)} \card{\NL(z)} \tag{each vertex in $\Low{i}$ has degree at most $(\alpha\cdot\log^3{n})$ to $\Low{i}$ and $z \in \Low{i}$ as it is in $\NL(D)$} \\
	&\leq (\alpha \cdot \log^3{n}) \cdot \card{D} \cdot d \tag{as the sum-term counts the number of edges between $D$ and $\NL(D)$ which is at most $\card{D} \cdot d$}\\
	&\leq (\alpha \cdot \log^3{n}) \cdot (\frac{19}{2} \cdot \opt \cdot \log^3{n}). \tag{by~\Cref{eq:ms-NVsteps-1} on the size of $D$}
\end{align*}
Plugging in this bound in the upper bound on $\var{X}$ in the earlier equation, we have, 
\begin{align*}
	\var{X} &\leq o(1) \cdot \expect{X}^2 +  \frac{\alpha}{\opt} \cdot \paren{\frac{19}{2} \cdot \opt \cdot \alpha \cdot \log^6{n}} \\
	&= o(1) \cdot \expect{X}^2 +  \frac{19}{2} \cdot \alpha^2 \cdot \log^6{n} \\
	&< \frac{1}{8} \cdot \expect{X}^2,  
\end{align*}
as $\expect{X} \geq  (1-o(1)) \cdot \dfrac{19}{2} \cdot (\alpha\cdot\log^3{n})$ by~\Cref{eq:ms-NVsteps-exp}.  \Qed{\Cref{clm:ms-NVsteps-var}} 

\end{proof}

Recall that 
\[
\card{N(V_i) \cap \GE{i}} \geq X.
\]
Given the bound on expectation and variance of $X$ in~\Cref{eq:ms-NVsteps-exp} and~\Cref{clm:ms-NVsteps-var}, respectively, we can now 
apply Chebyshev's inequality and get that,
\begin{align*}
	\Pr\paren{\card{N(V_i) \cap \GE{i}} < 2\alpha \cdot \log^3{n}} \leq \Pr\paren{\card{X-\expect{X}} \geq \frac{15}{19} \cdot \expect{X}} \leq \frac{19^2 \cdot \var{X}}{15^2\cdot\expect{X}^2} \leq \frac{19^2}{15^2 \cdot 8} < \frac{1}{4}. 
\end{align*}
Additionally, we also have that the probability that $V_i$ is not clean is at most, 
\[
	\Pr\paren{\text{$V_i$ is not clean}} \leq \sum_{u \in V(\ME{i})} \Pr\paren{u \in V_i} < \frac{\opt}{4\alpha} \cdot \frac{\alpha}{\opt} = \frac{1}{4}. 
\]
By a union bound on the two equations above, we have, 
\begin{align}
\Pr\paren{\card{N(V_i) \cap \GE{i}} \geq 2\alpha \cdot \log^3{n} ~\text{and $V_i$ is clean}} \geq \frac12. 
\label{eq:ms-NVsteps-cond}
\end{align}

The rest of the proof is similar to that of \textbf{$\bm{V_i}$-steps}. We condition the choice of $V_i$ and assume the event of~\Cref{eq:ms-NVsteps-cond} has happened. 
Thus, we have that both $V_i$ is clean and $N(V_i)$ has at least $2\alpha \cdot \log^3{n}$ vertices in $\GE{i}$. 
Moreover, $N(V_i)$ can have at most $(\opt/4\alpha)$ neighbors outside $\GE{i}$ by the bound on the total number of 
matched vertices by~$\eventmatch(i)$. As the choice of $e_i$ from $\NES(G,V_i)$ is uniform over $N(V_i)$, we have, 
\begin{align*}
	\Pr_{e_i}\paren{\text{$e_i$ is from $V_i$ to $N(V_i) \cap \GE{i}$} \mid V_i} &\geq (1-\delta_F) \cdot \frac{\card{N(V_i) \cap \GE{i}}}{\card{N(V_i)}} \\
	&\geq (1-\delta_F) \cdot \frac{2\alpha \cdot \log^3{n}}{(\opt/4\alpha) + 2\alpha \cdot \log^3{n}} \\
	&> \frac{3 \cdot \alpha^2 \cdot \log^3{n}}{\opt},
\end{align*}
as $\opt \geq \alpha^2 \cdot n^{\delta}$ by~\Cref{assumption:guess} and $\delta_F < 1/2$. Given that all of $V_i$ is also unmatched (as $V_i$ is clean), 
we can include $e_i$ in $\ME{i+1}$ greedily whenever the event of the LHS above  happens. 

Consequently, combining the two events above, we have, 
\begin{align*}
	\Pr_{(V_i,e_i)}\Paren{\ME{i+1} > \ME{i}} &\geq \Pr_{V_i}\paren{\card{N(V_i) \cap \GE{i}} \geq 2\alpha \cdot \log^3{n} ~\text{and $V_i$ is clean}} \cdot \Pr_{e_i}\paren{\text{$e_i$ is from $V_i$ to $N(V_i) \cap \GE{i}$} \mid V_i} \\
	&\geq  \frac{1}{2} \cdot \frac{3 \cdot \alpha^2 \cdot \log^3{n}}{\opt} > \frac{\alpha^2\log^3\!{n}}{\opt}.
\end{align*}
This concludes the proof of~\Cref{lem:ms-increase} in this case also. 

\subsubsection*{Concluding the Proof of~\Cref{lem:ms-2nd-batch}}

 By~\Cref{lem:ms-increase}, assuming the events $\eventmatch(i),\eventsparsify(i)$ hold for every $i\in (s,2s]$, size of $\ME{2s}$ 
statistically dominates sum of $s$ independent Bernoulli random variables $\set{Z_i}_{i=1}^s$ with mean $({\alpha^2 \cdot \log^3\!{n}}/4\opt)$ (RHS of~\Cref{lem:ms-increase}). 
Let $Z = \sum_{i=1}^s Z_i$. Thus, by the choice of $s$ in~\Cref{alg:ms},
\[
\expect{Z} = \frac{\opt^2}{(\alpha \cdot \log{n})^3} \cdot \frac{\alpha^2 \cdot \log^3\!{n}}{4 \cdot \opt} = \frac{\opt}{4 \cdot \alpha},
\]
 and by the Chernoff bound (\Cref{prop:chernoff}), 
\begin{align*}
	\Pr\paren{Z< \frac{\opt}{8\alpha}} < \Pr\paren{Z < \frac{1}{2} \cdot \expect{Z}} \leq \exp\paren{-\frac{\opt}{12\alpha}} \ll 1/\poly{(n)}, 
\end{align*}
where the final bound is by~\Cref{assumption:range} as $\opt \geq n^{\delta} \cdot \alpha^2$. This means that as long as $\eventmatch(i),\eventsparsify(i)$ happen for all $i \in (s:2s]$, with high probability we are going 
to end up with a matching $\Measy$ of size at least $(\opt/8\alpha)$, which means $\eventmatch(2s)$ does not happen as desired. 

This concludes the proof of~\Cref{lem:ms}, and combined with~\Cref{lem:sc}, the entire proof of~\Cref{thm:main}.

\subsection*{Acknowledgements} 

We are grateful to Christian Konrad for helpful discussions and to the anonymous reviewers of ITCS 2022 for the valuable comments that helped with the presentation of this paper. 

\clearpage

\bibliographystyle{alpha}
\bibliography{new}

\clearpage

\appendix

\section{Sparse-Neighborhood Recovery via Exhaustive-Search} \label{app:sparse-recovery} 

We give an alternative and simpler sketch for~\Cref{prob:sparse-recovery} with the caveat that it uses too much randomness to store efficiently\footnote{While one can use the heavy-machinery yet standard PRG ideas to reduce this randomness (see, e.g.~\cite{KapralovLMMS14}), the use of PRGs will lead to an $O(\log{n})$
space overhead that will break the asymptotic optimality of the algorithm.}  and also requires exponential time. Even though this sketch does not work for the purpose of our 
algorithm, given that it is much simpler than our $\SNR$ sketch, we present it here as a warm-up. Formally, we prove the following lemma. 

\begin{lemma}\label{lem:inefficient} 
	 There is a linear sketch for~\Cref{prob:sparse-recovery} that uses sketch and randomness of size, respectively, 
	\[
	s_0 = s_0(n,a,b,c) = O(a \cdot \log{c} + b \cdot \log{n} \cdot \log{c})  \quad \textnormal{and} \quad O(n \cdot s_0) = O(n \cdot (a \cdot \log{q} + b \cdot \log{n} \cdot \log{q}))
	\]
	bits and outputs a wrong answer with probability at most $n^{-10}$. The algorithm requires exponential time (in parameters $a,b$ and $c$).  
\end{lemma}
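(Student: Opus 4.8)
The plan is to obtain the sketch directly from the vector-representation $x = x(G,S)\in\mathbb{F}_q^n$ introduced at the start of \Cref{sec:snr} (with $q$ the smallest prime larger than $c$), combined with a single uniformly random linear map over $\mathbb{F}_q$ and brute-force decoding. Recall $\supp{x} = N(S)$, and that by Promise~\ref{promise3} every coordinate of $x$ indexed by a vertex of $N(S)-T$ has value in $\{1,\dots,c-1\}$ and is therefore still nonzero modulo $q$; consequently, recovering $x$ as an element of $\mathbb{F}_q^n$ together with the set $T$ is enough to output $N(S)-T = \supp{x} - T$. As in the rest of \Cref{sec:snr} we may assume $|T| = a$ exactly (padding $T$ with dummy indices). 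Note the sketch will not depend on $T$, which arrives only at the end of the stream; only the decoder uses $T$.

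The first step I would take is to bound the number of decoding candidates. Writing any $x'\in\mathbb{F}_q^n$ as $x' = x'_T + x'_{\bar T}$ with $x'_T$ supported on $T$ and $x'_{\bar T}$ on $V\setminus T$, let $\mathcal{C}$ be the set of all such $x'$ with $x'_T\in\mathbb{F}_q^T$ arbitrary and $x'_{\bar T}$ having at most $b$ nonzero coordinates, each of value in $\{1,\dots,c-1\}$. By Promises~\ref{promise1}--\ref{promise3}, the reduction of the true vector $x$ modulo $q$ lies in $\mathcal{C}$, and a crude count gives $|\mathcal{C}| \le q^{a}\cdot (2n)^{b}\cdot c^{b}$, hence $\log_2|\mathcal{C}| = O(a\log q + b\log n + b\log c)$.

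Next I would set the sketch. Let $s := \bigl\lceil (11\ln n + \ln|\mathcal{C}|)/\ln q\bigr\rceil$ and let $\Phi\in\mathbb{F}_q^{s\times n}$ be a uniformly random matrix; the sketch is the vector $\Phi x\in\mathbb{F}_q^s$, maintained in a dynamic stream by linearity (each update to an edge $(u,v)$ with $u\in S$, $v\notin S$ changes $x_v$ by $\pm 1$, hence changes $\Phi x$ by $\pm$ the $v$-th column of $\Phi$, all arithmetic in $\mathbb{F}_q$). The recovery algorithm, given $\Phi x$ and $T$, exhaustively enumerates $\mathcal{C}$ and outputs $\supp{x'} - T$ for any $x'\in\mathcal{C}$ with $\Phi x' = \Phi x$. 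For correctness, $x$ itself always matches, so an error requires some $x'\in\mathcal{C}\setminus\{x\}$ with $\Phi(x'-x) = 0$ and $x'-x\ne 0$; since for any fixed nonzero $w\in\mathbb{F}_q^n$ each row of $\Phi$ annihilates $w$ independently with probability exactly $1/q$, we get $\Pr[\Phi(x'-x)=0] = q^{-s}$, and a union bound over $\mathcal{C}$ yields failure probability at most $|\mathcal{C}|\,q^{-s}\le n^{-11}\le n^{-10}$ by the choice of $s$. The size accounting is then immediate: using $\log q = \Theta(\log c)$ one has $s = O\bigl((\log n + a\log q + b\log n + b\log c)/\log q\bigr) = O(a + b\log n)$, so storing $\Phi x$ costs $s\lceil\log q\rceil = O(a\log c + b\log n\log c) = s_0$ bits, while storing $\Phi$ explicitly costs $sn\lceil\log q\rceil = O(n\cdot s_0)$ bits of randomness; the decoder runs in time $|\mathcal{C}|\cdot\poly(n) = 2^{O(a\log c + b\log n\log c)}$, i.e.\ exponential in $a,b,c$.

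There is no genuinely hard step here — this is a warm-up — but two points require care: (i) invoking Promise~\ref{promise3} to guarantee that reducing $x$ modulo $q$ preserves its support on $V\setminus T$, so that decoding $x$ over $\mathbb{F}_q$ recovers exactly $N(S)-T$; and (ii) keeping the candidate count tight enough that the number of measurements is $s = O(a + b\log n)$ rather than $O((a+b)\log n)$, which is precisely the feature separating this sketch size $s_0$ from a direct application of $\sprec$ (\Cref{prop:sparse-rec}) to the $(a+b)$-sparse vector $x$. The unsatisfactory aspects — randomness $\Theta(n\cdot s_0)$, well above our per-sketch budget, and exponential decoding time — are inherent to this brute-force approach and are exactly what motivates the recursive $\SNR$ construction of \Cref{sec:snr}.
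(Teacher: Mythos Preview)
Your proposal is correct and is essentially the same argument as the paper's: a uniformly random linear map $\Phi\in\mathbb{F}_q^{s\times n}$ (equivalently, the paper's $s$ independent random inner products $a_i\cdot x$), followed by brute-force enumeration of all candidate vectors consistent with the promises and a union bound using $\Pr[\Phi w=0]=q^{-s}$ for nonzero $w$. Your candidate count $q^a\cdot(2n)^b\cdot c^b$ is in fact slightly tighter than the paper's $2\cdot 2^a\cdot q^a\cdot n^b\cdot q^b$ (you avoid the redundant $2^a$ factor by absorbing the support choice within $T$ into the $q^a$ value choices), but this makes no difference asymptotically.
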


We use the same vector representation of the problem defined in~\Cref{eq:x-GS} in~\Cref{sec:snr}. 
Our approach is essentially to run equality-test from communication complexity on the vector $x = x(G,S)$; at the end of the stream, once we know the set $T$, we can search over all possible choices for $x$, given the promises in~\Cref{prob:sparse-recovery} and return the one that passes the equality-test. Given that we can bound the number of choices for $x$, we can limit the number of equality-tests we need to run. 

We will be working in the field $\mathbb{F}_q$ throughout this subsection where $q$ is the smallest prime larger than $c$. In particular, all computations are in $\mathbb{F}_q$. As $q \geq c$, recovery of coordinates of $x$ outside $T$ 
under $\mathbb{F}_q$ is the same as recovery over the integers. The algorithm is as follows. 

\begin{Algorithm}\label{alg-sr-1}
	A simple but (somewhat) inefficient algorithm for~\Cref{prob:sparse-recovery}. 
	
	\medskip
	
	\textbf{Input:} A graph $G=(V,E)$ specified via $\vect(E)$ and a set $S \subseteq V$, defining the corresponding vector $x(G,S)$ in~\Cref{eq:x-GS}. A set $T \subseteq V$ specified at the \emph{end} of the stream. 
	
	\medskip
	
	\textbf{Output:} The set of neighbors of $S$ outside $T$, i.e., $N(S) - T$. 
	
	\medskip
	
	\textbf{Sketching Matrix:}
	\begin{enumerate}
		\item Let $s = 2 \cdot \log{\Paren{2^a \cdot q^a \cdot n^b \cdot q^b \cdot n^{10}}} \cdot \frac{1}{\log q}$.  
		\item For $i=1$ to $s$: compute $z_i = a_i \cdot x$ where $a_i$ is a vector sampled uniformly from $\mathbb{F}_q^n$.
	\end{enumerate}
	\textbf{Recovery:}
	\begin{enumerate}
		\item Given $T$, go over all possible vectors $y \in \mathbb{F}_q^n$, that satisfy the promises of~\Cref{prob:sparse-recovery} for $T$.   
		\item For each guess $y$, check if $a_i \cdot y = z_i$  for all $i \in [s]$.
		If it is equal for all $i \in [s]$ then go over $y$ and output its non-zero indices (vertices corresponding to the indices) which are not in $T$.
	\end{enumerate}
\end{Algorithm}
We now analyze the correctness of the algorithm. 
\begin{claim} \label{clm:rec-one-itr-failure}
	For any vector $y \neq x$ considered in~\Cref{alg-sr-1} and an iteration $i \in [s]$, we have $a_i \cdot y  \neq z_i$ with probability at least $1-{1}/{q}$.
\end{claim}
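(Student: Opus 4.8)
The statement to prove is the standard fingerprinting/equality-test lemma: for a fixed nonzero vector $y - x \in \mathbb{F}_q^n$, a uniformly random $a_i \in \mathbb{F}_q^n$ satisfies $a_i \cdot (y - x) = 0$ with probability exactly $1/q$.

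The plan is to reduce the claim to a one-line linear-algebra fact about random linear functionals over a finite field. Since $a_i \cdot y = z_i = a_i \cdot x$ is equivalent to $a_i \cdot (y - x) = 0$, and $y \neq x$ guarantees that the vector $w := y - x$ is nonzero, it suffices to show that $\Pr_{a_i \sim \mathbb{F}_q^n}(a_i \cdot w = 0) = 1/q$ whenever $w \neq \bm{0}$. First I would fix a coordinate $j$ with $w_j \neq 0$ (which exists because $w$ is nonzero), and condition on all the other coordinates $\{(a_i)_\ell\}_{\ell \neq j}$ of $a_i$. Conditioned on those, the equation $a_i \cdot w = 0$ becomes $(a_i)_j \cdot w_j = -\sum_{\ell \neq j} (a_i)_\ell \, w_\ell$, i.e. a linear equation in the single remaining unknown $(a_i)_j$ with nonzero leading coefficient $w_j$. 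Since $w_j$ is invertible in the field $\mathbb{F}_q$, this equation has exactly one solution for $(a_i)_j$, and since $(a_i)_j$ is uniform over the $q$ field elements independently of the conditioning, the conditional probability that it equals that solution is $1/q$. Averaging over the conditioning gives the unconditional probability $1/q \le 1$, hence the failure probability $a_i \cdot y \neq z_i$ is at least $1 - 1/q$, as claimed. One should also note that since $q \ge c$ (indeed $q$ is prime), $\mathbb{F}_q$ really is a field and every nonzero element is invertible, which is what makes the single-solution argument go through.

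There is essentially no obstacle here; the only thing to be slightly careful about is that $y$ ranges over vectors satisfying the promises of \Cref{prob:sparse-recovery}, but this is irrelevant to the claim — all we use is $y \neq x$, so that $w = y - x \neq \bm{0}$. I would phrase the proof to emphasize this: the randomness of $a_i$ is independent of both $x$ (the sketching matrix is sampled independently of the input, as in \Cref{def:linear-sketch}) and of the fixed candidate $y$, so the computation is a genuine probability over the sketch randomness only.

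Concretely, the write-up would read: Let $w = y - x \neq \bm{0}$ and pick $j \in [n]$ with $w_j \neq 0$. Then
\[
\Pr_{a_i}\paren{a_i \cdot y = z_i} = \Pr_{a_i}\paren{a_i \cdot w = 0} = \Exp_{\{(a_i)_\ell\}_{\ell \neq j}}\bracket{\Pr_{(a_i)_j}\paren{(a_i)_j = -w_j^{-1}\sum_{\ell \neq j}(a_i)_\ell w_\ell}} = \frac1q,
\]
using that $(a_i)_j$ is uniform on $\mathbb{F}_q$ and independent of the other coordinates. Hence $\Pr\paren{a_i \cdot y \neq z_i} \geq 1 - 1/q$, completing the proof. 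This is the only real step; everything downstream (union-bounding over the $s$ iterations and over all candidate vectors $y$ to get the $n^{-10}$ total error, using the chosen value of $s$) belongs to the proof of \Cref{lem:inefficient} rather than to this claim.
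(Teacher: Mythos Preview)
Your proof is correct and follows essentially the same approach as the paper: both fix a coordinate $j$ where $x$ and $y$ differ, condition on the remaining coordinates of $a_i$, and use that the equation $a_i \cdot (y-x) = 0$ then has a unique solution for $(a_i)_j$ in $\mathbb{F}_q$. Your write-up is slightly more explicit about the field-inverse step, but the argument is the same.
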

\begin{proof}
	As $y \neq x$, there should be a coordinate $j \in [n]$ where they have different values. The only way for $a_i \cdot y$ to become equal to $a_i \cdot x$ is if $(a_i)_j \cdot (x_j - y_j)$ is equal to $-\sum_{j' \neq j} (a_i)_{j'} \cdot (x_{j'}-y_{j'})$. 
	Since $(x_j - y_j) \neq 0$, there is only one choice of $(a_i)_j$ that can make this equality happen even conditioned on the rest of $a_i$. Thus, with probability at least $1-{1}/{q}$, 
	we have $a_i \cdot y  \neq z_i$. 
\end{proof}

We run $s$ independent iterations thus the failure probability over all iteration is at most $1/q^s$ by~\Cref{clm:rec-one-itr-failure}.
We now count the number of possible choices for $y$ in \Cref{alg-sr-1} given the promises in \Cref{prob:sparse-recovery}.
\begin{claim} \label{clm:rec-choices-x}
	The number of choices for vector $y$ is at most $2 \cdot 2^a \cdot q^a \cdot n^b\cdot q^b$.
\end{claim}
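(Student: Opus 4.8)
The plan is to bound the number of candidate vectors $y$ by a product rule: I will argue that any $y \in \mathbb{F}_q^n$ consistent with $T$ and the promises of~\Cref{prob:sparse-recovery} is uniquely determined by three pieces of data, and then multiply the number of choices for each. The three pieces are (a) the restriction of $y$ to the coordinates in $T$; (b) the set $U := \supp{y} \setminus T$ of non-zero coordinates of $y$ lying outside $T$; and (c) the restriction of $y$ to $U$. (Since $x(G,S)$ reduced mod $q$ itself satisfies these promises — its entries outside $T$ are less than $c<q$ by Promise~\ref{promise3} — it is in particular one of the vectors we are counting, which is what makes the enumeration in~\Cref{alg-sr-1} complete; but that fact is not needed for the count itself.)

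For the count, I would proceed as follows. For (a): by Promise~\ref{promise1} we have $|T|\le a$, and the promises impose \emph{no} constraint on the entries indexed by $T$, so each may be any element of $\mathbb{F}_q$, giving at most $q^{|T|}\le q^a$ options. For (b): by Promise~\ref{promise2} we have $|U|\le b$, so $U$ ranges over subsets of $[n]\setminus T$ of size at most $b$; I will bound their number by $\sum_{j=0}^{b}\binom{n}{j}\le\sum_{j=0}^{b}n^j\le 2n^b$, using $\binom{n}{j}\le n^j$ and $n\ge 2$. For (c): by Promise~\ref{promise3} every coordinate of $U$ holds a value in $\{1,\dots,c-1\}$, and $c-1<q$ since $q$ is the smallest prime above $c$, so there are at most $(c-1)^{|U|}\le q^{b}$ assignments. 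Multiplying gives $q^a\cdot 2n^b\cdot q^b$, which is at most $2\cdot 2^a\cdot q^a\cdot n^b\cdot q^b$ because $2^a\ge 1$.

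There is no real obstacle here — it is a routine counting argument — but the one point I would be careful about is not to over-economize the $T$-block: the promises bound only the entries outside $T$, so the coordinates in $T$ must be counted over all of $\mathbb{F}_q$, contributing the $q^a$ factor rather than a $c^a$ factor. In fact the bound the claim asks for is loose: the $2^a$ factor is not produced by this argument at all and is simply slack, so establishing the sharper bound $2\,q^a\,q^b\,n^b$ already suffices.
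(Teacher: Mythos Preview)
Your proof is correct and follows essentially the same product-rule counting as the paper. The only difference is that the paper first chooses the support of $y$ inside $T$ (contributing $2^a$) and then assigns values to those coordinates (contributing $q^a$), whereas you directly count all $q^{|T|}\le q^a$ vectors on the $T$-block; as you note, this makes the $2^a$ factor in the stated bound redundant slack.
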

\begin{proof}
	$N(S)$ contains a subset of $T$ of size at most $a$, thus there are $2^a$ possible choices for elements of $N(S)$ within $T$. $N(S)$ contains at most $b$ elements outside $T$, thus there are $\sum_{i=0}^{b} \binom{n-a}{i} \leq \sum_{i=0}^{b} n^i \leq 2 n^b$ choices for elements of $N(S)$ outside $T$. Also, each element can take values between $0$ and $q-1$ in $x$. Thus, the total number of choices for $x$ is at most $2 \cdot 2^a \cdot q^a \cdot n^b\cdot q^b$.
\end{proof}

A union bound over all choices for vector $y$ in~\Cref{clm:rec-choices-x}, using the fact that the probability that each one is mistaken for $x$ is only $1/q^s$, implies that the output will be wrong with probability at most 
\[
	2 \cdot 2^a \cdot q^a \cdot n^b\cdot q^b \cdot \frac{1}{q^s} = n^{-10}. 
\]
This concludes the correctness of the algorithm. 

The sketch size is also $O(\log q)=O(\log c)$ bits for each $z_i$ and thus $s \cdot O(\log q) = O(a \log c + b \log n) = s_0$ bits over all as desired. 
The number of random bits needed however is $O(n \log q)$ in each iteration implying $O(n \cdot (a \log c + b \log n))$ random bits in total. This concludes the proof of~\Cref{lem:inefficient}. 

We again note that the randomness used by this algorithm is too much for our final algorithm to be able to store and on top of that the algorithm requires exponential time for its recovery.

\end{document}